\documentclass[letterpaper,11pt]{article}
\usepackage{lmodern}
\usepackage{amsmath, amsthm, amssymb, thm-restate}
\usepackage[table,xcdraw]{xcolor}
\usepackage[ruled]{algorithm2e}
\SetArgSty{textnormal}
\usepackage{setspace}
\usepackage{mathtools}
\usepackage{algpseudocode}
\usepackage[numbers]{natbib}
\usepackage{comment} 
\usepackage{tcolorbox} 
\usepackage{xfrac}
\usepackage{multirow}
\usepackage{caption}
\usepackage{bm}
\usepackage{newfloat}
\usepackage{enumitem}
\usepackage{dblfloatfix} 
\usepackage{wrapfig}
\usepackage{dsfont}
\usepackage{mdframed}
\usepackage{multirow}
\usepackage{tikz}
\usetikzlibrary{positioning}
\usepackage{pgffor}
\usepackage{threeparttable}
\usetikzlibrary{positioning,fit}
\usepackage{fancyhdr}
\usepackage{hyperref}
\usepackage[noabbrev,nameinlink]{cleveref}
\usepackage{refcount}
\usepackage{tikz}
\usepackage{pgfplots}
\pgfplotsset{compat=1.18}

\usepackage[margin=1in]{geometry}

\allowdisplaybreaks

\definecolor{mygreen}{RGB}{10,110,230}
\definecolor{myred}{RGB}{10,110,230}

\hypersetup{
     colorlinks=true,
     citecolor= mygreen,
     linkcolor= myred
}

\newcommand{\ev}{\mathop{\mathbb{E}}} 
\newcommand{\vecx}{\Vec{x}} 
\newcommand{\y}{y} 
\newcommand{\ranking}{{\textsc{Ranking}}} 
\DeclareMathSymbol{\shortminus}{\mathbin}{AMSa}{"39}

\newcommand{\Franking}{{\textsc{FRanking}}} 
\newcommand{\Fullyranking}{{\textsc{Fully-Ranking}}} 

\newcommand{\M}{\textsc{R}}
\newcommand{\A}{\textsc{A}}

\SetKwRepeat{Do}{do}{while}

\renewcommand{\epsilon}{\ensuremath{\varepsilon}}

\let\originalleft\left
\let\originalright\right
\renewcommand{\left}{\mathopen{}\mathclose\bgroup\originalleft}
\renewcommand{\right}{\aftergroup\egroup\originalright}

\crefname{lemma}{Lemma}{Lemmas}
\crefname{theorem}{Theorem}{Theorems}
\crefname{property}{Property}{Properties}
\crefname{claim}{Claim}{Claims}
\crefname{fact}{Fact}{Facts}
\crefname{result}{Result}{Results}
\crefname{corollary}{Corollary}{Corollaries}
\crefname{definition}{Definition}{Definitions}
\crefname{observation}{Observation}{Observations}
\crefname{proposition}{Proposition}{Propositions}
\crefname{assumption}{Assumption}{Assumptions}
\crefname{line}{Line}{Lines}
\crefname{figure}{Figure}{Figures}
\creflabelformat{property}{(#1)#2#3}
\crefname{equation}{}{}
\crefname{section}{Section}{Sections}
\crefname{appendix}{Appendix}{Appendices}
\crefname{algCounter}{Algorithm}{Algorithms}

\newtheorem{lemma}{Lemma}[section]
\newtheorem{theorem}[lemma]{Theorem}

\newtheorem{corollary}[lemma]{Corollary}

\newtheorem{definition}[lemma]{Definition}
\newtheorem{claim}[lemma]{Claim}
\newtheorem{fact}[lemma]{Fact}

\newtheorem*{remark*}{Remark}

\newtheorem*{definition*}{Definition}

\makeatletter
\def\thm@space@setup{\thm@preskip= 0.2cm \thm@postskip=\thm@preskip}
\makeatother

\definecolor{mygreen}{RGB}{20,155,20}
\definecolor{myred}{RGB}{195,20,20}
\definecolor{linkcolor}{RGB}{0,0,230}
\definecolor{mylightgray}{RGB}{230,230,230}
\definecolor{verylightgray}{RGB}{240,240,240}
\definecolor{commentcolor}{RGB}{120,120,120}

\SetCommentSty{mycommfont}

\hypersetup{colorlinks=true, citecolor= mygreen, linkcolor= myred, urlcolor= mygreen}

\newcounter{myalgctr}

\newenvironment{tbox}{\par\addvspace{0.2cm}\begin{tcolorbox}[width=\textwidth, boxsep=2pt, left=1pt, right=1pt, top=4pt, boxrule=1pt, arc=0pt, colback=white, colframe=black]}{\end{tcolorbox}}

\newenvironment{tboxh}{\par\addvspace{0.2cm}\begin{tcolorbox}[width=\textwidth, boxsep=2pt, left=1pt, right=1pt, top=4pt, boxrule=1pt, arc=0pt, colback=white, colframe=black, float=t]}{\end{tcolorbox}}

\newenvironment{graytbox}{
\par\addvspace{0.1cm}
\begin{tcolorbox}[width=\textwidth,
                  boxsep=5pt,
                  left=1pt,
                  right=1pt,
                  top=2pt,
                  bottom=2pt,
                  boxrule=0pt,
                  arc=0pt,
                  colback=mylightgray,
                  colframe=black,
                  ]
}{
\end{tcolorbox}
}

\newcommand{\tboxhrule}{\vspace{0.1cm} \hrule \vspace{0.2cm}}

\newenvironment{titledtbox}[1]{\begin{tbox}#1 \tboxhrule}{\end{tbox}}
\newenvironment{titledtboxh}[1]{\begin{tboxh}#1 \tboxhrule}{\end{tboxh}}

\newcounter{protocolcounter}

\crefname{protocolcounter}{Algorithm}{Algorithms}

\makeatletter
\renewcommand{\paragraph}{%
  \@startsection{paragraph}{4}%
  {\z@}{10pt}{-1em}%
  {\normalfont\normalsize\bfseries}%
}
\makeatother

\makeatletter
\patchcmd{\@algocf@start}
  {-1.5em}
  {0pt}
  {}{}
\makeatother

\title{A Unified Framework for Analysis of \\ Randomized Greedy Matching Algorithms}

\author{
Mahsa Derakhshan \\ {\em Northeastern University} \and 
Tao Yu  \\ {\em Northeastern University}
}
\date{}

\newif\ifdoublecol
\doublecolfalse

\begin{document}
\maketitle

\setcounter{page}{1}

Randomized greedy algorithms form one of the simplest yet most effective approaches for computing approximate matchings in graphs. In this paper, we focus on the class of \emph{vertex-iterative} (VI) randomized greedy matching algorithms, which process the vertices of a graph $G=(V,E)$ in some order $\pi$ and, for each vertex $v$, greedily match it to the first available neighbor (if any) according to a preference order $\sigma(v)$. Various VI algorithms have been studied, each corresponding to a different distribution over $\pi$ and $\sigma(v)$.

We develop a unified framework for analyzing this family of algorithms and use it to obtain improved approximation ratios for \textsc{Ranking} and \textsc{FRanking}, the state-of-the-art VI randomized greedy algorithms for the random-order and adversarial-order settings, respectively. In \textsc{Ranking}, the decision order $\pi$ is drawn uniformly at random and used as the common preference order for all vertices, whereas \textsc{FRanking} uses an adversarially chosen decision order $\pi$ and a uniformly random preference order $\sigma$ shared by all vertices. We obtain an approximation ratio of $0.560$ for \textsc{Ranking}, improving on the previous best ratio of $0.5469$ by Derakhshan, Roghani, Saneian, and Yu~[SODA 2026]. For \textsc{FRanking}, we obtain a ratio of $0.539$, improving on the $0.521$ bound of Huang, Kang, Tang, Wu, Zhao, and Zhu~[JACM 2020]. These results also imply state-of-the-art approximation ratios for \emph{oblivious matching} and \emph{fully online matching} problems on general graphs.

Our analysis framework also enables us to prove improved approximation ratios for graphs with no short odd cycles. Such graphs form an intermediate class between general graphs and bipartite graphs. In particular, we show that \textsc{Ranking} is at least $0.570$-competitive for graphs that are both triangle-free and pentagon-free. For graphs whose shortest odd cycle has length at least $129$, we prove that \textsc{Ranking} is at least $0.615$-competitive.

\newpage
\tableofcontents
\newpage



\section{Introduction}
Randomized greedy matching algorithms form a broad and fundamental class of algorithms for approximating matching under uncertainty. The line of research was initiated by Dyer and Frieze~\cite{Randomized_greedy_matching}, who showed that the natural edge-iterative algorithm, which processes all edges greedily in a uniformly random order, achieves a worst-case approximation ratio of $0.5$, identical to the deterministic greedy algorithm. On the bright side, a large class of randomized greedy algorithms that are vertex-iterative have been studied and shown to achieve approximation ratios strictly greater than $0.5$~\cite{0.505_simplified_Ranking,0.521Franking,0.523ranking,0.526ranking,0.531RDO,0.546ranking,2/3MRG,randomized_greedy_matching_II}.

The vertex-iterative (VI) randomized greedy matching  algorithms follow the recipe first formalized by Aronson, Dyer, Frieze, and Suen~\cite{randomized_greedy_matching_II}: given a graph $G=(V,E)$, the algorithm processes vertices in $V$ in some order $\pi$, and for each $v$ processed, it greedily matches its first available neighbor, if any, according to a preference order $\sigma(v)$. Different VI randomized greedy algorithms correspond to different distributions of $\pi$ and $\sigma(v)$. A summary of the state-of-the-art results is given in \Cref{tab:sota_matching_ratios}.

The VI randomized greedy framework naturally applies to the classical online matching problem under the vertex arrival model. In this setting, vertices arrive sequentially; upon arrival, all their neighbors are revealed\footnote{\label{footnote:fullyonline}See a detailed discussion of the fully online model and the neighbor-revealing assumption in \cref{Fully-Online_Section}.}. Each vertex must be matched to one of its neighbors immediately, without knowledge of future arrivals. The VI randomized greedy framework captures this process by interpreting the arrival order as the processing order $\pi$ and the preference order $\sigma(v)$ as the algorithm’s decision rule upon the arrival of each vertex. For example, the celebrated \ranking{} algorithm~\cite{karp1990} can be viewed as a VI randomized greedy algorithm operating in the online bipartite matching model with one-sided vertex arrivals.

By this connection, we can roughly classify VI randomized greedy algorithms into two major categories, depending on whether the process (arrival) order $\pi$ is uniformly at random or adversarial. We call these two models the uniform random vertex arrival model and the adversarial (fully online) vertex arrival model, respectively. In this paper, we study the \ranking{} and the \Franking{} algorithms, corresponding respectively to the best state-of-the-art VI randomized greedy algorithms for these two models.

\vspace{ -2 mm}
\paragraph{Uniform Random Vertex Arrival Model.}
In the uniform vertex arrival model~\cite{randomized_greedy_matching_II,2/3MRG,0.531RDO,flawedranking,0.505_simplified_Ranking,0.523ranking,0.526ranking,0.546ranking}, $\pi$ is a uniform random permutation over $V$ and the algorithm can choose $\sigma(v)$ for each $v$. Choosing each $\sigma(v)$ as a fixed preference order (as in RDO) or an independent random permutation (as in MRG) both yield a state-of-the-art $0.531$ approximation ratio, by Tang, Wu, and Zhang~\cite{0.531RDO}. Setting $\sigma(v)=\pi$ results in the \ranking{} algorithm for general graphs, which achieves an approximation ratio of $0.5469$, proved recently by Derakhshan, Roghani, Saneian, and Yu~\cite{0.546ranking}, with an upper bound of $0.727$~\cite{0.727Ranking}. 

\vspace{ -2 mm}
\paragraph{Adversarial Vertex Arrival Model.}
In the adversarial arrival (fully online) model, $\pi$ is an arbitrary fixed ordering. If each $\sigma(v)$ is chosen independently and uniformly at random, the resulting algorithm IRP achieves a worst-case ratio of $0.5 + o(1)$ \cite{Randomized_greedy_matching}. When all $\sigma(v)$ share the same uniform random permutation, we obtain \Franking{}, whose current approximation ratio lies between $0.521$ and $0.5671$ by Huang, Kang, Tang, Wu, Zhao, and Zhu~\cite{0.521Franking}.

\begin{table*}[h]
    \centering
\captionsetup{font=footnotesize}
    \caption{Approximation Ratios of VI Randomized Greedy Matching Algorithms on General Graphs\protect\footnotemark }
    
    \label{tab:sota_matching_ratios}
    \begin{tabular}{|l|c|c|c|c|}
        \hline
        \textbf{Algorithm} & \textbf{Decision $(\pi)$} & \textbf{Preference $(\sigma)$} & \textbf{Lower Bound} & \textbf{Upper Bound} \\
        \hline
        Greedy & Adversarial & Adversarial & $0.5$ & $0.5$ \\
        \hline
        \Franking{} & Adversarial & Common Random & $0.521$~\cite{0.521Franking} $\to$ $\mathbf{0.539}$ & $0.5671$~\cite{0.521Franking} \\
        \hline
        IRP & Adversarial & Independent Random & $0.5$ & $0.5$~\cite{Randomized_greedy_matching}\\
        \hline
        RDO & Random & Adversarial & $0.531$~\cite{0.531RDO} & $0.625$~\cite{0.531RDO}\\
        \hline
        \ranking{} & \multicolumn{2}{c|}{Common Random} & $0.5469$~\cite{0.546ranking} $\to$ $\mathbf{0.560}$ & $0.727$~\cite{0.727Ranking} \\
        \hline
        UUR & Random & Common Random & $0.531$~\cite{0.531RDO} $\to$ $\mathbf{0.539}$ & $0.75$~\cite{flawedranking} \\
        \hline
        MRG & \multicolumn{2}{c|}{Independent Random} & $0.531$~\cite{0.531RDO} & $0.666$~\cite{2/3MRG} \\
        \hline

    \end{tabular}
    \begin{tablenotes}
    \centering
    \footnotesize
    \item \textbf{Note:} Our results are shown in bold numbers.
  \end{tablenotes}
\end{table*}

\footnotetext{See a short discussion of various VI randomized greedy matching algorithms in \cref{Sec:various_VI_algorithms}.}

\paragraph{Query-Commit Problem.}\label{query-commit-view-intro} All of the above algorithms can be viewed as randomized vertex-iterative algorithms for the query-commit matching problem. In the query-commit matching (oblivious matching) problem, introduced by~\cite{Randomized_greedy_matching}, and subsequently studied in~\cite{flawedranking,0.523ranking,0.526ranking,0.531RDO,0.546ranking,0.521Franking}, the algorithm does not have prior information about the edge set $E$ and must select an ordered list $L$ (possibly randomized) of edges to query. Whenever a queried edge $e$ exists with both ends available, the algorithm is forced to include that edge in the matching. The objective remains to maximize the size of the produced matching relative to the maximum matching in $G$. Although $L$ is not restricted to be vertex-iterative (i.e., the algorithm is not required to query all edges incident to a vertex consecutively), the current best approximation ratio for this problem is $0.5469$, achieved by the vertex-iterative algorithm \ranking{}~\cite{0.546ranking}, with an upper bound of $0.7583$~\cite{edgeweighted0.659}.

\paragraph{Randomized Greedy Matching on Bipartite Graphs.} Randomized greedy matching is much better understood on bipartite graphs. In the online bipartite matching with one-sided adversarial
vertex arrival model, Karp, Vazirani, and Vazirani~\cite{karp1990} show that \ranking{} achieves an optimal $1-\frac{1}{e}$ approximation ratio. Under uniform random vertex arrival, \ranking{} achieves a $0.696$ approximation ratio on bipartite graphs~\cite{0.696Ranking_bipartite}, which is close to its upper bound $0.727$~\cite{0.727Ranking}. Compared to bipartite graphs, the best approximation ratio achieved by any randomized greedy matching algorithm on general graphs, despite extensive efforts~\cite{0.521Franking,0.526ranking,0.531RDO,0.546ranking,randomized_greedy_matching_II,flawedranking}, remains as small as $0.5469$.

\paragraph{Large Odd Girth Graphs.}
The sharp contrast between bipartite and general graphs suggests studying intermediate graph classes. A natural candidate is to classify graphs by their odd girth, namely the length of the shortest odd cycle. As the odd girth of a graph grows, the graph becomes increasingly bipartite-like at a local scale. For example, if a graph has odd girth $k$, then any subgraph with fewer than $k$ vertices has to be bipartite. To the best of our knowledge, this perspective has not been systematically explored in the analysis of randomized greedy matching on general graphs.

\subsection{Our Contribution} In this paper, we show that \ranking{} and \Franking{} achieve approximation ratios of $0.560$ and $0.539$, improving from $0.5469$ \cite{0.546ranking} and $0.521$ \cite{0.521Franking}, respectively. 
\begin{graytbox}
\begin{restatable}{theorem}{RankingGeneral}
The approximation ratio for \ranking{} is at least $0.560$ for general graphs.
\end{restatable}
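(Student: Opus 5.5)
We will follow the randomized primal--dual (charging) approach used for \ranking{} on general graphs. We represent the uniform random order $\pi$ by i.i.d.\ ranks $y_u \sim U[0,1]$. We fix a maximum matching $M^*$, and for every matched edge $(u,w)$ of the \ranking{} output we split a unit of gain between $u$ and $w$ according to a gain-sharing function $g(y)$ of the rank of the vertex that made the decision. The goal is to show that for every edge $(u,v)\in M^*$,
\[
\ev[\alpha_u+\alpha_v]\ge 0.560 ,
\]
where $\alpha$ denotes the gain received. Summing over $M^*$ then gives the ratio. As in prior work, the analysis fixes $y_{-u}$ and studies how the matching changes as $y_u$ varies. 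The key structural facts are the following. The matching status of $v$ in the graph with $u$ removed determines a threshold $\theta$ such that $u$ is matched whenever $y_u<\theta$. Moreover, the alternating-path (``victim'') structure controls how much $v$ loses when $u$ is reinserted.

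The unified framework step is to define the state of the pair $(u,v)$ through a small number of coupled random variables. These are the thresholds and ranks of $u$, $v$, and their partners in $G-u$ and $G-v$. We also exploit the symmetry between $u$ and $v$, which is where general graphs differ from bipartite ones: odd cycles allow $v$'s partner to change when $u$ moves. We would then derive a family of lower bounds on $\ev[\alpha_u+\alpha_v \mid \text{state}]$, one per case:
\begin{itemize}
\item both $u$ and $v$ are matched in each other's absence;
\item exactly one of them is;
\item neither is;
\item the partner's rank lies above or below the threshold.
\end{itemize}
Each bound is expressed as an integral of $g$ over rank intervals.

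The main technical novelty needed to beat $0.5469$ is to avoid treating the worst case for each vertex independently. For this we would combine the two ``views'' ($u$ removed versus $v$ removed) through a joint distribution over the thresholds. The adversary then chooses a distribution over states rather than a single worst state. This yields a factor-revealing program: choose $g$ (piecewise linear or step on a fine grid) to maximize the minimum over states of the bound. Its value is certified by discretizing ranks, and the result is a finite LP whose optimum exceeds $0.560$. We would verify monotonicity and Lipschitz properties of $g$ to control the discretization error rigorously.

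The main obstacle is the case in which $u$'s insertion causes a cascade that unmatches $v$'s partner through an odd alternating path. In bipartite graphs this case is impossible, and in general graphs it is exactly what pulled previous bounds down. Our first approach would be to bound its probability by conditioning on the rank of the first vertex on the alternating path and using that this rank must lie in a restricted interval. We would charge the loss to the compensating gain of that vertex's partner.

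If this is too lossy, we would introduce an additional auxiliary dual variable, a ``bonus'' paid to victims, and add the corresponding constraints to the LP. We would then re-optimize $g$ jointly with the bonus function to recover the constant.
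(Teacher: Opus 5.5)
Your proposal is an outline rather than a proof, and its decisive step, ``the result is a finite LP whose optimum exceeds $0.560$'', is asserted, not derived. The theorem is precisely a statement about the value of a specific factor-revealing program. Everything therefore rests on which per-state lower bounds and which restrictions on the adversary you can actually prove, and you specify neither.

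In particular, the step you call the main novelty does nothing on its own. If the adversary may choose any distribution over states, then $\inf_{D}\ev_{s\sim D}[G(s)]=\inf_s G(s)$, so nothing is gained over the single worst state. A gain requires a proven restriction on the admissible distributions. The paper's restriction is \cref{monoto_before_backup_ranking}: if $u$ is matched to $v$ with backup $b$ in $\vecx_{\shortminus u^\ast}$, then demoting $v$ to any rank in $[x_v,x_b)$ leaves the matching unchanged. Hence the density of $x_v$ is nondecreasing on $[0,x_b)$ (on $[0,1]$ without a backup), and the layer-cake argument of \cref{lem:inf-bound} reduces the adversary to uniform $x_v$ on some $[v_0,b_0]$. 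Your state (thresholds and partners in $G-u$ and $G-v$) has no analogous property identified. It also omits the backup, which is what controls how $u$ can be hurt: by \cref{facf:backup_ranking}, a vertex made worse off by inserting $u^\ast$ is matched exactly to its backup, so $u$ can be unmatched only when it has no backup.

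For the cascade case, your first plan (bounding its probability via the rank of the first path vertex) is never made concrete. Your fallback, a bonus paid to victims, is the right kind of tool, but it is zero-sum unless you show that $u$ and $u^\ast$ receive more than they pay. The facts that make it pay off are these:
\begin{itemize}
\item Each vertex has at most one victim, hence pays at most one compensation.
\item Suppose inserting $u^\ast$ makes $u$ worse off via the path $u^\ast=u_0,u_1,\dots,u_j=u,\dots$. If $u$ is unmatched ($u=u_k$), then every even-indexed $u_i$ with $i<k$ is a blocker of $u$ (\cref{lem:victim_querycommit_case_u}).
\item In the same situation, whenever $u^\ast$ is moved to a rank where it is unmatched but still preferred by $u$ to its backup, every odd-indexed $u_i$ with $i<j$ is a blocker of $u^\ast$ (\cref{lem:victim_querycommit_case_u_star}).
\item The path has length at least $4$ except in the single degenerate case $\theta_0=x_u$ (\cref{fact:theta_0}).
\end{itemize}
The length bound is essential. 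With a path of length $2$, the only blocker of $u$ is $u^\ast$ itself, and that transfer cancels inside $\text{gain}(u)+\text{gain}(u^\ast)$.

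The paper turns these facts into a step-function LP. It uses a bivariate gain $g(x_{\text{buyer}},x_{\text{product}})$ under a random buyer/product bipartition. It adds a bivariate compensation $h(x_w,x_z)$, with constraints that make ``unmatched'' the worst case. It uses the monotonicity of $u^\ast$'s match in its own rank, which keeps the LP at the size of the $0.5469$ analysis. Finally, it adds the length-$6$ refinement via $\theta_3$. The resulting LP has value $0.56001$ at $n=80$. With a univariate $g$ of the deciding vertex's rank and none of these structural constraints, nothing in your plan indicates that the resulting program would reach $0.560$, or even improve on $0.5469$.
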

\end{graytbox}

\begin{graytbox}
\begin{restatable}{theorem}{FRankingGeneral}
    The approximation ratio for \Franking{} is at least $0.539$ for general graphs.
\end{restatable}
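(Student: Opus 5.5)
We prove the bound for \Franking{} with an online primal--dual (randomized) analysis. The method follows the style of Huang, Kang, Tang, Wu, Zhao, and Zhu, and is phrased in the same unified framework we use for \ranking{}. Each vertex $u$ receives a random rank $y_u \in [0,1]$ drawn i.i.d.\ uniformly; sorting the ranks gives the common preference order $\sigma$. The arrival order $\pi$ is adversarial and fixed. Whenever an edge $(u,v)$ is added to the matching, we split one unit of gain between $u$ and $v$: the endpoint that is the \emph{passive} side of the match, i.e.\ the already-present neighbor that gets chosen, receives $g(y)$, and the actively deciding vertex receives $1-g(y)$, where $g$ is a gain-sharing function to be optimized. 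The goal is to show that for every edge $(u,v)$ of a fixed maximum matching $M^\ast$,
\begin{equation*}
\ev[\alpha_u + \alpha_v] \ge \Gamma = 0.539 .
\end{equation*}
Summing over $M^\ast$ then gives the ratio.

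The key steps, in order, are as follows.

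\textbf{Step 1: fix the randomness of $v$.} Fix the ranks of all vertices other than $u$ and treat $y_u$ as the variable. We use the standard structural facts: $u$'s matching status is monotone in $y_u$, and removing $u$ changes the matching only along a single alternating path. These give a threshold $\theta$ such that $u$ is matched to someone of rank at most $\theta$ whenever $y_u < \theta$.

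\textbf{Step 2: case analysis.} We split according to whether $u$ or $v$ arrives first, and whether, in the graph without $u$, $v$ is matched actively or passively and at what rank.

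\textbf{Step 3: lower bound per case.} In each case we bound $\ev_{y_u}[\alpha_u] + \alpha_v$ by an integral expression in $g$ and $\theta$. Unlike the bipartite case, the alternating path can cause $v$ to lose gain when $u$ is present. We control this using the framework's ``compensation'' lemma: a neighbor along the alternating path whose gain decreases must have its loss charged to a gain elsewhere on the path. It is this charging, refined via the earlier unified analysis of \ranking{}, that improves on the $0.521$ bound.

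\textbf{Step 4: optimize.} Finally, we choose $g$ (or a discretized step function) by solving a factor-revealing LP over the case inequalities, and verify numerically that the minimum is at least $0.539$.

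The main obstacle is Step~3 for general graphs. Odd cycles mean the alternating-path argument no longer preserves the clean ``active/passive'' roles: a vertex matched passively when $u$ is absent may become active, and vice versa, when $u$ is present. We would first try to prove an invariant that the ranks along the alternating path are monotone, at least on the portion that affects $v$. Where monotonicity fails, we would lose a bounded amount, and account for that loss with an extra slack variable in the LP. Making this loss small enough to still reach $0.539$ is where the real difficulty lies, and we expect to need the framework's more refined characterization of how $v$'s match changes rather than a worst-case bound.
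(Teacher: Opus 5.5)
There is a genuine gap. Step~3 carries the whole proof, you leave it open, and the route you sketch for it would not work. Below I use the paper's notation: $u^*$ is the vertex whose rank $x$ varies, $u$ is its partner in $M^\ast$, and $v$ is the match of $u$ in $\M(\vecx_{\shortminus u^*})$.

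Your gain-sharing rule has only the split $g$. The phrase ``charging a loss to a gain elsewhere on the path'' is neither a well-defined rule nor the right structure. With $g$ alone the analysis is stuck at $1/2$, because inserting $u^*$ at ranks below the impacting rank $\theta_0$ can leave $u$ unmatched.

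The paper adds a second, concrete transfer. An \emph{active} matched vertex $w$ whose removal would rematch some unmatched vertex (its victim) pays that victim $h(x_z)$, where $z$ is the passive partner of $w$. Since every vertex has at most one victim, no vertex pays more than one copy. Only unmatched victims receive anything: a worse-off $u$ that is still matched, to its backup, gets nothing. The blocker lemmas then say exactly who pays.
\begin{itemize}
\item If $u^*$ is unmatched at a rank $x>\theta_0$ for which $u$ still prefers $u^*$ to its backup, then $u^*$ is the victim of $u_1$ from the alternating path created at a rank just below $\theta_0$. That $u_1$ is active with backup $u_2$ of rank exactly $\theta_0$, so it pays $h(\theta_0)$.
\item If an actively matched $u$ is left unmatched, the new partner of $v$ is necessarily active and pays $h(x_v)$.
\end{itemize}

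Your planned invariant, monotone ranks along the alternating path, is false for \Franking{}. The alternating path lemma only makes the query times $(\pi(\text{active end}),x_{\text{passive end}})$ increase. That yields rank monotonicity only at a vertex that is active in both incident path edges.

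Likewise, your Step~1 rank threshold is a \ranking{}-type guarantee. In \Franking{} the rank of an active partner carries no information: demoting a vertex can move it from a high-rank, early-deciding partner to a lower-rank one. The monotonicity that survives is only that $u^*$ stays passively matched for all ranks below the marginal rank $\theta_1$. A slack variable cannot absorb the ``loss'' either, since that loss can be the entire gain of $u$.

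Even with a compensation rule in place, your plan is missing the ingredients that lift the LP from $0.521$ to $0.539$:
\begin{itemize}
\item backups, which make the worse-off outcome of $u$ unique (its backup $b$, or unmatched), so only the single threshold pair $(\theta_0,x_b)$ arises and no growth constraint on $h$ is needed;
\item the six active/passive profiles of $u$ in $\vecx_{\shortminus u^*}$, each with its own bound;
\item the constraints $\theta_0\le\theta_1$, and $\theta_1\ge x_v$ when $u$ actively matches $v$;
\item the fact that demoting an actively matched $v$ to any rank in $[x_v,x_b)$ leaves the matching unchanged, so $x_v$ has an increasing density and the worst profile distribution reduces to uniform intervals;
\item the marginal rank $\theta_1^u$ of $u$, which separates the passive and active profile classes;
\item the compensation rule without the neighbor requirement.
\end{itemize}
Without an explicit inequality system built from these, Step~4 has no LP to solve, so your plan does not certify $0.539$. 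In the paper, that number is the optimum of the discretized \Franking{} LP at $n=45$.
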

\end{graytbox}
Our result for \ranking{} applies to both the randomized greedy matching problem and the query-commit matching problem on general graphs, improving the approximation ratio from $0.5469$~\cite{0.546ranking} to $0.560$.

Our result for \Franking{} improves the state-of-the-art approximation ratio for the fully online matching problem\footref{footnote:fullyonline} on general graphs. Moreover, our bound also extends to the UUR algorithm, improving upon the $0.531$ ratio established through the RDO analysis in~\cite{0.531RDO}.

Our analysis builds on two notions introduced separately in prior work: \emph{victims}~\cite{0.521Franking} and \emph{backups}~\cite{0.546ranking}. A central technical contribution of this paper is to show that these notions are intrinsically connected: both arise from the alternating path structure underlying all randomized greedy matching algorithms. The resulting framework (\cref{sec:Alt_path_backup_victim}) hence applies broadly to randomized greedy algorithms beyond vertex-iterative ones, and may be useful in other settings, including weighted variants.

This alternating-path viewpoint also leads to a new odd-girth-sensitive analysis: by relating alternating paths to odd cycles, we initiate the study of randomized greedy matching algorithms parameterized by odd girth, and show that \ranking{} achieves stronger approximation ratios as the odd girth increases\footnote{See \Cref{tab:fn_rectangular4} for results for other odd girths.}.
\begin{graytbox}
\begin{theorem}
The \textsc{Ranking} algorithm achieves an approximation ratio of at least $0.562$ on triangle-free graphs, at least $0.570$ on graphs that are both triangle-free and pentagon-free, and at least $0.615$ on graphs of odd girth at least $129$.
\end{theorem}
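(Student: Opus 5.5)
The argument is a primal--dual (randomized) analysis of \ranking{}. The gain over general graphs comes from the alternating-path structure that links victims and backups, which we would set up in \cref{sec:Alt_path_backup_victim}.

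\textbf{Step 1: reduce to a factor-revealing program.} Fix a perfect matching $M^*$; the standard reduction lets us assume $G$ has one. For each edge $(u,v)\in M^*$, we split the gain of the matched algorithm between $u$ and $v$ with a gain-sharing function $g$ of the random ranks. We then lower bound $\ev[\alpha_u+\alpha_v]$ by fixing all ranks except $y_u$ and taking the expectation over it. As in the general-graph analysis, two ingredients control the bad event that $u$'s partner is taken by another vertex: the \emph{backup} of $u$ (the vertex that would match $u$ if $v$ were removed) and the \emph{victim} of $v$ (the vertex displaced when $v$ is inserted). From these we obtain a family of constraints on $g$ and a set of configurations. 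The ratio is then the optimum of a factor-revealing LP, discretized over rank intervals, exactly as for the general-graph theorem.

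\textbf{Step 2: odd-girth structural lemma.} The key new step is to show that, when the odd girth is at least $2k+1$, some of the bad configurations cannot occur. Each bad configuration corresponds to an alternating path, obtained by comparing the runs with and without $v$ (or $u$), whose endpoint returns to $u$ or $v$. Together with the edge $(u,v)$, such a path closes a cycle, and its parity is forced to be odd because matched and unmatched edges alternate along it.

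So, for example, in the bad configuration where $u$'s backup is also $v$'s victim's partner, the witnessing path has length $2$ or $4$, and this would create a triangle or a pentagon. In general, configurations whose witnessing alternating path has length less than $2k$ are excluded.

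We would state this as a lemma: under odd girth at least $2k+1$, the victim/backup chain generated from a fixed $(u,v)$ cannot revisit $\{u,v\}$ within $k$ steps. Consequently the chain behaves as in a bipartite graph for $k$ levels. We would then show, by induction on the level of the chain, that the depth-$\ell$ backup and victim probabilities satisfy the stronger bipartite-type monotonicity inequalities, meaning the $1-1/e$-style constraints, for $\ell<k$.

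\textbf{Step 3: add constraints and solve numerically.} We add these extra constraints to the LP with $k=1,2$ for the triangle-free and triangle- plus pentagon-free cases, and with $k=64$ for odd girth $129$. Solving the discretized LP then certifies $0.562$, $0.570$ and $0.615$, respectively. For large $k$ we would truncate the chain at depth $64$ and use the general-graph bound beyond that depth.

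\textbf{Main obstacle.} The main obstacle is Step 2. The alternating paths compare two different executions of \ranking{}, so it is not immediate that they form a simple path in $G$ whose union with $(u,v)$ is a genuine odd cycle rather than a closed walk. A closed walk only guarantees an odd cycle of length at most the walk length, which suffices for the exclusion but needs care when vertices repeat.

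I would first try to prove that the symmetric difference of the two matchings is a single alternating path, which is the standard fact for a single-vertex insertion, and then argue that its length parity is determined by which endpoint is involved. A secondary difficulty is making the depth-$64$ LP tractable, for instance by exploiting stationarity of the constraints in $\ell$.
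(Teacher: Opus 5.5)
Your parity observation is the paper's starting point. The alternating path from $u^*$ to $u$ that witnesses the bad event has even length, so together with $(u,u^*)$ it closes an odd cycle, and it must have length at least $2k$ when the odd girth is at least $2k+1$. Your closed-walk worry is not an issue: the Alternating Path Lemma already gives a single path with distinct vertices.

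The gap is in what you do with this fact. Large odd girth does \emph{not} rule out the bad event. Inserting $u^*$ can still leave $u$ unmatched, and demoting $u^*$ can still leave $u^*$ unmatched; this just happens through long paths. So there is no basis for bipartite, $1-1/e$-type monotonicity inequalities at any ``depth'' of a backup/victim chain. Those inequalities rest on $u$ never losing its partner when $u^*$ is inserted, which still fails here. Your induction on chain levels is not a well-defined statement, and its natural reading is false.

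Merely excluding short-path configurations also buys little. In the paper's LP, the only configuration it removes is the length-$2$ case ($\theta_0=x_u$ with $u^*$ matched to $v$), which matters for triangle-free graphs. From pentagon-free graphs on, nothing further is excluded, because all paths of length at least $4$ are already treated alike. So your plan has no mechanism behind the improvements from $0.562$ to $0.570$ and up to $0.615$.

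The missing idea is what a long path gives you positively, and it needs a compensation function $h$, which your Step~1 never introduces (you only optimize $g$). In the paper, every matched blocker transfers $h(\text{own rank},\text{partner's rank})$ to its victim. By the two blocker lemmas:
\begin{itemize}
\item every even-indexed vertex $u^*,u_2,\dots,u_{l-2}$ on the path is a blocker of $u$;
\item every odd-indexed vertex $w,u_3,\dots,v$ is a blocker of $u^*$ at demoted ranks where $u^*$ is unmatched, provided $u$ prefers $u^*$ to its backup, i.e.\ $x<x_b$.
\end{itemize}
Hence each of $u$ and $u^*$ collects at least $k$ copies of compensation. The rank monotonicity $x_{u_i}<x_{u_{i+2}}$ along the path bounds each extra copy below by $h(x_u,0^+)$ for $u$ and by $h(x_v,\theta_0)$ for $u^*$.

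To keep ``assume unmatched'' as the worst case, you must also strengthen the function constraints to $1-g(x,y)-h(x,y)\ge k\,h(0^+,1)$ and $g(x,y)-h(y,x)\ge k\,h(0^+,1)$. The resulting program is the simple general-graph LP, with the $\theta_0=x_u$ constraints dropped, plus extra terms with coefficient $\max\{k-2,0\}$. Its size is therefore independent of $k$, and no depth-$64$ truncation is needed.

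Finally, your indexing is off by one. With odd girth at least $2k+1$, triangle-free means $k=2$ and triangle- and pentagon-free means $k=3$; $k=1$ is just the general-graph case, which gives $0.560$, not $0.562$.
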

\end{graytbox}

\subsection{Further Related Works}
There has also been a long line of research on randomized greedy matching algorithms on vertex/edge weighted variants. We briefly discuss these works and their results.

\noindent\textbf{Weighted General Graphs.}
Weighted variants of randomized greedy matching algorithms in general graphs remain difficult. \cite{0.526ranking} showed that there exists an algorithm for the vertex-weighted randomized greedy matching problem that achieves a $0.5015$ approximation ratio. For the edge-weighted variant, \cite{0.531RDO} proved that there exists an algorithm achieving a $0.5014$ approximation ratio. The upper bound for both cases, on the other hand, remains $0.7583$ \cite{edgeweighted0.659}, the same as that for the unweighted case.

\noindent\textbf{Weighted Bipartite Graphs.} 
The vertex-weighted variant of \ranking{} has been proven to achieve a state-of-the-art $0.686$ approximation ratio through a line of work \cite{VertexWeightedBipartite1,VertexWeightedBipartite2,VertexWeightedBipartite3}, narrowing the gap toward the unweighted $0.696$ bound. For the edge-weighted case, \cite{edgeweighted0.659} showed an algorithm achieving a $0.659$ approximation ratio. Notably, the algorithm by~\cite{edgeweighted0.659} is equivalent to \ranking{} when all edges assume unit weights.

\section{Our Techniques}
To provide a general framework for analyzing randomized greedy matching algorithms, we view each algorithm as operating on an ordered list $L$ of edges (or, more precisely, on a total ordering of vertex pairs) that it processes greedily. Different randomized greedy algorithms correspond to different distributions over such orderings $L$. For example, in the case of \ranking{}, we have $L = \pi \times \pi$, where $\pi$ is a uniformly random permutation of the vertices. We use $\M(L)$ to denote the matching produced by running the greedy algorithm on $L$ (see \cref{def:greedyalg} for a formal definition).

Our analysis of both \ranking{} and \Franking{} builds on the standard primal--dual framework widely used to study randomized greedy matching algorithms (see \cite{RandomizedPrimalDual,Economicbasedanalysis,0.521Franking,0.5671optimalbipartite,0.531RDO,0.546ranking} and references therein). For each edge added to the matching by the algorithm, one unit of total \emph{gain} is distributed among its endpoints and a potential third vertex. Let $\text{gain}(v)$ denote the gain assigned to vertex $v$. To establish the approximation ratio, it suffices to show that for every edge $(u, u^\ast)$ in a maximum matching of $G$, the expected total gain satisfies
\[\ev\left[\text{gain}(u) + \text{gain}(u^\ast)\right] \ge \alpha,\]
for some constant $\alpha$, which directly implies the same lower bound on the algorithm’s approximation ratio.

The expectation is typically taken in two stages. First, we fix an arbitrary ordering of all edges except those incident to $u^\ast$, denoted by $L_{\shortminus u^\ast}$, and compute the expectation over the randomization associated with $u^\ast$ while keeping $L_{\shortminus u^\ast}$ fixed. Then we take the outer expectation over $L_{\shortminus u^\ast}$. Formally, the randomized primal--dual method evaluates
 $$\ev_{L_{\shortminus u^\ast}}\left[\; \ev_{\text{$u^\ast$}}\;\left[\text{gain$(u)$ $+$ gain$(u^\ast)$}\right]\right]$$

In the bipartite analysis of \ranking{}, an important monotonicity property holds: if vertex $u$ is matched under $L_{\shortminus u^\ast}$, it remains matched under the full ordering $L$, regardless of the randomization involving $u^\ast$. Leveraging this fact, along with a carefully designed gain-sharing rule, \cite{RandomizedPrimalDual,Economicbasedanalysis} established the optimal $1 - \tfrac{1}{e}$ approximation ratio for \ranking{} in bipartite graphs. However, this property fails in general graphs, which is the central obstacle in extending such analyses. In particular, a \emph{bad event} may occur when the arrival of $u^\ast$ causes a previously matched neighbor $u$ to become unmatched. If this phenomenon were to happen systematically, then no randomized greedy algorithm could surpass the trivial $1/2$-approximation guarantee.

\paragraph{Backups, Victims and Blockers.} To address this challenge in general graphs, two key concepts---\emph{backup}~\cite{0.546ranking} and \emph{victim}~\cite{0.521Franking}---were introduced in the literature for the analyses of \ranking{} and \Franking{}, respectively. In this work, we build on these notions, highlight their complementary roles, uncover new structural properties about them, and generalize both concepts to apply to randomized greedy matching algorithms beyond their original, model-specific settings.

\medskip\noindent
The concept of a \emph{backup} was introduced by~\cite{0.546ranking} in the analysis of \ranking{}. Informally, if a vertex $u$ has a backup (with respect to a given ordering), it is the vertex to which $u$ would be matched if its current match were removed from the ordering. The analysis in~\cite{0.546ranking} shows that when $u$ has a backup $b$ with respect to $L_{\shortminus u^\ast}$, the arrival of $u^\ast$ cannot cause $u$ to become unmatched. Moreover, they argue that if the probability of $u$ not having a backup is high, then the probability that $u$ and $u^\ast$ are simultaneously matched is also high. Hence, the notion of a backup serves to rule out the systemic occurrence of the bad event described earlier.

\ifdoublecol
\begin{tikzpicture}[
  even/.style={circle, fill=blue!60!white, inner sep=1.5pt},
  odd/.style={circle, fill=blue!60!white, inner sep=1.5pt},
  legendtext/.style={font=\footnotesize, align=justify},
]

\def\xLR{0.72}
\def\yStep{0.62}
\def\sep{4.8}

\def\textW{9.0}
\def\legTextW{2.2}

\pgfmathsetmacro{\xA}{-\sep/2}
\pgfmathsetmacro{\xB}{\sep/2}

\pgfmathsetmacro{\sysLeft}{\xA - \xLR - 1.2}
\pgfmathsetmacro{\sysRight}{\xB + \xLR + 1.2}
\pgfmathsetmacro{\sysCenter}{(\sysLeft+\sysRight)/2}

\begin{scope}

\foreach \i in {0,...,2} {
  \pgfmathsetmacro{\y}{-(\yStep)*\i}
  \ifodd\i
    \node[odd] (gA\i) at ({\xA-\xLR},\y) {};
    \node[anchor=east] at ({\xA-\xLR-0.25},\y) {$u$};
  \else
    \node[even] (gA\i) at ({\xA+\xLR},\y) {};
    \ifnum\i=0
      \node[anchor=west] at ({\xA+\xLR+0.25},\y) {$v$};
    \else
      \node[anchor=west] at ({\xA+\xLR+0.25},\y) {$b$};
    \fi
  \fi
}

\draw[thick] (gA0) -- (gA1);
\draw[dash pattern=on 6pt off 3pt, thick] (gA1) -- (gA2);

\node[font=\footnotesize] at (\xA, {-(\yStep)*2 - 0.38}) {$L_{\shortminus u^*}$};

\def\Uzero{$u^*$}
\def\Uone{$v$}
\def\Utwo{$u$}
\def\Uthree{$b$}

\foreach \i in {0,...,3} {
  \pgfmathsetmacro{\y}{-(\yStep)*\i + (\yStep)}

  \ifodd\i
    \node[even] (gB\i) at ({\xB+\xLR},\y) {};
    \ifnum\i=1
      \node[anchor=west] at ({\xB+\xLR+0.25},\y) {\Uone};
    \fi
    \ifnum\i=3
      \node[anchor=west] at ({\xB+\xLR+0.25},\y) {\Uthree};
    \fi
  \else
    \node[odd] (gB\i) at ({\xB-\xLR},\y) {};
    \ifnum\i=0
      \node[anchor=east] at ({\xB-\xLR-0.25},\y) {\Uzero};
    \fi
    \ifnum\i=2
      \node[anchor=east] at ({\xB-\xLR-0.25},\y) {\Utwo};
    \fi
  \fi
}

\draw[thick] (gB0) -- (gB1);
\draw[dash pattern=on 6pt off 3pt, thick] (gB1) -- (gB2);
\draw[thick] (gB2) -- (gB3);

\node[font=\footnotesize] at (\xB, {-(\yStep)*3 + (\yStep) - 0.38}) {$L$};

\pgfmathsetmacro{\yArrow}{-1*\yStep}

\draw[->, thick]
  ({\xA+\xLR+1.3}, \yArrow) -- ({\xB-\xLR-1.3}, \yArrow);

\node[font=\footnotesize, align=center]
  at ({(\xA+\xB)/2}, {\yArrow+0.32})
  {introducing $u^*$};

\pgfmathsetmacro{\xLegText}{\xA+\xLR+1.3}   
\pgfmathsetmacro{\xLegLineL}{\xLegText-0.6}
\pgfmathsetmacro{\xLegLineR}{\xLegText-0.1}
\pgfmathsetmacro{\yLegTop}{-1.75}
\pgfmathsetmacro{\legSep}{0.38}             

\draw[thick] (\xLegLineL,\yLegTop) -- (\xLegLineR,\yLegTop);
\node[legendtext, anchor=west, font=\footnotesize] at (\xLegText,\yLegTop) {Matched Edges};

\draw[dash pattern=on 6pt off 3pt, thick, font=\footnotesize]
  (\xLegLineL,{\yLegTop-\legSep}) -- (\xLegLineR,{\yLegTop-\legSep});
\node[legendtext, anchor=west] at (\xLegText,{\yLegTop-\legSep}) {Edges in $G$};

\pgfmathsetmacro{\xTextLeft}{\xA-\xLR-0.2}
\pgfmathsetmacro{\yText}{-2.3}

\node[anchor=north west, text width=7cm, align=left, font=\footnotesize]
  at (\xTextLeft,\yText)
{An example of how backup works. In $L_{\shortminus u^*}$, $u$ is matched to $v$ and has a backup $b$. By introducing $u^*$, the match of $u$ could be taken away. But since $u$ has a backup $b$, $u$ is still matched.};

\end{scope}
\end{tikzpicture}
\else
\begin{tikzpicture}[
  even/.style={circle, fill=blue!60!white, inner sep=1.5pt},
  odd/.style={circle, fill=blue!60!white, inner sep=1.5pt},
  legendtext/.style={font=\footnotesize, align=left},
]

\def\xLR{0.9}
\def\yStep{0.62}
\def\sep{6}
\def\legGap{2.2}
\def\legTextW{3.4}

\pgfmathsetmacro{\xA}{-\sep/2}
\pgfmathsetmacro{\xB}{\sep/2}
\pgfmathsetmacro{\xL}{\xB + \legGap}

\pgfmathsetmacro{\sysLeft}{\xA - \xLR - 2}
\pgfmathsetmacro{\sysRight}{\xL + \legTextW + 2}
\pgfmathsetmacro{\sysCenter}{(\sysLeft+\sysRight)/2}
\pgfmathsetmacro{\textWidth}{\sysRight - \sysLeft}

\begin{scope}[shift={({-\sysCenter},-2)}]

\foreach \i in {0,...,2} {
  \pgfmathsetmacro{\y}{-(\yStep)*\i}
  \ifodd\i
    \node[odd] (gA\i) at ({\xA-\xLR},\y) {};
    \node[anchor=east] at ({\xA-\xLR-0.25},\y) {$u$};
  \else
    \node[even] (gA\i) at ({\xA+\xLR},\y) {};
    \ifnum\i=0
      \node[anchor=west] at ({\xA+\xLR+0.25},\y) {$v$};
    \else
      \node[anchor=west] at ({\xA+\xLR+0.25},\y) {$b$};
    \fi
  \fi
}

\draw[ thick] (gA0) -- (gA1);
\draw[dash pattern=on 6pt off 3pt, thick] (gA1) -- (gA2);

\node[font=\footnotesize] at (\xA, {-(\yStep)*2 - 0.38}) {$L_{\shortminus u^*}$};


\def\Uzero{$u^*$}
\def\Uone{$v$}
\def\Utwo{$u$}
\def\Uthree{$b$}

\foreach \i in {0,...,3} {
  \pgfmathsetmacro{\y}{-(\yStep)*\i + (\yStep)}

  \ifodd\i
    \node[even] (gB\i) at ({\xB+\xLR},\y) {};
    \ifnum\i=1
      \node[anchor=west] at ({\xB+\xLR+0.25},\y) {\Uone};
    \fi
    \ifnum\i=3
      \node[anchor=west] at ({\xB+\xLR+0.25},\y) {\Uthree};
    \fi
  \else
    \node[odd] (gB\i) at ({\xB-\xLR},\y) {};
    \ifnum\i=0
      \node[anchor=east] at ({\xB-\xLR-0.25},\y) {\Uzero};
    \fi
    \ifnum\i=2
      \node[anchor=east] at ({\xB-\xLR-0.25},\y) {\Utwo};
    \fi
  \fi
}

\draw[ thick] (gB0) -- (gB1);
\draw[dash pattern=on 6pt off 3pt, thick] (gB1) -- (gB2);
\draw[ thick] (gB2) -- (gB3);

\node[font=\footnotesize] at (\xB, {-(\yStep)*3 + (\yStep) - 0.38}) {$L$};

\pgfmathsetmacro{\yArrow}{-1*\yStep}

\draw[->, thick]
  ({\xA+\xLR+1.3}, \yArrow) -- ({\xB-\xLR-1.3}, \yArrow);

\node[font=\footnotesize, align=center]
  at ({(\xA+\xB)/2}, {\yArrow+0.32})
  {introducing $u^*$};

\pgfmathsetmacro{\yL}{0.10}

\draw[thick] (\xL,\yL) -- ({\xL+0.6},\yL);
\node[legendtext, anchor=west, text width=\legTextW cm]
  at ({\xL+0.85},\yL) {Matched Edges};

\draw[dash pattern=on 6pt off 3pt, thick] (\xL,{\yL-0.6}) -- ({\xL+0.6},{\yL-0.6});
\node[legendtext, anchor=west, text width=\legTextW cm]
  at ({\xL+0.85},{\yL-0.6}) {Edges in $G$};

\pgfmathsetmacro{\yText}{-2*\yStep - 1} 
\node[anchor=north west, text width=\textWidth cm, font=\footnotesize] at (\sysLeft,\yText)
{An example of how backup works. In $L_{\shortminus u^*}$, $u$ is matched to $v$ and has a backup $b$. By introducing $u^*$, \quad\quad\quad\quad\quad the match of $u$ could be taken away. But since $u$ has a backup $b$, $u$ is still matched.};
\end{scope}
\end{tikzpicture}
\fi

\noindent The notion of a \emph{victim} was first introduced by~\cite{0.521Franking} in the analysis of \Franking{} and later adopted for RDO~\cite{0.531RDO}. Intuitively, when a vertex $w$ has a victim $u$, vertex $w$ in some sense takes away $u$’s opportunity to be matched---that is, if $w$ were removed, $u$ would become matched. Accordingly, in~\cite{0.521Franking,0.531RDO}, $w$ may transfer part of its gain to its victim $u$; this transferred gain is referred to as \emph{compensation}. By showing that vertices are more likely to receive compensation than to pay it, both works obtain approximation ratios strictly greater than $0.5$. In this work, we refer to the counterpart of a victim as a \emph{blocker}; that is, in the above example, $w$ is a blocker for $u$. We note that a victim may have multiple blockers, which plays a crucial role in our analysis: most of the time, a victim has at least two blockers and therefore receives more compensation than it pays. In graphs of large odd girth, this phenomenon is amplified: a victim will have at least $k/2$ blockers in graphs of odd girth $\geq 2k+1$. This allows us to collect compensations proportional to the length of the shortest odd cycle, resulting in improved approximation ratios for graphs of larger odd girth.

\ifdoublecol
\begin{tikzpicture}[
  even/.style={circle, fill=blue!60!white, inner sep=1.5pt},
  odd/.style={circle, fill=blue!60!white, inner sep=1.5pt},
  legendtext/.style={font=\footnotesize, align=justify},
]

\def\xLR{0.72}
\def\yStep{0.62}
\def\sep{4.8}

\def\textW{7cm}

\pgfmathsetmacro{\xA}{-\sep/2}
\pgfmathsetmacro{\xB}{\sep/2}

\begin{scope}

\def\Uzero{$w$}
\def\Uone{$v$}
\def\Utwo{$u$}

\foreach \i in {0,...,2} {
  \pgfmathsetmacro{\y}{-(\yStep)*\i + (\yStep)}

  \ifodd\i
    \node[even] (gL\i) at ({\xA+\xLR},\y) {};
    \ifnum\i=1
      \node[anchor=west] at ({\xA+\xLR+0.25},\y) {\Uone};
    \fi
  \else
    \node[odd] (gL\i) at ({\xA-\xLR},\y) {};
    \ifnum\i=0
      \node[anchor=east] at ({\xA-\xLR-0.25},\y) {\Uzero};
    \fi
    \ifnum\i=2
      \node[anchor=east] at ({\xA-\xLR-0.25},\y) {\Utwo};
    \fi
  \fi
}

\draw[thick] (gL0) -- (gL1);
\draw[dash pattern=on 6pt off 3pt, thick] (gL1) -- (gL2);

\node[font=\footnotesize] at (\xA, {-(\yStep)*2 + (\yStep) - 0.38}) {$L$};

\foreach \i in {0,...,1} {
  \pgfmathsetmacro{\y}{-(\yStep)*\i}
  \ifodd\i
    \node[odd] (gR\i) at ({\xB-\xLR},\y) {};
    \node[anchor=east] at ({\xB-\xLR-0.25},\y) {$u$};
  \else
    \node[even] (gR\i) at ({\xB+\xLR},\y) {};
    \node[anchor=west] at ({\xB+\xLR+0.25},\y) {$v$};
  \fi
}

\draw[thick] (gR0) -- (gR1);

\node[font=\footnotesize] at (\xB, {-(\yStep)*1 - 0.38}) {$L_{\shortminus w}$};

\pgfmathsetmacro{\yArrow}{-1*\yStep}

\draw[->, thick]
  ({\xA+\xLR+1.3}, \yArrow) -- ({\xB-\xLR-1.3}, \yArrow);

\node[font=\footnotesize, align=center]
  at ({(\xA+\xB)/2}, {\yArrow+0.32})
  {removing $w$};

\pgfmathsetmacro{\xLegText}{\xA+\xLR+1.3}   
\pgfmathsetmacro{\xLegLineL}{\xLegText-0.6}
\pgfmathsetmacro{\xLegLineR}{\xLegText-0.1}
\pgfmathsetmacro{\yLegTop}{-1.2}
\pgfmathsetmacro{\legSep}{0.38}

\draw[thick] (\xLegLineL,\yLegTop) -- (\xLegLineR,\yLegTop);
\node[legendtext, anchor=west, font=\footnotesize]
  at (\xLegText,\yLegTop) {Matched Edges};

\draw[dash pattern=on 6pt off 3pt, thick]
  (\xLegLineL,{\yLegTop-\legSep}) -- (\xLegLineR,{\yLegTop-\legSep});
\node[legendtext, anchor=west, font=\footnotesize]
  at (\xLegText,{\yLegTop-\legSep}) {Edges in $G$};

\pgfmathsetmacro{\xTextLeft}{\xA-\xLR-0.2}
\pgfmathsetmacro{\yText}{-1.75}

\node[anchor=north west, text width=\textW, align=left, font=\footnotesize]
  at (\xTextLeft,\yText)
{An example of a blocker and victim pair $w,u$. The victim $u$ is unmatched in $L$. When the blocker vertex $w$ is removed, $u$ becomes matched.};

\end{scope}
\end{tikzpicture}
\else
\begin{tikzpicture}[
  even/.style={circle, fill=blue!60!white, inner sep=1.5pt},
  odd/.style={circle, fill=blue!60!white, inner sep=1.5pt},
  legendtext/.style={font=\footnotesize, align=left},
]

\def\xLR{0.9}
\def\yStep{0.62}
\def\sep{6}
\def\legGap{2.2}
\def\legTextW{3.4}
\def\graphShift{0.5} 

\pgfmathsetmacro{\xA}{-\sep/2}     
\pgfmathsetmacro{\xB}{\sep/2}      
\pgfmathsetmacro{\xL}{\xB + \legGap}

\pgfmathsetmacro{\sysLeft}{\xA - \xLR - 2}
\pgfmathsetmacro{\sysRight}{\xL + \legTextW + 2}
\pgfmathsetmacro{\sysCenter}{(\sysLeft+\sysRight)/2}
\pgfmathsetmacro{\textWidth}{\sysRight - \sysLeft}

\begin{scope}[shift={({-\sysCenter},0)}]

\begin{scope}[xshift=\graphShift cm]

\def\Uzero{$w$}
\def\Uone{$v$}
\def\Utwo{$u$}

\foreach \i in {0,...,2} {
  \pgfmathsetmacro{\y}{-(\yStep)*\i + (\yStep)}

  \ifodd\i
    \node[even] (gL\i) at ({\xA+\xLR},\y) {};
    \ifnum\i=1
      \node[anchor=west] at ({\xA+\xLR+0.25},\y) {\Uone};
    \fi
  \else
    \node[odd] (gL\i) at ({\xA-\xLR},\y) {};
    \ifnum\i=0
      \node[anchor=east] at ({\xA-\xLR-0.25},\y) {\Uzero};
    \fi
    \ifnum\i=2
      \node[anchor=east] at ({\xA-\xLR-0.25},\y) {\Utwo};
    \fi
  \fi
}

\draw[thick] (gL0) -- (gL1);
\draw[dash pattern=on 6pt off 3pt, thick] (gL1) -- (gL2);

\node[font=\footnotesize] at (\xA, {-(\yStep)*2 + (\yStep) - 0.38}) {$L$};

\foreach \i in {0,...,1} {
  \pgfmathsetmacro{\y}{-(\yStep)*\i}
  \ifodd\i
    \node[odd] (gR\i) at ({\xB-\xLR},\y) {};
    \node[anchor=east] at ({\xB-\xLR-0.25},\y) {$u$};
  \else
    \node[even] (gR\i) at ({\xB+\xLR},\y) {};
    \node[anchor=west] at ({\xB+\xLR+0.25},\y) {$v$};
  \fi
}

\draw[thick] (gR0) -- (gR1);

\node[font=\footnotesize] at (\xB, {-(\yStep)*1 - 0.38}) {$L_{\shortminus w}$};

\pgfmathsetmacro{\yArrow}{-1*\yStep}

\draw[->, thick]
  ({\xA+\xLR+1.3}, \yArrow) -- ({\xB-\xLR-1.3}, \yArrow);

\node[font=\footnotesize, align=center]
  at ({(\xA+\xB)/2}, {\yArrow+0.32})
  {removing $w$};

\pgfmathsetmacro{\yL}{0.10}

\draw[thick] (\xL,\yL) -- ({\xL+0.6},\yL);
\node[legendtext, anchor=west, text width=\legTextW cm]
  at ({\xL+0.85},\yL) {Matched Edges};

\draw[dash pattern=on 6pt off 3pt, thick] (\xL,{\yL-0.6}) -- ({\xL+0.6},{\yL-0.6});
\node[legendtext, anchor=west, text width=\legTextW cm]
  at ({\xL+0.85},{\yL-0.6}) {Edges in $G$};

\end{scope} 

\pgfmathsetmacro{\yText}{-2*\yStep }

\node[anchor=north west, text width=\textWidth cm, font=\footnotesize]
  at (\sysLeft,\yText)
{An example of a blocker and victim pair $w,u$. The victim $u$ is unmatched in $L$. When the blocker vertex \newline $w$ is removed, $u$ becomes matched.};

\end{scope}

\end{tikzpicture}
\fi

Our most crucial observations regarding the relationship between the notions of backup and victim emerge when examining the symmetric difference of the matchings $\M(L)$ and $\M(L_{\shortminus u^\ast})$, namely, the alternating path. As suggested by its name, the alternating path is a path that starts with vertex $u^\ast$, and alternates between edges of $\M(L)$ and $\M(L_{\shortminus u^\ast})$.  We formally state this and more useful facts about the alternating path in \cref{lem:alt-path}. Let $P=(u_0, \dots, u_k)$ represent the alternating path with $u_0= u^*$. We next highlight some crucial properties of the vertices on this path.

\begin{enumerate}

\item Any even-indexed vertex $u_{i}$ in the path other than $u^*$ becomes worse off after the introduction of $u^*$, that is, $u_{i}$ is matched via a later edge or is unmatched in $\M(L)$ compared to $\M(L_{\shortminus u^\ast})$. Furthermore, its backup with respect to $L_{\shortminus u^\ast}$ uniquely determines this outcome. I.e., $u_{i}$'s
match (if it exists) in $\M(L)$ is exactly its backup with respect to $L_{\shortminus u^\ast}$.

 \item  Suppose $u$ is unmatched by introducing $u^\ast$ to $L_{\shortminus u^\ast}$, which implies $u_k=u$. Then $u$ is the victim of vertices $u_i$ in $L$ for all even $i<k$, which means $u$ has $k/2$ blockers in the case of this bad event and can receive up to $k/2$ copies of compensation.

 \item The first two properties imply that longer alternating paths are more desirable, as they indicate that there are more vertices with backups and that if $u$ is victimized, then it could receive more than one copy of compensation from other vertices. We use this property for $u$ to collect more compensation than it pays in expectation. For example, in \ranking{}, except for a specific case, the alternating path has length $\geq 4$. Consequently, when $u$ is unmatched, it receives at least two copies of compensation, while when it is matched, it pays at most one copy. 

 \item We also formulate a relation between the backup of $u$ with respect to $L_{\shortminus u^\ast}$ and the existence of blockers for $u^\ast$ if $u^\ast$ is unmatched. (See \cref{lem:victim_querycommit_case_u_star}.) We prove that the worse the backup $b$ is\footnote{The worse $b$ is, the later in time $(u,b)$ gets considered in the randomized greedy matching process. See a formal description in \cref{sec:alt-lemma}.}, the more compensation $u^\ast$ will get from its blockers in expectation.
\end{enumerate}

\ifdoublecol

\def\FigShift{-0.5cm}    
\def\picShift{0cm}    
\def\textShift{-1cm}   
\def\textY{-4.5}     
\def\textW{6.5cm}     

\makebox[\linewidth][c]{%
\hspace*{\FigShift}%
\begin{tikzpicture}[
    even/.style={circle, fill=blue!60!white, inner sep=1.5pt},
    odd/.style={circle, fill=red!60!white, inner sep=1.5pt},
    textbox/.style={align=justify, text width=\textW, font=\footnotesize}
]

\begin{scope}[xshift=\picShift]

\foreach \i in {0,...,6} {
    \pgfmathsetmacro{\y}{-(2/3)*\i}
    \ifodd\i
        \node[odd] (u\i) at (-1,\y) {};
        \ifnum\i=0
            \node[anchor=east] at (-1.3,\y) {$u^*$};
        \else
            \node[anchor=east] at (-1.3,\y) {$u_\i$};
        \fi
    \else
        \node[even] (u\i) at (1,\y) {};
        \ifnum\i=0
            \node[anchor=west] at (1.3,\y) {$u^*$};
        \else
            \ifnum\i=6
                \node[anchor=west] at (1.3,\y) {$u$};
            \else
                \node[anchor=west] at (1.3,\y) {$u_\i$};
            \fi
        \fi
    \fi
}

\draw[dash pattern=on 6pt off 3pt, thick] (u0) -- (u1);
\draw[thick] (u1) -- (u2);
\draw[dash pattern=on 6pt off 3pt, thick] (u2) -- (u3);
\draw[thick] (u3) -- (u4);
\draw[dash pattern=on 6pt off 3pt, thick] (u4) -- (u5);
\draw[thick] (u5) -- (u6);

\begin{scope}[shift={(2.5, 0)}]
    \draw[dash pattern=on 6pt off 3pt, thick] (0,0) -- (0.5,0)
        node[right, black, font=\footnotesize] {$\M(L)$};
    \draw[thick] (0,-0.6) -- (0.5,-0.6)
        node[right, black, font=\footnotesize] {$\M(L_{\shortminus u^*})$};
\end{scope}

\begin{scope}[shift={(2.5, -1.5)}]
    \node[even, label={[font=\footnotesize]right:Even node}] at (0,0) {};
    \node[odd, label={[font=\footnotesize]right:Odd node}] at (0,-0.5) {};
\end{scope}

\end{scope}

\begin{scope}[xshift=\textShift]
\node[textbox, anchor=north west] at (-1.35,\textY) {
In this alternating path example, $u$ becomes unmatched when $u^*$ is introduced. Even nodes $u_2$ and $u_4$ have backups $u_3$ and $u_5$, respectively, in $L_{\shortminus u^*}$. $u^*, u_2,$ and $u_4$ are all blockers of $u$ in $L$. As the alternating path gets longer, more vertices have backups and become blockers of the unmatched vertex.
};
\end{scope}

\end{tikzpicture}%
}
\else
\begin{tikzpicture}[
    even/.style={circle, fill=blue!60!white, inner sep=1.5pt},
    odd/.style={circle, fill=red!60!white, inner sep=1.5pt},
    textbox/.style={align=left, text width=7.5cm, font=\footnotesize}
]

\foreach \i in {0,...,6} {
    \pgfmathsetmacro{\y}{-(2/3)*\i}
    \ifodd\i
        \node[odd] (u\i) at (-1,\y) {};
        \ifnum\i=0
            \node[anchor=east] at (-1.3,\y) {$u^*$};
        \else
            \node[anchor=east] at (-1.3,\y) {$u_\i$};
        \fi
    \else
        \node[even] (u\i) at (1,\y) {};
        \ifnum\i=0
            \node[anchor=west] at (1.3,\y) {$u^*$};
        \else
            \ifnum\i=6
                \node[anchor=west] at (1.3,\y) {$u$};
            \else
                \node[anchor=west] at (1.3,\y) {$u_\i$};
            \fi
        \fi
    \fi
}

\draw[dash pattern=on 6pt off 3pt, thick] (u0) -- (u1);
\draw[thick] (u1) -- (u2);
\draw[dash pattern=on 6pt off 3pt, thick] (u2) -- (u3);
\draw[thick] (u3) -- (u4);
\draw[dash pattern=on 6pt off 3pt, thick] (u4) -- (u5);
\draw[thick] (u5) -- (u6);

\begin{scope}[shift={(2.5, 0)}]
    \draw[dash pattern=on 6pt off 3pt, thick] (0,0) -- (0.5,0)
        node[right, black, font=\footnotesize] {$\M(L)$};
    \draw[thick] (0,-0.6) -- (0.5,-0.6)
        node[right, black, font=\footnotesize] {$\M(L_{\shortminus u^*})$};
\end{scope}

\begin{scope}[shift={(2.5, -1.5)}]
    \node[even, label={[font=\footnotesize]right:Even node}] at (0,0) {};
    \node[odd, label={[font=\footnotesize]right:Odd node}] at (0,-0.5) {};
\end{scope}

\node[below=0.3cm of u6, font=\footnotesize]
{Example alternating path.};

\node[textbox, right=6cm of u3] (info) {
In this alternating path example, $u$ becomes unmatched when $u^*$ is introduced.\\[2mm]
Even nodes $u_{2}$ and $u_4$ have backups $u_3$ and $u_5$, respectively, in $L_{\shortminus u^*}$. $u^*,u_2,u_4$ are all blockers of $u$ in $L$.\\[2mm]
As the alternating path gets longer, more vertices have backups and become blockers of the unmatched vertex.
};

\end{tikzpicture}
\fi

Notice that when $u^*$ causes $u$ to become unmatched, the alternating path $P=(u^*,u_1,...,u_{k-1},u)$ is a path between $u^*$ and $u$ of even length $k$. Since we picked $(u,u^*)$ to be a pair of matched vertices in the maximum matching, $P\cup\{(u,u^*)\}$ is an odd cycle of length $k+1$. If the graph has a large odd girth, then $k/2$ is large and hence we can collect more compensations for $u$ and $u^*$. We exploit this property and deduce stronger LP systems for graphs with large odd girths in \cref{sec:odd_girth_graphs}.

In the next two subsections, we will discuss our model-specific ideas separately for \ranking{} and \Franking{}.

\subsection{\ranking{}}
In the context of \ranking{}, we assume that the random permutation $\pi$ is generated by assigning each vertex $v$ an independent \emph{rank} $x_v$, drawn uniformly at random from the interval $[0,1]$, and ordering the vertices by increasing rank. The vertices are then processed sequentially according to this order, with each vertex being matched to the first available neighbor (if any) according to the same permutation $\pi$. We also let $\vecx_{\shortminus u^\ast}$ denote the vector of ranks excluding vertex $u^\ast$.

Our gain-sharing function for \ranking{} follows a similar approach to that designed by~\cite{0.546ranking}. When two vertices $u$ and $v$ are matched, we divide their unit gain using a function $g(x_u, x_v)$ that depends on their ranks. The main difference in our analysis is the addition of the notions of \emph{victim} and a corresponding \emph{compensation function}. Specifically, we assume that any matched vertex $w$ also pays a compensation $h(x_w,x_z)$, where $x_w,x_z$ is the rank of $w$ and its match, respectively. See \cref{sec:gainsharingranking} for more details about this gain-sharing scheme. We then use a factor-revealing linear program (LP) to determine the optimal forms of the functions $g$ and $h$.

\paragraph{Global Analysis via Victim.}
We now discuss how our approach differs from that of~\cite{0.546ranking}, particularly the additional universal analyses that allow us to surpass the $0.5469$ approximation ratio. To compute the expected value of $\text{gain}(u) + \text{gain}(u^\ast)$,~\cite{0.546ranking} introduced the notion of a \emph{profile} of $u$ (\cref{def:profile_ranking}), which classifies $\vecx_{\shortminus u^\ast}$ based on the ranks of $u$, its match, and its backup. All instances of $\vecx_{\shortminus u^\ast}$ with the same profile are treated identically. The authors first computed the worst-case expected gain for each profile, taking the expectation over the randomization of $x_{u^\ast}$, and then integrated over the worst-case profile distribution to obtain the final expected gain.

Our approach preserves this overall structure but strengthens the bound for each fixed profile by introducing the notion of \emph{victim} and establishing a new monotonicity property that was not considered in~\cite{0.546ranking}. The inclusion of the victim and compensation functions makes our analysis more \emph{global}, as it effectively redistributes gain across different values of $x_{u^\ast}$—reducing it for some and increasing it for others. Since the expected total gain increases overall, this refinement leads to an improved approximation ratio.

A direct incorporation of victim analysis into the LP model of~\cite{0.546ranking}, however, introduces a major computational challenge: during the final discretization step, the number of constraints grows by a factor of $\Omega(n)$\footnote{$n$ is the number of pieces that we break $[0,1]$ into in order to numerically approximate the solution of the LP system.} This increase largely negates the benefit of the compensation mechanism due to the explosion in LP size. To address this issue, we introduce a monotonicity constraint on the rank of the match of $u^\ast$, which allows us to incorporate victim analysis using asymptotically the same number of constraints as in~\cite{0.546ranking}. Moreover, this monotonicity constraint eliminates one of the worst-case matching scenarios considered in the previous analysis, further tightening the bound.

\paragraph{Monotonicity for the rank of $u^\ast$'s match.}
We observe that $u^\ast$'s match worsens as its rank increases. In~\cite{0.546ranking}, the rank of $u^\ast$'s match was not assumed to have a strong monotonic relationship with increasing $x_{u^\ast}$ values; consequently, their analysis required case-by-case constraints for all possible ranks of $u^\ast$ in $[0,1]$. This lack of monotonicity is the primary reason why naively incorporating victim analysis leads to an $\Omega(n)$-fold increase in the number of constraints. By enforcing a monotonicity constraint, we can instead partition the unit interval into at most three sub-intervals $[a_i, b_i]$, where the match of $u^\ast$ at any $x_{u^\ast} \in [a_i, b_i]$ is determined by its match at $b_i$. As a result, in our final approximation, we no longer need to consider the match of $u^\ast$ for every $x_{u^\ast}  \in [0,1]$, but only at a few representative values.

\paragraph{Multivariate Compensation Function.}
When a blocker $w$, matched to a vertex $z$, pays compensation to a victim, previous victim analyses~\cite{0.521Franking,0.531RDO} let $w$ transfer $h(y)$ amount of its gain to the victim, where $y$ is the rank of either $z$ or $w$. This design arises because their analyses have limited information about the rank of the other vertex in the pair $(w,z)$. We define the compensation function $h$ as a bivariate function depending on the ranks of both endpoints. This allows us to incorporate more information about the blocker and its match, leading to a tighter bound on the overall approximation ratio.

\paragraph{Large Odd Girth Implies Long Alternating Paths.}
For any graph $G$, the alternating path of interest (a path between $u$ and $u^*$) needs to have length at least $k$, where $k+1$ is the odd girth of $G$. This implies that we can assume $u,u^*$ are victims of at least $k/2$ blockers, hence receiving at least $k/2$ copies of compensation. We exploit this fact and design stronger \ranking{} LPs for graphs of larger odd girths.

\subsection{\Franking{}}

In \Franking{}, the decision order $\pi$ is fixed, but the preference order $\sigma$ is a uniform random permutation which we assume is generated by assigning each vertex $v$ an independent \emph{rank} $x_v$, drawn uniformly at random from the interval $[0,1]$, and ordering the vertices by increasing rank. The vertices are then processed sequentially according to $\pi$, with each vertex being matched to the first available neighbor (if any)  in the order of $\sigma$. When an edge $(u,v)$ joins the matching, $u$ is said to be actively matched if the match happens at $u$'s decision time. In this case, $v$ is called the passive vertex.  Similar to before, we let $\vecx_{\shortminus u^\ast}$ denote the vector of ranks excluding vertex $u^\ast$. 

Our gain-sharing method and factor-revealing LP are similar to what we do for \ranking{} with some key differences. First, since \Franking{} has only one-sided randomization, our gain-sharing function $g(x)$ and compensation function $h(x)$ take only one random variable, which is the rank of the passive vertex in the match. Moreover, the profile of a rank vector $\vecx_{\shortminus u^\ast}$, besides containing the ranks of $u$, its match and its backup, needs to keep track of whether they are active or passive.

Previous \Franking{} analysis~\cite{0.521Franking} is also conducted through randomized primal-dual with gain-sharing. Their analysis introduced the notion of victim but did not include the notions of profile and backup. The analysis considered, for each fixed $\vecx_{\shortminus u^*}$, a system of worst-case matches of $u$ and $u^*$ over the randomization of $u^\ast$. By introducing the notion of backups and profiles, we are able to classify $\vecx_{\shortminus u^*}$ into six different profiles. Each profile gives a different worst-case matching condition for $u$ and $u^*$. Further, we are able to add constraints on the distribution of profiles themselves, allowing a more detailed universal analysis over the randomization of $u,\;u^*$, and a third vertex $v$ (the match of $u$) together.

\paragraph{Backup Characterizes Unique Worst Case Match for $u$.}
By identifying the profile of $u$ for each $\vecx_{\shortminus u^*}$, we have a complete characterization of how $u$ is made worse off as a result of introducing $u^*$ (\cref{fact:backup_uniquely_describe_worse_off}). In particular, we know that if $u$ becomes worse off, then $u$ is exactly matched to its backup $b$ (or unmatched if $u$ has no backup). In the previous analysis~\cite{0.521Franking}, they effectively considered up to $\frac{n}{2}$ possibilities of how different values of $x_{u^*}$ can make $u$ worse off in a different way.
As a result, they required an additional constraint on the compensation function, which limited and complicated their analysis.

\paragraph{Distribution of Profiles Enforces Distribution of Marginal Ranks.}
The marginal rank of $u^\ast$ with respect to a fixed $\vecx_{\shortminus u^\ast}$ is defined as the unique rank $\theta_1$  where $u^*$ transitions from being passively matched to actively finding a match. In previous work, the analysis is conducted by considering the worst case $\theta_1$ for each $\vecx_{\shortminus u^*}$ and without imposing restrictions on the distribution of $\theta_1$. We identified that for those profiles where $u$ is actively matched to $v$, the marginal rank $\theta_1$ has to satisfy the condition $\theta_1\geq x_v$. Hence, this allows us to extend a monotonicity property that we have for $x_v$ (\cref{lemma:monotonicity_Franking}) to a monotonicity constraint for $\theta_1$. This helps us to restrict the worst-case choices of marginal ranks for some $\vecx_{\shortminus u^\ast}$ resulting in an improved approximation ratio.

\paragraph{Different Compensation Rules.}
Previous analysis requires a blocker vertex $w$ to transfer part of its gain to its victim $v$ only when all three of the following conditions are met: i) $v$ is the victim of $w$, ii) $w$ is active, and iii) $w$ is a neighbor of $v$. We modified the compensation rule by removing the third condition. We observe that due to properties related to marginal rank distributions,  this less restrictive compensation rule results in extra gain and an improved approximation ratio.

\section{Preliminaries}
We study undirected general graphs $G=(V,E)$. We denote by $N(v)$ the set of neighbors of a vertex $v$. A matching $M\subseteq E$ on $G$ is a subset of edges such that no two edges share an endpoint. A maximal matching is a matching $M$ that is not properly contained in another matching. A maximum matching $M^*$ is a matching of the largest possible size. A maximum matching $M^*$ is perfect if $|V|=2|M^*|$, i.e. all vertices in $G$ are matched in $M^*$. We assume for this paper that any graph $G$ contains a perfect matching $M^*$ of size $|V|/2$. This is due to the following fact:
\begin{fact}[$G$ Contains a Perfect Matching $M^\ast$~\cite{Randomized_greedy_matching}]\label{assumeperfectmatching}
   Let $A$ be a randomized greedy (query-commit) matching algorithm (\Cref{alg:querycommit}). Suppose $\alpha$ lower bounds the approximation ratio of $A$ on all graphs $G$ that admit perfect matchings, then $\alpha$ is also a lower bound for the approximation ratio of $A$ on all graphs $G$.
\end{fact}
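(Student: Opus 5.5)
The plan is to reduce a general graph to the subgraph induced on the vertices covered by a maximum matching, and then compare the algorithm's behavior on the two graphs. Fix $G=(V,E)$ with a maximum matching $M^\ast$, and let $V^\ast\subseteq V$ be the set of vertices covered by $M^\ast$. Let $G'=G[V^\ast]$. Then $M^\ast$ is a perfect matching of $G'$, and $|M^\ast(G')|=|M^\ast(G)|$. By hypothesis, $\ev[|A(G')|]\ge\alpha|M^\ast|$. It therefore suffices to show
\[
\ev\bigl[|A(G)|\bigr]\ \ge\ \ev\bigl[|A(G')|\bigr].
\]
Equivalently, deleting the vertices of $V\setminus V^\ast$ (which form an independent set, since $M^\ast$ is maximum) cannot increase the expected matching size.

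The key step is a deterministic coupling. A randomized greedy (query-commit) algorithm picks a random ordered list $L$ of vertex pairs without knowing $E$. Run $A$ on $G$ with list $L$, and run it on $G'$ with the restriction $L'$ of $L$ to pairs inside $V^\ast$. For the model at hand, the distribution of $L'$ should be exactly the distribution the algorithm uses on vertex set $V^\ast$. For \ranking{} and \Franking{} this holds because restricting uniform ranks, or a fixed order $\pi$, to a subset preserves the form of the distribution. For a general query-commit algorithm I would argue this via obliviousness, or simply treat it as part of the definition of the class in \Cref{alg:querycommit}.

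I then claim that for every fixed $L$,
\[
|\M(L)|\ \ge\ |\M(L')|.
\]
To prove this, I would remove the vertices of $V\setminus V^\ast$ one at a time, and show that removing a single vertex $w$ decreases the greedy matching size by at most one and never increases it. The standard argument applies: $\M(L)\oplus\M(L_{\shortminus w})$ is a single alternating path starting at $w$, which is the alternating-path structure the paper develops in \cref{lem:alt-path}. Along such a path the two matchings differ in size by $0$ or $1$, in favor of the matching that contains $w$'s edge. A cleaner alternative is a counting argument: every edge of $\M(L')$ either lies in $\M(L)$ or is blocked by an edge of $\M(L)$ incident to a vertex of $V\setminus V^\ast$. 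Since $V\setminus V^\ast$ is independent, each such blocking edge has exactly one endpoint in $V^\ast$, and I would charge it injectively. Taking expectations over $L$ then gives the inequality above.

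The main obstacle is the one-vertex monotonicity, $|\M(L)|\ge|\M(L_{\shortminus w})|$. Some care is needed to show that the symmetric difference is a single path, with the even and odd structure described in the preliminaries, rather than something more complicated. I would prove this by induction over the processing of $L$: at each step, the set of vertices whose status differs between the two runs is either empty or a single "frontier" vertex. This gives the path structure and yields the size comparison directly.
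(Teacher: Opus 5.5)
Your proposal is correct and is essentially the paper's proof: the paper also deletes the vertices outside $M^\ast$ one at a time. For each deletion it uses \cref{lem:alt-path} to get $|\M(L)|\ge|\M(L_{\shortminus w})|$ for every fixed $L$, then takes expectations over $L$, implicitly making the same assumption you flag (that the algorithm's list distribution on the smaller vertex set is the restriction of $L$).

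Do not rely on your ``cleaner'' counting alternative, though, because its key claim is false. An edge of $\M(L')$ can be blocked by an edge of $\M(L)$ lying entirely inside $V^\ast$. For example, take the path $w$--$a$--$b$--$c$--$d$ with $M^\ast=\{(a,b),(c,d)\}$ and $L=((w,a),(a,b),(b,c),(c,d))$. Here $(c,d)\in\M(L')$ is blocked by $(b,c)\in\M(L)$, so the direct charging fails and the alternating-path argument is the one you need.
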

This fact is essentially due to the alternating path lemma (\cref{lem:alt-path}). For anyone interested in a proof, see \nameref{sec:appendix-a}.

In this paper, we fix an arbitrary maximum matching $M^*$ of $G$ and let $(u,u^*)$ be an arbitrary matched edge in $M^*$. We will often use $v$ to denote the match of $u$ when $u^*$ is removed, and we will often use $b$ to denote the backup of $u$. We will often use $w$ to denote the match of $u^*$ when $u^*$ is introduced and causes an alternating path leading to $u$ being unmatched. The symbols $u,v,w$ may also sometimes denote arbitrary vertices in $G$; it will be clear from the context when they denote these special vertices.

\section{Alternating Path Lemma, Backups, and Victims}
\phantomsection\label{sec:Alt_path_backup_victim}
In this section, we formalize a unified framework that applies to all randomized greedy matching algorithms and captures the common structural backbone of their analyses. The framework provides general lemmas that subsume key results used across prior analyses. The framework is built around the alternating path lemma and the related notions of backup and victim. We begin by recalling the query-commit matching algorithm, which is essentially the greedy algorithm that includes edges in the order of $L$.
\begin{definition}[Greedy Matching Algorithm]\label{def:greedyalg}
    Let $L$ be an ordered list of pairs of vertices in $V$. Assume we run the following algorithm to generate a maximal matching:
    
    \begin{algorithm}[H]
  \caption{Greedy (Query-commit) Matching with $L$}
  \phantomsection
  \label{alg:querycommit}
  \KwIn{Graph $G=(V,E) ,\; \text{ ordered list }L\subseteq V\times V$}
  \SetAlgoLined

  \For{each pair of vertices $(u,v) \in L$ according to $L$'s ordering}{
    \If{$(u,v)\in E$ and both $u,v$ are available}{
      Match $u,v$, mark them as unavailable.
    }
  }
\end{algorithm}
We denote $\M(L)$ as the resulting matching of the query-commit algorithm on $L$. 
\end{definition}
A randomized query-commit matching algorithm chooses a distribution $D$ of edge query lists $L$ and runs the query-commit matching algorithm with the distribution $D$. All randomized greedy matching algorithms are randomized query-commit algorithms equipped with different distributions $D$. \nameref{alg:ranking}, for example, performs query-commit matching on randomized list $L$ generated by $\pi\times\pi$, where $\pi$ is a uniform random permutation on $V$. On the other hand, \nameref{alg:Franking} runs query-commit matching on $L=\pi\times\sigma$, where $\pi$ is a fixed adversarial ordering and $\sigma$ is a uniform random permutation of $V$. Other algorithms such as UUR, RDO and MRG can also be interpreted as query-commit algorithms with different choices of distributions $D$.

Since $L$ is an ordered list, we can view the query-commit matching process on $L$ as a timed process. For $(u,v)\in L$, when we refer to \emph{time} $t=(u,v)$, we denote the time right before the edge $(u,v)$ gets queried\footnote{Although an edge $(u,v)$ may be queried more than once, we assume W.L.O.G. that $t=(u,v)$ refers to the first time the edge gets queried, as the second query will never result in a match.} in the query-commit matching process. For a given time $t\in L$, we use $\M^t(L)$ to denote the partial matching constructed at time $t$, and we use $\A^t(L)$ to denote the set of remaining available vertices at time $t$. When $t$ is omitted, we assume that $\M(L)$ and $\A(L)$ represent the final matching and available set of vertices respectively.
\subsection{Alternating Path Lemma}
\phantomsection
\label{sec:alt-lemma}
The alternating path lemma is a key tool in previous analyses on general graphs. Several variants have been proved for different randomized greedy algorithms, sharing the same underlying structure but yielding algorithm-specific outcomes. Here, we present a generalized alternating path lemma applicable to any fixed query list $L$, which will then be used to formalize the roles of backup and victim.
\begin{lemma}[Alternating Path Lemma]
\phantomsection
\label{lem:alt-path}
    Let $L$ be an ordered list of pairs of nodes in $V$. For any vertex $v\in V$, let $L_{\shortminus v}$ denote the list $L$ but with $v$ marked as unavailable from the beginning\footnote{We could as well interpret $L_{\shortminus v}$ as removing all pairs of nodes incident to $v$, but we want to synchronize the time notions in $L$ and $L_{\shortminus v}$. Marking $v$ as unavailable without removing it enables us to discuss time relevant to $v$ in both lists.}. The symmetric difference of partial matchings at time $t$, that is
    $$\M^t(L)\oplus \M^t(L_{\shortminus v})=(\M^t(L)- \M^t(L_{\shortminus v}))\cup( \M^t(L_{\shortminus v})-\M^t(L)),$$ is an alternating path $u_0,\ldots, u_k$ such that
    \begin{enumerate}
        \item $u_0=v$.
        \item  For all even $i$, $(u_i,u_{i+1})\in \M^t(L)$, and for all odd $i$, 
        $(u_i,u_{i+1})\in \M^t(L_{\shortminus v})$.
        \item The time $(u_i,u_{i+1})$ gets queried monotonically increases. i.e., \\$\forall i<k-1$, $(u_i,u_{i+1})<(u_{i+1},u_{i+2})$.
        \item The set of available vertices at time $t$ only differs at $u_k$. \\
        If $k$ is odd, then $\A^t(L_{\shortminus v})=\A^t(L)\cup\{u_k\}$ and if $k$ is even, then $\A^t(L)=\A^t(L_{\shortminus v})\cup\{u_k\}$.
    \end{enumerate}
\end{lemma}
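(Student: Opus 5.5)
We argue by induction on time $t$, i.e., on the position in $L$. The invariant is the full conclusion of the lemma at time $t$. The symmetric difference is an alternating path $u_0=v,u_1,\dots,u_k$ with $(u_i,u_{i+1})\in\M^t(L)$ for even $i$ and in $\M^t(L_{\shortminus v})$ for odd $i$. The query times of its edges increase along the path. The available sets differ exactly at $u_k$, as in item 4.

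At the start, both partial matchings are empty. Then $\A(L)=V$ and $\A(L_{\shortminus v})=V\setminus\{v\}$, so the path is the single vertex $u_0=v$ with $k=0$ (even), and item 4 holds with $\A^t(L)=\A^t(L_{\shortminus v})\cup\{u_0\}$. Items 2 and 3 hold vacuously.

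For the inductive step, consider the next pair $(x,y)\in L$ and write $t'$ for the time just after it is queried. There are three cases, depending on how many of $x,y$ equal $u_k$.

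\emph{Neither endpoint is $u_k$.} Since the available sets agree off $u_k$, either both runs match $(x,y)$ or neither does. The symmetric difference is then unchanged, and so is the difference of the available sets.

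\emph{Exactly one endpoint is $u_k$,} say $x=u_k$, and $(x,y)\in E$. Suppose $k$ is even. Then $u_k$ is available only in $L$, and $y\neq u_k$ has the same status in both runs.
\begin{itemize}
\item If $y$ is unavailable in both runs, nothing changes.
\item If $y$ is available in both runs, only $L$ adds $(u_k,y)$. This extends the path by $u_{k+1}=y$ with an $\M(L)$ edge, as item 2 requires for even $k$. Now $y$ is the unique vertex available in $L_{\shortminus v}$ but not in $L$, which is the odd case of item 4. Vertex $y$ was not previously on the path, since path vertices other than $u_k$ are matched in both runs.
\end{itemize}
The case of odd $k$ is symmetric, with the roles of $L$ and $L_{\shortminus v}$ swapped. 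Item 3 holds because the new edge is queried at the current time, which is later than every edge already in the partial matchings.

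\emph{Both endpoints are $u_k$.} This cannot occur, since $L$ has no self-pairs. Hence the three cases are exhaustive.

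We also need that the symmetric difference stays a single path starting at $v$ and never gains a stray cycle or component. This follows because every new edge of the difference is incident to the current endpoint $u_k$, and edges matched in both runs cancel. The main subtlety is bookkeeping item 4 together with the alternation parity. We also have to check that the added vertex $y$ is new: all earlier path vertices except $u_k$ are unavailable in both runs, and $v=u_0$ is unavailable in $L_{\shortminus v}$ from the start. We handle this by including "every $u_i$ with $i<k$ is unavailable in both runs" in the invariant.
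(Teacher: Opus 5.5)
Your proof is correct and follows the paper's argument essentially step for step: induction over the positions of $L$, split according to whether the queried pair avoids $u_k$ or contains it, with the even and odd cases symmetric. You are in fact slightly more careful than the paper, since you explicitly check that the new endpoint $y$ is not already on the path. One small correction: the definition allows $L\subseteq V\times V$ to contain diagonal pairs (for example $L=\pi\times\pi$ for \ranking{}), so the ``both endpoints are $u_k$'' case should be dismissed because $(u_k,u_k)\notin E$ in a simple graph, not because such pairs cannot appear.
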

The idea of the proof is not different from previous versions of the alternating path lemma. For anyone who is interested in the proof, please refer to \nameref{sec:appendix-a}. The alternating path lemma gives a depiction of the effect on matching results when a vertex $v$ is added to or removed from the graph while keeping the remaining edge query ordering untouched. The difference in the matchings consists of a path that starts at $v$ (the vertex that is added or removed), and the edges alternate between the two matchings. Further, the alternating path lemma states that the relative query order for the edges in the path monotonically increases along the path. 

For any ordered list $L$, we say a vertex $u$ becomes \emph{worse off by introducing $v$} if the time $u$ gets matched in $\M(L)$ is later than the time $u$ gets matched in $\M(L_{\shortminus v})$ (or if $u$ is unmatched in $\M(L)$). $u$ \emph{becomes better off by introducing v} or \emph{worse off by removing $v$} if the converse holds. Note that being worse (better) off by introducing $v$ is equivalent to being an even (odd) indexed vertex in the alternating path. i.e., for $u\neq v$:
\ifdoublecol
\begin{align*}
    &\text{$u$ becomes worse off by introducing $v$}\\
    \Longleftrightarrow\quad &\text{$u $ is matched at a later time (or unmatched)}\\ 
     &\quad\text{in $\M(L)$ than in $\M(L_{\shortminus v})$}\\
    \Longleftrightarrow\quad &\text{$u$ is an even-indexed vertex in the alternating path }\\
    &\quad \M(L)\oplus \M(L_{\shortminus v}).
\end{align*}
\else
\begin{align*}
    &\text{$u$ becomes worse off by introducing $v$}\\
    \Longleftrightarrow\quad &\text{$u $ is matched at a later time (or unmatched)
     in $\M(L)$ than in $\M(L_{\shortminus v})$}\\
    \Longleftrightarrow\quad &\text{$u$ is an even-indexed vertex in the alternating path $\M(L)\oplus \M(L_{\shortminus v})$}
\end{align*}
\fi
A similar iff relation holds for the notion of being better off. The notions of worse (better) off have different manifestations in different randomized vertex-iterative greedy matching algorithms; we will discuss the relevant notions in the corresponding sections.

\subsection{Backup and Victim in Randomized Primal-Dual}
A common analysis framework for randomized greedy matching algorithms is the randomized primal-dual method~\cite{RandomizedPrimalDual}. For each edge added to the matching, one unit of total \emph{gain} is distributed among the vertices in $V$. To lower bound the final approximation ratio, we can lower bound the expected sum of gain of any pair of perfect matches $(u,u^\ast)$. Assuming that $G$ contains a perfect matching $M^\ast$ (\cref{assumeperfectmatching}), we have: 
\begin{align*}
    \min_{(u,u^*)\in M^*}\left\{\ev_{L}[\text{ gain$(u)$ $+$ gain$(u^\ast)$ }]\right\}
    \leq&\frac{\sum_{(u,u^\ast)\in M^*}\ev_L[\text{ gain$(u)$ + gain$(u^\ast)$ }]}{|M^*|}\\
    =&\frac{\ev_L\left[\sum_{v\in V}\text{ gain$(v)$}\right]}{|M^*|}=\frac{\ev_L[|\M(L)|]}{|M^*|}.
\end{align*}
The expectation is often taken in two levels: first, fix an arbitrary ordering of edges in $G$ without $u^\ast$ (denoted $L_{\shortminus u^\ast}$), and compute the expectation over the randomization of $u^\ast$, assuming the remaining edge ordering is fixed as $L_{\shortminus u^\ast}$; then take the outer expectation over $L_{\shortminus u^\ast}$. Mathematically, the randomized primal-dual method roughly computes the following expectation:
 $$\ev_{L_{\shortminus u^\ast}}\left[\; \ev_{\text{$u^\ast$}}\;\left[\;\vphantom{\int}\text{gain$(u)$ $+$ gain$(u^\ast)$ }\right]\right].$$
In the bipartite graph analysis of \ranking{}, if we know that $u$ is matched in $L_{\shortminus u^\ast}$, then $u$ is also matched in $L$, independent of the randomization of $u^\ast$. With this fact and a cleverly chosen gain-sharing rule, ~\cite{RandomizedPrimalDual,Economicbasedanalysis} were able to prove the optimal $1-\frac{1}{e}$ bound for \ranking{} in bipartite graphs. 

However, this property does not hold for general graphs, which is the main difficulty in establishing performance guarantees for many randomized greedy matching algorithms for general graphs. When randomized primal-dual is performed naively without this property, the best we can get is $0.5$. 

To address this challenge, the notion of \emph{backup} was introduced by~\cite{0.546ranking} for the \ranking{} analysis.  In the backup framework, when a vertex $u$ has a backup $b$, introducing  $u^\ast$ will not cause $u$ to match with a vertex worse than $b$. The existence of a backup ensures that $u$'s matching condition is not affected significantly when $u^\ast$ is added. 

A different matching structure, namely \emph{victim}, was introduced by~\cite{0.521Franking} to help with \Franking{} analysis to address this challenge; this notion is later extended to \cite{0.531RDO} to facilitate the analysis of RDO. The notion of victim operates differently from the backup. Whenever a vertex $w$ has a victim $u$, $w$ in some sense takes away $u$’s opportunity to be matched. Consequently, in~\cite{0.521Franking,0.531RDO}, $w$ transfers some of its gain as compensation to its victim $u$\footnote{These two works enforce additional criteria for when this gain-transfer is conducted, beyond the fact that $w$ causes $u$ to be unmatched. These additional conditions are optimized for \Franking{} and RDO specifically. Regardless of the algorithm, the key feature for $u$ being the victim of $w$ is that $w$ takes away $u$’s chance of being matched.}. By showing that vertices are more likely to receive compensation than to pay it, both works achieve approximation ratios strictly greater than $0.5$.

 In the following, we identify where backups and victims appear along alternating path, thereby establishing a unified connection between these three structural components.
\subsection{Backup}
We first define the notion of backup.
\begin{definition}[Backup]
    \phantomsection
\label{def:backup_querycommit}
We say $b$ is the backup of $u$ with respect to $L$ iff $u$ is matched to some vertex $v$ in $\M(L)$, and $u$ is matched to $b$ in $\M(L_{\shortminus v})$.
\end{definition}
We have the following fact about the backup of a vertex:
\begin{fact}[Backup Uniquely Describes Worse-off]
\phantomsection
\label{fact:backup_uniquely_describe_worse_off}
    Suppose $u$ has a match $u_1$ and a backup $b$ in $L$. If $u$ becomes worse off by removing any vertex $v$, then $u$ is matched to $b$ in $\M(L_{\shortminus v})$. If $u$ has no backup, then it is unmatched by removing $v$.

    Suppose $u$ has a match $u_1$ and a backup $b$ in $L_{\shortminus v}$. If $u$ becomes worse off by introducing $v$, then $u$ is matched to $b$ in $\M(L)$. If $u$ has no backup, then it is unmatched by introducing $v$.
\end{fact}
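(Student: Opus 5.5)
The plan is to derive both parts of \cref{fact:backup_uniquely_describe_worse_off} from the Alternating Path Lemma (\cref{lem:alt-path}) by comparing two different vertex removals. I treat the first statement: $u$ is matched to $u_1$ in $\M(L)$, $b$ is its backup (so $u$ is matched to $b$ in $\M(L_{\shortminus u_1})$), and $u$ becomes worse off when some vertex $v$ is removed. The second statement is the same claim applied to the list $L' := L_{\shortminus v}$, since introducing $v$ into $L_{\shortminus v}$ is the same as removing $v$ from $L$, so I will not argue it separately.

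\textbf{Step 1: read off the structure of $\M(L)\oplus\M(L_{\shortminus v})$.} If $u$ is worse off by removing $v$, it is better off by introducing $v$, so it is an odd-indexed vertex $u_{2j+1}$ on the alternating path from $v$. Its edge in $\M(L)$ is therefore $(u_{2j},u_{2j+1})$, so $u_{2j}=u_1$ is its current match. Its new partner in $\M(L_{\shortminus v})$, if any, is $u_{2j+2}$, and by item 3 of \cref{lem:alt-path} this edge is queried after $(u_1,u)$.

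\textbf{Step 2: show that the part of the process after $(u_1,u)$ is the same as in $L_{\shortminus u_1}$.} I would apply \cref{lem:alt-path} at the time $t=(u_1,u)$. Just before $t$, the available sets $\A^t(L)$ and $\A^t(L_{\shortminus v})$ differ in at most one vertex. Both $u$ and $u_1$ are available in both runs at that time: $u$ is matched only later in each run, and $u_1$ is matched to $u$ at $t$ in $L$. In the run $L$, the pair $(u_1,u)$ is then matched and both become unavailable. In the run $L_{\shortminus v}$, however, $u_1$ gets matched elsewhere or remains, and $u$ continues.

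The key observation is this: from time $t$ onward, with $u$ looking for a partner among vertices other than $u_1$, the run $L_{\shortminus v}$ behaves like the run $L_{\shortminus u_1}$. I would make this precise by comparing each with the run $L$ restricted to the vertices other than $u_1$ and $u$ after time $t$. More cleanly, I would prove the following claim. The state of $L_{\shortminus u_1}$ after $t$, and the state of $L_{\shortminus v}$ after $t$, agree on every vertex except those lying on a part of the path that never interacts with $u$ again. The way to show this is to apply \cref{lem:alt-path} to $L_{\shortminus u_1}$ versus $L$. In that comparison $u$ is at index $1$ and is rematched to $b$ via a later edge. Then I would compare the two alternating paths.

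\textbf{Step 3: combine the two paths; this is the main obstacle.} I would argue by induction over query times after $t$ that for each $s>t$, the set $\A^s(L_{\shortminus v})\setminus\{u_1\}$ and the set $\A^s(L_{\shortminus u_1})$ differ only in vertices whose incident future edges do not involve $u$ before $u$'s next match. An alternative route is to invoke the lemma with the "removed" vertex being $u_1$, applied to the list $L_{\shortminus v}$ truncated at time $t$, where $u_1$ is already used. With $u$ treated symmetrically, the first edge after $t$ that matches $u$ is then the same in both runs. In either version, the delicate point is the extra differing vertex $u_k$ from item 4. If $u_k$ is a neighbor of $u$ that would be queried before $b$, the conclusion could fail, so I must show that such a vertex is already consumed or lies behind $u$ on the path. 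Monotonicity of query times (item 3) is what rules this out.

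\textbf{Step 4: conclude.} Once Step 3 is in place, $u$'s partner in $\M(L_{\shortminus v})$ equals its partner in $\M(L_{\shortminus u_1})$, which is $b$. If $u$ has no backup, it is unmatched in $\M(L_{\shortminus u_1})$, and the same identification shows that it is unmatched in $\M(L_{\shortminus v})$.
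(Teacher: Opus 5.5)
Your guiding intuition in Step~2, that from time $t=(u_1,u)$ on the run $L_{\shortminus v}$ behaves like $L_{\shortminus u_1}$, is correct and is the intuition behind the paper's proof. The justification, however, rests on a false claim, and that false claim is what makes Step~3 look like an obstacle. You assert that $u$ and $u_1$ are both available at time $t$ in \emph{both} runs. In $L_{\shortminus v}$ this is impossible. There $u$ is available at time $t$, since it is matched later or never, and $(u,u_1)$ is an edge queried at $t$. If $u_1$ were also available, the greedy rule would put $(u,u_1)$ into $\M(L_{\shortminus v})$, contradicting that $u$ is worse off. Hence $u_1\notin\A^t(L_{\shortminus v})$ while $u_1\in\A^t(L)$. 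Item~4 of \cref{lem:alt-path} then identifies the unique differing vertex exactly: $\A^t(L)=\A^t(L_{\shortminus v})\cup\{u_1\}$. This identification is the missing idea. Without it, your Step~3 worries about an unknown extra differing vertex that might be a neighbor of $u$ queried before $b$. You propose an induction over later query times plus an appeal to monotonicity of query times to exclude it, but none of this is carried out. As written, the proposal does not prove the statement.

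With the identification in hand, you need neither a second alternating path nor an induction. Since $u_1$ is unmatched before $t$ in $L$, marking it unavailable from the start changes nothing before $t$. So $\A^t(L_{\shortminus u_1})=\A^t(L)\setminus\{u_1\}=\A^t(L_{\shortminus v})$. The greedy process from time $t$ on depends only on the current available set and the remaining suffix of the list. Therefore $\M(L_{\shortminus v})$ and $\M(L_{\shortminus u_1})$ add exactly the same edges from $t$ onward. Since $u$ is still free at $t$ in both runs, its partner in $\M(L_{\shortminus v})$ equals its partner in $\M(L_{\shortminus u_1})$, namely $b$, or none if $u$ has no backup.

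A smaller point: the second statement is not literally the first one applied to $L_{\shortminus v}$. There the backup is taken with respect to the list \emph{without} $v$, and the comparison is with a list containing one more vertex. You have to rerun the argument with the roles swapped. One gets $\A^t(L_{\shortminus v})=\A^t(L)\cup\{u_1\}$, hence $\A^t(L)=\A^t(L_{\shortminus vu_1})$. So $u$'s match in $\M(L)$ equals its match in $\M(L_{\shortminus vu_1})$, which is $b$.
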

    Note that we do not assume $u_1=v$. Thus the first property is different from restating the definition of backup. We leave the proof to \nameref{sec:appendix-a}. 

Intuitively, the above fact shows that a backup uniquely determines how a vertex can become worse off when another vertex is removed or introduced. Now consider the alternating path $\M(L)\oplus \M(L_{\shortminus v})$, and let $v,\ldots,u_k$ denote its vertices. Each vertex on this path is worse off in one of the two matchings: odd vertices become worse off when $v$ is removed, while even vertices become worse off when $v$ is introduced. Therefore, by \cref{fact:backup_uniquely_describe_worse_off}, the next vertex along the alternating path must be exactly the backup of the current vertex in the corresponding list. The following lemma formalizes this relationship between backups and the alternating path.
\begin{lemma}[Backups in Alternating Path]
\phantomsection
\label{lem:backup_in_alt_path}
    For every odd vertex $u_i$, $u_{i+1}$ is its backup in $L$. For every even vertex $u_i\neq v$, $u_{i+1}$ is its backup in $L_{\shortminus v}$. If the odd/even $i=k$, then $u_i$ does not have a backup in its respective list $L/L_{\shortminus v}$.
\end{lemma}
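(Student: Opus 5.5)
The plan is to combine the Alternating Path Lemma (\cref{lem:alt-path}) with \cref{fact:backup_uniquely_describe_worse_off}, applied separately to odd and even vertices. Let $P=(u_0=v,u_1,\dots,u_k)$ be the alternating path $\M(L)\oplus\M(L_{\shortminus v})$ at the final time. By \cref{lem:alt-path}, edges $(u_i,u_{i+1})$ with $i$ even lie in $\M(L)$ and those with $i$ odd lie in $\M(L_{\shortminus v})$, and their query times strictly increase along $P$.

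\textbf{Odd vertices.} Fix an odd $i<k$. Then $(u_{i-1},u_i)\in\M(L)$, so $u_i$ is matched in $\M(L)$ to $u_{i-1}$. Also $(u_i,u_{i+1})\in\M(L_{\shortminus v})$ and is queried later than $(u_{i-1},u_i)$. So $u_i$ is matched at a later time in $\M(L_{\shortminus v})$ than in $\M(L)$; that is, $u_i$ becomes worse off by removing $v$. By the first part of \cref{fact:backup_uniquely_describe_worse_off}, applied with list $L$, $u_i$ has a backup in $L$ and its match in $\M(L_{\shortminus v})$ equals that backup. This match is $u_{i+1}$.

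If $i=k$ is odd, then $u_k$ is matched in $\M(L)$ (to $u_{k-1}$) but has no incident edge of $\M(L_{\shortminus v})$ in the symmetric difference. Its $\M(L)$-edge is not in $\M(L_{\shortminus v})$, and any other $\M(L_{\shortminus v})$-edge at $u_k$ would lie in the symmetric difference and extend the path. Hence $u_k$ is unmatched in $\M(L_{\shortminus v})$, which is consistent with item 4 of \cref{lem:alt-path}. So $u_k$ is worse off by removing $v$ and ends unmatched. The fact then forces $u_k$ to have no backup: a backup $b$ would make $u_k$ matched to $b$ in $\M(L_{\shortminus v})$.

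\textbf{Even vertices.} For even $i\ge 2$, $u_i$ is matched in $\M(L_{\shortminus v})$ to $u_{i-1}$ via an edge queried earlier than $(u_i,u_{i+1})\in\M(L)$. If $i=k$, $u_i$ is instead unmatched in $\M(L)$. So $u_i$ becomes worse off by introducing $v$. The second part of \cref{fact:backup_uniquely_describe_worse_off}, applied to $L_{\shortminus v}$, then gives that $u_{i+1}$ is the backup of $u_i$ in $L_{\shortminus v}$, or that $u_k$ has no backup there when $i=k$. The endpoint cases use the same maximality argument: no other edge of the relevant matching can touch $u_k$ without lengthening the path.

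\textbf{Main obstacle.} The delicate step is precision about what ``worse off'' means, so that the fact applies cleanly. The two matched times must be compared using the monotonicity in item 3 of \cref{lem:alt-path}. When $i=1$, we need $u_1$'s $\M(L)$-edge to be $(v,u_1)$; this holds because $u_0=v$ and $v$ is unavailable in $L_{\shortminus v}$. Everything else is bookkeeping on the path structure.
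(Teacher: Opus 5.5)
Your proposal is correct and follows the paper's route. Odd path vertices are worse off when $v$ is removed and even ones when $v$ is introduced (via the monotone query times in \cref{lem:alt-path}), so the backup, or its absence at the endpoint, can be read off from \cref{fact:backup_uniquely_describe_worse_off}; you just write out the bookkeeping (the contrapositive of the no-backup clause and the unmatched endpoint via item 4) that the paper's one-line proof leaves to the surrounding text.
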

\begin{proof}
    By \cref{fact:backup_uniquely_describe_worse_off}, $u_{i+1}$ are exactly the worse off choices for vertex $u_i$ to match after removing (introducing) $v$ for odd (even) $u_i$. 
\end{proof}
Intuitively, backup defines the unique way a vertex can become worse off when a vertex is introduced/removed. Because odd $u_i$ becomes worse off by removing $v$, their backups are their worse choice, that is, their match in $\M(L_{\shortminus v})$. On the other hand, even $u_i$ becomes worse off by introducing $v$, their backups are their worse choice match in $\M(L)$. See the following graph for an example:

\ifdoublecol

\def\FigShift{-1.5cm}    
\def\picShift{1cm}    
\def\textShift{0cm}   
\def\textY{-3}     
\def\textW{6.5cm}     

\makebox[\linewidth][c]{%
\hspace*{\FigShift}%
\begin{tikzpicture}[
    even/.style={circle, fill=blue!60!white, inner sep=1.5pt},
    odd/.style={circle, fill=red!60!white, inner sep=1.5pt},
    textbox/.style={align=justify, text width=\textW, font=\footnotesize}
]

\begin{scope}[xshift=\picShift]

\foreach \i in {0,...,4} {
    \pgfmathsetmacro{\y}{-(2/3)*\i}
    \ifodd\i
        \node[odd] (u\i) at (-1,\y) {};
        \ifnum\i=0
            \node[anchor=east] at (-1.3,\y) {$v$};
        \else
            \node[anchor=east] at (-1.3,\y) {$u_\i$};
        \fi
    \else
        \node[even] (u\i) at (1,\y) {};
        \ifnum\i=0
            \node[anchor=west] at (1.3,\y) {$v$};
        \else
            \node[anchor=west] at (1.3,\y) {$u_\i$};
        \fi
    \fi
}

\draw[dash pattern=on 6pt off 3pt, thick] (u0) -- (u1);
\draw[thick] (u1) -- (u2);
\draw[dash pattern=on 6pt off 3pt, thick] (u2) -- (u3);
\draw[thick] (u3) -- (u4);

\begin{scope}[shift={(2.5, 0)}]
    \draw[dash pattern=on 6pt off 3pt, thick] (0,0) -- (0.5,0)
        node[right, black, align=left, font=\footnotesize] {$\M(L)$};
    \draw[thick] (0,-0.6) -- (0.5,-0.6)
        node[right, black, align=left, font=\footnotesize] {$\M(L_{\shortminus v})$};
\end{scope}

\begin{scope}[shift={(2.5, -1.5)}]
    \node[even, label={[font=\footnotesize]right:Even node}] at (0,0) {};
    \node[odd, label={[font=\footnotesize]right:Odd node}] at (0,-0.5) {};
\end{scope}

\path (current bounding box.north west) -- (current bounding box.north east)
  node[midway, above=4pt, font=\footnotesize, align=center]
  {Example alternating path for lemma $4.4$\footnotemark};

\end{scope}

\begin{scope}[xshift=\textShift]
\node[textbox, anchor=north west] at (-1.35,\textY) {
Odd node $u_3$ is the backup of even node $u_2$ when $u_2$ is matched with $u_1$ in $\M(L_{\shortminus v})$. Even node $u_2$ is the backup of odd node $u_1$ when $u_1$ is matched to $v$ in $\M(L)$. A similar relationship holds for the node triplet $(u_4,u_3,u_2)$. The last node, which could be even or odd (in this case $u_4$), does not have a backup in the respective matchings $\M(L_{\shortminus v})$ and $\M(L)$.
};
\end{scope}

\end{tikzpicture}%
}

\else

\begin{tikzpicture}[
    even/.style={circle, fill=blue!60!white, inner sep=1.5pt},
    odd/.style={circle, fill=red!60!white, inner sep=1.5pt},
    textbox/.style={align=left, text width=7.5cm, font=\footnotesize}
]

\foreach \i in {0,...,4} {
    \pgfmathsetmacro{\y}{-(2/3)*\i} 
    \ifodd\i
        \node[odd] (u\i) at (-1,\y) {};  
        \ifnum\i=0
            \node[anchor=east] at (-1.3,\y) {$v$};
        \else
            \node[anchor=east] at (-1.3,\y) {$u_\i$};
        \fi
    \else
        \node[even] (u\i) at (1,\y) {};  
        \ifnum\i=0
            \node[anchor=west] at (1.3,\y) {$v$};
        \else
            \node[anchor=west] at (1.3,\y) {$u_\i$};
        \fi
    \fi
}

\draw[dash pattern=on 6pt off 3pt, thick] (u0) -- (u1);
\draw[thick] (u1) -- (u2);
\draw[dash pattern=on 6pt off 3pt, thick] (u2) -- (u3);
\draw[thick] (u3) -- (u4);

\begin{scope}[shift={(2.5, 0)}]
    \draw[dash pattern=on 6pt off 3pt, thick] (0,0) -- (0.5,0) node[right, black, align=left, font=\footnotesize] {$\M(L)$};
    \draw[thick] (0,-0.6) -- (0.5,-0.6) node[right, black, align=left, font=\footnotesize] {$\M(L_{\shortminus v})$};
\end{scope}

\begin{scope}[shift={(2.5, -1.5)}] 
    \node[even, label={ [font=\footnotesize]right:Even node}] at (0,0) {};
    \node[odd, label={[font=\footnotesize] right:Odd node}] at (0,-0.5) {};
\end{scope}

\node[below=0.3cm of u4, font=\footnotesize] {Example alternating path for~\cref{lem:backup_in_alt_path}\footnotemark.};

\node[textbox, right=4cm of u2] (info) {
Odd node $u_{3}$ is the backup of even node $u_2$ when $u_2$ is matched with $u_1$ in $\M(L_{\shortminus v})$. \\[2mm]
Even node $u_{2}$ is the backup of odd node $u_1$ when $u_1$ is matched to $v$ in $\M(L)$. A similar relationship holds for the  node triplet $(u_4,u_3,u_2)$. \\[2mm]
The last node, which could be even or odd (in this case $u_4$), does not have a backup in the respective matchings $\M(L_{\shortminus v})$ and $\M(L)$.  
};

\end{tikzpicture}

\fi

\footnotetext{Query order increases from top edges to bottom edges.}

\subsection{Victim}
 We begin by defining generalized versions of victim with respect to any ordered query list $L$.
\begin{definition}[Victims and Blockers]
\phantomsection
\label{def:victim_querycommit}
We say that $u$ is the victim of $w$ and, conversely, that $w$ is the blocker of $u$, with respect to $L$ if 
\begin{itemize}
    \item $u$ is unmatched in $\M(L)$,
    \item $u$ is matched when $w$ is removed; that is, $u$ is matched in $\M(L_{\shortminus w})$.
\end{itemize}
\end{definition}
A simple observation that follows directly from the definitions is that, with respect to any fixed list $L$, each vertex can have at most one victim, whereas a vertex may become the victim of multiple blocker vertices. This structural asymmetry is a key property that enables our improved analysis for both \ranking{} and \Franking{}, and further, our odd-girth-sensitive analysis.

Now we introduce two lemmas connecting the notion of victim with the alternating path.
\begin{lemma}[Blockers in Alternating Path I]
\phantomsection
\label{lem:victim_querycommit_case_u}
    Let $u,v\in V$ be two arbitrary vertices. Suppose $u$ is unmatched by introducing $v$ to $L_{\shortminus v}$. Let $u_0,\ldots,u_{k}$ be the alternating path $\M(L)\oplus\M(L_{\shortminus v})$ where $u_0=v,u_k=u$. Then $u$ is the victim of vertices $u_i$ in $L$ for all even $i<k$.
\end{lemma}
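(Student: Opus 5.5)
Fix an even index $i<k$ and write $w=u_i$. We must show two things: $u$ is unmatched in $\M(L)$, and $u$ is matched in $\M(L_{\shortminus w})$.

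The first is immediate. Since $u=u_k$ with $k$ even, \cref{lem:alt-path} (item 4, applied at the final time) gives $\A(L)=\A(L_{\shortminus v})\cup\{u\}$. Moreover $u$ is unmatched in $\M(L)$ by hypothesis.

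For the second, the case $i=0$ is trivial: $w=v$, and $u$ is matched in $\M(L_{\shortminus v})$ to $u_{k-1}$. So assume $0<i<k$ with $i$ even. My plan is to compute the alternating path $\M(L)\oplus\M(L_{\shortminus w})$ explicitly and show that its last vertex is the odd vertex $u_k$.

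\emph{Step 1: the path must start by following $P$.} By \cref{lem:alt-path}, the symmetric difference $\M(L)\oplus\M(L_{\shortminus w})$ is an alternating path $w=w_0,w_1,\dots$ whose edge query times increase. The first edge $(w_0,w_1)$ is $w$'s match in $\M(L)$, which is $(u_i,u_{i+1})$. Now $w_1=u_{i+1}$ is odd-indexed in this new path, so it is worse off in $L_{\shortminus w}$. By \cref{fact:backup_uniquely_describe_worse_off}, its partner in $\M(L_{\shortminus w})$ is its backup in $L$. By \cref{lem:backup_in_alt_path}, the backup of the odd vertex $u_{i+1}$ in the old path is $u_{i+2}$. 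Hence $w_2=u_{i+2}$.

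\emph{Step 2: propagate along $P$.} Next, $w_2=u_{i+2}$ is even in the new path. Its $\M(L)$-edge is $(u_{i+2},u_{i+3})$, so $w_3=u_{i+3}$. Continuing inductively, the new path coincides with $u_i,u_{i+1},\dots,u_k$, with the parities preserved because $i$ is even.

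\emph{Step 3: termination at $u$.} When we reach $u_{k-1}$, which is odd, it is worse off in $L_{\shortminus w}$ and must be matched there to its backup in $L$. By \cref{lem:backup_in_alt_path} that backup is $u_k=u$. So the new path contains the edge $(u_{k-1},u)\in\M(L_{\shortminus w})$. In particular $u$ is matched in $\M(L_{\shortminus w})$, which is what we needed. The path then ends at $u$, since $u$ has no $\M(L)$-edge.

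\emph{Main obstacle.} The delicate point is the induction step. Each odd vertex in the new path is worse off by removing $w$, and we need its backup with respect to the list $L$. Backups are defined relative to a fixed list, so this step uses \cref{fact:backup_uniquely_describe_worse_off} with the removed vertex $w$ rather than $v$. It also relies on $u_{j+1}$ being the backup of odd $u_j$ in $L$; this is a property of $L$ alone, so it transfers correctly.

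If that step looks shaky, there is a fallback. Simulate the greedy process on $L_{\shortminus w}$ and argue directly from item 3 of \cref{lem:alt-path}, the increasing query times. Removing $w$ frees $u_{i+1}$, which then grabs $u_{i+2}$ at the time of edge $(u_{i+1},u_{i+2})$, exactly as in $L_{\shortminus v}$. This cascade reproduces the $\M(L_{\shortminus v})$-edges along the suffix of $P$, ending with $u$ matched to $u_{k-1}$.
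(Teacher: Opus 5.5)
Your proof is correct, but it takes a different route from the paper's.

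The paper synchronizes once, at the time $t$ when $(u_i,u_{i+1})$ is queried. By item 4 of \cref{lem:alt-path}, since the path at time $t$ ends at $u_i$, we have $\A^t(L)=\A^t(L_{\shortminus v})\cup\{u_i\}$. Because $u_i$ is still unmatched in $L$ before $t$, we also have $\A^t(L)=\A^t(L_{\shortminus u_i})\cup\{u_i\}$. Hence $\A^t(L_{\shortminus v})=\A^t(L_{\shortminus u_i})$, the two greedy runs coincide from $t$ onward, and $u$ inherits its match from $\M(L_{\shortminus v})$.

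You instead rebuild the whole path $\M(L)\oplus\M(L_{\shortminus u_i})$ edge by edge. You use \cref{fact:backup_uniquely_describe_worse_off} in its general form, where the removed vertex $u_i$ is not the match of the odd vertex, together with \cref{lem:backup_in_alt_path}. The induction is sound: each even vertex is forced onward by its $\M(L)$-edge in $P$, each odd vertex by its backup in $L$, and parities agree because $i$ is even.

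Your route buys the explicit statement that the alternating path for removing $u_i$ is exactly the suffix $(u_i,\dots,u_k)$ of $P$, with $u$ matched to $u_{k-1}$. However, the proof of \cref{fact:backup_uniquely_describe_worse_off} is itself this same synchronization argument, so you are in effect invoking it once per odd vertex. The paper invokes it once and gets the stronger fact that $\M(L_{\shortminus u_i})$ and $\M(L_{\shortminus v})$ agree on every edge queried after $t$. That fact, combined with $\M^t(L_{\shortminus u_i})=\M^t(L)$, already implies your description of the path. Your fallback paragraph is essentially a sketch of the paper's argument.
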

\begin{lemma}[Blockers in Alternating Path II]
\phantomsection
\label{lem:victim_querycommit_case_u_star}  
Let $v\in N(u)$ be a neighbor of $u$ in $G$. Suppose $u$ becomes worse off in $\M(L)$ compared to $\M(L_{\shortminus v})$. Let $u_0,\ldots,u_{j},u_{j+1},..., u_k$ be the alternating path $\M(L)\oplus\M(L_{\shortminus v})$ where $u_0=v,u_j=u,u_{j+1}=b$ (path ends at $u_j$ in the case where $u$ is unmatched). Let $L'$ be a list such that the following holds:
\begin{itemize}
    \item $v$ is unmatched in $\M(L')$.
    \item $L'_{\shortminus v}=L_{\shortminus v}$\footnote{Technically, by the way we defined $L_{\shortminus v}$ (marking $v$ unavailable instead of removing incident edges), this condition is not accurate. The most correct way to state this requirement is that $L'$ agrees with $L$ on all pairs of nodes except those incident to $v$. But since these two views are essentially equivalent under the query-commit matching view, we abuse notations here for simplicity.}.
    \item $v$ is preferred by $u$ over $b$ in $L'$ if $u$ has a backup $b$ in $L_{\shortminus v}$. i.e., $(u,v)<(u,b)$ in $L'$.
\end{itemize}
Then, for every odd-indexed vertex $u_i$ in $\M(L)\oplus\M(L_{\shortminus v})$ with $i<j$, $v$ is the victim of $u_i$ with respect to $L'$.
\end{lemma}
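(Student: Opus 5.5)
The plan is to fix an odd index $i<j$ and verify the two conditions of \cref{def:victim_querycommit} for the pair $(u_i,v)$ with respect to $L'$. The first condition, that $v$ is unmatched in $\M(L')$, is a hypothesis. So everything reduces to showing that $v$ is matched in $\M(L'_{\shortminus u_i})$.

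\textbf{Step 1: rewrite $\M(L')$ as $\M(L_{\shortminus v})$.} Since $v$ is unmatched in $\M(L')$, the alternating path $\M(L')\oplus\M(L'_{\shortminus v})$ of \cref{lem:alt-path} is trivial. Hence $\M(L')=\M(L'_{\shortminus v})$, and this equals $\M(L_{\shortminus v})$ by the second hypothesis. In particular, in $\M(L')$ every odd $u_l$ on the original path is matched to $u_{l+1}$. For the even vertices $u_l$ with $l<j$, \cref{lem:backup_in_alt_path} says their backup in $L_{\shortminus v}=L'_{\shortminus v}$ is $u_{l+1}$, and $u$ has backup $b=u_{j+1}$ (or none).

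\textbf{Step 2: describe $\M(L'_{\shortminus u_i})$.} I expect the path $\M(L')\oplus\M(L'_{\shortminus u_i})$ to be exactly $u_i,u_{i+1},u_{i+2},\ldots,u_{j-1},u_j=u,v$. In words, removing $u_i$ sets off a cascade along the suffix of the original path:
\begin{itemize}
\item $u_{i+1}$ loses $u_i$ and, by \cref{fact:backup_uniquely_describe_worse_off} applied in $L'$, falls back to its backup $u_{i+2}$.
\item This takes $u_{i+2}$ away from $u_{i+3}$, because the edge $(u_{i+1},u_{i+2})$ is queried before $(u_{i+2},u_{i+3})$ by item 3 of \cref{lem:alt-path}.
\item So $u_{i+3}$ falls back to its backup $u_{i+4}$, and so on.
\item Finally $u_{j-1}$ is taken by $u_{j-2}$ and $u_j=u$ is freed.
\end{itemize}
Formally, I would apply \cref{lem:alt-path} to the pair $L',L'_{\shortminus u_i}$ and prove by induction along the path that its vertices are $u_i,u_{i+1},\ldots,u_j$. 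At each step \cref{fact:backup_uniquely_describe_worse_off} identifies the next vertex as the backup of the current worse-off vertex with respect to $L'$. The relevant backups in $L'$ coincide with those in $L_{\shortminus v}$: all the vertices $u_{i+1},\ldots,u_j$ and their matches lie off $v$, and $\M(L')=\M(L_{\shortminus v})$. Note that $j$ is even and $i$ is odd, so $u$ sits at odd distance from $u_i$. That is the parity at which $u$ is worse off by the removal of $u_i$, which is consistent with $u$ being freed.

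\textbf{Step 3: $u$ is matched to $v$.} Once $u$ loses $u_{j-1}$, \cref{fact:backup_uniquely_describe_worse_off} says that in $L'_{\shortminus u_i}$ the vertex $u$ is matched no earlier than its next option. I then claim that $u$ and $v$ are both still available at time $(u,v)$ in $L'_{\shortminus u_i}$.
\begin{itemize}
\item For $u$: $(u,v)<(u,b)$ in $L'$ by the third hypothesis, and $u$ is not rematched before $(u,b)$ in the cascade. If $u$ has no backup, it is simply available afterwards.
\item For $v$: $v$ is unmatched in $\M(L')$, and the only vertex whose status changes is the path end. So before the cascade reaches $u$, the available sets of $L'$ and $L'_{\shortminus u_i}$ differ only in vertices of the path, by item 4 of \cref{lem:alt-path}, and these vertices are not $v$.
\end{itemize}
Since $v\in N(u)$, the query $(u,v)$ then succeeds, so $v$ is matched in $\M(L'_{\shortminus u_i})$. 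This is exactly the statement that $v$ is the victim of $u_i$ with respect to $L'$.

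\textbf{Main obstacle.} The delicate part is the timing bookkeeping in Steps 2 and 3. We must confirm that the cascade really reaches $u$ before time $(u,v)$ in $L'$, and that no other vertex grabs $u$ or $v$ in between. $L'$ differs from $L$ on the pairs incident to $v$, so the query times of $v$'s edges are not controlled by the original path. To handle this, I would run the induction entirely in terms of partial matchings $\M^t(L')$ versus $\M^t(L'_{\shortminus u_i})$, using item 4 of \cref{lem:alt-path} at each time $t$: until the path reaches $v$, the only availability discrepancy is the current path endpoint. Combined with the ordering hypothesis $(u,v)<(u,b)$, this should show that $v$ is the first available partner $u$ meets once it is freed.
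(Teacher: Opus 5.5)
Your Step 2 has a genuine gap, and Step 3 inherits it. You justify the claim that $\M(L')\oplus\M(L'_{\shortminus u_i})$ is exactly $u_i,u_{i+1},\ldots,u_j,v$ by asserting that backups in $L'$ coincide with backups in $L_{\shortminus v}$. Your reason is that $\M(L')=\M(L_{\shortminus v})$ and the relevant vertices lie off $v$. That inference does not hold. The backup of $u_l$ in $L'$ is its partner in $\M(L'_{\shortminus u_{l-1}})$, a run in which $v$ is present and free. The equality $\M(L')=\M(L'_{\shortminus v})$ says nothing about whether $\M(L'_{\shortminus y})=\M(L'_{\shortminus yv})$.

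The claimed path structure is in fact false. Let $G$ have edges $(v,u_1),(u_1,u_2),(u_2,u_3),(u_3,u_4),(u_4,v),(u_2,v)$, and let $L=((v,u_1),(u_1,u_2),(u_2,u_3),(u_3,u_4),(u_2,v),(u_4,v))$. Then $\M(L)\oplus\M(L_{\shortminus v})$ is $v,u_1,u_2,u_3,u_4=u$, and $u$ has no backup in $L_{\shortminus v}$. Take $L'=((u_1,u_2),(u_2,v),(u_2,u_3),(u_3,u_4),(u_4,v),(v,u_1))$. It satisfies all hypotheses, with $v$ unmatched in $\M(L')$. In $L'_{\shortminus u_1}$ the query $(u_2,v)$ succeeds, so the path is $u_1,u_2,v$ and $u$ stays matched to $u_3$. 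The backup of $u_2$ in $L'$ is $v$, not $u_3$.

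Your Step 3 bullet for $v$ (``these vertices are not $v$'') is circular for the same reason. Item 4 of \cref{lem:alt-path} lets the path jump to $v$ whenever its current endpoint is queried together with $v$. The lemma's conclusion survives in this example only because $v$ gets matched by a route your argument does not consider.

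The missing idea is to remove $v$ instead of tracing backups in $L'$. Split on whether $v$ is matched in $L'_{\shortminus u_i}$ before time $(u,v)$; if so, you are done. Otherwise the partial matchings of $L'_{\shortminus u_i}$ and $L'_{\shortminus u_iv}=L_{\shortminus u_iv}$ coincide up to $(u,v)$. So you only need $u$ to be available in $L_{\shortminus u_iv}$ at that time. For this, let $t=(u_i,u_{i+1})$:
\begin{itemize}
\item By \cref{lem:alt-path}, $\A^t(L_{\shortminus v})=\A^t(L)\cup\{u_i\}$.
\item Since $u_i$ is unmatched before $t$ in $L_{\shortminus v}$, also $\A^t(L_{\shortminus u_iv})=\A^t(L)$.
\item Hence $\M(L_{\shortminus u_iv})$ adds exactly the edges of $\M(L)$ from $t$ on.
\item So $u$ is either matched to $b$ at time $(u,b)>t$ or never matched, and it is free at every earlier time. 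Before $t$ the run agrees with $L_{\shortminus v}$, where $u$ is matched only at $(u_{j-1},u_j)\ge t$.
\end{itemize}
The hypothesis $(u,v)<(u,b)$ in $L'$ then finishes the proof. This is how the paper argues. It also removes your timing obstacle: you never need the cascade to reach $u$ before $(u,v)$.
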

\ifdoublecol

\def\FigShift{0cm}    
\def\picShift{0cm}    
\def\textShift{-2cm}   
\def\textY{-4}      
\def\textW{6.5cm}     

\makebox[\linewidth][c]{%
\hspace*{\FigShift}%
\begin{tikzpicture}[
    even/.style={circle, fill=blue!60!white, inner sep=1.5pt},
    odd/.style={circle, fill=red!60!white, inner sep=1.5pt},
    textbox/.style={align=justify, text width=\textW, font=\footnotesize}
]

\begin{scope}[xshift=\picShift]

\foreach \i in {0,...,4} {
    \pgfmathsetmacro{\y}{-(2/3)*\i}

    \ifnum\i=0
        \def\nodename{n0}
    \else\ifnum\i=1
        \def\nodename{n1}
    \else\ifnum\i=2
        \def\nodename{n2}
    \else\ifnum\i=3
        \def\nodename{n3}
    \else
        \def\nodename{n4}
    \fi\fi\fi\fi

    \ifnum\i=0
        \def\nodelabel{$v$}
    \else\ifnum\i=2
        \def\nodelabel{$u_{k\shortminus 2}$}
    \else\ifnum\i=3
        \def\nodelabel{$u_{k\shortminus 1}$}
    \else\ifnum\i=4
        \def\nodelabel{$u$}
    \else
        \def\nodelabel{$u_\i$}
    \fi\fi\fi\fi

    \ifodd\i
        \node[odd] (\nodename) at (-1,\y) {};
        \node[below=2pt of \nodename, font=\footnotesize] {\nodelabel};
    \else
        \node[even] (\nodename) at (1,\y) {};
        \node[below=2pt of \nodename, font=\footnotesize] {\nodelabel};
    \fi
}

\draw[dash pattern=on 6pt off 3pt, thick] (n0) -- (n1);
\draw[dotted, line width=1pt, dash pattern=on 1pt off 2pt] (0, {-(2/3)*1}) -- (0, {-(2/3)*2});
\draw[dash pattern=on 6pt off 3pt, thick] (n2) -- (n3);
\draw[thick] (n3) -- (n4);

\begin{scope}[shift={(2.5, 0)}]
    \draw[dash pattern=on 6pt off 2pt, thick] (0,0) -- (0.5,0)
        node[right, black, align=left, font=\footnotesize] {$\M(L)$};
    \draw[thick] (0,-0.6) -- (0.5,-0.6)
        node[right, black, align=left, font=\footnotesize] {$\M(L_{\shortminus v})$};
\end{scope}

\begin{scope}[shift={(2.5, -1.5)}]
    \node[even, label={[font=\footnotesize]right:Even node}] at (0,0) {};
    \node[odd, label={[font=\footnotesize]right:Odd node}] at (0,-0.5) {};
\end{scope}

\draw[line width=1pt]
    ([xshift=-3cm, yshift=0cm]n0.east) -- ([xshift=-3.3cm, yshift=0cm]n0.east);
\draw[dotted, line width=1pt]
    ([xshift=-3cm, yshift=0cm]n0.east) -- ([xshift=-0.2cm, yshift=0cm]n0.east);

\draw[line width=1pt]
    ([xshift=-3.15cm, yshift=0cm]n0.east) -- ([xshift=-3.15cm, yshift=-1.2cm]n4.east);

\draw[line width=1pt]
    ([xshift=-3cm, yshift=-1.2cm]n4.east) -- ([xshift=-3.3cm, yshift=-1.2cm]n4.east);

\draw[dotted, line width=1pt]
    ([xshift=-3cm, yshift=-1.2cm]n4.east) -- ([xshift=-0.1cm, yshift=-1.2cm]n4.east);

\node[anchor=east, font=\footnotesize, align=center]
    at ([xshift=-3.15cm, yshift=-0.45cm]n2.east) {victim\\range};

\node[font=\footnotesize] at (0,0.6)
{Blockers in the alternating path when $u$ has no backup in $L_{\shortminus v}$};

\end{scope}

\begin{scope}[xshift=\textShift]
\node[textbox, anchor=north west] at (-1.35,\textY) {
If $u$ is unmatched in $\M(L)$, then $u$ is the victim of all earlier even (blue) nodes $v,u_2,\ldots,u_{k\shortminus 2}$. If $v$ is unmatched in $\M(L')$, then $v$ is the victim of all odd (red) nodes $u_1,\ldots,u_{k\shortminus 1}$, provided that $v$ is a neighbor of $u$.
};
\end{scope}

\end{tikzpicture}%
}

\else

\begin{tikzpicture}[
    even/.style={circle, fill=blue!60!white, inner sep=1.5pt},
    odd/.style={circle, fill=red!60!white, inner sep=1.5pt},
    textbox/.style={align=left, text width=7.5cm, font=\footnotesize}
]

\foreach \i in {0,...,4} {
    \pgfmathsetmacro{\y}{-(2/3)*\i} 

    \ifnum\i=0
        \def\nodename{n0}
    \else\ifnum\i=1
        \def\nodename{n1}
    \else\ifnum\i=2
        \def\nodename{n2}
    \else\ifnum\i=3
        \def\nodename{n3}
    \else
        \def\nodename{n4}
    \fi\fi\fi\fi

    \ifnum\i=0
        \def\nodelabel{$v$}
    \else\ifnum\i=2
        \def\nodelabel{$u_{k\shortminus 2}$}
    \else\ifnum\i=3
        \def\nodelabel{$u_{k\shortminus 1}$}
    \else\ifnum\i=4
        \def\nodelabel{$u$}
    \else
        \def\nodelabel{$u_\i$}
    \fi\fi\fi\fi

\ifodd\i
    \node[odd] (\nodename) at (-1,\y) {};  
    \node[below=2pt of \nodename] {\nodelabel};
\else
    \node[even] (\nodename) at (1,\y) {};  
    \node[below=2pt of \nodename] {\nodelabel};
\fi
}

\draw[dash pattern=on 6pt off 3pt, thick] (n0) -- (n1);
\draw[dash pattern=on 6pt off 3pt, thick] (n2) -- (n3);
\draw[thick] (n3) -- (n4);

\begin{scope}[shift={(2.5, 0)}]
    \draw[dash pattern=on 6pt off 2pt, thick] (0,0) -- (0.5,0) node[right, black, align=left, font=\footnotesize] {$\M(L)$};
    \draw[thick] (0,-0.6) -- (0.5,-0.6) node[right, black, align=left, font=\footnotesize] {$\M(L_{\shortminus v})$};
\end{scope}

\begin{scope}[shift={(2.5, -1.5)}] 
    \node[even, label={ [font=\footnotesize]right:Even node}] at (0,0) {};
    \node[odd, label={[font=\footnotesize] right:Odd node}] at (0,-0.5) {};
\end{scope}

\node[below=1.5cm of n4, font=\normalsize] {Blockers in the alternating path when $u$ has no backup in $L_{\shortminus v}$};

\node[textbox, right=4cm of n2] (info) {
If $u$ is unmatched in $\M(L)$, then $u$ is the victim of all earlier even (blue) nodes $v,u_2,...,u_{k\shortminus 2}$.\\[2mm]
If $v$ is unmatched in $\M(L')$, then $v$ is the victim of all odd (red) $u_1,...u_{k\shortminus 1}$, provided that $v$ is a neighbor of $u$.
};

\draw[line width=1pt, dotted, dash pattern=on 1pt off 2pt]
    (0, {-(2/3)*1}) -- (0, {-(2/3)*2});

\draw[line width=1pt]
    ([xshift=-3cm, yshift=0cm]n0.east) -- ([xshift=-3.3cm, yshift=0cm]n0.east);
\draw[dotted, line width=1pt]
    ([xshift=-3cm, yshift=0cm]n0.east) -- ([xshift=-0.2cm, yshift=0cm]n0.east);

\draw[line width=1pt]
    ([xshift=-3.15cm, yshift=0cm]n0.east) -- ([xshift=-3.15cm, yshift=-1.2cm]n4.east);

\draw[line width=1pt]
    ([xshift=-3cm, yshift=-1.2cm]n4.east) -- ([xshift=-3.3cm, yshift=-1.2cm]n4.east);

\draw[dotted, line width=1pt]
    ([xshift=-3cm, yshift=-1.2cm]n4.east) -- ([xshift=-0.1cm, yshift=-1.2cm]n4.east);

\node[anchor=east, font=\footnotesize, align=center]
    at ([xshift=-3.15cm, yshift=-0.45cm]n2.east) {victim\\range};

\end{tikzpicture}

\fi
\ifdoublecol
\medskip
\fi

\ifdoublecol

\def\FigShift{0cm}    
\def\picShift{0cm}    
\def\textShift{-2cm}   
\def\textY{-4}      
\def\textW{6.5cm}     

\makebox[\linewidth][c]{%
\hspace*{\FigShift}%
\begin{tikzpicture}[
    even/.style={circle, fill=blue!60!white, inner sep=1.5pt},
    odd/.style={circle, fill=red!60!white, inner sep=1.5pt},
    textbox/.style={align=justify, text width=\textW, font=\footnotesize}
]

\begin{scope}[xshift=\picShift]

\foreach \i in {0,...,5} {
    \pgfmathsetmacro{\y}{-(2/3)*\i}

    \ifnum\i=0
        \def\nodename{n0}
    \else\ifnum\i=1
        \def\nodename{n1}
    \else\ifnum\i=2
        \def\nodename{n2}
    \else\ifnum\i=3
        \def\nodename{n3}
    \else\ifnum\i=4
        \def\nodename{n4}
    \else
        \def\nodename{n5}
    \fi\fi\fi\fi\fi

    \ifnum\i=0
        \def\nodelabel{$v$}
    \else\ifnum\i=2
        \def\nodelabel{$u_{j\shortminus 2}$}
    \else\ifnum\i=3
        \def\nodelabel{$u_{j\shortminus 1}$}
    \else\ifnum\i=4
        \def\nodelabel{$u$}
    \else\ifnum\i=5
        \def\nodelabel{$b$}
    \else
        \def\nodelabel{$u_\i$}
    \fi\fi\fi\fi\fi

    \ifodd\i
        \node[odd] (\nodename) at (-1,\y) {};
        \node[below=2pt of \nodename, font=\footnotesize] at (-1.3,\y) {\nodelabel};
    \else
        \node[even] (\nodename) at (1,\y) {};
        \node[below=2pt of \nodename, font=\footnotesize] at (1.3,\y) {\nodelabel};
    \fi
}

\draw[dash pattern=on 6pt off 2pt, thick] (n0) -- (n1);
\draw[dash pattern=on 6pt off 2pt, thick] (n2) -- (n3);
\draw[thick] (n3) -- (n4);
\draw[dash pattern=on 6pt off 2pt, thick] (n4) -- (n5);

\begin{scope}[shift={(2.5, 0)}]
    \draw[dash pattern=on 6pt off 2pt, thick] (0,0) -- (0.5,0)
        node[right, black, align=left, font=\footnotesize] {$\M(L)$};
    \draw[thick] (0,-0.6) -- (0.5,-0.6)
        node[right, black, align=left, font=\footnotesize] {$\M(L_{\shortminus v})$};
\end{scope}

\begin{scope}[shift={(2.5, -1.5)}]
    \node[even, label={[font=\footnotesize]right:Even node}] at (0,0) {};
    \node[odd, label={[font=\footnotesize]right:Odd node}] at (0,-0.5) {};
\end{scope}

\draw[line width=1pt, dotted, dash pattern=on 1pt off 2pt]
    (0, {-(2/3)*1}) -- (0, {-(2/3)*2});

\coordinate (Ibottom) at (n4.east|-n5.east);

\draw[line width=1pt]
    ([xshift=-3cm, yshift=0cm]n0.east) -- ([xshift=-3.3cm, yshift=0cm]n0.east);

\draw[dotted, line width=1pt]
    ([xshift=-3cm, yshift=0cm]n0.east) -- ([xshift=-0.2cm, yshift=0cm]n0.east);

\draw[line width=1pt]
    ([xshift=-3.15cm, yshift=0cm]n0.east) -- ([xshift=-3.15cm, yshift=0cm]Ibottom);

\draw[line width=1pt]
    ([xshift=-3cm] Ibottom) -- ([xshift=-3.3cm] Ibottom);

\draw[dotted, line width=1pt]
    ([xshift=-3cm] Ibottom) -- ([xshift=-2.2cm] Ibottom);

\node[anchor=east, font=\footnotesize, align=center]
    at ([xshift=-3.15cm, yshift=-0.45cm]n2.east) {victim\\range};

\node[font=\footnotesize, align=center] at (0,0.6)
{Blockers in the alternating path when $u$ has backup $b$ in $L_{\shortminus v}$};

\end{scope}

\begin{scope}[xshift=\textShift]
\node[textbox, anchor=north west] at (-1.35,\textY) {
Assume $u$ is matched to a worse vertex $b$ in $\M(L)$. If $v$ is unmatched in $\M(L')$, then as long as $v$ is preferred by $u$ over $b$, $v$ is the victim of all odd (red) nodes $u_1,\ldots,u_{j-1}$, provided that $v$ is a neighbor of $u$.
};
\end{scope}

\end{tikzpicture}%
}

\else

\begin{tikzpicture}[
    even/.style={circle, fill=blue!60!white, inner sep=1.5pt},
    odd/.style={circle, fill=red!60!white, inner sep=1.5pt},
    textbox/.style={align=left, text width=7.5cm, font=\footnotesize}
]

\foreach \i in {0,...,5} {
    \pgfmathsetmacro{\y}{-(2/3)*\i} 

    \ifnum\i=0
        \def\nodename{n0}
    \else\ifnum\i=1
        \def\nodename{n1}
    \else\ifnum\i=2
        \def\nodename{n2}
    \else\ifnum\i=3
        \def\nodename{n3}
    \else\ifnum\i=4
        \def\nodename{n4}
    \else
        \def\nodename{n5}
    \fi\fi\fi\fi\fi

    \ifnum\i=0
        \def\nodelabel{$v$}
    \else\ifnum\i=2
        \def\nodelabel{$u_{j\shortminus 2}$}
    \else\ifnum\i=3
        \def\nodelabel{$u_{j\shortminus 1}$}
    \else\ifnum\i=4
        \def\nodelabel{$u$}
    \else\ifnum\i=5
        \def\nodelabel{$b$}
    \else
        \def\nodelabel{$u_\i$}
    \fi\fi\fi\fi\fi

    \ifodd\i
        \node[odd] (\nodename) at (-1,\y) {};
        \node[below=2pt of \nodename] at (-1.3,\y) {\nodelabel};
    \else
        \node[even] (\nodename) at (1,\y) {};
        \node[below=2pt of \nodename] at (1.3,\y) {\nodelabel};
    \fi
}

\draw[dash pattern=on 6pt off 2pt, thick] (n0) -- (n1);
\draw[dash pattern=on 6pt off 2pt, thick] (n2) -- (n3);
\draw[thick] (n3) -- (n4);
\draw[dash pattern=on 6pt off 2pt, thick] (n4) -- (n5);

\begin{scope}[shift={(2.5, 0)}]
    \draw[dash pattern=on 6pt off 2pt, thick] (0,0) -- (0.5,0) node[right, black, align=left, font=\footnotesize] {$\M(L)$};
    \draw[thick] (0,-0.6) -- (0.5,-0.6) node[right, black, align=left, font=\footnotesize] {$\M(L_{\shortminus v})$};
\end{scope}

\begin{scope}[shift={(2.5, -1.5)}]
    \node[even, label={ [font=\footnotesize]right:Even node}] at (0,0) {};
    \node[odd, label={[font=\footnotesize] right:Odd node}] at (0,-0.5) {};
\end{scope}

\node[below=1.2cm of n4, font=\normalsize] {Blockers in the alternating path when $u$ has backup $b$ in $L_{\shortminus v}$};

\node[textbox, right=4cm of n2] (info) {
Assume $u$ is matched to a worse vertex $b$ in $\M(L)$. If $v$ is unmatched in $\M(L')$, as long as $v$ is preferred by $u$ over $b$, then $v$ is the victim of all odd (red) $u_1,...u_{j-1}$, provided that $v$ is a neighbor of $u$.
};

\draw[line width=1pt, dotted, dash pattern=on 1pt off 2pt]
    (0, {-(2/3)*1}) -- (0, {-(2/3)*2});

\coordinate (Ibottom) at (n4.east|-n5.east);

\draw[line width=1pt]
    ([xshift=-3cm, yshift=0cm]n0.east) -- ([xshift=-3.3cm, yshift=0cm]n0.east);

\draw[dotted, line width=1pt]
    ([xshift=-3cm, yshift=0cm]n0.east) -- ([xshift=-0.2cm, yshift=0cm]n0.east);

\draw[line width=1pt]
    ([xshift=-3.15cm, yshift=0cm]n0.east) -- ([xshift=-3.15cm, yshift=0cm]Ibottom);

\draw[line width=1pt] ([xshift=-3cm] Ibottom) -- ([xshift=-3.3cm] Ibottom);
\draw[dotted, line width=1pt] ([xshift=-3cm] Ibottom) -- ([xshift=-2.2cm] Ibottom);

\node[anchor=east, font=\footnotesize, align=center]
    at ([xshift=-3.15cm, yshift=-0.45cm]n2.east) {victim\\range};

\end{tikzpicture}

\fi
We leave the proof of both lemmas to \nameref{sec:appendix-a}. The first lemma is saying that if $v$ causes an alternating path that unmatches $u$ in $L$, then removing any even-indexed node in the alternating path will rematch $u$. 

The second lemma states that, if placing $v$ in a favored position (as in $L$) causes one of its neighbors $u$ to become worse off, then when $v$ is permuted to a less favored position (as in $L'$) so that it becomes unmatched, we can remove any odd-indexed vertex in the alternating path $\M(L)\oplus\M(L_{\shortminus v})$ to rematch $v$, provided that $v$ is favored by $u$ over the worse-off choice $b$.

\cite{0.521Franking} proved a special case of \cref{lem:victim_querycommit_case_u_star}, showing that $u_1$ acts as a blocker of $v$ in the \Franking{} algorithm. Our lemma generalizes this result to all precedent odd-indexed vertices and arbitrary randomized greedy matching algorithms.

As we can see from this lemma, the longer the alternating path between $v$ and $u$, the more compensation $u$ and $u^\ast$ (equivalent to $v$) have a chance to collect. We explicitly use this feature in our analyses. In \ranking{}, we show that except in an extreme case\footnote{See \cref{fact:theta_0} for the extreme case in \ranking{}.}, the alternating path will have length $\geq 4$, so both $u$ and $u^\ast$ get to collect at least two copies of compensation when they are unmatched, respectively. In \Franking{}, we implicitly use the fact that length of the alternating path is $\geq 4$ so that we can collect some extra gain for $u$ (when it is unmatched) without assuming $u^\ast$ is the one paying compensation, as if the alternating path has length $2$, then $u^\ast$ is the only provable blocker of $u$, and hence we can only collect compensation for $u$ from $u^\ast$. In such a case, we are equivalently receiving no extra gain.

We can also prove a restricted variant of \cref{lem:victim_querycommit_case_u_star}, which shows that the match of $u$ is a blocker of $v$, assuming $v\in N(u)$ is unmatched and favored by $u$ over the backup vertex $b$ (when $u$ has one). Although this variant only shows the existence of one blocker vertex, the benefit is that we no longer need the existence of a list $L$ in which $v$ makes $u$ worse off. This fact was used by~\cite{0.531RDO} in its victim analysis.
\begin{fact}
\phantomsection
\label{claim:victim_when_no_alt_path}
Assuming $v\in N(u)$ is unmatched in $\M(L')$ and that $u$ is matched to $w$ and does not have a backup in $L_{\shortminus v}'$. Then $v$ is the victim of $w$ in $L'$.

Assuming $v\in N(u)$ is unmatched in $\M(L')$ and that $u$ is matched to $w$ and has a backup $b$ in $L_{\shortminus v}'$. Then $v$ is the victim of $w$ in $L'$, provided that $(u,v)<(u,b)$ in $L'$.
\end{fact}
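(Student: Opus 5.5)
We need to show $v$ is unmatched in $\M(L')$ but matched in $\M(L'_{\shortminus w})$. The plan is to apply the Alternating Path Lemma (\cref{lem:alt-path}) to $L'$ with the vertex $w$ removed, and then to look at what happens to $u$.

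\textbf{Step 1: structure of $L'$ versus $L'_{\shortminus v}$.} Since $v$ is unmatched in $\M(L')$, the symmetric difference $\M(L')\oplus\M(L'_{\shortminus v})$ is the trivial path consisting of $v$ alone. So $\M(L')=\M(L'_{\shortminus v})$, and in particular $u$ is matched to $w$ in $\M(L')$. For the same reason, $u$'s backup with respect to $L'$ coincides with its backup with respect to $L'_{\shortminus v}$, except that $v$ itself may now serve as a candidate.

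\textbf{Step 2: remove $w$ and consider $\M(L'_{\shortminus w})$.} Apply \cref{lem:alt-path} to $L'$ and the vertex $w$. The path starts at $w$, its first edge $(w,u)$ lies in $\M(L')$, and so $u=u_1$ is an odd vertex. Odd vertices become worse off when $w$ is removed.

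By \cref{fact:backup_uniquely_describe_worse_off}, applied in $L'$, $u$ is then matched in $\M(L'_{\shortminus w})$ to its backup in $L'$, or left unmatched if it has none. The key claim is that this backup in $L'$ is exactly $v$. To see this, consider the process on $L'_{\shortminus w}$ versus $L'_{\shortminus \{v,w\}}$. Without $v$, $u$ would be matched to $b$, or stay unmatched in the no-backup case; this is the definition of backup in $L'_{\shortminus v}$. Adding $v$ back, the edge $(u,v)$ is queried before $(u,b)$ (or at some point, in the no-backup case).

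\textbf{Step 3: show $v$ gets matched in $L'_{\shortminus w}$.} Suppose for contradiction that $v$ is unmatched in $\M(L'_{\shortminus w})$. Then, as in Step 1, $\M(L'_{\shortminus w})=\M(L'_{\shortminus\{v,w\}})$, so $u$ is matched to $b$ via edge $(u,b)$, or is unmatched. Now look at the time $(u,v)<(u,b)$:
\begin{itemize}
\item $v$ is available at that time, since it is never matched.
\item $u$ is available at that time, since its eventual match is $b$ at a later time, or it is never matched.
\end{itemize}
Since $(u,v)\in E$, the greedy algorithm would match $u$ and $v$, which is a contradiction. In the no-backup case, $u$ is unmatched throughout, so the same argument applies at time $(u,v)$.

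Hence $v$ is matched in $\M(L'_{\shortminus w})$. Combined with $v$ being unmatched in $\M(L')$, this shows $v$ is the victim of $w$.

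\textbf{Main obstacle.} The delicate step is the equality $\M(L'_{\shortminus w})=\M(L'_{\shortminus\{v,w\}})$ when $v$ is unmatched, together with identifying $u$'s match there as the backup $b$ defined with respect to $L'_{\shortminus v}$. Both rely on the observation that removing $w$, which is $u$'s match in $\M(L'_{\shortminus v})$, yields exactly $u$'s backup match in $L'_{\shortminus v}$. This is precisely \cref{def:backup_querycommit} applied to $L'_{\shortminus v}$, since $\M(L'_{\shortminus v})$ matches $u$ to $w$. Note that the commutation of removals is trivial here, because removal just marks vertices unavailable.
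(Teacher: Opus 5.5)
Your proof is correct and is essentially the paper's argument. Both compare $\M(L'_{\shortminus w})$ with $\M(L'_{\shortminus vw})$, where $u$ is matched to the backup $b$ at time $(u,b)>(u,v)$ or is unmatched, and conclude that $u$ and $v$ are both available when $(u,v)$ is queried; the paper argues directly at that time via $\A^t(L'_{\shortminus w})=\A^t(L'_{\shortminus vw})\cup\{v\}$, while you argue by contradiction on the final matchings. Step 2 and the remark in Step 1 about $u$'s backup in $L'$ are unnecessary, and the claim there that this backup is $v$ is only sketched, but Step 3 does not rely on it.
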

See a proof in \nameref{sec:appendix-a}.

\section{\ranking}
\phantomsection
\label{sec:ranking}
\ranking{} was first introduced in \cite{karp1990} for the online bipartite matching problem with one-sided vertex arrival. Its general graph version was first studied in \cite{flawedranking} and later shown to achieve an approximation ratio of $0.5469$ through subsequent analyses~\cite{0.523ranking, 0.526ranking, 0.546ranking}. The algorithm works as follows:

\begin{algorithm}[H]
  \caption{\ranking{} algorithm for general graphs.}
  \phantomsection\label{alg:ranking}
  \KwIn{Graph $G=(V,E)$}
  \SetAlgoLined

  Sample a uniform random value $x_v \in [0,1]$ independently for each $v \in V$\;

  \For{each vertex $v \in V$ in increasing order of $x_v$ in $\vecx$}{
    \If{$v$ is unmatched and has an unmatched neighbor}{
      Match $v$ with the first unmatched neighbor $u$ (in the same order $\vecx$)\;
    }
  }
\end{algorithm}

Let $\M(\vecx)$ be the output matching of running \ranking{} with rank vector $\Vec{x}$. This algorithm can also be viewed as running the \nameref{alg:querycommit} on the list of pairs of vertices $(u,v)$ ordered lexicographically by $(x_u,x_v)$. For a set $S\subseteq V$, we denote $\vecx_{\shortminus S}$ as the induced rank vector with nodes in $S$ removed from $\vecx$. For ranks $x_1,\dots,x_k \in [0,1]$, we write $\vecx_{\shortminus S}(s_1=x_1,\dots,s_k=x_k)$ to denote the rank vector obtained by adding nodes $s_i \in S$ with rank $x_i$ to $\vecx_{\shortminus S}$. If $S$ is a singleton $\{v\}$, we use $\vecx_{\shortminus v}$ to denote the rank vector with $v$ removed and $\vecx_{\shortminus v}(x)$ to denote the rank vector with $v$ moved to rank $x$, respectively. When $S=\{u,v\}$, we write $\vecx_{\shortminus uv}$ as shorthand for $\vecx_{\shortminus \{u,v\}}$.

We can view \ranking{} as the VI randomized greedy matching algorithm in which both the vertex arrival order $\pi$ and the preference orders $\sigma(v)$ are given by the same permutation induced by $\vecx$. The following notion of activeness/passiveness of a vertex is commonly used to describe whether the matching of a vertex occurs at its arrival time or at another vertex's arrival time.

\begin{definition}[Passive vs.\ Active Vertices]
\phantomsection
\label{def:active_passive_vertices}
\mbox{}\\[-1em]

\textbf{\textup{(Passive)}} A vertex $u$ is passively matched to $v$ if $(u,v)$ is matched at $v$'s arrival time $\pi(v)$.

\textbf{\textup{(Active)}}\; A vertex $u$ is actively matched to $v$ if $(u,v)$ is matched at $u$'s arrival time $\pi(u)$.

\end{definition}
\subsection{Our Approach vs. Previous Approaches}
In this subsection, we briefly discuss how our approach differs from the randomized primal-dual method in \cite{0.546ranking}, and the additional techniques we added to the victim gain-sharing scheme compared to~\cite{0.521Franking,0.531RDO}.

To analyze the expected value of gain$(u)$ $+$ gain$(u^\ast)$, \cite{0.546ranking} introduced the notion of the \emph{profile} of $u$ (\cref{def:profile_ranking}), which classifies $\vecx_{\shortminus u^\ast}$. All $\vecx_{\shortminus u^\ast}$ with the same profile are treated identically. The authors first computed the worst-case expected gain for each profile, and then integrated it over the worst-case profile distribution to obtain the final expected gain. Our approach preserves this overall structure but improves the bound for each fixed profile by introducing the notion of victim and a new monotonicity property that was not considered in~\cite{0.546ranking}. Specifically, fixing $\vecx_{\shortminus u^\ast}$, previous work conducted a worst-case analysis for each $\vecx_{\shortminus u^\ast}(x)$ in isolation. In contrast, we analyze the worst-case matching conditions of $\M(\vecx_{\shortminus u^\ast}(x))$ holistically across $x\in[0,1]$, which allows us to boost the overall expectation
$$\ev_{x\sim U[0,1]}[\text{ gain($u$) + gain$(u^\ast)$ }\;|\; \vecx=\vecx_{\shortminus u^\ast}(x)].$$
Rather than directly adopting the victim definition and analysis from~\cite{0.521Franking,0.531RDO}, our analysis refines the victim gain-sharing mechanism. Previous analyses defined the compensation rule as a function of the rank of a single vertex, whereas we make it depend on the ranks of multiple vertices, thereby capturing more structural information.

\paragraph{Global Analysis via Victim.}
 One common worst-case matching scenario \ranking{} needs to consider is when $u$ or $u^\ast$ is unmatched in $\M(\vecx_{\shortminus u^\ast}(x))$. When this happens for more $x\in[0,1]$, the primal-dual method tends to yield a worse approximation ratio. If this worst-case scenario occurs for all $x\in[0,1]$, then without additional structural analysis, the best approximation ratio one can obtain is $0.5$. By enforcing a gain-sharing rule that forces blockers to pay some of their gain to the victim, extra gains are collected for $u$ and $u^\ast$ when they are unmatched, effectively alleviating this worst-case matching scenario. In terms of the final $LP$, introducing victim gain-sharing is equivalent to reducing the estimated gain at some $x$ values while boosting it at some other $x$ values. Because in expectation vertices receive more gain than lose it, we obtain an increase in the final approximation ratio.
 
However, naively incorporating victim analysis into the LP model of~\cite{0.546ranking} increases the number of constraints by a factor of $\Omega(n)$\footnote{Here, $n$ denotes the number of intervals into which we partition $[0,1]$ in order to numerically approximate the LP system.} during the final discretization step. Due to computational limitations, the benefit of compensation is largely offset by the increase in LP size. To address this issue, we introduce a monotonicity constraint on the rank of the match of $u^\ast$, which allows us to incorporate victim analysis using asymptotically the same number of constraints as in~\cite{0.546ranking}. Furthermore, this monotonicity constraint rules out one of the worst-case matching scenarios considered in the previous analysis.

\paragraph{Monotonicity for the match of $u^\ast$ for Increasing $x$.}
We observed that the match of $u^\ast$ worsens as the rank of $u^\ast$ increases (\cref{fact:matching_guarantee_ranking}, property 4). In \cite{0.546ranking}, the rank of the match of $u^\ast$ is not assumed to be strongly correlated to increasing $x$ values, hence they incorporated case-by-case constraints for all different $x$ values in $[0,1]$. This is the main reason why naively incorporating the victim analysis results in an $\Omega(n)$ times increase in the number of constraints. By introducing the monotonicity constraint, we were able to partition the unit interval into at most three sub-intervals $[a_i,b_i]$, where the match of $u^\ast$ at any $x\in[a_i,b_i]$ is determined by the match of $u^\ast$ at $b_i$. Hence in our final approximation, we no longer have to consider the match of $u^\ast$ for all $x\in[0,1]$ but instead only consider it at a few $x$ values.

This property also allows us to rule out a particular worst-case matching scenario for certain $x_1 < x_2$ pairs. Previous analyses considered a possible matching combination where $u^\ast$ is matched to a worse vertex $w_1$ in $\vecx_{\shortminus u^\ast}(x_1)$, while the same vertex $u^\ast$ is matched to a better vertex $w_2$ in $\vecx_{\shortminus u^\ast}(x_2)$. Using the monotonicity property, we can directly rule out such matching combinations, as $w_1$ and $w_2$ must satisfy $x_{w_1}\leq x_{w_2}$, given $x_1<x_2$.

\paragraph{Multivariate Compensation Function.}
When a blocker $w$, matched to a vertex $z$, pays compensation to a victim, previous papers~\cite{0.521Franking,0.531RDO} let $w$ transfer $h(y)$ amount of its gain to the victim, where $y$ is the rank of either $z$ or $w$. This design arises because their analyses have limited information about the rank of the other vertex in the pair $(w,z)$. For example, in the \Franking{} analysis~\cite{0.521Franking}, little information can be inferred about the rank of the active vertex in the pair $(w,z)$. In \ranking{}, a vertex’s rank affects both the decision order and the relative preference order: the former is related to the behavior of a vertex when it is matched actively, while the latter is related to the behavior when the vertex is matched passively. We exploit this additional structure. In particular, we define the compensation function $h$ as a bivariate function depending on the ranks of both endpoints. This allows us to incorporate more information about the blocker and its match, leading to a tighter bound on the overall approximation ratio.

\paragraph{Refined Bounds with Longer Alternating Paths.}
In this section, we provide a system of bounds that effectively treats all alternating paths of length $\geq 4$ as having length exactly $4$. Later, in \cref{sec:tigher_bounds}, we perform a refined analysis that also considers the case where alternating paths have length $6$. It turns out that we can already prove an approximation ratio of $0.559$ without this refinement. Since the refinement improves the approximation ratio by only around $0.001$ while requiring intricate structural analysis, we defer it to \cref{sec:tigher_bounds}.

\paragraph{Large Odd Girth Implies Long Alternating Paths.}
For any graph $G$, it turns out that the alternating paths of interest need to have at least length $2k$, where $2k+1$ is the odd girth of $G$. This implies that for graphs of odd girth at least $2k+1$, we can assume $u,u^*$ are victims of at least $k$ blockers, hence receiving at least $k$ copies of compensations. We exploit this fact in \cref{sec:odd_girth_graphs} and design stronger LPs for graphs of larger odd girths.

\subsection{\ranking{} Structural Properties}
Before formally introducing the analysis, we present a few important structural properties of \ranking{} that will be used throughout this section.

We first describe a special property of the \ranking{} algorithm as a corollary of the alternating path lemma: the ranks $x_{u_i}$ of even-indexed and odd-indexed vertices along the path monotonically increase.
\begin{corollary}[Special Property for Alternating Path in \ranking{}~\cite{0.546ranking}]
\phantomsection
\label{fact:monotonicity_for_alt_path_ranking}
    Let $\vecx$ be a rank vector on $V$. For any $v\in V$, denote the alternating path $\M(\vecx)\oplus\M(\vecx_{\shortminus v})$ as $u_0,u_1,...,u_k$. Then for any $0 \le i \le k-2$, we have $x_{u_i} < x_{u_{i+2}}$.
\end{corollary}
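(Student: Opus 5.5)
We prove the claim by applying the monotonicity in property 3 of \cref{lem:alt-path} to the lexicographic query list of \ranking{}. Recall that \ranking{} with rank vector $\vecx$ is the query-commit algorithm on the list of pairs $(a,b)$ ordered lexicographically by $(x_a,x_b)$. We interpret the query time of an edge by its first query, i.e.\ its earlier orientation. So the time of edge $\{a,b\}$ is the pair $(\min(x_a,x_b),\max(x_a,x_b))$, compared lexicographically. By property 3, the query times of consecutive path edges strictly increase: $(u_i,u_{i+1})<(u_{i+1},u_{i+2})$.

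Fix $i\le k-2$ and set $a=x_{u_i}$, $c=x_{u_{i+1}}$, $d=x_{u_{i+2}}$; these are distinct almost surely, and we may assume distinct ranks throughout. We must show $a<d$. Suppose instead that $d<a$, and split on where $c$ lies.

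\emph{Case $c<d<a$.} The first time is $(c,a)$ and the second is $(c,d)$. Lexicographic increase would require $a<d$, a contradiction.

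\emph{Case $d<c<a$ or $d<a<c$.} The second edge's time begins with $d$. This is smaller than the first coordinate $\min(a,c)$ of the first edge's time, so the second edge is queried earlier. This contradicts property 3.

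Hence $a<d$, which is the claim.

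There is one subtlety, at $i=0$ when $u_0=v$ is the removed vertex. In $L_{\shortminus v}$ the vertex $v$ is only marked unavailable, so it keeps its rank $x_v$ and the edge $(u_0,u_1)$ lies in $\M(L)$ with its usual query time. The same comparison therefore applies unchanged.

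The main thing to verify carefully is the identification of the query time with the lexicographically smaller orientation. \Cref{lem:alt-path} orders edges by their first query. In \ranking{} that first query occurs when the lower-ranked endpoint arrives, and within that arrival, in order of the other endpoint's rank. Once this is fixed, the case analysis above is purely mechanical.
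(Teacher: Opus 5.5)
Your proof is correct and takes the paper's approach. Both apply property 3 of \cref{lem:alt-path}, taking the \ranking{} query time of an edge to be the lexicographically smaller of its two orientations. Your three-case analysis under $d<a$ just spells out what the paper compresses into ``by the nature of lexicographic orderings.''
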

\begin{proof}
    By \cref{lem:alt-path} part 3, for each $0\leq i\leq k-2$, $(u_i,u_{i+1})$ has an earlier query time than $(u_{i+1},u_{i+2})$. Since \ranking{} queries $(u_i,u_{i+1})$ at time $\min\{(x_{u_i},x_{u_{i+1}}),(x_{u_{i+1}},x_{u_{i}})\}$ and $(u_{i+1},u_{i+2})$ at time $\min\{(x_{u_{i+1}},x_{u_{i+2}}),(x_{u_{i+2}},x_{u_{i+1}})\}$, we have that $(u_i,u_{i+1})$ has an earlier query time than $(u_{i+1},u_{i+2})$ if and only if $x_{u_{i}}<x_{u_{i+2}}$ by the nature of lexicographic orderings.
\end{proof}
The notion of being worse off in \ranking{} is simplified due to the above property. In short, being worse off is equivalent to being matched to a vertex with larger rank.
\begin{fact}
    $u$  becomes worse off by introducing $v$ into $\vecx_{\shortminus v}$ is equivalent to $u$ being matched to a larger-ranked vertex in $\M(\vecx)$ (or unmatched) compared to $\M(\vecx_{\shortminus v})$, which is also equivalent to $u$ being an even-indexed vertex in $\M(\vecx)\oplus \M(\vecx_{\shortminus v})$.
\end{fact}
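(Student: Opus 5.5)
We chain three equivalences, using the definition of ``worse off'' for general lists and \cref{fact:monotonicity_for_alt_path_ranking}. The statement has three parts. (a) $u$ is worse off by introducing $v$. (b) $u$ is matched in $\M(\vecx)$ to a vertex of larger rank than its match in $\M(\vecx_{\shortminus v})$, or is unmatched in $\M(\vecx)$. (c) $u$ is even-indexed in $\M(\vecx)\oplus\M(\vecx_{\shortminus v})$. The general framework already gives (a)$\Leftrightarrow$(c): being worse off means being matched at a later query time (or unmatched) in $\M(L)$, and by \cref{lem:alt-path} this is exactly the condition of lying at an even index. So it remains to prove (a)$\Leftrightarrow$(b) for the lexicographic list $L$ used by \ranking{}.

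First, if $u$ is not on the alternating path, both matchings agree at $u$. Then $u$ is neither worse off nor matched to a different vertex, so both (a) and (b) fail. Now suppose $u=u_i$ lies on the path with $i\ge 1$. Its two path edges are $(u_{i-1},u_i)$ and $(u_i,u_{i+1})$, where the second may be missing if $i=k$. One edge belongs to $\M(\vecx)$ and the other to $\M(\vecx_{\shortminus v})$. By \cref{fact:monotonicity_for_alt_path_ranking}, $x_{u_{i-1}}<x_{u_{i+1}}$. Also, by \cref{lem:alt-path} part 3, the query time of $(u_{i-1},u_i)$ is earlier than that of $(u_i,u_{i+1})$.

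Case $i$ even. The edge $(u_{i-1},u_i)$ is in $\M(\vecx_{\shortminus v})$, and $(u_i,u_{i+1})$, if present, is in $\M(\vecx)$. So in $\M(\vecx)$, $u$ is either unmatched (when $i=k$) or matched to $u_{i+1}$, which has larger rank than its old match $u_{i-1}$. Hence (b) holds, and (a) holds by the later query time.

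Case $i$ odd. The roles swap. Now $u$ is matched in $\M(\vecx)$ to $u_{i-1}$, which has smaller rank than its match $u_{i+1}$ in $\M(\vecx_{\shortminus v})$. If $i=k$, then $u$ is unmatched in $\M(\vecx_{\shortminus v})$ but matched in $\M(\vecx)$. Either way, (b) fails and $u$ is better off.

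Since the two cases are exhaustive, each direction of the equivalence follows. The only subtle step is connecting the rank comparison with the query-time comparison. For two edges sharing the endpoint $u_i$, lexicographic order on $(\min,\max)$ pairs reduces to comparing the other endpoints' ranks. This is exactly the argument already made in the proof of \cref{fact:monotonicity_for_alt_path_ranking}, so I expect no real obstacle beyond handling the endpoint cases $i=k$ and $u\notin P$ carefully.
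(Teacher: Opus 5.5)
Your proof is correct and is essentially the paper's own (implicit) argument: combine the general equivalence between being worse off and lying at an even index of the alternating path with \cref{fact:monotonicity_for_alt_path_ranking}, which turns ``later query time'' into ``larger-ranked partner.'' Your explicit split on the parity of $i$, including the endpoint case $i=k$, fills in exactly the details the paper leaves unstated. One small caveat: in (a) and (b), ``unmatched'' has to mean ``matched in $\M(\vecx_{\shortminus v})$ but unmatched in $\M(\vecx)$.'' Under the literal reading, a vertex unmatched in both matchings would satisfy (a) and (b) without lying on the path, and your off-path case implicitly relies on the intended reading.
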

We state the following lemma for matching guarantees with respect to $\vecx_{\shortminus u^\ast}(x)$. Properties $1$–$3$ are proved in~\cite{0.546ranking}, while property $4$ is a commonly used monotonicity property in various \ranking-like algorithms~\cite{0.526ranking,0.531RDO}:
\begin{fact}[Matching Guarantees for \ranking{}]
\phantomsection
\label{fact:matching_guarantee_ranking}
    Let $x$ be the rank of $u^\ast$ when $u^\ast$ is inserted into $\vecx_{\shortminus u^\ast}$. Assuming $u$ is matched to $v$ in $\M(\vecx_{\shortminus u^\ast})$, then:
    \begin{enumerate}
        \item If $x<x_v$, then $u^\ast$ is matched to a vertex of rank $\leq x_u$ in $\M(\vecx_{\shortminus u^\ast}(x))$.
        \item If $x>x_u$, then $u$ is matched to a vertex of rank $\leq x_v$ in $\M(\vecx_{\shortminus u^\ast}(x))$.
        \item If $u$ is made worse off by introducing $u^\ast$, then $u^\ast$ is matched to a vertex of rank $\leq x_v$ in $\M(\vecx_{\shortminus u^\ast}(x))$.
        \item If $u^\ast$ is matched to a vertex $w$ in $\M(\vecx_{\shortminus u^\ast}(x_0))$, then $u^\ast$ is matched to a vertex of rank $\leq x_w$ in $\M(\vecx_{\shortminus u^\ast}(x))$ for all $x\in[0,x_0]$. In other words, demoting the rank of a vertex can only make it match to a worse vertex.
    \end{enumerate}
\end{fact}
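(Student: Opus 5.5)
The statement to prove is \cref{fact:matching_guarantee_ranking}. Properties 1--3 are taken from~\cite{0.546ranking}, so the new content is property 4. I would still sketch 1--3 to keep the argument self-contained. Everything goes through the alternating path lemma (\cref{lem:alt-path}), applied to the pair $\vecx_{\shortminus u^\ast}(x)$ and $\vecx_{\shortminus u^\ast}$, together with the rank monotonicity along alternating paths (\cref{fact:monotonicity_for_alt_path_ranking}). Throughout, write $P=(u_0=u^\ast,u_1,\dots,u_k)$ for the path $\M(\vecx_{\shortminus u^\ast}(x))\oplus\M(\vecx_{\shortminus u^\ast})$.

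\textbf{Property 1.} Assume $x<x_v$. If $u^\ast$ is unmatched, or its match $u_1$ has $x_{u_1}\le x_u$, we are done. Otherwise $x_{u_1}>x_u$, and I would argue directly from the greedy process.
\begin{itemize}
\item If $x<x_u$, then $u^\ast$ arrives before $u$. It is processed actively and would choose $u$, which is still available, unless it picks a neighbor of rank $<x_u$ or is already taken. A passive match at a time $<x$ comes from a vertex of rank $<x<x_u$.
\item If $x_u<x<x_v$, then in $\vecx_{\shortminus u^\ast}$ the edge $(u,v)$ is queried at time $(x_u,x_v)$. Since $u^\ast$ precedes $v$ in $u$'s preference, $u$ would grab $u^\ast$ if $u^\ast$ were still available at time $(x_u,x)$. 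So $u^\ast$ must have been matched before that time, to a vertex of rank $\le x_u$.
\end{itemize}

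\textbf{Property 2.} This is the symmetric statement for $u$ when $x>x_u$. The alternating path can only change $u$'s match if $u$ lies on $P$ with even index, i.e.\ $u$ is worse off. Then $(u_{i-1},u)$ is queried before $(u,v)$, and since $u$ arrives before $u^\ast$, the only way $u$ gets a worse match while $v$ remains available at time $(x_u,x_v)$ is excluded. Hence $u$ still takes a vertex of rank $\le x_v$.

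\textbf{Property 3.} Suppose $u$ is made worse off, so $u=u_j$ with $j$ even and $u_{j-1}=v$. By \cref{fact:monotonicity_for_alt_path_ranking}, the odd-indexed ranks increase along $P$: $x_{u_1}<x_{u_3}<\dots<x_{u_{j-1}}=x_v$. So the match $u_1$ of $u^\ast$ has rank $\le x_v$.

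\textbf{Property 4.} I would not compare $\vecx_{\shortminus u^\ast}(x)$ with $\vecx_{\shortminus u^\ast}(x_0)$ directly. Instead I reduce to the characterization of $u^\ast$'s match via a threshold. Fix $\vecx_{\shortminus u^\ast}$ and let $m(y)$ be the rank of $u^\ast$'s match at rank $y$, with $m(y)=\infty$ if unmatched. The claim is that $m$ is nondecreasing in $y$. The key step is the standard "$u^\ast$ is matched to the first available neighbor at its arrival" description:
\begin{itemize}
\item Removing $u^\ast$ gives a fixed run, in which each neighbor $z$ of $u^\ast$ is still available up to some time $\tau_z$.
\item Inserting $u^\ast$ at rank $y$, it is matched passively to the smallest-rank neighbor $z$ with $x_z<y$ that chooses it when arriving. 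This happens iff $z$ is still available at $x_z$ and $u^\ast$ is $z$'s preferred available neighbor.
\item Failing that, $u^\ast$ is matched actively to its smallest-rank neighbor still available at time $y$ in $\vecx_{\shortminus u^\ast}$. Before $u^\ast$'s first match the process coincides with $\vecx_{\shortminus u^\ast}$, so availability can be read off $\vecx_{\shortminus u^\ast}$.
\end{itemize}

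For the passive case, the set of neighbors $z$ that would pick $u^\ast$ at time $x_z$ shrinks as $y$ grows, since $u^\ast$ becomes less preferred. It is also cut off at $y$. For the active case, availability at time $y$ in $\vecx_{\shortminus u^\ast}$ is monotone decreasing in $y$. Combining the two, $u^\ast$'s match rank is the minimum over a family of candidate ranks that shrinks as $y$ increases, so $m$ is monotone. This gives the statement for all $x\le x_0$.

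The main obstacle is making this "shrinking candidate set" precise when the passive and active regimes interleave. I would handle it by showing that $u^\ast$'s match at rank $y$ is exactly the minimum-rank neighbor $z$ such that, in $\vecx_{\shortminus u^\ast}$, $z$ is unmatched at time $(\min(x_z,y),\cdot)$ and $u^\ast$ beats $z$'s actual choice. Each of these conditions is monotone in $y$.
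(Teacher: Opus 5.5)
\textbf{Property 2 has a gap.} Suppose $u=u_i$ is even-indexed on $P$. Then $(u_{i-1},u_i)\in\M(\vecx_{\shortminus u^\ast})$, so $u_{i-1}=v$; you use exactly this in Property 3. Your sentence ``$(u_{i-1},u)$ is queried before $(u,v)$'' therefore compares an edge with itself. The rest (``the only way $u$ gets a worse match while $v$ remains available at time $(x_u,x_v)$ is excluded'') is an assertion, not an argument. Note also that $(x_u,x_v)$ is not the query time of $(u,v)$ when $x_v<x_u$.

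That $u$ arrives before $u^\ast$ is the right ingredient, but it rules out a cascade reaching $u$ only through rank monotonicity along $P$. The missing step is the even-index half of \cref{fact:monotonicity_for_alt_path_ranking}: if $u=u_{2i}$ with $i\ge 1$, then $x=x_{u_0}<x_{u_2}<\dots<x_{u_{2i}}=x_u$, which contradicts $x>x_u$. So $u$ is not worse off, and in \ranking{} that means its match has rank $\le x_v$. This is the paper's proof, and it uses only the tool you already invoke for Property 3, which matches the paper.

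Separately, in Property 1 the opening ``if $u^\ast$ is unmatched \dots\ we are done'' is wrong, because the claim asserts that $u^\ast$ is matched. Your two bullets never use $x_{u_1}>x_u$ and in fact show $u^\ast$ is matched to a vertex of rank $\le x_u$ in every case, so just drop that clause.

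\textbf{Property 4 is correct, but your route differs from the paper's.} The paper fixes the better position $x$ and splits on how $u^\ast$ is matched there:
\begin{itemize}
\item If $u^\ast$ is passive, lower-ranked vertices that skipped $u^\ast$ still skip it after demotion.
\item If $u^\ast$ is active, nothing before time $(x,x)$ changes, and $w$ was then the best available neighbor.
\end{itemize}

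You instead read everything off the fixed run $\vecx_{\shortminus u^\ast}$. The query time $q_y(z)$ of $(u^\ast,z)$ is $(x_z,y)$ or $(y,x_z)$, and these times are ordered by $x_z$. The run with $u^\ast$ coincides with $\vecx_{\shortminus u^\ast}$ until $u^\ast$ is matched. Hence $u^\ast$'s match is the minimum-rank neighbor $z$ still available in $\vecx_{\shortminus u^\ast}$ at time $q_y(z)$; this is your final characterization, stated cleanly.

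Monotonicity then follows from two facts. First, $q_y(z)$ moves later as $y$ grows, including across the passive/active switch, since $(y,x_z)<(x_z,y')$ for $y<x_z<y'$. Second, availability in a fixed run only decreases over time. So the candidate set at $x_0$ is contained in the one at $x$.

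Your intermediate claim that the passive candidate set shrinks is not right on its own. Neighbors with rank between the two positions move from the active regime into the passive one, so only the union is monotone; your final characterization is what captures this. Your version handles the ``unmatched at $x$'' case and the interleaving of regimes uniformly, which the paper's two-case argument leaves implicit. The paper's proof is shorter and never needs an explicit description of $u^\ast$'s match.
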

\begin{proof}
    We prove here only the part not presented in \cite{0.546ranking}, namely property $4$. For the proof of properties $1$–$3$, see \nameref{sec:appendix-a}. 

Assume $u^\ast$ is matched to some vertex $w$ when $u^\ast$ is placed at rank $x$. We show that demoting the rank of $u^\ast$ to $x_0 > x$ cannot make it better off. There are two cases: either $u^\ast$ is picked by $w$ in $\M(\vecx_{\shortminus u^*}(x))$ ($u^\ast$ is passive) or $w$ is picked by $u^\ast$ in $\M(\vecx_{\shortminus u^*}(x))$ ($u^\ast$ is active). 

Case $u^\ast$ being passively matched: Any vertex $v$ with rank $x_v < x_w$ does not have $u^\ast$ as its smallest ranked neighbor when it makes its choice. Since demoting $u^\ast$ only makes it less preferred, these $v$ still do not match $u^\ast$ in $\M(\vecx_{\shortminus u^\ast}(x_0))$. This implies that $x_w$ is the best possible match for $u^\ast$ in this case.

Case $u^\ast$ being actively matched: If $u^\ast$ actively matches $w$ in $\M(\vecx_{\shortminus u^*}(x))$, demoting $u^\ast$ does not change the choices made by any vertex $v$ with rank $x_v < x$. At time $(x,x)$, the best available neighbor for $u^\ast$ is $w$ with rank $x_w$. Since $u^\ast$ is matched in $\vecx_{\shortminus u^\ast}(x_0)$ at time later than $(x,x)$ (if it is matched at all), its match is at best $w$ with rank $x_w$.
\end{proof}
These guarantees describe to what extent $u$ and $u^\ast$ are matched in $\M(\vecx_{\shortminus u^\ast}(x))$. We will build our approximation constraints based on these matching guarantees.

\subsection{Gain-Sharing Scheme}\label{sec:gainsharingranking}
In the randomized primal-dual framework, for each edge $\ranking{}$ matches, we distribute one unit of gain among vertices in $V$. We introduce the notions of \emph{product} and \emph{buyer}, recap the definition of \emph{victim} (\cref{def:victim_querycommit}) in the language of \ranking{}, and define our gain-sharing rules:
\begin{definition}[Product, Buyer, and Victim in \ranking{}]
    For a rank vector $\vecx$ on vertices $V$ and bi-partition $\chi:V\to\{P,B\}$, we define product and buyer as the following:
    \begin{itemize}
        \item $v$ is a product if $\chi(v)=P$.
        \item $v$ is a buyer if $\chi(v)=B$.
    \end{itemize}
    We say $v$ is a victim of $w$ if the following holds:
    \begin{itemize}
        \item $v$ is unmatched in $\M(\vecx)$.
        \item Removing $w$ from the graph makes $v$ matched, i.e. $v$ is matched in $\M(\vecx_{\shortminus w})$.
    \end{itemize}
\end{definition}
And we perform gain-sharing as follows:
\begin{definition}[Gain-Sharing Rules]
\phantomsection
\label{gain_sharing_scheme_ranking}
     For any gain function $g:[0,1]\times[0,1]\to[0,1]$, compensation function $h:[0,1]\times[0,1]\to[0,1]$, rank vector $\vecx$ and bi-partition $\chi$, we perform gain-sharing as follows:
        \begin{enumerate}
        \item When $u$ matches $v$ with $u$ as buyer and $v$ as product, $u$ gets $1-g(x_u,x_v)$ amount of gain and $v$ gets $g(x_u,x_v)$ amount of gain.
        \item If $w$ matches $z$ and further has a victim $v$, $w$ sends $h(x_w, x_z)$ amount of compensation to $v$. In the case where $w$ is a buyer, gain of $w$ is $1-g(x_w,x_z)-h(x_w, x_z)$, and in the case where $w$ is a product, the gain of $w$ is $g(x_z,x_w)-h(x_w, x_z)$. $v$ gets $h(x_w, x_z)$ amount of extra gain as a result. The gain of an unmatched vertex is the sum of all the compensation it receives.
    \end{enumerate}
\end{definition}
In \cite{0.546ranking}, fact 6.8, the authors proved that there exists a randomized bi-partitioning $\chi$ such that gain-sharing rule $1$ is well defined with respect to $\chi$, that is, any pair of matched vertices are always a (buyer, product) pair. Further, $\chi$ has the good property that each pair of perfect match $(u,u^\ast)$ is also a (buyer, product) pair, and the assignment of buyer/product role for each vertex is independent from the underlying rank vector $\vecx$.

The reason these properties hold is that the randomized bi-partition $\chi$ is performed after the matching process. For each fixed $\vecx$, because $\M(\vecx)$ and $M^\ast$ are both matchings, their union $\M(\vecx)\cup M^\ast$ is a bipartite graph. So we can randomly assign each side of the bipartite graph as buyer/product with $\frac{1}{2}$ probability. Because we perform random role assignment with respect to each $\vecx$, the role assignment (for each vertex) is thus independent from $\vecx$. It is not hard to show that we can add gain-sharing rule 2 above gain-sharing rule $1$ and still have the gain-sharing rule well defined.
\begin{fact}[Randomized Bi-Partition]
\phantomsection\label{fact:randomized_bipartition}
    There exists a randomized bi-partitioning $\chi$ such that the worst-case expected value of gain($u$) $+$ gain$(u^\ast)$  after performing gain-sharing rule 1,2 and conditioning on the event
    $$\text{$u$ is a buyer $\wedge$ $u^\ast$ is a product $\wedge$ the match of $u$ is a product $\wedge$ the match of $u^\ast$ is a buyer,}$$
    lower bounds the approximation ratio of \ranking{}. That is, denoting $PD(v),\; BU(v)$ as the event $v$ is a product, buyer respectively, we have 
    \[
\min_{(u,u^*)\in M^*}
\left\{
\ev_{\vecx}\!\left[
\text{gain}(u) + \text{gain}(u^\ast)
\,\middle|\,
\begin{aligned}
&PD(u^\ast)\wedge BU(\text{match of $u^\ast$}) \\
&\wedge PD(\text{match of $u$})\wedge BU(u)
\end{aligned}
\right]
\right\}
\leq \frac{\ev_{\vecx}[|\M(\vecx)|]}{|M^*|}
\]
    Further, the conditional distribution $\vecx$ for any pair of $(u,u^*)$ is the same as the initial uniform random rank vector.
\end{fact}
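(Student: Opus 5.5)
The plan is to construct $\chi$ explicitly, after the matching is computed, and then run a double-counting argument.

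\emph{Construction.} Fix a rank vector $\vecx$ and consider $H=\M(\vecx)\cup M^\ast$. Both are matchings, so every vertex has degree at most $2$ in $H$. Each component of $H$ is therefore a path or an even cycle, since the edges alternate between the two matchings. Hence $H$ is bipartite, and we fix a canonical 2-coloring of each component. Independently for each component, we flip a fair coin to decide which color class is labelled $B$ and which is labelled $P$. Every edge of $\M(\vecx)$ and every edge of $M^\ast$ then joins a buyer and a product, so gain-sharing rule 1 is well defined. Rule 2 only involves a matched blocker $w$ with match $z$ and an unmatched victim, and the victim receives no rule-1 gain. So rule 2 is a well-defined transfer whatever the roles are.

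\emph{Conservation.} Rule 2 only moves gain between vertices, so for every $\vecx$ and every $\chi$ we have $\sum_{v}\text{gain}(v)=|\M(\vecx)|$.

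\emph{Conditioning on a pair $(u,u^\ast)\in M^\ast$.} The two endpoints have different roles. By symmetry of the coin flip, with probability $\tfrac12$ exactly the labelling with $u$ a buyer and $u^\ast$ a product occurs, independently of $\vecx$. If $u$ is matched to $v$, then $v$ lies in the same component as $u$ and has the opposite color, so $v$ is a product. Likewise the match of $u^\ast$ is a buyer. Therefore, given $\vecx$, the event $E_{u,u^\ast}$ in the statement has probability exactly $\tfrac12$. (For unmatched endpoints, the condition on "the match" is vacuous.) This is the step that needs care. Because $\Pr[E_{u,u^\ast}\mid \vecx]=\tfrac12$ does not depend on $\vecx$, Bayes' rule shows that the conditional law of $\vecx$ given $E_{u,u^\ast}$ is still the uniform product law. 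That gives the final sentence of the statement.

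\emph{Averaging.} Write $E'_{u,u^\ast}$ for the mirror event, with $u^\ast$ a buyer and $u$ a product. By the role symmetry of the scheme, the analysis applies equally to the pair read as $(u^\ast,u)$, and $\ev[\text{gain}(u)+\text{gain}(u^\ast)\mid E']$ is the same quantity for that ordered pair. So the minimum over ordered pairs of the conditional expectation, call it $\alpha$, satisfies
\[\ev_{\vecx,\chi}[\text{gain}(u)+\text{gain}(u^\ast)]=\tfrac12\ev[\cdot\mid E]+\tfrac12\ev[\cdot\mid E']\ge\alpha.\]
Summing over $M^\ast$ and using conservation gives
\[\ev[|\M(\vecx)|]=\sum_{(u,u^\ast)\in M^\ast}\ev[\text{gain}(u)+\text{gain}(u^\ast)]\ge\alpha|M^\ast|.\]
This is the claimed inequality.

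The main obstacle is interpreting the minimum over pairs in the statement correctly, namely as ranging over both orientations. With that reading the averaging step goes through cleanly. Everything else follows from the construction.
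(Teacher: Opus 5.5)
Your proof is correct and follows the paper's argument. The paper also 2-colors the bipartite graph $\M(\vecx)\cup M^\ast$ after the matching is computed and flips the coloring with probability $\tfrac12$, so that $\Pr[E\mid\vecx]=\tfrac12$ and the law of $\vecx$ is unchanged. It then averages the two orientations of each $M^\ast$ edge, reading the minimum over both orientations just as you do, and concludes by conservation of gain.

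The differences are cosmetic. You flip an independent coin per component rather than one global coin; either works, since $u$, $u^\ast$ and their matches lie in one component. Your identity $\sum_{(u,u^\ast)\in M^\ast}(\text{gain}(u)+\text{gain}(u^\ast))=\sum_{v}\text{gain}(v)$ relies on the paper's standing perfect-matching assumption, which you should cite explicitly.
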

The construction of $\chi$ is the same as that in \cite{0.546ranking}, Definition 6.3. For anyone who's interested, see a construction and proof sketch in \nameref{sec:appendix-a}. \textbf{From now on, we will always assume $u$ is a buyer, $u^\ast$ is a product, the match of $u$ is a product and the match of $u^\ast$ is a buyer.}
\paragraph{}
We will also apply the following restrictions to our gain-sharing function $g$ and compensation function $h$:
\begin{definition}[Function Constraints]
\phantomsection
For \ranking{}, we assume the gain function $g$ and the compensation function $h$ satisfy the following constraints:
\label{def:function_constraint_ranking}
    \begin{enumerate}
    \item For each fixed $x_0$, $g(x_0,y),h(x_0,y)$ are monotonically increasing with respect to $y$.
    \item For each fixed $y_0$, $g(x,y_0),h(x,y_0)$ are monotonically decreasing with respect to $x$.
    \item For all $x$, $h(x,0)=0$,
    \item $\forall x,y$ we have $1-g(x,y)-h(x,y)\geq 4h(0^+,1)$, where $h(0^+,1)=\lim_{x\to 0^+} h(x,1)$.
    \item $\forall x,y$ we have $g(x,y)-h(y,x)\geq 4h(0^+,1)$.
\end{enumerate}
\end{definition}
The first two constraints correspond to the intuition that smaller-ranked vertices are preferred in \ranking{}, as illustrated by the following fact.
\begin{fact}[Monotonicity of Gains]
\phantomsection
\label{fact:monotonicity_of_gains1}
    If vertex $u$ is matched to a vertex of rank $\leq x$, then 
    \begin{itemize}
        \item If $u$ is a product, gain$(u)\geq g(x,x_u)-h(x_u,x)$.
        \item If $u$ is a buyer, gain$(u)\geq 1-g(x_u,x)-h(x_u,x)$.
    \end{itemize}
\end{fact}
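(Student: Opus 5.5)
We argue directly from the gain-sharing rules in \cref{gain_sharing_scheme_ranking}, combined with the monotonicity of $g$ and $h$ required by \cref{def:function_constraint_ranking} (properties 1 and 2). Two cases are treated separately, according to whether $u$ is a product or a buyer. Suppose $u$ is matched to some vertex $z$ with $x_z \le x$. We never need to know whether $u$ also has a victim. If it has none, its gain exceeds the expression we are bounding by the nonnegative amount $h(\cdot,\cdot)$. If it has one, it pays exactly one compensation $h$, because a vertex has at most one victim. Hence the exact gain is at least the version with the compensation subtracted, and it suffices to lower bound that version.

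\textbf{Product case.} Here $z$ is a buyer, and by \cref{gain_sharing_scheme_ranking} we have $\text{gain}(u) \ge g(x_z,x_u) - h(x_u,x_z)$. Since $x_z \le x$, property 2 (that $g$ is decreasing in its first argument) gives $g(x_z,x_u) \ge g(x,x_u)$. Property 1 (that $h(x_u,\cdot)$ is increasing in its second argument) gives $h(x_u,x_z) \le h(x_u,x)$. Combining these, $\text{gain}(u) \ge g(x,x_u) - h(x_u,x)$.

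\textbf{Buyer case.} Here $\text{gain}(u) \ge 1 - g(x_u,x_z) - h(x_u,x_z)$. Property 1 says both $g(x_u,\cdot)$ and $h(x_u,\cdot)$ are increasing, so each term is at most its value at $x$. This yields $\text{gain}(u) \ge 1 - g(x_u,x) - h(x_u,x)$.

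\textbf{Remaining subtleties.} Compensation that $u$ might receive only increases its gain, but a matched vertex cannot be a victim, so no such term arises. One could also ask whether the statement is meant to cover $u$ being unmatched. Under the hypothesis that $u$ is matched to a vertex of rank $\le x$, $u$ is matched, so this case is excluded. There is no real obstacle here. The only care needed is tracking the argument order in $g(\text{buyer rank},\text{product rank})$ versus $h(\text{payer rank},\text{match rank})$, so that the correct monotonicity direction is applied to each term.
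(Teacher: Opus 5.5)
Your proposal is correct and follows the same route as the paper: the paper's one-line proof likewise plugs the gain-sharing formulas into the monotonicity constraints. Those constraints say that $g(\cdot,x_u)$ decreases in its first argument, and that $g(x_u,\cdot)$ and $h(x_u,\cdot)$ increase in their second argument. You additionally make explicit the bookkeeping the paper leaves implicit: a matched vertex pays at most one compensation (it has at most one victim) and receives none (it cannot be a victim).
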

\begin{proof}
    This is because the functions $g(x,x_u), 1-g(x_u,x),$ and $-h(x_u,x)$ monotonically decrease as $x$ increases.
\end{proof}
With this property, we define a lower bound for $u$ being a buyer/product while paying compensation as the following functions:
\begin{definition}[Gain Functions Assuming Compensation]
\phantomsection
\label{def:gain_with_comp_function}
Let $g$ and $h$ be the gain and compensation functions. Define $g_B(x_u,x_v),\;g_P(x_u,x_v)$ as the functions representing the gain of a buyer/product vertex $u$ when matched to vertex $v$ assuming $u$ pays compensation. That is:
\begin{itemize}
    \item $g_B(x_u,x_v)=1-g(x_u,x_v)-h(x_u,x_v).$
    \item $g_P(x_v,x_u)=g(x_v,x_u)-h(x_u,x_v).$
\end{itemize}
\end{definition}
\begin{corollary}[Monotonicity of Gains]
\phantomsection
\label{fact:monotonicity_of_gains}
    By \cref{fact:monotonicity_of_gains1}, if vertex $u$ is matched to a vertex of rank $\leq x$, then
    \begin{itemize}
        \item If $u$ is a buyer, gain$(u)\geq g_B(x_u,x)$.
        \item If $u$ is a product, gain$(u)\geq g_P(x,x_u)$.
    \end{itemize}
\end{corollary}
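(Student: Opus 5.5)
This corollary follows directly from \cref{fact:monotonicity_of_gains1} once we allow for the compensation term that may or may not be paid. The plan is to split according to whether $u$ pays compensation (i.e., whether $u$ has a victim) and to compare, in each case, the actual gain of $u$ with $g_B$ or $g_P$ evaluated at the bound $x$ rather than at the true rank of $u$'s match.

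\textbf{Buyer case.} Let $z$ be the match of $u$, with $x_z \le x$. By the gain-sharing rules (\cref{gain_sharing_scheme_ranking}), gain$(u) = 1-g(x_u,x_z) - c$. Here $c = h(x_u,x_z)$ if $u$ has a victim and $c=0$ otherwise. Moreover, $u$ may receive no compensation as a matched vertex, since victims are unmatched by definition. Because $h \ge 0$ (its range is $[0,1]$), we get gain$(u) \ge 1-g(x_u,x_z)-h(x_u,x_z) = g_B(x_u,x_z)$. By constraint 1 of \cref{def:function_constraint_ranking}, both $g(x_u,\cdot)$ and $h(x_u,\cdot)$ are increasing in the second argument. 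Hence $g_B(x_u,\cdot)$ is decreasing, and $x_z\le x$ gives $g_B(x_u,x_z)\ge g_B(x_u,x)$.

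\textbf{Product case.} Here gain$(u) = g(x_z,x_u) - c$, where $c\in\{0,h(x_u,x_z)\}$. Again gain$(u)\ge g(x_z,x_u)-h(x_u,x_z)=g_P(x_z,x_u)$. By constraint 2, $g(\cdot,x_u)$ is decreasing in its first argument, and by constraint 1, $h(x_u,\cdot)$ is increasing in its second argument. Therefore $g_P(\cdot,x_u)$ is decreasing, and $x_z\le x$ yields $g_P(x_z,x_u)\ge g_P(x,x_u)$.

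This is exactly the content of \cref{fact:monotonicity_of_gains1} rewritten in the notation of \cref{def:gain_with_comp_function}, so no real obstacle arises. The only point needing care is that the bound must hold even when $u$ pays no compensation. That is handled by $h\ge 0$, together with the observation that a matched vertex never receives compensation.
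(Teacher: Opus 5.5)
Your proof is correct and follows the paper's route. The corollary is just \cref{fact:monotonicity_of_gains1} rewritten through the definitions of $g_B$ and $g_P$, and you re-derive that fact as the paper does: you use $h\ge 0$ to cover the case where no compensation is paid, and constraints 1 and 2 of \cref{def:function_constraint_ranking} to replace $x_z$ by the upper bound $x$, with your remark that a matched vertex pays at most one copy of compensation and receives none making explicit what the paper leaves implicit.
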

The third constraint in \cref{def:function_constraint_ranking} is used to avoid over-approximating compensation in some degenerate cases. It will be clear when we encounter these specific cases. Constraints $4,\;5$ allow us to assume that being matched is always better than being unmatched, assuming the unmatched vertex receives $\leq 4$ copies of compensation. In our preliminary analysis, we will assume each unmatched vertex receives at most $2$ copies of compensation. In the refined analysis in \cref{sec:tigher_bounds}, we will assume each unmatched vertex receives at most $4$ copies of compensation. These bounds allow us to take being unmatched as the worst-case scenario when a vertex could be either matched or unmatched.
\begin{fact}[Assume Unmatched when Possible]
\phantomsection
\label{fact:assume_unmatched_worst_case}
    By \cref{def:function_constraint_ranking}, constraints $4$ and $5$, when a vertex $u$ can be either matched or unmatched, we can always calculate a lower bound for $gain(u)$ assuming $u$ is unmatched if the collected compensation is $\leq4$ copies. 
\end{fact}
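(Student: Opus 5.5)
The plan is to compare, for a vertex $u$ that may end up either matched or unmatched, the smallest gain $u$ can have if matched against the largest gain it can have if unmatched. It then suffices to show that the first is at least the second whenever at most $4$ copies of compensation are involved. So there are two bounds to establish: a uniform lower bound on the gain of any matched vertex, and a uniform upper bound on the gain of an unmatched vertex.

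\textbf{Upper bound for an unmatched vertex.} By the gain-sharing rules (\cref{gain_sharing_scheme_ranking}), an unmatched vertex $u$ has gain equal to the sum of the compensations it receives. Each such compensation is $h(x_w,x_z)$ for some blocker $w$ matched to $z$. I will argue that every single compensation is at most $h(0^+,1)$. By constraint $2$ of \cref{def:function_constraint_ranking}, $h(x,y)$ is decreasing in $x$, and by constraint $1$ it is increasing in $y$. Hence for any $x_w>0$ and $x_z\le 1$ we have $h(x_w,x_z)\le h(x_w,1)\le \lim_{x\to0^+}h(x,1)=h(0^+,1)$. Ranks are almost surely positive, so $x_w>0$ may be assumed. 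Consequently, if at most $4$ copies are collected, $\text{gain}(u)\le 4h(0^+,1)$.

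\textbf{Lower bound for a matched vertex.} Let $u$ be matched to $z$. In the worst case $u$ also pays compensation, since each vertex has at most one victim. If $u$ is a buyer, its gain is at least $1-g(x_u,x_z)-h(x_u,x_z)$, and constraint $4$ bounds this below by $4h(0^+,1)$. If $u$ is a product, its gain is at least $g(x_z,x_u)-h(x_u,x_z)$, and constraint $5$, applied with $(x,y)=(x_z,x_u)$, bounds this below by $4h(0^+,1)$. This holds for every rank of the partner. In particular it holds under the relaxed bounds $g_B(x_u,x)$ and $g_P(x,x_u)$ of \cref{fact:monotonicity_of_gains}, because those are themselves values of the same expressions at particular arguments.

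\textbf{Conclusion and main obstacle.} Combining the two bounds, whenever the analysis allows $u$ to be either matched or unmatched with at most $4$ compensation copies, the unmatched case gives a gain no larger than any matched case. Computing $\text{gain}(u)$ as if $u$ were unmatched therefore yields a valid lower bound. The only delicate step is the upper bound on a single compensation, because it uses the limit $h(0^+,1)$ rather than $h(0,1)$. I would handle this exactly as above: combine monotonicity in both arguments with the fact that ranks lie in $(0,1]$ almost surely, so the measure-zero event of rank $0$ can be ignored.
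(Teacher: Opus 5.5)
Your proposal is correct and follows essentially the same argument as the paper: each copy of compensation is at most $h(0^+,1)$, by the monotonicity of $h$ in both arguments and because ranks are positive almost surely, while constraints $4$ and $5$ ensure that a matched vertex, whether buyer or product, keeps at least $4h(0^+,1)$ even after paying its (at most one) compensation. One phrasing should be tightened: the comparison is really between the credited bound of at most four copies and the matched gain, not the \emph{largest} possible unmatched gain, since an unmatched vertex may have more than four blockers; this does not affect the validity of your argument.
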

\begin{proof}
    W.L.O.G., we can assume that no vertex $v$ has rank $x_v=0$, as the probability measure of such an event is $0$. Then by the monotonicity constraints of $h$, the largest value of one copy of compensation is $h(0^+,1)$. Since for any $x,y\in[0,1]$, we set both $1-g(x,y)-h(x,y)$ (minimum gain when a vertex is matched actively) and $g(x,y)-h(y,x)$ (minimum gain when a vertex is matched passively) to be at least as large as $4h(0^+,1)$, we can always use $\leq 4$ copies of compensations $\sum_i h(x_i,y_i)$ for arbitrary $x_i\in (0,1],\;y_i\in (0,1],$ to lower bound the expected gain of $u$ when it is matched.
\end{proof}

\subsection{Approximating With Profiles}
We lower bound the expected value of gain($u$) $+$ gain$(u^\ast)$ through the randomization of $u$ and $u^\ast$ successively. We first fix each $\vecx_{\shortminus u^\ast}$ and calculate the expected gain over $x\sim U[0,1]$ where $x$ is the rank of $u^\ast$. Then we take the expectation over the worst-case distribution of $\vecx_{\shortminus u^\ast}$. i.e., we perform the following approximations:
$$\ev[\text{gain}(u) + \text{gain}(u^\ast)]=\ev_{\vecx_{\shortminus u^\ast}}\left[\vphantom{\int_0^1} \ev_{x\sim U[0,1]}\left[\text{gain}(u) + \text{gain}(u^\ast)\mid \vecx=\vecx_{\shortminus u^\ast}(x)\right]\right].$$
For each fixed $\vecx_{\shortminus u^\ast},$ we collect up to 3 pieces of information about the matching condition of $u$, namely the rank of $u$, the rank of its match, and the rank of its backup. The collection of these pieces of information is defined as the \emph{profile}~\cite{0.546ranking} of $u$ in $\vecx_{\shortminus u^\ast}$. Before defining the notion of profile, we first define the notion of backup of a vertex for \ranking{}. It is the same as \cref{def:backup_querycommit}, we restate it here in the language of rank vector $\vecx$.
\begin{definition}[Backup]
    For any rank vector $\vecx_{\shortminus u^\ast}$, if $u$ is matched to some vertex $v$ in $\M(\vecx_{\shortminus u^\ast})$, we say a vertex $b$ is the backup vertex of $u$ if $u$ is matched to $b$ when $v$ is removed. I.e., $(u,b)\in\M(\vecx_{\shortminus vu^\ast})$. If there's no such $b$, we say $u$ does not have a backup.
\end{definition}
Backup describes to what extent introducing another vertex can make a vertex becomes worse off. In fact, the backup always has a larger rank than the current match, and further that introducing a new vertex into the graph will only cause $u$ to match to its backup in the worst-case.
\begin{fact}[\cref{fact:backup_uniquely_describe_worse_off} in \ranking{}(claim 4.3 in~\cite{0.546ranking})]
\phantomsection
\label{facf:backup_ranking}
    If $u$ is matched to $v$ and has a backup $b$ in $\vecx_{\shortminus u^\ast}$, then $x_v< x_b$ and if $u$ is made worse off in $\M(\vecx)$, it has to be matched to $b$.

    If $u$ is matched to $v$ and does not have a backup $b$ in $\vecx_{\shortminus u^\ast}$, then $u$ being worse off in $\M(\vecx)$ implies that $u$ is unmatched.
\end{fact}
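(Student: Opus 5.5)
The statement to prove is \cref{facf:backup_ranking}: the \ranking{} specialization of \cref{fact:backup_uniquely_describe_worse_off}. It has two parts. Part (a) says the backup $b$ of $u$ satisfies $x_v<x_b$. Part (b) says that if introducing $u^\ast$ makes $u$ worse off, then $u$ ends up matched to $b$, or is unmatched if it has no backup.

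\textbf{Step 1: $x_v<x_b$.} I would apply \cref{lem:alt-path} to the list induced by $\vecx_{\shortminus u^\ast}$ and the removed vertex $v$. By definition of backup, $u$ is matched to $v$ in $\M(\vecx_{\shortminus u^\ast})$ and to $b$ in $\M(\vecx_{\shortminus vu^\ast})$. The symmetric difference $\M(\vecx_{\shortminus u^\ast})\oplus\M(\vecx_{\shortminus vu^\ast})$ is an alternating path starting at $u_0=v$. Its first edge is $(v,u)\in\M(\vecx_{\shortminus u^\ast})$, so $u_1=u$. The next edge is $u$'s edge in the other matching, so $u_2=b$. Now \cref{fact:monotonicity_for_alt_path_ranking} with $i=0$ gives $x_{u_0}<x_{u_2}$, that is, $x_v<x_b$.

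\textbf{Step 2: the worse-off characterization.} Here I would invoke \cref{fact:backup_uniquely_describe_worse_off}, second half, with $L$ the lexicographic list of $\vecx$ and $L_{\shortminus u^\ast}$ that of $\vecx_{\shortminus u^\ast}$. By the preceding fact, being worse off in \ranking{} means being matched to a larger-ranked vertex or being unmatched, which is exactly the query-commit notion. The translation is routine because \ranking{} is query-commit on the $(x_a,x_b)$-lexicographic list.

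If I wanted Step 2 self-contained, I would argue as follows. Suppose $u$ is worse off. Then $u$ is an even-indexed vertex $u_j$, $j\ge2$, on the path $\M(\vecx)\oplus\M(\vecx_{\shortminus u^\ast})$. Its edge in $\M(\vecx_{\shortminus u^\ast})$ is $(u_{j-1},u)$, so $u_{j-1}=v$. Consider the time $t$ just after $(u_{j-1},u_j)$ would be queried. Using part 4 of \cref{lem:alt-path} at intermediate times, one compares the run on $\vecx$ with the run on $\vecx_{\shortminus vu^\ast}$. In both runs, $v$ and $u^\ast$'s path prefix are unavailable to $u$. The set of available vertices coincides after time $t$ except on path vertices, which are already consumed. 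Hence $u$'s next match is the same in both runs, namely $b$, or none.

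\textbf{Main obstacle.} The main obstacle is this last comparison: synchronizing the availability sets of the two processes after removing $v$ versus inserting $u^\ast$. Since the general fact is already stated in the paper, I would rely on it rather than redo that argument.
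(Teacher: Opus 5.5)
Your proof is correct. Step 2 follows the paper's route exactly: the paper also treats the worse-off characterization as a direct translation of \cref{fact:backup_uniquely_describe_worse_off} to the lexicographic list of \ranking{}.

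The difference is in Step 1. The paper obtains $x_v<x_b$ by applying \cref{lem:backup_in_alt_path} to the path $\M(\vecx_{\shortminus u^\ast})\oplus\M(\vecx)$ created by \emph{inserting} $u^\ast$. On that path, $v$ and $b$ sit at positions $u_{j-1}$ and $u_{j+1}$ around the even vertex $u=u_j$. However, $u$ lies on that path only when it is actually made worse off, so the paper's argument as written covers only that case.

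You instead use the path $\M(\vecx_{\shortminus u^\ast})\oplus\M(\vecx_{\shortminus vu^\ast})$ obtained by \emph{deleting} $v$. There, $u_0=v$, $u_1=u$, $u_2=b$ follow directly from the definition of backup. Then \cref{fact:monotonicity_for_alt_path_ranking} with $i=0$ gives $x_v<x_b$ unconditionally. This is what the statement asserts, and it is also what the profile analysis needs: for example, the interval $[x_v,x_b)$ in \cref{monoto_before_backup_ranking} must be nonempty whether or not $u$ can ever be made worse off. Your choice of path is therefore the slightly more robust one.

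Your optional self-contained sketch of Step 2 is loose. The available sets of the runs on $\vecx$ and $\vecx_{\shortminus vu^\ast}$ coincide \emph{exactly} from the moment $(u,v)$ is queried, not ``except on path vertices.'' Since you rely on the general fact rather than this sketch, nothing hinges on it.
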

\begin{proof}
    Most parts of the fact are just a direct translation of \cref{fact:backup_uniquely_describe_worse_off} except for the observation $x_v<x_b$. To show $x_v<x_b$, by \cref{lem:backup_in_alt_path}, we know that $v$ and $b$ are some pair of vertices $u_i$ and $u_{i+2}$ in the alternating path $\M(\vecx_{\shortminus u^\ast})\oplus \M(\vecx)$. By the monotonicity property (\Cref{fact:monotonicity_for_alt_path_ranking}), we have $x_v<x_b$. 
\end{proof}
We are now ready to define the \emph{profile}.
\begin{definition}[Profile of $u$~\cite{0.546ranking}]
\phantomsection\label{def:profile_ranking}
    The profile of $u$ with respect to $\vecx_{\shortminus u^\ast}$ is a triplet $(x_u,x_v,x_b)$ such that
    \begin{enumerate}
        \item $x_u$ is the rank of $u$.
        \item $x_v$ is the rank of $v$, the match of $u$ in $\M(\vecx_{\shortminus u^\ast})$. If $u$ is unmatched, $x_v=\bot$.
        \item $x_b$ is the rank of $b$, the backup of $u$ with respect to $\vecx_{\shortminus u^\ast}$. If $u$ does not have a backup, $x_b=\bot$.
    \end{enumerate}
\end{definition}
We will first provide lower bounds for the expected value of gain($u$) $+$ gain$(u^\ast)$ for each fixed profile of $u$. For each fixed profile $(x_u,x_v,x_b)$, we will devise a function $G$ such that
$$G(x_u,x_v,x_b)\leq\ev\left[ \text{ gain($u$) $+$ gain($u^\ast$)}\;|\; \text{profile of $u$ in $\vecx_{\shortminus u^\ast}$ is $(x_u,x_v,x_b)$}\right].$$
After deriving the case-by-case lower bound, we then take the expected lower bound by considering the worst-case distribution of profiles $D$ to lower bound the expected gain.
$$\ev_{(x_u,x_v,x_b)\sim D}\left[G(x_u,x_v,x_b)\right]\leq \ev\left[\text{ gain}(u) + \text{gain}(u^\ast)\;\right].$$
The profiles are divided into three cases and we devise lower bounds for them separately:
\begin{itemize}
    \item Case $u$ not matched, with profile $(x_u,\bot,\bot)$.
    \item Case $u$ matched with $v$ but has no backup, with profile $(x_u,x_v,\bot)$.
    \item Case $u$ matched with $v$ and has backup $b$, with profile $(x_u,x_v,x_b)$.
\end{itemize}
Next, we prove a very useful monotonicity about profiles as follows. Intuitively, when $u$ is matched to $v$ and has a backup $b$, $b$ is the second-best choice of match for $u$, hence, even if we demote the rank of $v$, as long as it is still better than $b$, $u$ will continue to match with $v$, thus resulting in the same overall matching. The following monotonicity properties formally establish this intuition.

\begin{lemma}[Monotonicity Properties for Profiles {\protect\cite[Corollaries 4.8 and 4.10]{0.546ranking}}]
\phantomsection
\label{monoto_before_backup_ranking}
Fix $x_u,x_b\in[0,1]$. Suppose $u$ is matched to $v$ with rank $x_v$ in $\M(\vecx_{\shortminus u^*})$.
\begin{itemize}
    \item If $u$ does not have a backup, then for any $x \in [x_v,1]$, demoting the rank of $v$ to $x$ does not change the resulting matching. That is,
    $$\M(\vecx_{\shortminus vu^*}(v=x))=\M(\vecx_{\shortminus vu^*}(v=x_v))$$
Consequently, the probability density function of $x_v$ is monotonically increasing on $[0,1]$ with respect to profiles $(x_u,x_v,\bot)$. That is, for any $0 \le x_1 < x_2 \le 1$,
\[
\Pr\;\!\left[\vecx_{\shortminus u^\ast}
\text{ has profile } (x_u,x_1,\bot)\right]
\le
\Pr\;\!\left[\vecx_{\shortminus u^\ast}
\text{ has profile } (x_u,x_2,\bot)\right].
\]

\item If $u$ has a backup $b$ at rank $x_b$, then for any $x \in [x_v,x_b)$, demoting the rank of $v$ to $x$ does not change the resulting matching. That is,
$$\M(\vecx_{\shortminus vu^*}(v=x))=\M(\vecx_{\shortminus vu^*}(v=x_v))$$
Consequently, the probability density function of $x_v$ is monotonically increasing on $[0,x_b)$ with respect to profiles $(x_u,x_v,x_b)$. In particular, for any
$0 \le x_1 < x_2 < x_b$,
\[
\Pr\;\!\left[\vecx_{\shortminus u^\ast}
\text{ has profile } (x_u,x_1,x_b)\right]
\le
\Pr\;\!\left[\vecx_{\shortminus u^\ast}
\text{ has profile } (x_u,x_2,x_b)\right].
\]
\end{itemize}
\end{lemma}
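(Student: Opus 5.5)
The plan has two parts. First, I show that moving $v$ within the allowed window leaves the run of \ranking{} essentially unchanged. Second, I turn that invariance into a measure-preserving map between rank vectors, which gives the density monotonicity.

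\textbf{Invariance.} Work with $\vecx' = \vecx_{\shortminus u^\ast}$ and let $x \in [x_v, x_b)$, or $x\in[x_v,1]$ when $u$ has no backup. I compare the runs on $\vecx'$ and on $\vecx'_{\shortminus v}(v=x)$ through the intermediate list $\vecx'_{\shortminus v}$.

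By \cref{def:backup_querycommit}, removing $v$ gives a run in which $u$ is matched to $b$, or left unmatched. Apply \cref{lem:alt-path} to $v$ and $\vecx'$. The symmetric difference $\M(\vecx')\oplus\M(\vecx'_{\shortminus v})$ is an alternating path starting $v,u,b,\dots$. By \cref{lem:backup_in_alt_path} its third vertex is the backup $b$, and the path stops at $u$ if there is no backup.

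Now reinsert $v$ at rank $x$. Look at the same alternating path with respect to $\vecx'_{\shortminus v}(v=x)$. I claim $v$ still grabs $u$, and that it does so before the edge $(u,b)$ is queried.

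Consider first the times before $\min(x_v,x_u)$. Nothing involving $v$ has been queried in either run. The only exception is that $v$ might be picked passively by some earlier vertex, but in the original run $v$ was matched to $u$, so every vertex that arrived earlier and is adjacent to $v$ had a better choice. Demoting $v$ only makes it less preferred, as in the passive case of the proof of \cref{fact:matching_guarantee_ranking} property 4.

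Hence $v$ is still available when the pair $(u,v)$ is queried. That query happens at lexicographic time $\min\{(x_u,x),(x,x_u)\}$, which precedes the query of $(u,b)$ because $x<x_b$. I also need $u$ to be available at that moment. By \cref{lem:alt-path} part 4 applied to $\vecx'_{\shortminus v}$, $u$ stays available until $(u,b)$ is queried, since $(u,b)$ is exactly where $u$ gets matched in the $v$-removed run. So $v$ and $u$ are matched together, at a time before $u$ would have been taken by $b$.

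After this point, the available sets in the two runs coincide. Both runs have removed $\{u,v\}$ from the $v$-removed state, and at that moment the $v$-removed run differs only at $u$. The remaining queries are identical, so the final matchings agree. Thus $\M(\vecx'_{\shortminus v}(v=x))=\M(\vecx')$.

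The main obstacle is making this synchronization rigorous when $v$ is active rather than passive, that is when $x<x_u$. In that case I need that, at time $x$, $u$ is the smallest-ranked available neighbor of $v$. That holds because any neighbor of $v$ ranked below $x_u$ was already unavailable at the earlier time $x_v$, and availability only decreases over time. The other delicate point is $x_b$ itself, where ties, or $b$ preempting, break the invariance; this is why the window is half-open.

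\textbf{Density monotonicity.} For each fixed $x_u$ and $x_b$, the invariance gives a map $\Phi$ that sends a rank vector with profile $(x_u,x_1,x_b)$ to a rank vector with profile $(x_u,x_2,x_b)$. The map changes only the rank of the match vertex $v$, setting it to $x_2$. It preserves the matching, so $u$'s match is still $v$, now at rank $x_2$.

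The backup is also preserved. The backup depends only on $\M(\vecx'_{\shortminus v})$, which does not involve $v$'s rank, so $b$ remains the backup. The rank of $u$ is unchanged.

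$\Phi$ is injective, because $v$ is determined as $u$'s match and its old rank is $x_1$. It also preserves the product uniform measure on the other coordinates. Summing over the choice of vertex $v$ and comparing densities therefore gives $\Pr[(x_u,x_1,x_b)]\le\Pr[(x_u,x_2,x_b)]$, and similarly in the no-backup case with $x_b$ replaced by $1$.
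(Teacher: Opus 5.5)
Your density argument is fine, but the invariance proof has a gap. You treat passive picks of $v$ only before time $\min(x_v,x_u)$, and then conclude ``hence $v$ is still available when the pair $(u,v)$ is queried.'' In the demoted run that query happens at first coordinate $\min(x_u,x)$. This is strictly later whenever $x_v<x_u$.

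In the window between, every $w$ with $x_v<x_w<\min(x,x_u)$ makes its active choice while the demoted $v$ is still free. Your justification (``every vertex that arrived earlier and is adjacent to $v$ had a better choice'') does not cover these $w$. In the original run they arrive after $v$ has already been matched to $u$ at time $x_v$, so they never saw $v$ available. This case occurs both for $x<x_u$ and for $x_u<x<x_b$. Your ``main obstacle'' paragraph handles only $v$'s own active choice when $x<x_u$, not these passive threats.

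The repair uses a fact you already state. Such a $w$ is a neighbor of $v$ ranked below $x_u$. Since $v$ actively chose $u$ in the original run, $w$ was already unavailable there at time $(x_v,x_w)$. Before $(u,v)$ is matched, the original run agrees with the $v$-removed run $\vecx'_{\shortminus v}$ apart from $v$. The demoted run also agrees with $\vecx'_{\shortminus v}$ apart from $v$ until $v$ is matched; this is the trivial-path case of \cref{lem:alt-path}. Hence $w$ is unavailable at time $(x_w,x)$ in the demoted run as well.

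With this case added, the invariance goes through. Your closing synchronization should be read as ``both runs equal $\M(\vecx'_{\shortminus uv})\cup\{(u,v)\}$.'' The two runs match $(u,v)$ at different times, so their available sets coincide only after both have done so.

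The paper avoids this case analysis altogether. It first notes that once $(u,v)$ is matched, the rest of the matching is $\M(\vecx_{\shortminus uv})$. It then bounds the rank of $v$'s partner in the demoted run from both sides:
\begin{itemize}
\item Upper bound: property~1 of \cref{fact:matching_guarantee_ranking} applies, with $v$ in the role of $u^\ast$ and $b$ in the role of $u$'s match (or \cref{matching_guarantee_ranking_no_macth} if there is no backup). Since $u$ is free in the $v$-removed run until $(u,b)$ and $x<x_b$, it gives a partner of rank $\le x_u$.
\item Lower bound: property~4 says demotion cannot improve $v$'s match, giving rank $\ge x_u$.
\end{itemize}
The case you missed is exactly what property~4 supplies.

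Your density argument is correct and more explicit than the paper, which only says ``consequently.'' Read $\Pr$ as a density, since the profile events are null. For fixed vertices $v,b$ your map is the identity on all remaining coordinates. So the set of remaining ranks giving profile $(x_u,x_1,x_b)$ is contained in the set giving $(x_u,x_2,x_b)$, and summing over $(v,b)$ yields the claim.
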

\begin{proof}
    We give a simpler proof than \cite{0.546ranking}. Observe that $\vecx_{\shortminus u^\ast}$ can be viewed as inserting $v$ at rank $x_v$ into $\vecx_{\shortminus vu^\ast}$. Since $u^\ast$ plays no role in this lemma, we omit it and write $\vecx_{\shortminus v}$ for $\vecx_{\shortminus vu^\ast}$.

Suppose $u$ is matched to $v$ in $\M(\vecx_{\shortminus v}(x_v))$. We prove that $\M(\vecx_{\shortminus v}(x))= \M(\vecx_{\shortminus v}(x_v))$ for any $x > x_v$, provided $x < x_b$ when $u$ has a backup $b$.

First note that if $(u,v) \in \M(\vecx_{\shortminus v}(x_v))$, then
\[
\M(\vecx_{\shortminus v}(x_v)) - \{(u,v)\}
= \M(\vecx_{\shortminus uv}),
\]
since the remaining partial matching is equal to the matching with both $u$ and $v$ removed from the beginning. The same holds for $\M(\vecx_{\shortminus v}(x))$. Therefore, once we show $(u,v)$ is also matched in $\M(\vecx_{\shortminus v}(x))$, it follows that
\[
\M(\vecx_{\shortminus v}(x))
= \M(\vecx_{\shortminus uv}) \cup \{(u,v)\}
= \M(\vecx_{\shortminus v}(x_v)).
\]

Now it suffices to show that $(u,v)$ is matched in $\M(\vecx_{\shortminus v}(x))$. Assume $x$ is as stated. Notice that if $u$ has a backup $b$, then it is matched to $b$ in $\M(\vecx_{\shortminus v})$, if it does not have a backup, then it is unmatched in $\M(\vecx_{\shortminus v})$. By \cref{fact:matching_guarantee_ranking} property $1$ and \cref{matching_guarantee_ranking_no_macth}, it follows that $v$ is matched to a vertex with rank $\leq x_u$ in $\vecx_{\shortminus v}(x)$. On the other hand, by \cref{fact:matching_guarantee_ranking} property $4$, since $v$ is matched to $u$ in $\vecx_{\shortminus v}(x_v)$, its match at $x\geq x_v$ is at most as good as $u$. So $v$ is matched to a vertex with rank $\geq x_u$ in $\vecx_{\shortminus v}(x)$. The two facts combined show that $v$ is matched to a vertex with rank exactly $x_u$. We can also assume that no two vertices share the same rank from the beginning, as such an event has zero measure in the probability space. Therefore, we conclude that $v$ is matched to $u$ in $\vecx_{\shortminus v}(x)$.
\end{proof}

\def\FigShift{1.5cm} 

\makebox[\linewidth][c]{%
\hspace*{\FigShift}%
\begin{tikzpicture}[
  even/.style={circle, fill=blue!60!white, inner sep=1.5pt},
  odd/.style={circle, fill=blue!60!white, inner sep=1.5pt},
  legendtext/.style={font=\footnotesize, align=left},
]

\def\xLR{0.90}
\def\sep{6}
\def\legGap{3.0}   
\def\legTextW{3.4}

\pgfmathsetmacro{\yStep}{(2/1.7320508)*\xLR}

\pgfmathsetmacro{\xA}{-\sep/2}
\pgfmathsetmacro{\xB}{\sep/2}
\pgfmathsetmacro{\xL}{\xB + \legGap}

\pgfmathsetmacro{\sysLeft}{\xA - \xLR }
\pgfmathsetmacro{\sysRight}{\xL + \legTextW }
\pgfmathsetmacro{\sysCenter}{(\sysLeft+\sysRight)/2}
\pgfmathsetmacro{\textWidth}{\sysRight - \sysLeft}

\begin{scope}[shift={({-\sysCenter},0)}]
\begin{scope}[xshift= 1cm]
\node[odd]  (uA) at ({\xA-\xLR}, {-1*\yStep}) {};
\node[anchor=east] at ({\xA-\xLR-0.25}, {-1*\yStep}) {$u$};

\node[even] (vA) at ({\xA+\xLR}, {0}) {};
\node[anchor=west] at ({\xA+\xLR+0.25}, {0}) {$v$};

\node[even] (bA) at ({\xA+\xLR}, {-2*\yStep}) {};
\node[anchor=west] at ({\xA+\xLR+0.25}, {-2*\yStep}) {$b$};

\draw[thick] (uA) -- (vA);
\draw[dash pattern=on 6pt off 3pt, thick] (uA) -- (bA);

\node[font=\footnotesize] at (\xA, {-2*\yStep - 0.38}) {$v$ at $x_v$};

\node[odd]  (uB) at ({\xB-\xLR}, {-1*\yStep}) {};
\node[anchor=east] at ({\xB-\xLR-0.25}, {-1*\yStep}) {$u$};

\node[even] (vB) at ({\xB+\xLR}, {0}) {};
\node[anchor=west] at ({\xB+\xLR+0.25}, {0}) {$v$};

\node[even] (bB) at ({\xB+\xLR}, {-2*\yStep}) {};
\node[anchor=west] at ({\xB+\xLR+0.25}, {-2*\yStep}) {$b$};

\node[even] (vpB) at ({\xB+\xLR}, {-1.35*\yStep}) {};
\node[anchor=west] at ({\xB+\xLR+0.25}, {-1.35*\yStep}) {$v$};

\draw[thick] (uB) -- (vB);
\draw[dash pattern=on 6pt off 3pt, thick] (uB) -- (bB);
\draw[thick] (uB) -- (vpB);

\node[font=\footnotesize] at (\xB, {-2*\yStep - 0.38}) {Demoting $v$ to $x\in[x_v,x_b)$};

\draw[->, thick]
  ([xshift=4pt,yshift=-4pt]vB.south) -- ([xshift=4pt,yshift=4pt]vpB.north)
  node[midway, right, font=\footnotesize] {Demoted};
\pgfmathsetmacro{\yArrow}{-1*\yStep}

\draw[->, thick]
  ({\xA+\xLR+1.3}, \yArrow) -- ({\xB-\xLR-1.3}, \yArrow);

\node[font=\footnotesize, align=center]
  at ({(\xA+\xB)/2}, {\yArrow+0.32})
  {Demoting $v$};

\pgfmathsetmacro{\yL}{0.10}

\draw[thick] (\xL,\yL) -- ({\xL+0.6},\yL);
\node[legendtext, anchor=west, text width=\legTextW cm]
  at ({\xL+0.85},\yL) {Matched};

\draw[dash pattern=on 6pt off 3pt, thick] (\xL,{\yL-0.6}) -- ({\xL+0.6},{\yL-0.6});
\node[legendtext, anchor=west, text width=\legTextW cm]
  at ({\xL+0.85},{\yL-0.6}) {Unmatched};

\end{scope}
\pgfmathsetmacro{\yText}{-2*\yStep - 1.05}

\node[anchor=north west, text width=\textWidth cm, font=\footnotesize]
  at (\sysLeft,\yText)
{An example of \cref{monoto_before_backup_ranking} where $u$ has a backup $b$. Permuting $v$ to any rank $x\in[x_v,x_b)$ while keeping all other ranks unchanged will not affect the final matching.};

\end{scope}
\end{tikzpicture}
}

Now we are ready to deduce the case-by-case lower bounds for all the profiles, starting with the profile $(x_u,\bot,\bot)$.
\subsection{Lower Bound for $(x_u,\bot,\bot)$}
In this case, since $u$ is not matched in $\M(\vecx_{\shortminus u^*})$, $u^\ast$ will always be matched to some vertex at least as good as $u$ by the following simple fact:
\begin{fact}
\phantomsection
\label{matching_guarantee_ranking_no_macth}
    If $u$ is not matched in $\vecx_{\shortminus u^\ast}$, then $u^\ast$ is matched to a vertex of rank $\leq x_u$ in $\vecx_{\shortminus u^\ast}(x)$.
\end{fact}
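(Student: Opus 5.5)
The plan is to argue directly from the greedy process of \ranking{} on $\vecx_{\shortminus u^\ast}(x)$, using the alternating path lemma (\cref{lem:alt-path}) to control how $u$ is affected by the insertion of $u^\ast$.

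\textbf{Step 1: $u$ is unmatched at time $(x_u,x_u)$.} Since $u$ is unmatched in $\M(\vecx_{\shortminus u^\ast})$, it is available throughout the process on $\vecx_{\shortminus u^\ast}$. By part 4 of \cref{lem:alt-path}, at any time $t$ the available sets under $\vecx$ and $\vecx_{\shortminus u^\ast}$ differ in exactly one vertex, namely the endpoint of the alternating path. We want to show that under $\vecx=\vecx_{\shortminus u^\ast}(x)$, either $u^\ast$ is already matched before the query time of the edge $(u,u^\ast)$, or both endpoints are available at that time.

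\textbf{Step 2: case analysis on the edge $(u,u^\ast)$.} Let $t$ be the query time of $(u,u^\ast)$, which is $(\min(x,x_u),\max(x,x_u))$ in lexicographic order.

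Suppose $u^\ast$ is still available at time $t$ in $\vecx$ but $u$ is not. Then $u$ is a vertex that is unavailable in $\vecx$ and available in $\vecx_{\shortminus u^\ast}$. By part 4 of \cref{lem:alt-path}, $u$ must be the endpoint $u_k$ of an odd-length path $\M^t(\vecx)\oplus\M^t(\vecx_{\shortminus u^\ast})$ starting at $u_0=u^\ast$, with $k$ odd.

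But $u^\ast$ being available at time $t$ means $u^\ast$ has no edge in $\M^t(\vecx)$. Its path edge $(u_0,u_1)$, which lies in $\M^t(\vecx)$ by part 2, cannot exist. So the path is empty, $k=0$, and the two available sets differ only by $u^\ast$ itself. Hence $u$ is available in $\vecx$ too, a contradiction.

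Therefore, at time $t$, either $u^\ast$ is already matched, or both $u$ and $u^\ast$ are available. In the latter case the greedy algorithm matches the edge $(u,u^\ast)\in E$. Either way, $u^\ast$ is matched, and it is matched no later than time $t$.

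\textbf{Step 3: converting query time into partner rank.} We need a partner of rank at most $x_u$, not merely a match by time $t$.

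If $u^\ast$ is matched to $w$ at time $(a,c)\le t$, consider two cases:
\begin{itemize}
\item $u^\ast$ is active, so $a=x$. If $x\ge x_u$, then $t=(x_u,x)$, which contradicts $(x,\cdot)\le(x_u,x)$ unless $x=x_u$, a measure-zero event. So $x<x_u$ and $t=(x,x_u)$, giving $x_w\le x_u$.
\item $u^\ast$ is passive, so $a=x_w$. Then $(x_w,x)\le t$ forces $x_w\le\min(x,x_u)\le x_u$.
\end{itemize}
If instead the edge $(u,u^\ast)$ itself is matched, then the partner is $u$, of rank exactly $x_u$.

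\textbf{Main obstacle.} The delicate point is the alternating-path bookkeeping in Step 2: checking that availability of $u^\ast$ forces the path to be trivial. This relies on part 2 of \cref{lem:alt-path}, which places $(u_0,u_1)$ in $\M^t(\vecx)$. The lexicographic tie-handling in Step 3 is routine.
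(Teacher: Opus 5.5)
Your proof is correct and follows the paper's argument: if $u^\ast$ is still available at the query time of $(u,u^\ast)$, the alternating path is degenerate, so $u$, which is available throughout the process on $\vecx_{\shortminus u^\ast}$, is also available in $\vecx_{\shortminus u^\ast}(x)$ and the edge $(u,u^\ast)$ gets matched. Your Step~3 converts ``matched no later than that query time'' into ``matched to a partner of rank at most $x_u$'' using the lexicographic order of queries; this makes explicit a step the paper leaves implicit.
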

\begin{proof}
    This is a variant of property $1$ in \cref{fact:matching_guarantee_ranking}. The proof is essentially the same. Assume $u^\ast$ is not matched at time $t=(u,u^\ast)$. Then the alternating path $\M^t(\vecx)\oplus \M^t(\vecx_{\shortminus u^\ast})$ is degenerated and hence $A^t(\vecx)= A^t(\vecx_{\shortminus u^\ast})\cup\{u^\ast\}$. Since $u\in A(\vecx_{\shortminus u^\ast})\subseteq A^t(\vecx_{\shortminus u^\ast})$, it is also in $A^t(\vecx)$, so $(u,u^\ast)$ will be matched in this case.
\end{proof}
The gain is at least $0$ for $u$. The gain for $u^*$ is at least  $g_P(x_u,x)$ in $\vecx_{\shortminus u^\ast}(x)$ by the above fact and the monotonicity of gains (\cref{fact:monotonicity_of_gains}). We can define the $G$ function and bound the expected gain as follows:
\begin{claim}
\phantomsection
\label{G_ranking_nomatch}
    We define the lower-bound function $G$ when the profile of $u$ with respect to $\vecx_{\shortminus u^\ast}$ equals $(x_u,bot,\bot)$ by:
    $$G(x_u,\bot,\bot)= \int_{0}^1 g_P(x_u,x)\; dx.$$
    We have 
    $$G(x_u,\bot,\bot)\leq \ev_x\left[\text{gain}(u) + \text{gain}(u^\ast)\;|\; \text{profile of $u$}=(x_u,\bot,\bot)\right].$$
\end{claim}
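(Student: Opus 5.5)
The plan is to fix an arbitrary $\vecx_{\shortminus u^\ast}$ whose profile of $u$ is $(x_u,\bot,\bot)$, and lower bound the inner expectation over $x\sim U[0,1]$, the rank of $u^\ast$, pointwise in $x$. Conditioning on the profile only restricts which $\vecx_{\shortminus u^\ast}$ can occur. So a pointwise bound, integrated over $x$ and then averaged over all such $\vecx_{\shortminus u^\ast}$, gives the claimed inequality.

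For $u$, I would simply note that $\text{gain}(u)\ge 0$. If $u$ is matched in $\M(\vecx_{\shortminus u^\ast}(x))$, it is a buyer, and its gain is at least $g_B(x_u,\cdot)\ge 4h(0^+,1)\ge 0$ by constraint 4 of \cref{def:function_constraint_ranking}. If $u$ is unmatched, its gain is a sum of nonnegative compensations, since $h$ takes values in $[0,1]$. Either way nothing is lost by dropping $u$'s term.

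For $u^\ast$, I would invoke \cref{matching_guarantee_ranking_no_macth}: because $u$ is unmatched in $\vecx_{\shortminus u^\ast}$, for every $x$ the vertex $u^\ast$ is matched in $\M(\vecx_{\shortminus u^\ast}(x))$ to some vertex of rank at most $x_u$. By the standing assumption after \cref{fact:randomized_bipartition}, $u^\ast$ is a product, and its rank is $x_{u^\ast}=x$. Applying \cref{fact:monotonicity_of_gains} with threshold rank $x_u$ gives $\text{gain}(u^\ast)\ge g_P(x_u,x)$. This already accounts for the worst case in which $u^\ast$ pays a compensation $h(x,\cdot)$, because by the monotonicity of $g$ and $h$ in \cref{def:function_constraint_ranking}, $g(y,x)-h(x,y)$ is decreasing in $y$.

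Summing the two bounds gives $\text{gain}(u)+\text{gain}(u^\ast)\ge g_P(x_u,x)$ for every $x\in[0,1]$. Integrating over $x$ yields $\int_0^1 g_P(x_u,x)\,dx = G(x_u,\bot,\bot)$. The only step needing care is the monotonicity in \cref{fact:monotonicity_of_gains}, which is where the product role and the compensation are handled. Since that is already established, I do not expect a real obstacle beyond checking that the argument order of $g_P$ matches (the partner's rank comes first, then $u^\ast$'s own rank $x$).
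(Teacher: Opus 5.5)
Your proposal is correct and follows the paper's argument. You drop $\text{gain}(u)\ge 0$, then combine \cref{matching_guarantee_ranking_no_macth} with \cref{fact:monotonicity_of_gains} (treating $u^\ast$ as a product) to get $\text{gain}(u^\ast)\ge g_P(x_u,x)$ for every rank $x$ of $u^\ast$, which integrates to $G(x_u,\bot,\bot)$; your extra checks (nonnegativity of $u$'s gain and the argument order in $g_P$) are just details the paper leaves implicit.
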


\subsection{Lower Bound for $(x_u,x_v,\bot)$}
To better bound the expected gain for this case, we define an auxiliary rank $\theta_0\in [0,1]$, called the \emph{impacting rank}. The impacting rank $\theta_0\in[0,1]$ is a rank such that if $u^\ast$ is placed before $\theta_0$, $u$ can be made worse off; if $u^\ast$ is placed after $\theta_0$, $u$ will not be made worse off. Intuitively, before $\theta_0$, $u^*$ has a negative impact on the matching status of $u$; after $\theta_0$, it exerts no impact on the matching status of $u$. Formally, we define $\theta_0$ as follows:
\begin{definition}[Impacting Rank $\theta_0$]
\phantomsection
\label{def:theta_0_ranking}
    Assume $u$ is matched in $\M(\vecx_{\shortminus u^*})$. The impacting rank $\theta_0$ with respect to $\vecx_{\shortminus u^\ast}$ is
    $$\theta_0=\sup_{x\in[0,1]}\{x\;|\; u \text{ is made worse off in $\M(\vecx_{\shortminus u^\ast}(x))$}\}.$$
\end{definition}
We show the following simple but important facts about $\theta_0$, where $\theta_0^-$ denotes $\theta_0-\epsilon$ for some small enough $\epsilon$:
\begin{fact}[Properties of $\theta_0$, I]
\phantomsection
\label{fact:theta_0}
Fixing $\vecx_{\shortminus u^*}$, we have the following properties with respect to the impacting rank $\theta_0$:
\begin{enumerate}
    \item $\theta_0\leq x_u$, where $x_u$ is the rank of $u$.
    \item If $0<\theta_0<x_u$, for any $x\in[0,\theta_0]$ such that $u$ is made worse off by inserting $u^\ast$ at $x$, we have that $u^*$ is matched in $\M(\vecx_{\shortminus u^\ast}(x))$ and its match is not $v$ (the match of $u$ in $\vecx_{\shortminus u^\ast}$).
\end{enumerate}
\end{fact}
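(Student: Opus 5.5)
Both properties come from the alternating path $\M(\vecx_{\shortminus u^\ast}(x))\oplus\M(\vecx_{\shortminus u^\ast})$, its rank monotonicity (\cref{fact:monotonicity_for_alt_path_ranking}), and the matching guarantees of \cref{fact:matching_guarantee_ranking}.

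\textbf{Property 1.} We show that inserting $u^\ast$ at any rank $x>x_u$ cannot make $u$ worse off; the definition of $\theta_0$ as a supremum then gives $\theta_0\le x_u$. Suppose $x>x_u$ and $u$ is worse off. Then $u$ is an even-indexed vertex $u_{2j}$, $j\ge1$, on the alternating path starting at $u_0=u^\ast$. By \cref{fact:monotonicity_for_alt_path_ranking}, $x_{u_0}<x_{u_2}<\dots<x_{u_{2j}}$, so $x<x_u$, a contradiction. The same conclusion also follows from \cref{fact:matching_guarantee_ranking} property 2. That property says that for $x>x_u$, $u$ is matched to a vertex of rank $\le x_v$. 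Since ranks are almost surely distinct, $u$ is matched to $v$ or to something better, so $u$ is not worse off.

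\textbf{Property 2.} Fix $x\le\theta_0$ at which $u$ is worse off. Property 3 of \cref{fact:matching_guarantee_ranking} already gives that $u^\ast$ is matched, to a vertex of rank $\le x_v$. Equivalently, $u^\ast=u_0$ starts an alternating path of length at least $2$ reaching $u$, so its first edge $(u_0,u_1)$ lies in $\M(\vecx_{\shortminus u^\ast}(x))$.

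It remains to show $u_1\ne v$. In $\M(\vecx_{\shortminus u^\ast})$, $v$ is matched to $u$. Suppose $u_1=v$. Then the path is $u^\ast,v,u,\dots$ with $(v,u)\in\M(\vecx_{\shortminus u^\ast})$, so $u=u_2$ and the path has length exactly $2$ up to $u$. By \cref{lem:alt-path} part 3, the edge $(u^\ast,v)$ is queried before $(v,u)$, so $x<x_u$ by the same lexicographic argument as in \cref{fact:monotonicity_for_alt_path_ranking}.

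This configuration is exactly the ``extreme case'' the paper flags: a length-$2$ path. The hypothesis $0<\theta_0<x_u$ should be what rules it out. My plan is a monotonicity argument. Suppose $u^\ast$ at rank $x$ takes $v$, with $(u^\ast,v)$ queried before $(v,u)$. Then placing $u^\ast$ at any larger rank $x'<x_u$ should still let $u^\ast$ take $v$, or make $u$ worse off in some other way. This would force $u$ to be worse off on the whole interval $[x,x_u)$, giving $\theta_0\ge x_u$, which contradicts $\theta_0<x_u$.

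Concretely, in the length-$2$ case $v$ is still available when $(u^\ast,v)$ is queried. If that query happens at $v$'s turn, with $v$ choosing $u^\ast$ over $u$, then moving $u^\ast$ to any $x'<x_u$ keeps $u^\ast$ ranked ahead of $u$ in $v$'s preference. The partial matching before that point is unchanged, by the alternating path lemma applied at that time. So $v$ again picks $u^\ast$ or something even earlier, and $u$ loses $v$ either way. If instead the query happens at $u^\ast$'s turn ($x<x_v$), I would either reduce to the previous sub-case or argue directly that $v$ remains the first available neighbor of $u^\ast$.

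\textbf{Main obstacle.} I expect the hardest step to be making this persistence argument rigorous, in particular the case where $u^\ast$ is active. The natural tool is to compare partial matchings at the time $(v,u)$ would be queried, using \cref{lem:alt-path} part 4 to track availability.
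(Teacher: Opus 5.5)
Your Property 1 argument is correct and matches the paper's. The paper cites property 2 of \cref{fact:matching_guarantee_ranking}, whose proof is exactly your rank-monotonicity argument along the path. For Property 2 you also aim at the paper's contradiction: if $u^\ast$ is matched to $v$ at rank $x$, show $u$ stays worse off for every $x'\in[x,x_u)$, so $\theta_0\ge x_u$. But this persistence claim is the whole content of Property 2, and you only assert it (``should still let $u^\ast$ take $v$''), leaving the active case explicitly open. The direct route is not routine there. Moving $u^\ast$ from $x$ to $x'$ lets vertices with ranks in $(x,x')$ act before $u^\ast$. Also, $x'$ may cross $x_v$, so $u^\ast$ can switch from active to passive, and ``reduce to the previous sub-case'' is not available without more work. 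Even your passive sketch needs that no vertex acting before $v$ picks the demoted $u^\ast$. That is a demotion argument, not something the alternating path lemma gives by itself.

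The missing idea is to view the situation from $v$'s side. If the path is $u^\ast,v,u,\dots$, then by \cref{lem:backup_in_alt_path} the odd vertex $v$, matched to $u^\ast$ in $\M(\vecx_{\shortminus u^\ast}(x))$, has $u$ as its backup with respect to $\vecx_{\shortminus u^\ast}(x)$. Apply \cref{monoto_before_backup_ranking} with $v$ as the matched vertex, $u^\ast$ as its match, and $u$ (rank $x_u$) as its backup. It says demoting $u^\ast$ to any $x'\in[x,x_u)$ leaves the entire matching unchanged. Hence $u$ is worse off on all of $[x,x_u)$, giving $\theta_0\ge x_u$, a contradiction. Unpacked, this is just properties 1 and 4 of \cref{fact:matching_guarantee_ranking} with roles swapped. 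Since $(u^\ast,v)\in E$ and $v$'s match in $\M(\vecx_{\shortminus u^\ast})$ is $u$, property 1 gives that for $x'<x_u$ the vertex $u^\ast$ is matched to a vertex of rank $\le x_v$. Property 4 gives rank $\ge x_v$ for $x'\ge x$. With distinct ranks, $u^\ast$ is matched to $v$ again. This is the paper's proof, and it treats active and passive placements of $u^\ast$ uniformly, so no case split is needed.
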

\begin{proof}
    Property 1: By \cref{fact:matching_guarantee_ranking}, $u$ is not made worse off in $\vecx_{\shortminus u^\ast}(x)$ for $x>x_u$. Hence, we have $\theta_0\leq x_u$.
    
Property 2: Assume $0<\theta_0<x_u$, fix $x\in[0,\theta_0]$ be as defined. 
Since $u$ is made worse off in $\M(\vecx_{\shortminus u^\ast}(x))$, $u^\ast$ must be matched. 
Assume that the match of $u^\ast$ at this time is $v$, the match of $u$ in $\vecx_{\shortminus u^\ast}$. 
This implies that the alternating path 
$\M(\vecx_{\shortminus u^\ast}(x))\oplus \M(\vecx_{\shortminus u^\ast})$ 
consists entirely of three vertices: $u^*, v, u$. 
 This implies that $u$ is the backup of $v$ with respect to $\vecx_{\shortminus u^\ast}(x)$. By \cref{monoto_before_backup_ranking},  it follows that demoting the rank of $u^*$ to any $x'\in [x,x_u)$ does not change the matching. This implies $\theta_0\ge x_u$ as there exists $x'$ arbitrarily close to $x_u$ where $u$ is unmatched in $\M(\vecx_{\shortminus u^\ast}(x'))$. Thus we have a contradiction as we assumed $\theta_0<x_u$.
\end{proof}

\Cref{fact:theta_0} implies that assuming $u$ is made worse off by $u^*$, except in one extreme case (the case where $\theta_0=x_u$), the alternating path $\M(\vecx_{\shortminus u^\ast}(\theta_0^-))\oplus\M(\vecx_{\shortminus u^\ast})$ will have length $\geq 4$. By \cref{lem:victim_querycommit_case_u} and \cref{lem:victim_querycommit_case_u_star}, the length of alternating path is positively correlated with how many copies of compensation $u$ and $u^\ast$ receive when they are unmatched. In the case where the alternating path has length $\geq 4$, $u$ and $u^\ast$ receive $\geq 2$ copies of compensation each. When the path has length $\geq 2$, in the worst-case $u$ and $u^\ast$ receive $1$ copy of compensation each when unmatched.

We also have the following property with respect to the alternating paths $\M(\vecx_{\shortminus u^*}(x))\oplus \M(\vecx_{\shortminus u^*}$ for any rank $x\in[0,\theta_0)$ that belongs to the set of ranks that define $\theta_0$.
\begin{fact}[Properties of $\theta_0$, II]
\phantomsection
\label{fact:theta_0_II}
Fixing $\vecx_{\shortminus u^*}$, let $\theta_0$ be the impacting rank as defined in \cref{def:theta_0_ranking} with respect to $\vecx_{\shortminus u^*}.$ Assuming $\theta_0 > 0$, which is the non-trivial case, let $S$ be the collection of ranks $x$ that defines $\theta_0$. I.e.
    $$S=\{x\;|\; \text{$u$ is made worse off in $\M(\vecx_{\shortminus u^*}(x))$}\}.$$
    Then,
\begin{enumerate}
    \item For any $x\in S$, the third vertex $u_2$ in the alternating path $\M(\vecx_{\shortminus u^*}(x))\oplus \M(\vecx_{\shortminus u^*})$ has rank $x_{u_2}\leq \theta_0$.
    \item There exists a small enough punctured left neighborhood of $\theta_0$, denoted as $(\theta_0^-,\theta_0)$, such that $(\theta_0^-,\theta_0)\subseteq S$. Further, for any $x\in(\theta_0^-,\theta_0)$, the third vertex $u_2$ in the alternating path $\M(\vecx_{\shortminus u^*}(x))\oplus \M(\vecx_{\shortminus u^*})$ has rank $x_{u_2}= \theta_0$.
\end{enumerate}
\end{fact}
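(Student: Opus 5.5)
We prove both properties by applying the profile monotonicity (\cref{monoto_before_backup_ranking}) to $u_1$, the match of $u^\ast$, with $u^\ast$ in the role of the demoted vertex $v$.

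\textbf{Setup.} Fix $x\in S$ and write the alternating path $\M(\vecx_{\shortminus u^*}(x))\oplus\M(\vecx_{\shortminus u^*})$ as $u_0=u^\ast,u_1,u_2,\dots,u_k$. Since $u$ is made worse off, $u$ is an even-indexed vertex different from $u^\ast$. Hence $k\ge 2$ and $u_2$ exists, possibly with $u_2=u$. By \cref{fact:monotonicity_for_alt_path_ranking}, $x=x_{u_0}<x_{u_2}$. As usual we assume all ranks are distinct, since ties have measure zero.

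\textbf{Backup of $u_1$.} By \cref{lem:backup_in_alt_path}, the odd vertex $u_1$ is matched to $u^\ast$ in $\M(\vecx_{\shortminus u^*}(x))$ and has backup $u_2$ with respect to $\vecx_{\shortminus u^*}(x)$.

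\textbf{Stability interval.} Apply \cref{monoto_before_backup_ranking} to the rank vector $\vecx_{\shortminus u^*}(x)$, with $u_1$ in the role of $u$ and $u^\ast$ in the role of its match $v$. The lemma's proof never uses the distinguished vertex $u^\ast$, so it applies verbatim to any vertex and its match. It gives
\[
\M(\vecx_{\shortminus u^*}(x'))=\M(\vecx_{\shortminus u^*}(x)) \quad\text{for every } x'\in[x,x_{u_2}).
\]
Consequences for all such $x'$:
\begin{itemize}
\item $u$ has the same match (or is still unmatched) in $\M(\vecx_{\shortminus u^*}(x'))$, so it is still worse off than in $\M(\vecx_{\shortminus u^*})$. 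Thus $[x,x_{u_2})\subseteq S$.
\item The alternating path is literally the same, so its third vertex is still $u_2$.
\end{itemize}

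\textbf{Property 1.} From $[x,x_{u_2})\subseteq S$ we get $\theta_0=\sup S\ge x_{u_2}$.

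\textbf{Property 2.} It suffices to find one $x\in S$ whose third vertex has rank exactly $\theta_0$. Then the stability interval $[x,\theta_0)\subseteq S$ contains a punctured left neighborhood $(\theta_0^-,\theta_0)$, and the third vertex on it is constantly $u_2$ with $x_{u_2}=\theta_0$.

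Suppose instead that every $x\in S$ has third vertex of rank strictly below $\theta_0$. The third vertex always belongs to the finite vertex set $V\setminus\{u^\ast\}$. So the set of ranks of third vertices, over all $x\in S$, is finite and each element is $<\theta_0$. Let $r<\theta_0$ be its maximum. Since $\theta_0=\sup S$, there is some $x\in S$ with $x>r$. By \cref{fact:monotonicity_for_alt_path_ranking}, that $x$ has third vertex of rank $x_{u_2}>x>r$. This contradicts the choice of $r$.

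\textbf{Expected obstacle.} The main point to check is that \cref{monoto_before_backup_ranking} may be applied with $u^\ast$ as the demoted match. This needs three things:
\begin{itemize}
\item $u_1$ is matched to $u^\ast$ in $\M(\vecx_{\shortminus u^*}(x))$, which holds since $(u_0,u_1)\in\M(\vecx_{\shortminus u^*}(x))$ by \cref{lem:alt-path}.
\item $u_2$ is genuinely $u_1$'s backup in that rank vector, which is \cref{lem:backup_in_alt_path}.
\item The invariance of the entire matching, not just of the pair $(u_1,u^\ast)$, which is exactly what the lemma provides.
\end{itemize}
The boundary case $u_2=u$ (so $x_{u_2}=x_u$ and $\theta_0=x_u$) is handled by the same argument without change.
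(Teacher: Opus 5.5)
Your proof is correct and is essentially the paper's argument. Both proofs apply \cref{monoto_before_backup_ranking} to $u_1$, with $u^\ast$ as the demoted match and $u_2$ as its backup, to show that the matching (hence membership in $S$ and the third vertex) is unchanged on $[x,x_{u_2})$, which gives Property 1; both then obtain Property 2 by contradiction, using that third-vertex ranks form a finite set and always exceed $x$. The differences are cosmetic: you get $x<x_{u_2}$ directly from \cref{fact:monotonicity_for_alt_path_ranking} rather than via \cref{facf:backup_ranking}, and you phrase the contradiction with a maximum $r$ instead of a supremum.
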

\begin{proof}
    Property 1: Fix any $x\in S$. Let the alternating path be $\M(\vecx_{\shortminus u^*}(x))\oplus \M(\vecx_{\shortminus u^*})=(u^*,w,u_2,...)$. By \cref{lem:backup_in_alt_path}, $u_2$ is the backup of $w$ with respect to $\vecx_{\shortminus u^*}(x)$. By the monotonicity property of the match before backup (\cref{monoto_before_backup_ranking}), demoting the rank of $u^*$ until $x_{u_2}$ does not change the matching. This implies that there exists $y$ arbitrarily close to $x_{u_2}$ from the left such that $u$ is made worse off in $\M(\vecx_{\shortminus u^*}(y))$. These $y$ thus belong to the set of points that define $\theta_0$, which means $x_{u_2}\leq \theta_0$.

    Property 2: We show that there exists a rank $x\in S$ such that $u$ is made worse off in $\M(\vecx_{\shortminus u^*}(x))$ and the third vertex in the alternating path has rank exactly $\theta_0$. Then we can define the left neighborhood as $(x,\theta_0)$. Since by monotonicity of match before backup (\cref{monoto_before_backup_ranking}), we have $\M(\vecx_{\shortminus u^*}(y))=\M(\vecx_{\shortminus u^*}(x))$ for any $y\in  (x,\theta_0)$.

    Assume towards a contradiction that for any $x\in S$, the third vertex has rank $x_{u_2}<\theta_0$. Again, since $u_2$ is the backup of $w$ with respect to $\M(\vecx_{\shortminus u^*}(x))$ with $w$ being the match of $u^*$, by \cref{facf:backup_ranking}, $x<x_{u_2}$. This implies
    \begin{align*}
        \theta_0=&\sup_{x\in S}\{x\}\\
        \leq &\sup_{x\in S}\{x_{u_2}\;|\; \text{ $x_{u_2}$ is the rank of the third vertex in $\M(\vecx_{\shortminus u^*}(x))\oplus \M(\vecx_{\shortminus u^*})$}\}<\theta_0,
    \end{align*}
    which is a contradiction. The last $<$ inequality is because each $x_{u_2}$ is the rank of some vertex in the graph, hence the collection of $x_{u_2}$ is a finite set. Since each $x_{u_2}<\theta_0$, their supremum cannot be $\theta_0$.
\end{proof}
These two facts help us better bound the value of the compensations received when $u$ and $u^*$ are unmatched. In the bounding function, we often need to bound $h(y,z)$ where $y$ or $z$ is the rank of the third vertex $u_2$ in the alternating path.
\paragraph{Preliminary Bounds for $(x_u,x_v,\bot)$}
We first provide preliminary bounds for the expected gain of $u$ and the expected gain of $u^\ast$ individually. In this system of bounds, we effectively assume that alternating paths have length $2$ or $4$, thus collecting $1$ or $2$ copies of compensation when a vertex is unmatched. We will later provide a refined analysis where alternating paths with length $6$ are also taken into account. Theoretically, we could further consider cases where alternating paths have length $>6$, but as we take longer alternating paths into consideration, the system of inequalities becomes more intricate and is already very complex with length $6$. So we will stop at length-$6$ alternating paths.

We start the preliminary case by lower bounding the expected gain of $u$. By the definition of $\theta_0$, we know that when $u^*$ is inserted at $x>\theta_0$, $u$ is matched to a vertex at least as good as $v$. If $x<\theta_0$, when $u$ does not have a backup, in the worst-case $u$ will be unmatched due to the introduction of $u^\ast$. We show that: 
\begin{claim}
\phantomsection
\label{fact:lowerbound_u_nobackup}
    Conditioning on the event that the profile of $u$ equals $(x_u,x_v,\bot)$:
$$
\ev_x[\text{gain}(u)] \ge
\left\{
\begin{array}{lr}
  \int_0^{\theta_0}h(x,x_{\text{match of $u^*$}})\; dx\;\;+\; \\
  (1-\theta_0)\cdot g_B(x_u,x_v),
  & \text{if } x_u > x_v \text{ and } \theta_0 = x_u,\\[8pt]
  \int_0^{\theta_0}h(x,x_{\text{match of $u^*$}})\; dx\;+\;\theta_0\cdot  h(x_u, x_v)\; \\
  + (1-\theta_0)\cdot g_B(x_u,x_v), 
  & \text{otherwise.}
\end{array}
\right.
$$
\end{claim}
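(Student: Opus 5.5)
Fix $\vecx_{\shortminus u^\ast}$ with profile $(x_u,x_v,\bot)$ and split the integral over $x\sim U[0,1]$ at the impacting rank $\theta_0$. For $x>\theta_0$, \cref{def:theta_0_ranking} says $u$ is not made worse off, so $u$ is matched in $\M(\vecx_{\shortminus u^\ast}(x))$ to a vertex of rank $\le x_v$. Since $u$ is a buyer, \cref{fact:monotonicity_of_gains} gives $\text{gain}(u)\ge g_B(x_u,x_v)$ (the compensation term in $g_B$ covers the case where $u$ also pays). Integrating over $(\theta_0,1]$ gives $(1-\theta_0)g_B(x_u,x_v)$.

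For $x<\theta_0$ we would take the worst case that $u$ is made worse off. Because $u$ has no backup, \cref{facf:backup_ranking} says $u$ is then unmatched. By \cref{fact:assume_unmatched_worst_case}, this is also a valid lower bound whenever $u$ might instead stay matched, since it collects at most two copies. So in this range we only need to lower-bound the compensation $u$ receives. Let $P=(u^\ast,w,u_2,\dots,u_k=u)$ be the alternating path $\M(\vecx_{\shortminus u^\ast}(x))\oplus\M(\vecx_{\shortminus u^\ast})$. By \cref{lem:victim_querycommit_case_u}, every even-indexed $u_i$ with $i<k$ is a blocker of $u$ in $\vecx_{\shortminus u^\ast}(x)$.

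The blocker $u_0=u^\ast$ is matched to $w$ and pays $h(x,x_w)$. This gives the term $\int_0^{\theta_0}h(x,x_{\text{match of }u^\ast})\,dx$, where the integrand is $0$ at any $x$ for which $u$ is not actually worse off (consistent with taking the worst case there). In the ``otherwise'' case we need a second copy. By \cref{fact:theta_0}, property 2, if $0<\theta_0<x_u$ then $w\neq v$, so $k\ge4$ and $u_2$ is a genuine blocker distinct from $u^\ast$. The remaining sub-case is $x_u\le x_v$ with $\theta_0=x_u$; there, if $w=v$, the path is $(u^\ast,v,u)$. We would handle it separately, or argue that $\theta_0=x_u$ with $w=v$ forces $x_u>x_v$ via \cref{monoto_before_backup_ranking}, since $u$ must be the backup of $v$, which needs $x_v<x_u$. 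This last point is exactly why the first case is restricted to $x_u>x_v$ and $\theta_0=x_u$.

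We then need $h(x_{u_2},x_{u_3})\ge h(x_u,x_v)$. Here $u_2$ is matched to $u_3$ in $\M(\vecx_{\shortminus u^\ast}(x))$. By \cref{fact:monotonicity_for_alt_path_ranking}, $x_{u_2}<x_{u_4}<\dots<x_u$ along the even indices, and $x_{u_3}<\dots<x_{u_{k-1}}=x_v$ along the odd ones; in particular $x_{u_3}\le x_v$. Since $h$ is decreasing in its first argument and increasing in its second, these give $h(x_{u_2},x_{u_3})\ge h(x_u,x_{u_3})$. The main obstacle is the second argument, because the inequality on ranks goes the wrong way: $x_{u_3}\le x_v$ makes $h(x_u,x_{u_3})\le h(x_u,x_v)$. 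I would first check whether the ``otherwise'' term intends the lower bound to be charged in a different direction, or whether the relevant copy should be indexed by the last blocker $u_{k-2}$, whose match is $u_{k-1}=v$ itself. Indeed $u_{k-2}$ is matched to $u_{k-1}=v$ in $\M(\vecx_{\shortminus u^\ast}(x))$ (edge $(u_{k-2},u_{k-1})$ with $k-2$ even), and $x_{u_{k-2}}<x_u$. That yields $h(x_{u_{k-2}},x_v)\ge h(x_u,x_v)$ directly. Using the blocker $u_{k-2}$, which differs from $u^\ast$ exactly when $k\ge4$, settles the second copy. Integrating over $[0,\theta_0)$ gives $\theta_0 h(x_u,x_v)$, and summing the two ranges yields the claim.
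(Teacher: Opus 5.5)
Your final route is the paper's proof: split at $\theta_0$, collect $(1-\theta_0)\,g_B(x_u,x_v)$ above it, and treat $u$ as unmatched below it. There you credit one copy from $u^\ast$ and one from $u_{k-2}$, which is matched to $u_{k-1}=v$ and satisfies $x_{u_{k-2}}<x_u$. You were right to drop $u_2$, since its match only satisfies $x_{u_3}\le x_v$.

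The step that fails as written is excluding $u_{k-2}=u^\ast$ (equivalently $w=v$) in the sub-case $x_u<x_v$, $\theta_0=x_u$. There \cref{fact:theta_0}, property 2, does not apply. Your reason, that $u$ being the backup of $v$ ``needs $x_v<x_u$'', is not what the backup relation gives. If $w=v$, then $v$ is matched to $u^\ast$ (rank $x$) with backup $u$, and \cref{facf:backup_ranking} yields only $x<x_u$. That compares the match and the backup of $v$, and says nothing about $x_v$ versus $x_u$. Likewise, \cref{monoto_before_backup_ranking} only lets you slide $u^\ast$ up towards $x_u$, which is consistent with $\theta_0=x_u$ rather than contradicting it.

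The missing ingredient is \cref{fact:matching_guarantee_ranking}, property 1, which is what the paper uses. If $x_u<x_v$, then every $x<\theta_0\le x_u$ satisfies $x<x_v$. So $u^\ast$ is matched in $\M(\vecx_{\shortminus u^\ast}(x))$ to a vertex of rank at most $x_u<x_v$, hence $w\ne v$ and $k\ge 4$. This settles the whole branch $x_u<x_v$, whether $\theta_0<x_u$ or $\theta_0=x_u$. So \cref{fact:theta_0}, property 2, is needed only when $x_v<x_u$ and $0<\theta_0<x_u$. That is precisely why the exceptional case in the statement is $x_u>x_v$ with $\theta_0=x_u$.

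A minor point: at ranks $x<\theta_0$ where $u$ is not actually worse off, you need not set the $u^\ast$-integrand to $0$. \cref{fact:assume_unmatched_worst_case}, which you already invoke, lets you credit both copies there, and that is how the term is used later in \cref{G_ranking_nobackup}.
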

\begin{proof}
    By the definition of $\theta_0$, when $x\in(\theta_0,1]$, $u$ is not made worse off. Hence, it is matched to a vertex of rank at most $x_v$. So $u$ gets at least $g_B(x_u,x_v)$ amount of gain by the monotonicity property of gains (\cref{fact:monotonicity_of_gains}).

    When $x\in [0,\theta_0)$, $u$ can be unmatched by $u^\ast$. By \cref{fact:assume_unmatched_worst_case}, we derive the lower bound assuming $u$ is unmatched.  When $u$ is unmatched, the alternating path $\M(\vecx)\oplus \M(\vecx_{\shortminus u^\ast})$ is some path $u_0,\ldots,u_k$ such that $u_0=u^\ast$ and $u_k=u$. We look at the last three nodes, $u_{k-2},\; u_{k-1}=v,$ and $ u_k=u$. \cref{lem:victim_querycommit_case_u} shows that $u$ is the victim of $u_{k-2}$ and $u^\ast$.

As the match of $u_{k-2}$ in $\M(\vecx)$ is $v$, $u$ receives $h(x_{u_{k-2}}, x_v)$ amount of compensation from $u_{k-2}$. Notice that $(u_{k-2},v)$ is queried before $(u,v)$, which happens iff $x_{u_{k-2}}<x_u$. So the compensation $u$ received from $u_{k-2}$ is at least $h(x_u,x_v)$ by the monotonicity constraints of $h$ ($h$ monotonically decreases in the first coordinate). $u$ also receives at least $h(x, x_{\text{match of }u^*})$ amount of compensation from $u^\ast$.

To exclude the double counting case where $u_{k-2}=u^*$, we show that $u_{k-2}\neq u^\ast$ unless $x_u>x_v$ and $\theta_0=x_u$. Assume W.L.O.G. that $x_u\neq x_v$. If $x_u<x_v$, then $x\leq\theta_0\leq x_u< x_v$ implies $u^\ast$ is matched to some vertex with rank $\leq x_u<x_v$ in $\M(\vecx)$ by \cref{fact:matching_guarantee_ranking}, property $1$. Hence, $u_{k-2}\neq u^\ast$, as $u_{k-2}$ is matched to $v$ (with rank $x_v$). If $x_u>x_v$ and $\theta_0<x_u$, by \cref{fact:theta_0}, property $2$, $u^\ast$ is not matched to $v$ and hence $u^\ast\neq u_{k-2}$. 
\end{proof}
We now approximate the minimum expected gain of $u^\ast$. We split it into four cases by the relative order of $\theta_0,x_u,$ and $x_v$. We show that:
\begin{claim}
\phantomsection
\label{fact:lowerbound_u_star_nobackup}
        Conditioning on the event that profile of $u= (x_u,x_v,\bot)$:
        $$\ev_x[\text{gain}(u^\ast)] \ge \begin{cases}
  \int_0^{x_v} g_P(x_u,x) \;dx \;+\; (1-x_v)\cdot(h(x_u,\theta_0)+h(x_v,x_u)), & \text{if } \theta_0\leq x_u<x_v,\\[8pt]
    \int_0^{\theta_0} g_P(x_v,x) \;dx \;+\;\int_{\theta_0}^{x_v} g_P(x_u,x) \;dx \; \\
    \quad+\;(1-x_v)\cdot(h(x_v,\theta_0)+h(x_v,x_u)), & \text{if } \theta_0< x_v<x_u,\\[8pt]
  \int_0^{\theta_0} g_P(x_v,x) \;dx \;+\; (1-\theta_0)\cdot(h(x_v,\theta_0)+h(x_v,x_u)), & \text{if }  x_v\leq \theta_0<x_u,\\[8pt]
    \int_0^{\theta_0} g_P(x_v,x) \;dx \;+\; (1-\theta_0)\cdot h(x_v, x_u), & \text{if } x_v<\theta_0=x_u.\\
\end{cases}$$
\end{claim}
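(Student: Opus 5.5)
We split the range $x\in[0,1]$ of $u^\ast$'s rank into pieces according to the relative order of $\theta_0,x_u,x_v$, and on each piece we invoke a matching guarantee for $u^\ast$ together with \cref{fact:monotonicity_of_gains}. On pieces where $u^\ast$ may be unmatched, we instead collect compensation via \cref{lem:victim_querycommit_case_u_star} and \cref{claim:victim_when_no_alt_path}. \Cref{fact:assume_unmatched_worst_case} lets us treat ``possibly unmatched'' as the worst case, since we only ever count at most two copies of compensation.

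\textbf{Matched pieces.} For $x<\theta_0$, $u$ is made worse off at some ranks, and since $u$ has no backup it becomes unmatched (\cref{facf:backup_ranking}). By \cref{fact:matching_guarantee_ranking} property 3, $u^\ast$ is matched to a vertex of rank $\le x_v$, which gives $g_P(x_v,x)$. For $x\in[\theta_0,x_v)$ with $x<x_v$, property 1 gives a match of rank $\le x_u$, hence $g_P(x_u,x)$. In the first case, $\theta_0\le x_u<x_v$, this already covers $[0,x_v)$, because $g_P(x_u,\cdot)\le g_P(x_v,\cdot)$ by the monotonicity constraints. The subtle point is justifying the bound at ranks $x<\theta_0$ that lie outside $S$. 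I would use property 4 of \cref{fact:matching_guarantee_ranking}: such an $x$ lies below points of $S$ (near $\theta_0$, by \cref{fact:theta_0_II} property 2), so the match of $u^\ast$ at $x$ is at least as good as its match at those points, which is of rank $\le x_v$.

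\textbf{Unmatched tail.} For $x\ge x_v$ (or $x\ge\theta_0$ in the last two cases), $u^\ast$ may be unmatched. There are two sources of compensation.
\begin{itemize}
\item \emph{The match $v$ of $u$.} Since $u$ has no backup in $\vecx_{\shortminus u^\ast}$ and $u^\ast\in N(u)$, \cref{claim:victim_when_no_alt_path} makes $v$ a blocker of $u^\ast$. Here $v$ is matched to $u$, so the compensation is $h(x_v,x_u)$.
\item \emph{A second blocker from the alternating path.} Pick $x'\in(\theta_0^-,\theta_0)$ and apply \cref{lem:victim_querycommit_case_u_star} with $L$ the list at $x'$ and $L'$ the list at the current $x$. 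The third hypothesis is vacuous because $u$ has no backup. Then every odd vertex before $u$ on the path at $x'$ is a blocker of $u^\ast$; in particular $u_1=w$, the match of $u^\ast$ at $x'$. By \cref{fact:theta_0_II} property 2, $w$ is matched in $\vecx$ to $u_2$ of rank exactly $\theta_0$. Hence $w$ pays $h(x_w,\theta_0)$, and we need $x_w\le x_u$ or $x_w\le x_v$ to get the stated bound.
\end{itemize}

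\textbf{Bounding $x_w$ and checking distinctness.} In the case $\theta_0\le x_u<x_v$, property 1 gives $x_w\le x_u$, so the payment is at least $h(x_u,\theta_0)$. In the middle two cases, property 3 gives $x_w\le x_v$, so the payment is at least $h(x_v,\theta_0)$. We also need $w\ne v$ so that these are two distinct blockers. When $\theta_0<x_u$ this follows from \cref{fact:theta_0} property 2; when $x_u<x_v$ it follows from the rank bound $x_w\le x_u<x_v$.

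\textbf{Degenerate case $\theta_0=x_u$ with $x_v<x_u$.} Here $w$ may equal $v$, so only the single copy $h(x_v,x_u)$ is claimed. This matches the last line of the statement.

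The main obstacle is the applicability of \cref{lem:victim_querycommit_case_u_star} over the whole tail. Its hypotheses require that $u^\ast$ be unmatched in $L'$ and that the list at $x'$ make $u$ worse off. The first is exactly the scenario in which compensation is needed; if $u^\ast$ is matched instead, \cref{fact:assume_unmatched_worst_case} covers it. The second holds by \cref{fact:theta_0_II}. What remains is to check carefully the boundaries $x=\theta_0$ and $x=x_v$, which have measure zero, and the degenerate cases with $\theta_0=0$.
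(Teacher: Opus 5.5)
Your proposal follows the paper's proof essentially step for step. On $[0,\theta_0)$ you use property 3 together with property 4 of \cref{fact:matching_guarantee_ranking}, and on $[\theta_0,x_v)$ you use property 1. On the unmatched tail you collect from two blockers: $w=u_1$, obtained from \cref{lem:victim_querycommit_case_u_star} at a rank $x'\in(\theta_0^-,\theta_0)$ and paying $h(\cdot,\theta_0)$ by \cref{fact:theta_0_II}, and $v$, paying $h(x_v,x_u)$. Distinctness $w\neq v$ comes from \cref{fact:theta_0} or from the rank bound. Getting $v$ as a blocker from \cref{claim:victim_when_no_alt_path} rather than as the odd vertex $u_{k-1}$ of the path is harmless; the paper uses that fact itself for the degenerate case $\theta_0=0$.

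There is one slip to fix, in the first case $\theta_0\le x_u<x_v$: the inequality you cite is backwards. Write $g_P(a,x)=g(a,x)-h(x,a)$. Here $g$ is decreasing in its first argument and $h$ is increasing in its second (\cref{def:function_constraint_ranking}), so $g_P(a,x)$ is decreasing in $a$. Hence $x_u<x_v$ gives $g_P(x_u,\cdot)\ge g_P(x_v,\cdot)$, not $\le$. So the property-3 bound $g_P(x_v,x)$ on $[0,\theta_0)$ cannot be upgraded to $g_P(x_u,x)$.

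You do not need that upgrade. Property 1 has no hypothesis $x\ge\theta_0$, so for every $x<x_v$ it directly gives a match of rank $\le x_u$. That yields $g_P(x_u,x)$ on all of $[0,x_v)$, which is exactly how the paper argues this case. With that correction, the rest of your argument matches the paper's.
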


\begin{proof}
    Recall that by \cref{fact:theta_0}, we have $\theta_0\leq x_u$. Further, W.L.O.G., we can assume that $x_u\neq x_v$. So the above four cases cover all possible orderings of $\theta_0,x_u,x_v.$
    
    \noindent\textbf{Case $\theta_0\leq x_u<x_v$:} By \cref{fact:matching_guarantee_ranking}, $u^\ast$ is matched to a vertex of rank $\leq x_u$ when $x\in[0,x_v)$, so we have the integral part. For $x\in(x_v,1]$, we again assume $u^\ast$ is unmatched to calculate the worst-case gain. Let $y\in (\theta_0^-,\theta_0)$ be some rank in a small enough left neighborhood of $\theta_0$. In this case, by \cref{fact:theta_0_II}, $u$ is made worse off in $\M(\vecx_{\shortminus u^*}(y))$.\footnote{When $\theta_0=0$, $\theta_0^-$ is not well defined. But since we set $h(x,0)=0$ for all $x$, we only have to show $u^\ast$ is the victim of $v$ to get $h(x_u,x_v)$ amount of compensation. By \cref{claim:victim_when_no_alt_path}, this is true for all the four cases.}. Let $u_0,...u_k$ be the alternating path $\M(\vecx_{\shortminus u^*}(y))\oplus \M(\vecx_{\shortminus u^*})$ where $u_0=u^\ast,u_1=w, u_{k-1}=v$ and $u_k=u$. By \cref{lem:victim_querycommit_case_u_star}, when $u^\ast$ is demoted to $x\in(x_v,1]$, if it is unmatched, then it is the victim of $w$ and $v$. $v$ is matched to $u$ in this case, hence it gives $h(x_v, x_u)$ amount of compensation to $u^*$. 
    
    For $w$ to pay $h(x_u,\theta_0)$ amount of compensation, we show that its rank is $\leq x_u$ and its match $u_2$ in $\M(\vecx_{\shortminus u^*}(x))$ has rank $\geq \theta_0$. By \cref{fact:theta_0_II}, property 2, and the choice of $y$, $u_2$ has rank exactly $\theta_0$. Since $y<\theta_0\leq x_u<x_v$ with $w$ being the match of $u^*$ in $\M(\vecx_{\shortminus u^*}(y))$, by \cref{fact:matching_guarantee_ranking}, property 1, $w$ has rank $\leq x_u$.
    
    By the same line of reasoning as in the proof for \cref{fact:lowerbound_u_nobackup}, $w\neq v$ when $\theta_0\leq x_u<x_v$, so we are not double counting when we add two copies of compensation.

    \noindent\textbf{Case $\theta_0< x_v<x_u$:} Again, fix $y\in (\theta_0^-,\theta_0)$ as described in \cref{fact:theta_0_II}, property 2. Since $u$ is made worse off in $\M(\vecx_{\shortminus u^*}(y))$, by \cref{fact:matching_guarantee_ranking}, property $3$, $u^\ast$ is matched to a vertex of rank $\leq x_v$. Hence, by property $4$ of \cref{fact:matching_guarantee_ranking}, $u^*$ is matched to a vertex of rank $\leq x_v$ in the whole segment $x\in[0,\theta_0)$. 
    
    When $x\in(\theta_0,x_v)$, it is matched to a vertex of rank $\leq x_u$ by \cref{fact:matching_guarantee_ranking}, property $1$. For $x\in(x_v,1]$, $u^\ast$ can be unmatched, by the same line of reasoning as in the previous case, $u^\ast$ receives compensation from $w,v$ where $\M(\vecx_{\shortminus u^*}(y))\oplus \M(\vecx_{\shortminus u^*})=(u^*,w,u_2,...,v,u)$. We now upper bound the rank of $w$ by property $2$ of \cref{fact:matching_guarantee_ranking}. Since $w$ is the match of $u^*$ in $\M(\vecx_{\shortminus u^*}(y))$ in which $u$ is made worse off, $w$ will have rank $\leq x_v$. By our choice of $y$, similarly as before, we have $x_{u_2}=\theta_0$. Hence $w$ will send at least $h(x_v,\theta_0)$ amount of compensation to $u^*$. We are again not double counting by \cref{fact:theta_0}, as $w\neq v$ when $\theta_0<x_u$.

    \noindent\textbf{Case $x_v\leq \theta_0<x_u$:} This is almost the same as the previous case. Now since $\theta_0\geq x_v$, the interval $(\theta_0,x_v)$ does not exist. $u^\ast$ still receives two copies of compensation as $\theta_0$ is still $<x_u$.
    
    \noindent\textbf{Case $x_v<\theta_0=x_u$:} Most of the bounds follow the same line of reasoning as the previous case. Except that now we can no longer assume $w\neq v$, so we only collect one copy of compensation from $v$ for $u^*$ when it is unmatched in $(\theta_0,1]$.
\end{proof}
We now define a preliminary lower bound function $G$ when profile of $u=(x_u,x_v,\bot)$:
\begin{claim}
\phantomsection
\label{G_ranking_nobackup}
    For profile of $u=(x_u,x_v,\bot)$, we define $G(x_u,x_v,\bot)$ as
\begin{align*}
&\min_{\theta_0\leq x_u}\left\{
    \qquad\qquad\int_0^{x_v} g_P(x_u,x)\,dx
    +(1-x_v)\cdot\bigl(h(x_u,\theta_0)+h(x_v,x_u)\bigr)\right. \\
&\qquad\qquad\qquad\quad\quad
    +\int_0^{\theta_0} h(x,x_u)\,dx
    +\theta_0\cdot\,h(x_u,x_v)
    +(1-\theta_0)\cdot\,g_B(x_u,x_v)
&(1)\Biggr\}
&\text{, (if } x_u<x_v), \\[1em]
&\min_{\theta_0\leq x_u}\Biggl\{
    \;\;\inf_{\theta_0<x_u}\;\;\Biggl[\;
        \int_0^{\theta_0} g_P(x_v,x)\,dx
        +\int_{\theta_0}^{\max\{x_v,\theta_0\}} g_P(x_u,x)\,dx \\
&\qquad\qquad\qquad\qquad
        +(1-\max\{x_v,\theta_0\})\cdot\bigl(h(x_v,\theta_0)+h(x_v,x_u)\bigr) \\
&\qquad\qquad\qquad\qquad
        +\int_0^{\theta_0} h(x,x_v)\,dx
        +\theta_0\cdot\,h(x_u,x_v)
        +(1-\theta_0)\cdot\,g_B(x_u,x_v)
    \Biggr], &(2)\phantom{\Biggr\}}& \\[0.75em]
&\qquad\qquad
    \min_{\theta_0=x_u}\;\;\Biggl[\;
        \int_0^{\theta_0} g_P(x_v,x)\,dx
        +(1-\theta_0)\cdot\,h(x_v,x_u) \\
&\qquad\qquad\qquad\qquad
        +\int_0^{\theta_0} h(x,x_v)\,dx
        +(1-\theta_0)\cdot\,g_B(x_u,x_v)
    \;\Biggr]
&(3)\Biggr\}
&\text{, (if } x_v<x_u).
\end{align*}
\end{claim}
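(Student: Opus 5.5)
The plan is to combine \cref{fact:lowerbound_u_nobackup} and \cref{fact:lowerbound_u_star_nobackup}. For each fixed $\vecx_{\shortminus u^\ast}$ with profile $(x_u,x_v,\bot)$, the impacting rank $\theta_0$ is a fixed number in $[0,x_u]$ (\cref{fact:theta_0}). Adding the two claims gives a lower bound on $\ev_x[\text{gain}(u)+\text{gain}(u^\ast)]$ that depends only on $x_u,x_v,\theta_0$, except for the term $\int_0^{\theta_0}h(x,x_{\text{match of }u^\ast})\,dx$. Since $\theta_0$ is not determined by the profile, we then take the minimum over all admissible $\theta_0$. This gives the outer $\min_{\theta_0\le x_u}$, with an infimum over $\theta_0<x_u$ and a separate $\theta_0=x_u$ branch in the case $x_v<x_u$.

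The first step is to remove the dependence on the match of $u^\ast$ in the compensation $u$ receives. For $x<\theta_0$, $u$ may be made worse off. By \cref{fact:matching_guarantee_ranking}, property 4, the rank of $u^\ast$'s match at $x$ is bounded by its rank at points of $S$ near $\theta_0$, and property 3 (together with property 1 when $x<x_v$) bounds it accordingly.

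\begin{itemize}
\item If $x_u<x_v$: for $x\le\theta_0\le x_u<x_v$, property 1 gives a match of rank $\le x_u$. Since $h$ is increasing in its second coordinate, $h(x,x_{\text{match}})$ could only be bounded above this way, not below. So I would instead argue the worst case directly: where $u$ is unmatched and we count compensation from $u^\ast$, $u^\ast$'s match is $u_1$ and $u_2$ has rank $>x$. Because $h(x,0)=0$ and $h$ is monotone, the claimed terms $h(x,x_u)$ and $h(x,x_v)$ should be derivable lower bounds. Checking this sign issue carefully is the delicate point.
\end{itemize}

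More concretely, I would revisit the proof of \cref{fact:lowerbound_u_nobackup} and use $u^\ast$'s blocker pair $(u^\ast,u_1)$. In the case $x_u<x_v$ I would try to show $x_{u_1}\ge x_u$: an earlier match would let $u$ keep $v$ via property 1 of the alternating-path monotonicity. In the case $x_v<x_u$ I would try to show $x_{u_1}\ge x_v$, using \cref{fact:monotonicity_for_alt_path_ranking}, which gives $x_{u^\ast}<x_{u_2}<\dots<x_{u_{k-1}}=x_v$, combined with the lexicographic query order. This yields $\int_0^{\theta_0}h(x,x_u)\,dx$, respectively $\int_0^{\theta_0}h(x,x_v)\,dx$.

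With that in place, I would sum the bounds case by case.
\begin{itemize}
\item $x_u<x_v$ falls under the first case of \cref{fact:lowerbound_u_star_nobackup} and the ``otherwise'' case of \cref{fact:lowerbound_u_nobackup}, giving expression (1).
\item $x_v<x_u$ with $\theta_0<x_u$ merges the second and third cases of \cref{fact:lowerbound_u_star_nobackup} via $\max\{x_v,\theta_0\}$, giving (2).
\item $\theta_0=x_u>x_v$ uses the fourth case together with the first case of \cref{fact:lowerbound_u_nobackup}, giving (3).
\end{itemize}

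The main obstacle is justifying the lower bound on the rank of $u^\ast$'s match, used to replace $x_{\text{match of }u^\ast}$ by $x_u$ or $x_v$. Everything else is bookkeeping plus taking the worst case over the unknown $\theta_0$.
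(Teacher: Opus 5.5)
Your case bookkeeping is right. You correctly match the cases of \cref{fact:lowerbound_u_nobackup} and \cref{fact:lowerbound_u_star_nobackup} to (1), (2), (3), and you merge via $\max\{x_v,\theta_0\}$. You also correctly see that $h(x,x_{\text{match of }u^\ast})$ cannot be bounded below by $h(x,x_u)$ or $h(x,x_v)$ using upper bounds on that rank. But your proposed fix fails. The inequalities $x_{u_1}\ge x_u$ (when $x_u<x_v$) and $x_{u_1}\ge x_v$ (when $x_v<x_u$) are false exactly where you need them.
\begin{itemize}
\item If $u$ is unmatched at $x<\theta_0\le x_u<x_v$, then $u_1\neq u$. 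Property 1 of \cref{fact:matching_guarantee_ranking} then gives $x_{u_1}<x_u$.
\item If $x_v<x_u$ and $\theta_0<x_u$, then $u_1\neq v$ by \cref{fact:theta_0}. \Cref{fact:monotonicity_for_alt_path_ranking} compares $u_i$ with $u_{i+2}$, i.e.\ vertices of the same parity, so your chain $x_{u^\ast}<x_{u_2}<\dots<x_{u_{k-1}}$ mixes parities. Applied correctly, it gives $x_{u_1}<x_{u_3}<\dots<x_{u_{k-1}}=x_v$.
\end{itemize}
Nothing bounds $x_{u_1}$ from below except $0^+$. So the compensation $u$ receives from $u^\ast$ can, by itself, be as small as $h(x,0^+)$.

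The missing idea is the paper's: never bound that compensation on its own. Instead, pair it with $u^\ast$'s gain at the same rank $x\in[0,\theta_0)$. When $u$ is unmatched there, $u^\ast$ is matched to $w=u_1$, and $u$ is its unique victim. So $u^\ast$ pays exactly what $u$ receives:
\[
\text{gain}(u^\ast)+h(x,x_w)=g(x_w,x)-h(x,x_w)+h(x,x_w)=g(x_w,x).
\]
By constraint 2 of \cref{def:function_constraint_ranking}, this is decreasing in $x_w$. So the upper bounds you already have suffice:
\begin{itemize}
\item $x_w\le x_u$ from property 1 when $x_u<x_v$ gives $g(x_u,x)=g_P(x_u,x)+h(x,x_u)$;
\item $x_w\le x_v$ from properties 3--4 when $x_v<x_u$ gives $g(x_v,x)=g_P(x_v,x)+h(x,x_v)$.
\end{itemize}
If instead $u$ is matched at such $x$, constraints 4--5 give $\text{gain}(u)\ge 4h(0^+,1)$. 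This absorbs $h(x,x_w)+h(x_u,x_v)$ and yields the same bound. Thus $\int_0^{\theta_0}h(x,x_u)\,dx$ and $\int_0^{\theta_0}h(x,x_v)\,dx$ in $G$ are valid only together with the $g_P$ integrals over $[0,\theta_0)$, not as separate bounds on $u$'s compensation. This is also why the discretization merges them into $\sum_j g(i_v,j)$.
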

\begin{proof}
    This is a combination of \cref{fact:lowerbound_u_nobackup} and \cref{fact:lowerbound_u_star_nobackup}, taking the worst-case scenario for $\theta_0\leq x_u$. We combined the case $\theta_0<x_v<x_u$ and $x_v\leq\theta_0<x_u$ into one case $\theta_0<x_u$. The expected gain of $u^*$ (\cref{fact:lowerbound_u_star_nobackup}), for the above two cases, can be expressed as the unified expression
    $$\int_0^{\theta_0}g_P(x_v,x)dx +\int_{\theta_0}^{\max\{\theta_0,x_v\}}g_P(x_u,x)dx+(1-\max\{\theta_0,x_v\})\cdot(h(x_v,\theta_0)+h(x_v,x_u)).$$
     The expected gain of $u$ for the above two cases, on the other hand, has the same expression.
    
    We also replaced $h(x,x_{\text{match of $u^*$}})$ in \cref{fact:lowerbound_u_nobackup} with $h(x,x_u)$ and $h(x,x_v)$ as we assume that $u^*$ is matched to a vertex of rank $\leq x_u$ and $\leq x_v$, respectively in each of the cases $x_u<x_v$ and $x_v<x_u$. In the prior case, $u^*$ in $x\in[0,\theta_0)$ is matched to a vertex $w$ of rank $x_w \in [0,x_u)$. The gain of $u^*$ plus compensation received by $u$ from $u^*$ (assuming $u$ unmatched), is at least
    $$1-g(x_w,x)-h(x,x_w)+h(x,x_w)=1-g(x_w,x)$$
    which is a function that monotonically decreases as $x_w$ increases, so we can effectively lower bound the three parts together, assuming the worst-case match of $u^*$ in this range is at rank $x_u$. Similar reasoning applies to the latter case as well.
\end{proof}
\subsection{Lower Bound for $(x_u,x_v,x_b)$}
We define $\theta_0$ the same way as in \cref{def:theta_0_ranking} and bound the expected gains of $u$ and $u^\ast$ individually, starting from $u$. 
\begin{claim}
\phantomsection
\label{lowerbound_u_backup_ranking}
    Conditioning on the event profile of $u= (x_u,x_v,x_b)$:
$$\ev_x[\text{gain}(u)] \ge
  \theta_0\cdot g_B(x_u,x_b) \;+\; (1-\theta_0)\cdot g_B(x_u,x_v).$$   
\end{claim}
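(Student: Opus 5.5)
The plan is to mirror the proof of \cref{fact:lowerbound_u_nobackup}, splitting the range of $x$, the rank of $u^\ast$, at the impacting rank $\theta_0$ of \cref{def:theta_0_ranking}. Throughout, we fix $\vecx_{\shortminus u^\ast}$ with profile $(x_u,x_v,x_b)$, so $u$ is matched to $v$ in $\M(\vecx_{\shortminus u^\ast})$ and has backup $b$. By \cref{facf:backup_ranking} we also have $x_v<x_b$.

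\textbf{Step 1: the range $x\in(\theta_0,1]$.} By definition of $\theta_0$ as a supremum, $u$ is not made worse off in $\M(\vecx_{\shortminus u^\ast}(x))$ for such $x$. By the \ranking{} characterization of being worse off, $u$ is therefore matched to a vertex of rank at most $x_v$. Since $u$ is a buyer, \cref{fact:monotonicity_of_gains} gives $\text{gain}(u)\ge g_B(x_u,x_v)$. This bound already accounts for any compensation $u$ might pay as a blocker, since $g_B$ subtracts $h$. This range has measure $1-\theta_0$.

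\textbf{Step 2: the range $x\in[0,\theta_0)$.} Here $u$ may or may not be made worse off.
\begin{itemize}
\item If $u$ is not made worse off, it is matched to a vertex of rank at most $x_v<x_b$. Monotonicity of $g_B$ in its second argument (constraint 1 of \cref{def:function_constraint_ranking}, via \cref{fact:monotonicity_of_gains1}) then gives gain at least $g_B(x_u,x_v)\ge g_B(x_u,x_b)$.
\item If $u$ is made worse off, then \cref{facf:backup_ranking} (equivalently \cref{fact:backup_uniquely_describe_worse_off}) says $u$ is matched exactly to its backup $b$, so $u$ is still matched and not unmatched. \cref{fact:monotonicity_of_gains} again gives $\text{gain}(u)\ge g_B(x_u,x_b)$.
\end{itemize}
In both subcases the bound $g_B(x_u,x_b)$ holds, and this range has measure $\theta_0$.

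\textbf{Step 3: combine.} Integrating over $x\sim U[0,1]$, the single point $x=\theta_0$ has measure zero, so
\[
\ev_x[\text{gain}(u)]\ \ge\ \theta_0\, g_B(x_u,x_b)+(1-\theta_0)\, g_B(x_u,x_v).
\]

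The only point needing care is in Step 2. We must check that ``matched to a vertex of rank at most $x_v$'' (resp.\ at most $x_b$) indeed lower bounds the buyer's gain including any compensation paid. This is exactly what \cref{fact:monotonicity_of_gains1} provides, because $1-g(x_u,\cdot)-h(x_u,\cdot)$ is decreasing in the partner's rank. Unlike the no-backup case, no compensation received by $u$ needs to be counted, since the backup guarantees that $u$ is never unmatched. I therefore expect no real obstacle, and the claim should be essentially immediate from \cref{facf:backup_ranking}.
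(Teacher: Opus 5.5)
Your proof is correct and follows the paper's argument. You split $x$ at $\theta_0$, use \cref{facf:backup_ranking} to get a match of rank at most $x_b$ on $[0,\theta_0)$ and rank at most $x_v$ on $(\theta_0,1]$, and then apply \cref{fact:monotonicity_of_gains}. Two of your details are slightly more careful than the paper's one-line proof: you justify the range $(\theta_0,1]$ via the definition of $\theta_0$ rather than via \cref{fact:matching_guarantee_ranking}, which directly covers only $x>x_u$, and you treat the not-worse-off subcase on $[0,\theta_0)$ explicitly using $x_v<x_b$.
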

\begin{proof}
   By \cref{facf:backup_ranking}, $u$ is matched to a vertex with rank $\leq x_b$ when $x\in[0,\theta_0)$; and by \cref{fact:matching_guarantee_ranking}, $u$ is matched to a vertex with rank $\leq x_v$ when $x\in(\theta_0,1]$. The bound then follows from the monotonicity property of gains \cref{fact:monotonicity_of_gains}.
\end{proof}
For $u^\ast$, we split into four subcases by the relative order of $\theta_0,x_u$, and $x_v$.
\begin{claim}
\phantomsection
\label{lowerbound_u_star_backup_ranking}
        Conditioning on the event that the profile of $u= (x_u,x_v,x_b)$:
    $$\ev_x[\text{gain}(u^\ast)] \ge \begin{cases}
  \int_0^{x_v} g_P(x_u,x) \;dx \;+\;\max\{x_b-x_v,0\}\cdot(h(x_u,\theta_0)+h(x_v,x_u)), & \text{if } \theta_0\leq x_u<x_v,\\[8pt]
    \int_0^{\theta_0} g_P(x_v,x) \;dx \;+\;\int_{\theta_0}^{x_v} g_P(x_u,x) \;dx \\
    \quad+\; \max\{x_b-x_v,0\}\cdot(h(x_v,\theta_0)+h(x_v,x_u)), & \text{if } \theta_0< x_v<x_u,\\[8pt]
  \int_0^{\theta_0} g_P(x_v,x) \;dx \;+\; \max\{x_b-\theta_0,0\}\cdot(h(x_v,\theta_0)+h(x_v,x_u)), & \text{if }  x_v\leq \theta_0<x_u,\\[8pt]
    \int_0^{\theta_0} g_P(x_v,x) \;dx \;+\; \max\{x_b-\theta_0,0\}\cdot h(x_v,x_u), & \text{if } x_v<\theta_0=x_u.\\
\end{cases}$$
\end{claim}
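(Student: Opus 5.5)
The argument parallels the proof of \cref{fact:lowerbound_u_star_nobackup}. The only change is on the interval where $u^\ast$ may be unmatched. There, the victim argument of \cref{lem:victim_querycommit_case_u_star} applies only while $u$ prefers $u^\ast$ to its backup $b$, i.e.\ only for $x<x_b$. For $x>x_b$ we simply lower bound the gain of $u^\ast$ by $0$; no compensation is claimed there. Since $u$ is matched in $\M(\vecx_{\shortminus u^\ast})$, $\theta_0$ is well defined, and $\theta_0\le x_u$ by \cref{fact:theta_0}. We may also assume $x_u\neq x_v$. So the same four orderings of $\theta_0,x_u,x_v$ cover all cases.

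\textbf{The integral parts.} These are unchanged from the no-backup claim, because nothing in that argument used the absence of a backup.
\begin{itemize}
\item For $x<x_v$, property 1 of \cref{fact:matching_guarantee_ranking} shows that $u^\ast$ is matched to a vertex of rank $\le x_u$.
\item When $\theta_0<x_u$ or $x_v<x_u$, pick $y\in(\theta_0^-,\theta_0)$ from \cref{fact:theta_0_II}. Then $u$ is made worse off at $y$. By property 3, $u^\ast$ is matched at $y$ to a vertex of rank $\le x_v$. By property 4, the same holds on all of $[0,\theta_0)$.
\end{itemize}
Applying \cref{fact:monotonicity_of_gains} gives the terms $\int g_P(x_u,x)\,dx$ and $\int g_P(x_v,x)\,dx$ over the stated ranges. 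Here $g_P$ already deducts any compensation $u^\ast$ might pay.

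\textbf{The compensation parts.} Consider $x$ in the remaining range $(\max\{x_v,\theta_0\},1]$; in the first case this range is $(x_v,1]$. By \cref{fact:assume_unmatched_worst_case}, we may assume $u^\ast$ is unmatched.
\begin{itemize}
\item Apply \cref{lem:victim_querycommit_case_u_star} with $L=\vecx_{\shortminus u^\ast}(y)$ and $L'=\vecx_{\shortminus u^\ast}(x)$. Here $u^\ast\in N(u)$, $u$ is worse off in $L$, and $L'$ agrees with $L$ off $u^\ast$.
\item The lemma's extra hypothesis $(u,u^\ast)<(u,b)$ in $L'$ must be checked. In \ranking{} the pair $(u,u^\ast)$ is queried at $\min\{(x_u,x),(x,x_u)\}$ and $(u,b)$ at $\min\{(x_u,x_b),(x_b,x_u)\}$. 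Comparing lexicographically, $(u,u^\ast)$ comes first exactly when $x<x_b$.
\item Hence for $x\in(\max\{x_v,\theta_0\},x_b)$, $u^\ast$ is the victim of every odd-indexed vertex before $u$ on the path $(u^\ast,w,u_2,\dots,v,u,b,\dots)$. In particular it is the victim of $w$ and of $v$.
\end{itemize}
The measure of this interval is $\max\{x_b-\max\{x_v,\theta_0\},0\}$. In the first two cases this equals $\max\{x_b-x_v,0\}$, and in the last two cases $\max\{x_b-\theta_0,0\}$.

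\textbf{Sizes of the compensations.}
\begin{itemize}
\item The compensation from $v$ is $h(x_v,x_u)$, since $v$ is matched to $u$ in $\M(\vecx_{\shortminus u^\ast}(x))$ for $x>x_v$. This follows from property 4 together with $v$ being matched to $u$ when $u^\ast$ is absent, exactly as in the earlier proof.
\item The compensation from $w$ comes from bounding its rank and its match's rank. Its match $u_2$ has rank exactly $\theta_0$ by \cref{fact:theta_0_II}(2). Its own rank is $\le x_u$ (property 1) in the first case and $\le x_v$ (property 3) in the middle two cases. By the monotonicity of $h$, this gives at least $h(x_u,\theta_0)$ and $h(x_v,\theta_0)$ respectively.
\item To rule out double counting, note $w\neq v$ whenever $\theta_0<x_u$, by \cref{fact:theta_0}(2), or whenever $x_u<x_v$, by the rank argument. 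In the extreme case $\theta_0=x_u$ with $x_v<x_u$, only $v$'s copy is counted.
\end{itemize}

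\textbf{Main obstacle.} The delicate step is confirming that $w$ still has a match of rank $\theta_0$ in $L'$, i.e.\ that $w$ is matched to $u_2$ in $\M(\vecx_{\shortminus u^\ast}(x))$ after $u^\ast$ is demoted to $x$. I would obtain this from \cref{lem:victim_querycommit_case_u_star} combined with the alternating-path structure: $\M(L')$ agrees with $\M(L_{\shortminus u^\ast})=\M(\vecx_{\shortminus u^\ast})$ when $u^\ast$ is unmatched. The degenerate case $\theta_0=0$ is handled by $h(\cdot,0)=0$ together with \cref{claim:victim_when_no_alt_path}, which applies since $x<x_b$.
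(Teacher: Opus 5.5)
Your proposal is correct and follows the paper's proof. Like the paper, you reuse the argument of \cref{fact:lowerbound_u_star_nobackup} unchanged, except that \cref{lem:victim_querycommit_case_u_star} certifies $u^\ast$ as a victim only while $(u,u^\ast)<(u,b)$, i.e.\ for $x<x_b$, which shrinks the compensation interval to $(\max\{x_v,\theta_0\},x_b)$. One small attribution fix: that $v$ stays matched to $u$ (and $w$ to $u_2$) in $\M(\vecx_{\shortminus u^\ast}(x))$ follows directly from $\M(\vecx_{\shortminus u^\ast}(x))=\M(\vecx_{\shortminus u^\ast})$ whenever $u^\ast$ is unmatched, as you note at the end, not from property 4 of \cref{fact:matching_guarantee_ranking}.
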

\begin{proof}
 Compare \cref{lowerbound_u_star_backup_ranking} with \cref{fact:lowerbound_u_star_nobackup}, and notice that the only difference lies in terms related to the expected gain of $u^\ast$ in the range $x\in(\max\{x_v,\theta_0\},1]$, where $u^\ast$ is assumed to be unmatched. In \cref{fact:lowerbound_u_star_nobackup}, $u^*$ will receive compensation throughout this interval by \cref{lem:victim_querycommit_case_u_star}. The same lemma, applied to the case where $u$ has a backup $b$, can only guarantee that $u^\ast$ is a victim if $u^\ast$ is a better choice for $u$ compared to the backup $b$. This will happen if and only if $x<x_b$. Hence we can only assume $u^\ast$ receives compensation when $x\in (\max\{\theta_0,x_v\},x_b)$. This accounts for the transition from $(1-x_v)$ to $\max\{x_b-x_v,0\}$, and from $(1-\theta_0)$ to $\max\{x_b-\theta_0,0\}$. All the other terms are identical and follow from the same proof as \cref{fact:lowerbound_u_star_nobackup}.
\end{proof}
Similarly, we define $G$ for the profiles of $u$ when $u$ has a backup $b$.
\begin{claim}
\phantomsection
\label{G_ranking_backup}
For the profile of $u=(x_u,x_v,x_b)$ where $x_v,x_b\neq \bot$, we define $G(x_u,x_v,x_b)$ as
\begin{align*}
&\min_{\theta_0\leq x_u}\left\{
    \qquad\qquad\quad\int_0^{x_v} g_P(x_u,x) \;dx
    \;+\; \max\{x_b-x_v,0\}\cdot(h(x_u,\theta_0)+h(x_v,x_u))
\right. \\
&\qquad\qquad\qquad\qquad\;\;\;
    +\theta_0\cdot g_B(x_u,x_b)
    \;+\; (1-\theta_0)\cdot g_B(x_u,x_v)
    \phantom{\int}
&(1)\Biggr\}
&\ \text{if } x_u<x_v, \\[1em]
&\min_{\theta_0\leq x_u}\Biggl\{
    \quad\;\;\inf_{\theta_0<x_u}\;\;\Bigl[\;
        \int_0^{\theta_0} g_P(x_v,x) \;dx
        \;+\;\int_{\theta_0}^{\max\{x_v,\theta_0\}} g_P(x_u,x) \;dx
\Bigr. \\
&\qquad\qquad\qquad\qquad\;\;\;
        +\max\{x_b-\max\{x_v,\theta_0\},0\}\cdot(h(x_v,\theta_0)+h(x_v,x_u))
\\
&\qquad\qquad\qquad\qquad\;\;\;
        +\theta_0\cdot g_B(x_u,x_b)
        \;+\; (1-\theta_0)\cdot g_B(x_u,x_v)
    \Bigl.\;\Bigr],
&(2)\phantom{\Biggr\}}& \\[0.8em]
&\qquad\qquad\quad
    \min_{\theta_0=x_u}\;\;\Bigl[\;
        \int_0^{\theta_0} g_P(x_v,x) \;dx
        \;+\max\{x_b-\theta_0,0\}\cdot h(x_v,x_u)
\Bigr. \\
&\qquad\qquad\qquad\qquad\;\;\;\Bigl.
        +\theta_0\cdot g_B(x_u,x_b)
        \;+\; (1-\theta_0)\cdot g_B(x_u,x_v)
    \;\Bigr]
&(3)\Biggr\}
& \ \text{if } x_v<x_u.
\end{align*}
     We have 
    $$G(x_u,x_v,x_b)\leq \ev_{x}\left[\text{ gain}(u) + \text{gain}(u^\ast)\;|\; \text{profile of $u$}=(x_u,x_v,x_b)\right].$$
\end{claim}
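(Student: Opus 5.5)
The plan is to combine \cref{lowerbound_u_backup_ranking} and \cref{lowerbound_u_star_backup_ranking} by linearity of expectation, in the same way \cref{G_ranking_nobackup} combined \cref{fact:lowerbound_u_nobackup} with \cref{fact:lowerbound_u_star_nobackup}. Fix any $\vecx_{\shortminus u^\ast}$ with profile $(x_u,x_v,x_b)$, and let $\theta_0$ be its impacting rank. By \cref{fact:theta_0}, $\theta_0\le x_u$. Since ties have measure zero, we may assume $x_u\neq x_v$. Then
\[
\ev_x[\text{gain}(u)+\text{gain}(u^\ast)]=\ev_x[\text{gain}(u)]+\ev_x[\text{gain}(u^\ast)],
\]
and the first term is bounded below by $\theta_0\cdot g_B(x_u,x_b)+(1-\theta_0)\cdot g_B(x_u,x_v)$ for every ordering of $\theta_0,x_u,x_v$. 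Adding the appropriate case of \cref{lowerbound_u_star_backup_ranking} gives an explicit bound for the actual $\theta_0$. Since $\theta_0$ is unknown, we then take the minimum over all admissible $\theta_0\le x_u$, which yields a lower bound valid for every $\vecx_{\shortminus u^\ast}$ with this profile. Averaging over such $\vecx_{\shortminus u^\ast}$ then gives the conditional expectation bound.

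The case matching is as follows. When $x_u<x_v$, only the first case of \cref{lowerbound_u_star_backup_ranking} ($\theta_0\le x_u<x_v$) applies, and adding the gain bound for $u$ gives expression (1) verbatim.

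When $x_v<x_u$, split into $\theta_0<x_u$ and $\theta_0=x_u$. In the subcase $\theta_0<x_u$, the two cases $\theta_0<x_v<x_u$ and $x_v\le\theta_0<x_u$ of \cref{lowerbound_u_star_backup_ranking} merge into a single expression using $\max\{x_v,\theta_0\}$. If $\theta_0<x_v$, the integral $\int_{\theta_0}^{\max}$ equals $\int_{\theta_0}^{x_v}$ and the coefficient is $\max\{x_b-x_v,0\}$. If $\theta_0\ge x_v$, the integral vanishes and the coefficient becomes $\max\{x_b-\theta_0,0\}$. This gives (2). The subcase $\theta_0=x_u$ falls under the fourth case $x_v<\theta_0=x_u$ and gives (3). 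Writing $\inf$ over $\theta_0<x_u$ is harmless: we only need a lower bound, and the bound holds pointwise for the actual $\theta_0$.

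Unlike \cref{G_ranking_nobackup}, no compensation to $u$ from $u^\ast$ needs to be merged here. The bound for $u$ in \cref{lowerbound_u_backup_ranking} charges no compensation: $u$ is always matched, to a vertex of rank at most $x_b$ or at most $x_v$. So the combination really is a plain sum.

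The only step needing care is the degenerate case $\theta_0=0$, together with the well-definedness of $h(\cdot,\theta_0)$ when $\theta_0^-$ does not exist. This is already handled by the footnote in the proof of \cref{fact:lowerbound_u_star_nobackup}: $h(x,0)=0$, so only the compensation from $v$ is used, which \cref{claim:victim_when_no_alt_path} guarantees when $x<x_b$. With that checked, the claimed inequality $G(x_u,x_v,x_b)\le\ev_x[\text{gain}(u)+\text{gain}(u^\ast)\mid\text{profile}]$ follows.
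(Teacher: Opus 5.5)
Your proposal is correct and follows the paper's proof, which likewise adds \cref{lowerbound_u_backup_ranking} and \cref{lowerbound_u_star_backup_ranking}, takes the worst case over $\theta_0\le x_u$, and merges the cases $\theta_0<x_v<x_u$ and $x_v\le\theta_0<x_u$ via $\max\{x_v,\theta_0\}$. You also make explicit two points the paper leaves implicit, and both are accurate: nothing about compensation from $u^\ast$ to $u$ needs to be folded in, since $u$ stays matched, and the degenerate case $\theta_0=0$ is handled by $h(\cdot,0)=0$.
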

\begin{proof}
        This is a combination of \cref{lowerbound_u_backup_ranking} and \cref{lowerbound_u_star_backup_ranking}, taking the worst-case scenario for $\theta_0$. Similar to the function $G(x_u,x_v,\bot)$, we unified the two cases $\theta_0<x_v<x_u$ and $x_v\leq\theta_0<x_u$.
 \end{proof}
\subsection{Expectation Over Worst Case Distribution of Profiles}
Because of these monotonicity properties, we can effectively lower-bound the expected value of gain($u$) $+$ gain$(u^\ast)$ for each fixed rank of $u$ by the worst-case $x_v$-uniformly distributed profiles. 
\begin{claim}
\phantomsection
\label{claim:ranking_lower_bound_uniform_profiles}
    Let $G$ be the lower bound function we defined in \cref{G_ranking_nomatch}, \cref{G_ranking_nobackup}, and \cref{G_ranking_backup}. For each fixed $x_u$, we have: 
\begin{align*}
  \min\;\left\{
    \begin{array}{c}
      G(x_u,\bot,\bot), \\[4pt]
      \displaystyle \inf_{0\le v_0<1}\biggl[\frac{1}{1-v_0}\int_{v_0}^{1} G(x_u,x_v,\bot)\,dx_v\biggr], \\[4pt]
      \displaystyle \inf_{0\le v_0<b_0\leq 1}\biggl[\frac{1}{b_0 - v_0}\int_{v_0}^{b_0} G(x_u,x_v,b_0)\,dx_v\biggr]
    \end{array}
  \right\}\leq   \ev\bigl[\text{ gain}(u) + \text{gain}(u^\ast) \mid \text{rank of $u = x_u$}\bigr].
\end{align*}
We denote the LHS by $G_u(x_u)$.
\end{claim}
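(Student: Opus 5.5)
We would prove the claim by conditioning on the rank $x_u$ and decomposing the conditional expectation according to the profile of $u$ with respect to $\vecx_{\shortminus u^\ast}$. Fix $x_u$. The profile $(x_u,x_v,x_b)$ is a deterministic function of $\vecx_{\shortminus u^\ast}$, and $x=x_{u^\ast}$ is independent of $\vecx_{\shortminus u^\ast}$. By the tower rule,
$$\ev\bigl[\text{gain}(u)+\text{gain}(u^\ast)\mid x_u\bigr]=\ev_{\text{profile}}\Bigl[\ev_x\bigl[\text{gain}(u)+\text{gain}(u^\ast)\mid \text{profile}\bigr]\Bigr]\geq \ev_{\text{profile}}\bigl[G(\text{profile})\bigr],$$
where the inequality uses \cref{G_ranking_nomatch}, \cref{G_ranking_nobackup} and \cref{G_ranking_backup}. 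We then split the profile space into three disjoint classes: $(x_u,\bot,\bot)$, $\{(x_u,x_v,\bot)\}_{x_v}$, and, for each fixed backup rank $b_0$, $\{(x_u,x_v,b_0)\}_{x_v<b_0}$. The last family is well defined because \cref{facf:backup_ranking} gives $x_v<x_b$. The right-hand side is then a convex combination of the conditional averages of $G$ over these classes. It therefore suffices to show that each conditional average is at least the corresponding term in the $\min$ defining $G_u(x_u)$. For the first class this is immediate.

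The main step handles the other two classes using the monotone-density property in \cref{monoto_before_backup_ranking}. Consider the class with no backup (the backup class with fixed $b_0$ is identical, with $[0,1]$ replaced by $[0,b_0)$). Conditioned on this class, the density $f(x_v)$ of $x_v$ is nondecreasing on $[0,1]$. A nondecreasing probability density on $[0,1]$ is a mixture of uniform densities on intervals $[v_0,1]$. Explicitly, take the layer-cake representation $f(t)=\int_0^\infty \mathbf 1[f(t)>s]\,ds$. Each superlevel set $\{f>s\}$ is an interval of the form $[v_0(s),1]$ (up to endpoints), so $f$ is a nonnegative combination of normalized uniform densities $\frac{1}{1-v_0}\mathbf 1_{[v_0,1]}$ with total weight $1$. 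Hence
$$\ev[G(x_u,x_v,\bot)\mid\text{class}]=\int \Bigl(\tfrac{1}{1-v_0}\textstyle\int_{v_0}^1 G(x_u,x_v,\bot)\,dx_v\Bigr)\,d\mu(v_0)\geq \inf_{0\le v_0<1}\tfrac{1}{1-v_0}\int_{v_0}^1 G(x_u,x_v,\bot)\,dx_v.$$
For the backup case we condition additionally on $x_b=b_0$. The lemma gives monotonicity of the density of $x_v$ on $[0,b_0)$ for fixed $x_u$ and $b_0$, so the same argument yields the infimum over $0\le v_0<b_0$. We then average over $b_0$, which only makes the bound weaker than the infimum over $b_0$ as well.

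Finally, a convex combination of quantities each at least $G_u(x_u)$ is itself at least $G_u(x_u)$, which proves the claim.

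The main obstacle is making the decomposition rigorous: the conditional distributions of $x_v$ within each class need densities, and the profile classes need to be measurable. Probability mass on the zero-measure event of ties can be discarded, as done elsewhere in the paper. One must also check that the monotonicity in \cref{monoto_before_backup_ranking} is stated as a comparison of probabilities (densities) for profiles at fixed $x_u$ and $x_b$, which is exactly what the layer-cake step needs. If only pointwise comparisons of probabilities are available, we would first pass to a discretization of $[0,1]$ and then take limits.
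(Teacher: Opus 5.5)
Your proposal is correct and follows essentially the same route as the paper: the paper proves \cref{lem:inf-bound} by the same layer-cake/Tonelli decomposition of a nondecreasing density into slices supported on upper intervals $[c(y),b]$. It applies this to the no-backup class on $[0,1]$ and to each backup class on $[0,b_0]$ via \cref{monoto_before_backup_ranking}, then takes the minimum over the three profile classes, and it leaves the reading of that lemma as a density statement at fixed $x_u$ (and $x_b$) at the same informal level you flag.
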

For profiles $(x_u,x_v,b_0)$, intuitively, this is because integrating over a monotonically increasing probability density function $p$ of $x_v\in[0,b_0)$, i.e.,
$$\int_0^{b_0} p(x_v)\cdot G(x_u,x_v,b_0) dx_v,$$
is equivalent to horizontally integrating over slices of uniformly distributed intervals $x_v\in(v_0,b_0]$ for some other probability distribution $p'$ for $v_0\in[0,b_0)$:
$$\int_0^{b_0} p'(v_0)\cdot \left[\frac{1}{b_0-v_0}\int_{v_0}^{b_0} G(x_u,x_v,b_0) dx_v\right] dv_0.$$
Thus, it can be lower-bounded by the worst-case uniform distribution over all possible $[v_0,b_0]$. The same reasoning also applies to profiles $(x_u,x_v,\bot)$. For those interested, please refer to \nameref{uniform-profiles} for a complete proof.

Now, to calculate the final approximation ratio, we simply take the expected value of $G_u$.
\begin{claim}
Let $G_u$ be the function defined in \cref{claim:ranking_lower_bound_uniform_profiles}. Then, the approximation ratio of \ranking{} is lower-bounded by the expected value of $G_u$, i.e.,
    $$\int_0^1 G_u(x)dx\leq \ev[\text{gain$(u)$ $+$ gain$(u^\ast)$}].$$
\end{claim}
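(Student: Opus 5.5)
The claim follows by conditioning on the rank of $u$ and then integrating. First I would use \cref{fact:randomized_bipartition}. Conditioned on $u$ being a buyer, $u^\ast$ a product, and the matches having the opposite roles, the rank vector $\vecx$ is still uniform on $[0,1]^V$, so $x_u\sim U[0,1]$ independently of the role assignment. By the tower rule,
\[
\ev[\text{gain}(u)+\text{gain}(u^\ast)]=\int_0^1 \ev\bigl[\text{gain}(u)+\text{gain}(u^\ast)\mid \text{rank of } u = x_u\bigr]\,dx_u .
\]
The remaining step is to bound the integrand pointwise. By \cref{claim:ranking_lower_bound_uniform_profiles}, for every fixed $x_u$ we have $G_u(x_u)\le \ev[\text{gain}(u)+\text{gain}(u^\ast)\mid \text{rank of } u=x_u]$. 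Integrating this inequality over $x_u\in[0,1]$ gives $\int_0^1 G_u(x)\,dx\le \ev[\text{gain}(u)+\text{gain}(u^\ast)]$, which is the displayed inequality.

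Since the pair $(u,u^\ast)\in M^\ast$ was arbitrary and $G_u$ does not depend on the graph or on the pair, the same bound holds for $\min_{(u,u^\ast)\in M^\ast}$. Combined with the inequality in \cref{fact:randomized_bipartition}, this gives $\int_0^1 G_u\le \ev[|\M(\vecx)|]/|M^\ast|$. Together with \cref{assumeperfectmatching}, this is the approximation ratio statement.

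The main point needing care is that the conditioning on roles does not bias the distribution of $x_u$ or of $\vecx_{\shortminus u^\ast}$. That is exactly the independence clause of \cref{fact:randomized_bipartition}, so $x_u$ really is uniform and the outer integral has uniform weight.

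A secondary technical point is measurability and integrability of $G_u$. The function $G_u$ is an infimum over families of integrals of bounded functions, and all gains lie in $[-1,1]$. I would note that one may replace $G_u$ by any measurable minorant, for instance its lower semicontinuous envelope or the discretized LP values actually used. The inequality is preserved, so nothing is lost for the numerical bound.
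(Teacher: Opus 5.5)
Your proof is correct and is essentially the paper's own (implicit) argument. The role-conditioning in \cref{fact:randomized_bipartition} leaves $x_u$ uniform, so integrating the pointwise bound of \cref{claim:ranking_lower_bound_uniform_profiles} over $x_u\in[0,1]$ gives the claim. One minor aside: individual gains are nonnegative under the function constraints but not necessarily at most $1$, since a victim can collect compensation from many blockers; they are still bounded for a fixed graph, and boundedness is all your integrability remark needs.
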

\section{\Franking{}}

\Franking{} was introduced by \cite{0.521Franking} for the fully online matching problem\footnote{See a detailed discussion of the fully online matching problem and its relation to \Franking{} in \cref{Fully-Online_Section}.}. Unlike in the uniform random arrival matching settings, where the decision time $\pi$ is a uniform random permutation, in the fully online matching setting, the decision time is fixed by an oblivious adversary. \Franking{} is the algorithm that picks a random common rank (preference) list for all vertices, and lets each vertex pick the smallest-ranked available neighbor at its decision time. The algorithm can be equivalently viewed as performing query-commit matching algorithm with preference list $L=\pi\times \sigma$ where $\pi$ is the adversarial decision order and $\sigma$ is a uniform random permutation of vertices. We now state the \Franking{} algorithm.

\begin{algorithm}[H]
  \caption{\Franking{} algorithm for general graphs.}
  \phantomsection
  \label{alg:Franking}
  \KwIn{Graph $G=(V,E)$, adversarial decision order $\pi$}
  \SetAlgoLined

 Sample an independent uniform random value $x_v \in [0,1]$ for each $v \in V$\;

  \For{each vertex $v \in V$ in increasing order of $\pi(v)$}{
    \If{$v$ is available and has an available neighbor}{
      Match $v$ with the smallest-ranked available neighbor $u$ according to $\vecx$\;
    }
  }
\end{algorithm}

Fix $\pi$ as an arbitrary adversarial decision order, $\vecx$ a random rank vector, and $(u,u^*)$ a pair of nodes in the perfect matching $M^\ast$, we use the same notations $\M(\vecx)$, $\vecx_{\shortminus u^\ast},$ etc., as before, omitting $\pi$ (as $\pi$ is fixed throughout the analysis). 
\textbf{Additionally, throughout the \Franking{} analysis, we assume W.L.O.G. that $u$ has an earlier decision time than $u^*$.}
\subsection{Our Approach vs. Previous Approach}
Previous \Franking{} analysis~\cite{0.521Franking} is also conducted through randomized primal-dual with gain-sharing. Their analysis introduced the notion of victim but did not include the notions of profile and backup. The analysis considered, for each fixed $\vecx_{\shortminus u^*}$, a system of worst-case matches of $u$ and $u^*$ over the randomization of $x$, the rank of $u^*$. By introducing the notion of backups and profiles, we were able to classify $\vecx_{\shortminus u^*}$ into $6$ different profiles taking the activeness/passiveness of the vertices $u,v,b$ into account\footnote{Recall the definition of active and passive vertices (\cref{def:active_passive_vertices}).}. Each profile gives a different worst-case matching condition for $u$ and $u^*$. Further, we were able to add constraints on the distribution of profiles themselves, allowing a more detailed universal analysis over the randomization of $u,\;u^*$, and a third vertex $v$ (the match of $u$) together.

\paragraph{Backup Characterizes Unique Worst Case Match for $u$.}
By identifying the profile of $u$ for each $\vecx_{\shortminus u^*}$, we have a complete characterization of how $u$ is made worse off as a result of introducing $u^*$ (\cref{fact:backup_uniquely_describe_worse_off}). In particular, we know that if $u$ becomes worse off, then $u$ is exactly matched to its backup $b$ (or unmatched if $u$ has no backup). In the previous analysis~\cite{0.521Franking}, they effectively considered up to $\frac{|V|}{2}$ possible rank pairs $(\theta_i,\tau_i)$, where $\theta_i$ is a decreasing sequence of rank values for $u^\ast$, and $\tau_i$ is an increasing sequence of rank values for the match of $u$. When $u^*$ has rank $[\theta_{i+1},\theta_i)$, the worst-case match of $u$ is of rank $\tau_i$. To avoid considering up to $\Omega(|V|)$ rank pairs, the previous work added an additional constraint requiring the compensation function $h(x)$ to have at most a linear growth rate, i.e., $\frac{h(x)}{x}$ is non-increasing (\cite{0.521Franking}, Definition 4.1). By this constraint, previous work was able to bound an arbitrary number of $(\theta_i,\tau_i)$ pairs by at most one pair of $(\theta_i,\tau_i)$. We show that, by the nature of the matching structures, there is at most one such pair of $(\theta_i,\tau_i)$, which is the pair $(\theta_0,x_b)$\footnote{Relevant notions are defined in~\cref{def:theta_0_franking} and \cref{def:backup_querycommit}.} in our setting. Thus, we can remove the additional constraint without worrying about a growing number of rank pairs for free.

\paragraph{Distribution of Profiles Enforces Distribution of Marginal Ranks.}
Previous analysis also introduced the notion of marginal rank (\cref{theta_1_franking}) as the unique rank $\theta_1$ with respect to a fixed $\vecx_{\shortminus u^\ast}$, where $u^*$ transitions from being passively matched to actively finding a match. Then the analysis is conducted by considering the worst-case $\theta_1$ for each $\vecx_{\shortminus u^*}$ and without imposing restrictions on the distribution of $\theta_1$. We identified that for those profiles where $u$ is actively matched to $v$, the marginal rank $\theta_1$ has to satisfy the condition $\theta_1\geq x_v$. Hence, the monotonicity property of $x_v$ (\cref{lemma:monotonicity_Franking}) directly imposes a monotonicity constraint for $\theta_1$, allowing us to restrict the worst case choices of marginal ranks for some $\vecx_{\shortminus u^\ast}$.

\paragraph{Different Compensation Rules.}
Previous analysis requires a blocker vertex $w$ to transfer part of its gain to its victim $v$ only when all three of the following conditions are met: i) $v$ is the victim of $w$, ii) $w$ is active, and iii) $w$ is a neighbor of $v$. We modified the compensation rule by removing the third condition. In terms of the final LP constraints, it turns out that we obtained extra gains roughly proportional to the following parameter:
$$\text{extra gain } \propto\; \theta_0-(1-\theta_1)=\theta_0+\theta_1-1,$$
where $\theta_1$ is the marginal rank (\cref{theta_1_franking}) and $\theta_0$ is the impacting rank (\cref{def:theta_0_franking}). As discussed in the previous paragraph, we enforced a constraint that $\theta_1$ has a larger distribution near $1$ than near $0$. Furthermore, $\theta_0$ is a parameter that is favored near a larger value\footnote{We prove that $\theta_0$ is a parameter in $[0,\theta_1)$, and assuming a fixed $\theta_1$, increasing $\theta_0$ will only cause a decrease in the number of matched vertices in expectation. Roughly speaking, when the LP tries to minimize the approximation ratio, $\theta_0$ is favored to be a number close to $\theta_1$, which is relatively large.}. As a result, the overall impact is positive for this modification.

\subsection{\Franking{} Structural Properties}
A key difference between \ranking{} and \Franking{} is the relative lack of monotonicity between the rank of a vertex and the rank of its match. Denote $v$ as the match of $u$. In \ranking{}, \cref{fact:matching_guarantee_ranking} shows that increasing $u$'s rank will lead to an increase in $v$'s rank. This fact, however, no longer holds in \Franking{}. It is possible that some vertex $v_1$ with a large rank picks $u$ with a small rank because $v_1$ has an extremely early decision time; and when we demote $u$'s rank, $v_1$ no longer prefers $u$, which makes $u$ match to a vertex $v_2$ with rank $x_{v_2}<x_{v_1}$. In general, we have limited knowledge about the rank of $v$ when $u$ is passively matched to $v$. For example, for the vertex $v$ with the earliest decision time ($\pi(v)=1$), we have no information about $x_v$ since $v$ is always the first vertex to pick. 

 In our analysis, we will show that most of the properties in \ranking{} still hold when $u$ is active. We will employ constraints that force being passively matched to be a better scenario than being actively matched to compensate for the lack of information when a vertex is matched passively.

We now prove the matching guarantees for \Franking{} that will be used in our approximating LP. These results can be found in \cite{0.521Franking}.
\begin{fact}[Matching Guarantees for \Franking{}]
\phantomsection
\label{fact:matching_guarantee_franking}
    Let $x$ be the rank of $u^*$ when $u^*$ is inserted into $\vecx_{\shortminus u^*}$. Assuming $u$ is matched to $v$ in $\M(\vecx_{\shortminus u^*})$, then:
    \begin{enumerate}
        \item If $x<x_v$ and $u$ actively matches $v$ in $\M(\vecx_{\shortminus u^*})$ , then $u^*$ is passively matched in $\M(\vecx_{\shortminus u^*}(x))$.
        \item If $u^*$ is unmatched or matches actively to a vertex in $\M(\vecx_{\shortminus u^*}(x))$, then $u$ is matched to $v$ in $\M(\vecx_{\shortminus u^*}(x))$. i.e., $u$'s match is not affected by $u^\ast$.
        \item If $u^*$ is passively matched in $\M(\vecx_{\shortminus u^*}(x_0))$, then it is also passively matched in $\M(\vecx_{\shortminus u^*}(x))$ for all $x\in[0,x_0]$.
    \end{enumerate}
\end{fact}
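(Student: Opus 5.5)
Each of the three properties can be shown by running the alternating path lemma (\cref{lem:alt-path}) on $\vecx_{\shortminus u^*}(x)$ versus $\vecx_{\shortminus u^*}$. I will use the fact that in \Franking{} the edge $(a,c)$ is queried at $\min(\pi(a),\pi(c))$, with ties among the pairs of one decision vertex broken by the rank of the other endpoint. Throughout, $u$ decides before $u^*$.

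\textbf{Property 2.} Suppose $u^*$ is unmatched or active in $\M(\vecx_{\shortminus u^*}(x))$. If $u^*$ is unmatched, the alternating path is trivial, so the two matchings coincide and $u$ keeps $v$. If $u^*$ actively matches some $w$, the first edge $(u^*,w)$ of the path is queried at time $\pi(u^*)$. By part 3 of \cref{lem:alt-path}, every later edge of the path is queried after $\pi(u^*)$. The vertex $u$ is matched to $v$ in $\M(\vecx_{\shortminus u^*})$. By the time $\pi(u^*)$, $u$ has already made its decision, because $\pi(u)<\pi(u^*)$. Hence $(u,v)$ is matched either at $\pi(u)$ or at $\pi(v)$.
\begin{itemize}
\item If $(u,v)$ is matched at $\pi(u)<\pi(u^*)$, it lies in the common prefix of both runs and cannot be on the path.
\item If $(u,v)$ is matched passively at $\pi(v)$ with $\pi(v)>\pi(u^*)$, I need a further argument. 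The edge could in principle lie on the path. However, $u$ was available at $\pi(u^*)$ in both runs, since the available sets differ only at the path endpoint. Moreover, $u^*$ prefers $w$ over $u$. Then $u$ is still available at $\pi(v)$ unless it is consumed by a path vertex.
\end{itemize}
The cleaner route is to note that $u$ is never picked by $u^*$. If $u$ lay on the path as an even vertex, it would be worse off. That is impossible, because the only vertex whose choices changed is $u^*$, who did not take $u$, and then the path propagates only through vertices that lost their match. I would formalize this by induction along the path, showing that no edge incident to $u$ leaves the matching. This case is the main obstacle.

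\textbf{Property 3.} This mirrors property 4 of \cref{fact:matching_guarantee_ranking}. Suppose $u^*$ is passively picked at $\pi(w)<\pi(u^*)$ when its rank is $x_0$. Then at $\pi(w)$, $u^*$ was $w$'s smallest-ranked available neighbor. Promoting $u^*$ to a rank $x\le x_0$ leaves every decision before the first time someone prefers $u^*$ unchanged. At that decision time, or by $\pi(w)$ at the latest, some vertex sees $u^*$ available and ranked lower than before, so it picks $u^*$. I would argue via the first decision at which the run changes: the first change must be a vertex picking $u^*$, which happens before $\pi(u^*)$.

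\textbf{Property 1.} Let $x<x_v$ and suppose $u$ actively picks $v$ at $\pi(u)$ in $\vecx_{\shortminus u^*}$. Assume $u^*$ is not passively matched before $\pi(u^*)$. Then no one picks $u^*$ before $\pi(u)<\pi(u^*)$. Since $u^*$ is then the only difference between the runs, the runs agree up to $\pi(u)$. At $\pi(u)$, $u^*$ is an available neighbor of $u$ with rank $x<x_v$. So $u$ picks $u^*$, which contradicts the assumption.
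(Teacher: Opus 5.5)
Your arguments for properties 1 and 3 are fine and are essentially the paper's. In both you compare the run with $u^*$ at the new rank against the reference run, observe that the runs agree until some vertex picks $u^*$, and show that this pick happens before $\pi(u^*)$.

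The gap is in property 2. You leave open the case ``$u$ passively matched to $v$ at $\pi(v)$ with $\pi(v)>\pi(u^*)$'', call it the main obstacle, and offer only a heuristic. That heuristic (``the only vertex whose choices changed is $u^*$, who did not take $u$'') does not prove anything. On an alternating path, an even vertex $u_{2i}$ loses its match $u_{2i-1}$ to $u_{2i-2}$, and $u_{2i-2}$ is $u^*$ only when $i=1$. So the fact that $u^*$ did not pick $u$ does not by itself keep $u$ off the path.

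The missing observation is that your open case cannot occur. If $u$ is passively matched to $v$ in $\M(\vecx_{\shortminus u^*})$, then $\pi(v)<\pi(u)$. Otherwise $u$ and $v$ would both still be available at $\pi(u)$, and $u$, having an available neighbor, would match actively at its own decision time.

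With this fact the proof closes quickly:
\begin{itemize}
\item In every case the edge $(u,v)$ is queried and matched at time $\min\{\pi(u),\pi(v)\}\le\pi(u)<\pi(u^*)$.
\item Since $u^*$ is unmatched or active in $\M(\vecx_{\shortminus u^*}(x))$, no vertex picks $u^*$ before $\pi(u^*)$.
\item Hence the two runs coincide up to $\pi(u^*)$. Equivalently, by part 3 of \cref{lem:alt-path}, every edge of the alternating path is queried at or after $\pi(u^*)$.
\item Therefore $(u,v)$ lies in both matchings.
\end{itemize}
This timing fact is exactly what the paper uses when it says $u^*$ is still unmatched by the time $(u,v)$ is queried. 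With it, your first bullet (extended to the passive case) already proves property 2, and the proposed induction along the path is unnecessary.
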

Property $3$ can be found in \cite{0.521Franking}, Lemma 2.2. With some effort, one can also translate properties $1$ and $2$ to Lemma 2.4 in~\cite{0.521Franking}. These are essentially the \Franking{} version of \cref{fact:matching_guarantee_ranking}, properties $1$, $2$, and $4$. For anyone interested in a proof, see~\nameref{sec:appendix-a}.

\subsection{Gain-Sharing Scheme}
\phantomsection
\label{sec:gain-sharing_franking}
We adopt the notion of victim as defined in \cref{def:victim_querycommit}. In \Franking{}, we do not partition vertices into buyer and product groups; instead, for each matched pair, one vertex is matched passively and the other is matched actively. Accordingly, our gain-sharing scheme distinguishes between passive and active matches and is defined as follows:
\begin{definition}[Gain Sharing for \Franking{}]
    For any gain function $g:[0,1]\to[0,1]$, compensation function $h:[0,1]\to[0,1]$, rank vector $\vecx$, we perform the following gain-sharing scheme:
    \begin{enumerate}
        \item When $u$ actively matches $v$, $u$ receives $1-g(x_v)$ amount of gain and $v$ receives $g(x_v)$ amount of gain.
        \item If $w$ actively matches $z$ and further has a victim $v$, then $w$ sends $h(x_z)$ amount of compensation to $v$. In this case, the gain of $w$ is $1-g(x_z)-h(x_z)$ and $v$ gains an additional $h(x_z)$ amount of gain. The gain of an unmatched vertex is the sum of all compensation it receives.
    \end{enumerate}
\end{definition}
Notice that we only make active vertices pay compensations, and the amount of compensation paid depends on the rank value of the passively matched neighbor. Similarly, when performing gain-sharing rule $1$ between a pair of matched vertices, unlike in the \ranking{} gain-sharing scheme (\cref{gain_sharing_scheme_ranking}), we only consider the rank value of the passive vertex. This is because, in general, we have limited knowledge of the rank of an active vertex, as discussed previously. 

In our final approximations, we will never collect more than one copy of compensation. However, we are still actively using the fact that the alternating path has length $\geq 4$ whenever possible. For example, in the case $u$ is unmatched due to $u^*$, we do not ask $u^\ast$ to pay compensation but still collect one copy of compensation from a third vertex $w$ (see \cref{franking_lower_bound_Abot}) for $u$. We need the fact that $w\neq u^\ast$ (equivalently, alternating path has length $\geq 4$) for such gain estimation. 
 
We also enforce the following constraints to $g,h$ to simplify the analysis:
\begin{definition}[Function Constraints for \Franking{}]
\phantomsection
\label{def:franking_func_constraints}
For \Franking{}, we assume the gain function $g$ and the compensation function $h$ satisfy the following constraints:
    \begin{enumerate}
    \item $g(x)$ is monotonically increasing.
    \item $h(x)$ is monotonically increasing.
    \item $h(0)=0$.
    \item $\forall x$, $1-g(x)-h(x)\geq h(1)$.
    \item $\forall x$,  $g(x)\geq h(1)$.
\end{enumerate}
\end{definition}

It turns out that in the \Franking{} analysis, we only collect up to one copy of compensation. Hence constraints 4 and 5 only require the LHS to be $h(1)$ to ensure that being unmatched represents the worst-case scenario.

\subsection{Approximation With Profiles}
Again, we calculate the expected value of gain$(u)$ $+$ gain$(u^\ast)$ through randomization of $\vecx_{\shortminus u^*}$ and $u^*$ successively. For each fixed $\vecx_{\shortminus u^*}$, besides the rank information for $u,v,b$, we also distinguish whether $u$ is active or passive in each match $(u,v),(u,b)$ and incorporate this information into the profile of $u$. We define the notion of backup as before in \cref{def:backup_querycommit}. We briefly discuss what being worse off means in \Franking{} and its effect on the matching condition of $(u,b)$. A similar discussion can be found in \cite{0.521Franking}, Lemma 2.5.
\begin{fact}[Worse-Off Scenarios in \Franking{}]
\phantomsection
\label{fact:franking_backup}
Suppose $u$ is matched to $v$. If $u$ becomes worse off by removing or introducing a vertex $w$, then:
\begin{enumerate}
    \item If $u$ is passively matched to $v$, it may become passive, active, or unmatched in the worse-off case.
    \item If $u$ is actively matched to $v$, then it either actively matches $b$ with $x_v < x_b$ or becomes unmatched in the worse-off case.
\end{enumerate}
Consequently, if $u$ has a backup $b$ with respect to $\vecx_{\shortminus u^*}$, following from \cref{fact:backup_uniquely_describe_worse_off}:
\begin{itemize}
    \item If $u$ is passively matched to $v$ in $\M(\vecx_{\shortminus u^\ast})$, then $u$ may be passively or actively matched to $b$ in $\M(\vecx_{\shortminus vu^*})$.
    \item If $u$ is actively matched to $v$ in $\M(\vecx_{\shortminus u^\ast})$, then $u$ is actively matched to $b$ in $\M(\vecx_{\shortminus vu^*})$ and the ranks of $v,b$ satisfy $x_v < x_b$.
\end{itemize}
If $u$ has no backup, being worse off means $u$ is unmatched.
\end{fact}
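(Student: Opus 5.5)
The plan is to use the alternating path lemma (\cref{lem:alt-path}) together with the query structure of \Franking{}. A pair $(a,c)$ is first queried at time $(\pi(a),x_c)$ or $(\pi(c),x_a)$, whichever is lexicographically earlier. Equivalently, at each decision time $\pi(a)$, vertex $a$ scans its neighbors in increasing rank.

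Part 1 needs no argument beyond observing that being worse off is possible in each of the three ways. It suffices to show that none of the three outcomes is ruled out, which we can do with small examples. A passive $u$ that loses $v$ may still be available at $\pi(u)$ and then match actively. It may also be picked passively by a later vertex, or end up unmatched.

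For part 2, let $L$ denote the better list and $L'$ the worse one; depending on the direction, one of them is $L$ and the other is $L_{\shortminus w}$. Let $t$ be the time just before $(u,v)$ is queried. This is $(\pi(u),x_v)$, since $u$ actively matches $v$. Consider the alternating path between the two partial matchings at time $t$. Since $u$ is worse off, it is available in both lists at $t$, and $(u,v)$ is the next path edge on $u$'s side. By part 3 of \cref{lem:alt-path}, the previous path edge, incident to $v$, has already been queried. So at time $t$ the path ends at $v$. By part 4 of \cref{lem:alt-path}, the available sets at $t$ differ exactly in $v$: $v$ is available in the better list and already matched in the worse one.

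Now compare $u$'s scan in the two lists. The scan runs during its decision step at $\pi(u)$, over neighbors of rank $<x_v$ and then $>x_v$. Neighbors of rank $<x_v$ are handled before $t$. In the better list none of them is available, and the worse list has the same available set apart from $v$. Hence none is available in the worse list either, and $u$ passes them. In the worse list $v$ is unavailable, so $u$ continues with neighbors of rank $>x_v$. Their availability at those moments is the same as it would have been for $u$ in the better list if $v$ were absent. If some such neighbor is available, $u$ actively matches the first one, $b$, with $x_b>x_v$.

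If no such neighbor is available, every neighbor of $u$ is unavailable at the end of $u$'s decision step. Availability is never regained, so $u$ can never be matched passively afterwards. Thus $u$ is unmatched.

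The consequences follow by applying \cref{fact:backup_uniquely_describe_worse_off} with the removed vertex taken to be $v$. In $\vecx_{\shortminus vu^*}$, $u$ is worse off compared to $\vecx_{\shortminus u^*}$, and its match there is by definition the backup $b$. Part 1 then allows $u$ to be matched to $b$ either actively or passively. Part 2 forces the match to $b$ to be active with $x_v<x_b$. If $u$ has no backup, \cref{fact:backup_uniquely_describe_worse_off} directly gives that being worse off means being unmatched.

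The main obstacle is the bookkeeping at time $t$. It is not immediate that the path difference is concentrated exactly at $v$ rather than at some other neighbor of $u$ with smaller rank. I would establish this carefully through the monotone query times in part 3 of \cref{lem:alt-path}, and through the fact that $(u,v)$ belongs to the better matching.
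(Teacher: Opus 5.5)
Your argument is correct and is essentially the paper's: $u$'s active match to $v$ happens at time $(\pi(u),x_v)$, so a strictly later match must come later within $u$'s own decision step, hence to a neighbor $b$ with $x_b>x_v$, and the backup consequences follow from \cref{fact:backup_uniquely_describe_worse_off} applied to the removal of $v$. The alternating-path bookkeeping at time $t$ that you flag as the main obstacle is unnecessary, since being worse off already means that in the worse list $u$ is still available at $(\pi(u),x_v)$ and is not matched by that query. On the other hand, your observation that an unsuccessful decision step leaves every neighbor of $u$ permanently unavailable makes explicit why the worse match cannot be passive, a point the paper's short proof leaves implicit.
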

   This is directly due to the definition of being worse off equals being matched at a later time. If $u$ is passively matched to $v$, then it is matched at time $(\pi(v),\;x_u)$, where $\pi(v)<\pi(u)$. $u$ being matched at a later time to $b$ only suggests that $\pi(b)>\pi(v)$ and we can infer little information on the rank relation between $x_v$ and $x_b$. In contrast, if $u$ is actively matched to $v$, we know $u$ is matched at time $(\pi(u),\;x_v)$, and $u$ is matched to $b$ at a later time $(\pi(u),\;x_{b})$, which implies $x_b>x_v$. 
   
This fact implies that there are six subcases for the profile of $u$ with respect to $\vecx_{\shortminus u^*}$, taking into account whether $u$ is active or passive in each match.
\begin{definition}[Profile for $\vecx_{\shortminus u^\ast}$ in \Franking{}]
\phantomsection
\label{def:profile_franking}
For each $\vecx_{\shortminus u^\ast}$, denote the profile of $\vecx_{\shortminus u^\ast}$ by a triplet $(x_u,x_v,x_b)$, where $x_v$ denotes the rank of the match of $u$ in $\M(\vecx_{\shortminus u^\ast})$ (i.e., $v$), and $x_b$ denotes the rank of the backup of $u$ with respect to $\vecx_{\shortminus u^\ast}$ (i.e., $b$). The variables $x_v$ and $x_b$ may carry superscripts $A$ or $P$, where the superscripts ${A, P}$ indicate whether $u$ is active or passive when matched to $v$ and $b$, respectively. By \cref{fact:franking_backup}, there are in total six possible profiles:
    \begin{enumerate}
    \item $(x_u,\bot,\bot)$ where $u$ is not matched in $\M(\vecx_{\shortminus u^\ast})$ and does not have a backup. 
    \item $(x_u,x_v^P,\bot)$  where $u$ is passively matched to $v$ in $\M(\vecx_{\shortminus u^\ast})$ but does not have a backup.
    \item $(x_u,x_v^P,x_b^P)$ where $u$ is passively matched to $v$ in $\M(\vecx_{\shortminus u^\ast})$ and is passively matched to $b$ in $\M(\vecx_{\shortminus vu^\ast})$.
     \item $(x_u,x_v^P,x_b^A)$ where $u$ is passively matched to $v$ in $\M(\vecx_{\shortminus u^\ast})$ and is actively matched to $b$ in $\M(\vecx_{\shortminus vu^\ast})$.
      \item $(x_u,x_v^A,\bot)$  where $u$ is actively matched to $v$ in $\M(\vecx_{\shortminus u^\ast})$ but does not have a backup.
      \item $(x_u,x_v^A,x_b^A)$ where $u$ is actively matched to $v$ in $\M(\vecx_{\shortminus u^\ast})$ and is actively matched to $b$ in $\M(\vecx_{\shortminus vu^\ast})$.
\end{enumerate}
\end{definition}
 We next derive a case-by-case lower bound for all six subcases. We denote the corresponding lower bounding function as $GF$, where $F$ stands for \Franking{}.
\subsection{Lower Bound for $(x_u,\bot,\bot)$}
 We can lower bound the expected gain in this case as follows:
\begin{claim}
\phantomsection
\label{franking_lower_bound_botbot}
Define
    $$GF(x_u,\bot,\bot)=\int_0^1 g(x) dx,$$
    and we have
    $$GF(x_u,\bot,\bot)\leq \ev[\text{ gain($u$) $+$ gain($u^\ast$)}\;|\; \text{profile of $u=(x_u,\bot,\bot)$}].$$
\end{claim}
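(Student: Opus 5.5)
We fix $\vecx_{\shortminus u^*}$ with profile $(x_u,\bot,\bot)$ and take the expectation over the rank $x\sim U[0,1]$ of $u^*$. Since every term of the gain-sharing scheme is nonnegative for an unmatched vertex, we lower bound $\text{gain}(u)\ge 0$ and concentrate on $u^*$. The goal is to show that for every $x$, $u^*$ is matched and obtains gain at least $g(x)$. Integrating over $x$ then gives $\int_0^1 g(x)\,dx$.

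\textbf{Step 1: $u^*$ is always matched.} Since $u$ is unmatched in $\M(\vecx_{\shortminus u^*})$ and $(u,u^*)\in E$, we argue as in \cref{matching_guarantee_ranking_no_macth}. Suppose $u^*$ is still available at the time $(u,u^*)$ is first queried, i.e., at the decision time of $u$ (we assumed $\pi(u)<\pi(u^*)$). Then the alternating path $\M^t(\vecx)\oplus\M^t(\vecx_{\shortminus u^*})$ consists of $u^*$ alone, so by \cref{lem:alt-path} part 4 the available sets agree except for $u^*$. Hence $u$ is still available in $\vecx$ at that time, and $u$ is available at its decision time, so it matches some available neighbor, which in particular exists because $u^*$ is available. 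Otherwise $u^*$ was already matched earlier. In either case $u^*$ is matched in $\M(\vecx_{\shortminus u^*}(x))$.

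\textbf{Step 2: bound the gain of $u^*$.} There are two sub-cases.

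If $u^*$ is passive, its gain is $g(x)$, where $x$ is its own rank. No compensation is charged to it, because only active vertices pay.

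If $u^*$ is active, its gain is $1-g(x_z)-h(x_z)$ if it has a victim, and $1-g(x_z)$ otherwise. By \cref{def:franking_func_constraints} constraint 4, this is at least $h(1)$, but we want it to be at least $g(x)$. This is the main obstacle: we need the fact that an active $u^*$ picks the smallest-ranked available neighbor. When $u^*$ acts actively, is $u$ available? By the alternating-path argument of Step 1, $u$ is unmatched at $u$'s own decision time only if it had no available neighbor. So $u$ was matched, or $u^*$ was taken, before $\pi(u^*)$, and the active case gives little control on $x_z$ in terms of $x$.

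So we instead plan to show that $u^*$ is active only in a range where a cruder bound suffices. The natural route: if $u$ is unmatched in $\vecx_{\shortminus u^*}$, then at $\pi(u)<\pi(u^*)$, $u$ sees $u^*$ available whenever $u^*$ is not yet taken. Thus $u$ actively picks some neighbor of rank at most $x$, and $u^*$ becomes passive when it is itself that choice. More carefully, $u$ picks its smallest-ranked available neighbor. Every other neighbor of $u$ was unavailable in $\vecx_{\shortminus u^*}$ at that time, and by \cref{lem:alt-path} the only availability difference is along the path started at $u^*$. Since $u^*$ was unmatched before $\pi(u)$, the difference is just $u^*$, so $u$ picks $u^*$ itself. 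Hence $u^*$ is always passively matched, and its gain is exactly $g(x)$.

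\textbf{Step 3: integrate.} Taking the expectation over $x\sim U[0,1]$ gives the claim. The delicate point is Step 2's use of \cref{lem:alt-path} to ensure that $u$'s available neighbor set at $\pi(u)$ is exactly $\{u^*\}$ whenever $u^*$ is still available then. The remaining case, $u^*$ matched before $\pi(u)$, must be handled by noting that $u^*$ was then matched passively by an earlier decider, since $\pi(u^*)>\pi(u)$. So it is passive in all cases.
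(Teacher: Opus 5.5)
Your proof is correct and is essentially the paper's argument: $u^*$ is always passively matched, either by $u$ itself when $u^*$ is still available at $\pi(u)$ (then $u^*$ is $u$'s only available neighbor by \cref{lem:alt-path}), or otherwise by an earlier decider since $\pi(u^*)>\pi(u)$. Hence $\text{gain}(u^*)=g(x)$, and integrating over $x$ gives the bound.

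Two small cleanups. First, $u$ is often matched in $\M(\vecx_{\shortminus u^*}(x))$, so you cannot justify $\text{gain}(u)\ge 0$ from the unmatched case alone; use constraint 4 of \cref{def:franking_func_constraints} together with $g,h\ge 0$. Second, the exploratory detour in Step 2 about an active $u^*$ can simply be deleted.
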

\begin{proof}
In this subcase, $u$ is unmatched in $\M(\vecx_{\shortminus u^*})$. If $u^*$ is still unmatched by the time $u$ picks its match in $\M(\vecx_{\shortminus u^*}(x))$, $u$ will actively match $u^\ast$. Consequently, $u^*$ is always matched passively in $\M(\vecx_{\shortminus u^\ast}(x))$ and receives $g(x)$ amount of gain.
\end{proof}
\subsection{Lower Bound for $(x_u,x_v^P,x_b^P)$}
\begin{claim}
\phantomsection
\label{franking_lower_bound_PP}
Define
    $$GF(x_u,x_v^P,x_b^P)=g(x_u),$$
    and we have
    $$GF(x_u,x_v^P,x_b^P)\leq \ev[\text{ gain($u$) $+$ gain($u^\ast$)}\;|\; \text{profile of $u=(x_u,x_v^P,x_b^P)$}].$$
\end{claim}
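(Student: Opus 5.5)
The plan is to show the stronger pointwise statement: for every rank $x$ of $u^\ast$, $u$ is passively matched in $\M(\vecx_{\shortminus u^\ast}(x))$, and so $\text{gain}(u)\ge g(x_u)$. The claim then follows by taking the expectation over $x$ and using $\text{gain}(u^\ast)\ge 0$. Nonnegativity holds because gains come from $1-g-h\ge h(1)\ge 0$, from $g\ge 0$, or from compensation received. This works because under the \Franking{} gain-sharing rules a passive vertex receives exactly $g(\text{its own rank})=g(x_u)$ and never pays compensation, since only active vertices pay.

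Fix $x$ and view \Franking{} as query-commit on $L=\pi\times\sigma$. An edge $(y,z)$ is first queried at $\min\{(\pi(y),x_z),(\pi(z),x_y)\}$ in lexicographic order. The key observation is that $u$ is passive in a match exactly when that match's query time has first coordinate $\pi(\cdot)$ of the partner with $\pi(\text{partner})<\pi(u)$. Any match of $u$ that occurs at a time strictly before the block of queries at $\pi(u)$ is therefore passive. I split according to whether $u$ becomes worse off by introducing $u^\ast$.

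\emph{Case 1: $u$ is not worse off.} Then $u$ is matched in $\M(\vecx_{\shortminus u^\ast}(x))$ at a time no later than the time $(\pi(v),x_u)$ at which it was matched to $v$ in $\M(\vecx_{\shortminus u^\ast})$. Since $u$ is passive there, $\pi(v)<\pi(u)$. Hence the new match occurs at a time whose first coordinate is $<\pi(u)$, and so the new match is passive.

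\emph{Case 2: $u$ is worse off.} By \cref{fact:backup_uniquely_describe_worse_off}, $u$ is matched to its backup $b$ in $\M(\vecx_{\shortminus u^\ast}(x))$. In the profile $(x_u,x_v^P,x_b^P)$, $u$ is passively matched to $b$ in $\M(\vecx_{\shortminus vu^\ast})$. This means the first query of $(u,b)$ is $(\pi(b),x_u)$ with $\pi(b)<\pi(u)$. That query time depends only on $\pi$ and the ranks $x_u,x_b$, which are unchanged. So in $\M(\vecx_{\shortminus u^\ast}(x))$ the edge $(u,b)$ is again matched at $b$'s decision time, and $u$ is passive.

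The only delicate point is justifying that activeness or passiveness is determined by the time of the match rather than by the partner's identity. One must check that the lexicographic query time of an edge is invariant under inserting $u^\ast$, which holds because $x_u,x_b,\pi$ are fixed. I expect this step to be routine, with no real obstacle. Combining the cases gives $\ev_x[\text{gain}(u)+\text{gain}(u^\ast)]\ge g(x_u)=GF(x_u,x_v^P,x_b^P)$.
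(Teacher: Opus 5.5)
Your proof is correct and follows the same route as the paper. Both split on whether $u$ is made worse off by inserting $u^\ast$. In the worse-off case, \cref{fact:backup_uniquely_describe_worse_off} forces $u$ to match its backup $b$, which is passive for $u$ because $\pi(b)<\pi(u)$. Otherwise $u$ is matched no later than $(\pi(v),x_u)$ and is therefore still passive, so $\text{gain}(u)=g(x_u)$. Your explicit remarks that passive vertices never pay compensation and that $\text{gain}(u^\ast)\ge 0$ make precise what the paper leaves implicit.
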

\begin{proof}
In this subcase, $u$ is passively matched to $v$ and is also passively matched to $b$. Introducing $u^*$ may cause $u$ to become worse off or not. If $u$ becomes worse off, the active backup $b$, by \cref{fact:franking_backup}, ensures that $u$ is still matched passively in $\M(\vecx)$. On the other hand, by the definition of being better off for \Franking{} \cref{fact:franking_backup}, if $u$ is passively matched to $v$ and does not become worse off by introducing $u^\ast$, then $u$ is still passively matched to its new match. In either case, $u$ remains passively matched and receives $g(x_u)$ amount of gain.
\end{proof}
\subsection{Lower Bound for $(x_u,x_v^P,\bot)$}
We first define the impacting rank $\theta_0$ the same as before in \cref{def:theta_0_ranking}:
\begin{definition}[Impacting Rank $\theta_0$]
\phantomsection
\label{def:theta_0_franking}
     For \Franking{}, $\theta_0$ with respect to $\vecx_{\shortminus u^*}$ is defined as
    $$\theta_0=\sup_{x\in[0,1]}\{x\;|\; u \text{ is made worse off in $\M(\vecx_{\shortminus u^*}(x))$}\}.$$
\end{definition}

We have the following property about $\theta_0$ for \Franking{}, similar to \cref{fact:theta_0_II}, property $2$:

\begin{fact}[Property of $\theta_0$, \Franking{}]
\phantomsection
\label{fact:theta_0_franking}
Fix $\vecx_{\shortminus u^*}$, let $\theta_0$ be the impacting rank as defined in \cref{def:theta_0_franking} with respect to $\vecx_{\shortminus u^*}.$ Assuming $\theta_0 > 0$, which is the non-trivial case, let $S$ be the collection of ranks $x$ that defines $\theta_0$. I.e.
    $$S=\{x\;|\; \text{$u$ is made worse off in $\M(\vecx_{\shortminus u^*}(x))$}\}.$$
    Then, there exists a small enough punctured left neighborhood of $\theta_0$, denoted as $(\theta_0^-,\theta_0)$, such that $(\theta_0^-,\theta_0)\subseteq S$. For any $x\in(\theta_0^-,\theta_0)$, the third vertex $u_2$ in the alternating path $\M(\vecx_{\shortminus u^*}(x))\oplus \M(\vecx_{\shortminus u^*})$ has rank $x_{u_2}= \theta_0$. Further, $u_2$ is passively matched to $u_1$ in $\M(\vecx_{\shortminus u^*})$.
\end{fact}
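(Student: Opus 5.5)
The plan is to mirror the proof of \cref{fact:theta_0_II}, property 2, replacing the \ranking{}-specific monotonicity tools with their \Franking{} counterparts. Fix $\vecx_{\shortminus u^*}$ with $\theta_0>0$. For each $x\in S$, write the alternating path $\M(\vecx_{\shortminus u^*}(x))\oplus\M(\vecx_{\shortminus u^*})$ as $(u^*,u_1,u_2,\dots)$.

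\textbf{Step 1: $u_2$ is the backup of $u_1$ in $\vecx_{\shortminus u^*}(x)$, and $x<x_{u_2}$.} Since $u$ is made worse off at $x$, it is an even vertex with index $\ge 2$. The edge $(u_1,u_2)$ belongs to $\M(\vecx_{\shortminus u^*})$, and $(u^*,u_1)\in\M(\vecx_{\shortminus u^*}(x))$. By \cref{lem:backup_in_alt_path}, $u_2$ is therefore the backup of $u_1$ with respect to $\vecx_{\shortminus u^*}(x)$. Here the roles are: $u_1$ is matched to $u^*$, and when $u^*$ is removed, $u_1$ matches $u_2$.

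Next I determine how $u_1$ is matched. I claim $u^*$ is passively matched to $u_1$. Otherwise $u^*$ is active, and \cref{fact:matching_guarantee_franking}, property 2, says $u$ is not affected, contradicting $x\in S$. So $u_1$ actively matches $u^*$. By \cref{fact:franking_backup} (the active case), $u_1$ also actively matches its backup $u_2$ in $\M(\vecx_{\shortminus u^*})$, and $x<x_{u_2}$. This gives the final claim of the statement: $u_2$ is passively matched to $u_1$ in $\M(\vecx_{\shortminus u^*})$.

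\textbf{Step 2: a \Franking{} analogue of \cref{monoto_before_backup_ranking}.} The claim is that demoting $u^*$ to any $y\in[x,x_{u_2})$ leaves the matching unchanged. The argument is direct. Vertices deciding before $\pi(u_1)$ did not pick $u^*$ at rank $x$, so they still do not pick it at the larger rank $y$; their choices are unchanged. At $u_1$'s decision time, the available set is the same as before. The neighbors ranked below $x$ were unavailable (otherwise $u_1$ would have picked them), and $u_2$ with rank above $y$ is the next candidate after $u^*$. Hence $u_1$ still picks $u^*$, and the rest of the process is identical because $\M-\{(u_1,u^*)\}$ equals the matching with both $u_1$ and $u^*$ removed.

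Consequently $[x,x_{u_2})\subseteq S$, with the same path. This gives $x_{u_2}\le\theta_0$ for every $x\in S$.

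\textbf{Step 3: existence of the neighborhood.} Suppose, for contradiction, that $x_{u_2}<\theta_0$ for every $x\in S$. Then
\[
\theta_0=\sup S\le\sup_{x\in S}x_{u_2}<\theta_0,
\]
where the strict inequality holds because the values $x_{u_2}$ range over a finite set of vertex ranks. This is a contradiction, so some $x\in S$ has $x_{u_2}=\theta_0$.

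By Step 2, $(x,\theta_0)\subseteq S$, and every point of this interval has the same matching and the same path, with third vertex of rank $\theta_0$. Taking $\theta_0^-=x$ gives the required punctured left neighborhood.

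\textbf{Main obstacle.} The hardest step is Step 2. \Franking{} lacks the rank monotonicity that \ranking{} enjoys, so I must argue carefully that demoting the passive $u^*$ cannot cause an earlier-deciding vertex to grab $u_1$ or $u_2$ differently. The argument rests on the time-synchronized view: every decision before $\pi(u_1)$ is identical, since $u^*$ was not chosen at rank $x$ and is even less preferred at rank $y$.
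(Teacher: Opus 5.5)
Your proposal is correct and takes essentially the paper's route. The paper defers the neighborhood construction to the proof of \cref{fact:theta_0_II}, property 2, and then argues exactly as your Step 1 does: $u_1$ must actively match $u^*$ by \cref{fact:matching_guarantee_franking}, and therefore actively matches its backup $u_2$ by \cref{fact:franking_backup}. Your Step 2 reproves \cref{lemma:monotonicity_Franking} for $u_1$ directly, which is legitimate only because Step 1 first shows $u_1$ is active, so your ordering makes explicit a dependency the paper leaves implicit.
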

\begin{proof}
The construction of $(\theta_0^-,\theta_0)$ is the same as \cref{fact:theta_0_II}, property $2$. We only need to show that $u_1$ actively matches $u_2$ in $\M(\vecx_{\shortminus u^*})$. Fix $x\in (\theta_0^-,\theta_0)$. Notice that $u_2$ is the backup of $u_1$ with respect to $\vecx_{\shortminus u^*}(x)$, when $u_1$ is matched to $u^*$. By \cref{fact:matching_guarantee_franking}, $u^*$ being inserted at $x$ causes an alternating path that makes $u$ worse off implies that $u_1$ actively matches $u^*$. By \cref{fact:franking_backup}, this implies $u_1$ also actively matches the backup $u_2$ in $\M(\vecx_{\shortminus u^*})$.
\end{proof}

\begin{claim}Define
\phantomsection
\label{franking_lower_bound_Pbot}
     $$GF(x_u,x_v^P,\bot)=\min_{\theta_0}\left[\int_0^{\theta_0} g(x)\; dx +(1-\theta_0)\cdot h(\theta_0)+(1-\theta_0)\cdot g(x_u)\right],$$
    and we have
    $$GF(x_u,x_v^P,\bot)\leq \ev[\text{ gain($u$) $+$ gain($u^\ast$)}\;|\; \text{profile of $u=(x_u,x_v^P,\bot)$}].$$
\end{claim}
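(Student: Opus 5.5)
We fix $\vecx_{\shortminus u^\ast}$ with profile $(x_u,x_v^P,\bot)$ and its impacting rank $\theta_0$ (\cref{def:theta_0_franking}). We then split the rank $x$ of $u^\ast$ into the two ranges $[0,\theta_0)$ and $(\theta_0,1]$, and bound gain$(u)+$gain$(u^\ast)$ pointwise on each range. Taking the minimum over $\theta_0$ then covers every $\vecx_{\shortminus u^\ast}$ with this profile. Throughout, \cref{def:franking_func_constraints} (constraints 4 and 5) lets us treat being unmatched with at most one copy of compensation as the worst case.

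\textbf{Range $x\in(\theta_0,1]$.} By definition of $\theta_0$, $u$ is not made worse off by $u^\ast$. Since $u$ was passively matched to $v$, not being worse off means $u$ is matched no later than time $(\pi(v),x_u)$. Because $u$ is matched before its own decision time, it is still passively matched. The decision-time argument is the one used in \cref{franking_lower_bound_PP}. Hence gain$(u)\ge g(x_u)$ on this range, which contributes the term $(1-\theta_0)g(x_u)$.

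Next we show that on this range $u^\ast$ collects at least $h(\theta_0)$ whenever it is unmatched. Take $y\in(\theta_0^-,\theta_0)$ from \cref{fact:theta_0_franking}. The alternating path $\M(\vecx_{\shortminus u^\ast}(y))\oplus\M(\vecx_{\shortminus u^\ast})$ is then $(u^\ast,u_1,u_2,\dots,u)$, where $u_1$ actively matches $u_2$, $x_{u_2}=\theta_0$, and $u$ is made worse off. Suppose $u^\ast$ is placed at $x>\theta_0$ and is unmatched. We apply \cref{lem:victim_querycommit_case_u_star} with $L$ the list for rank $y$ and $L'$ the list for rank $x$. The backup condition is vacuous because $u$ has no backup. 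The lemma makes $u^\ast$ a victim of $u_1$. In $L'$, $u_1$ is matched to its backup $u_2$, and it is active there: active in $L_{\shortminus u^\ast}$ and unaffected by $u^\ast$. Hence $u_1$ pays $h(x_{u_2})=h(\theta_0)$. Paying compensation does not require adjacency under our modified rule.

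If instead $u^\ast$ is matched at $x>\theta_0$, then its gain is at least $\min\{g(\cdot),1-g-h\}\ge h(1)\ge h(\theta_0)$ by constraints 4 and 5. Either way, $u^\ast$ gets at least $h(\theta_0)$ on this range, which contributes $(1-\theta_0)h(\theta_0)$.

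\textbf{Range $x\in[0,\theta_0)$.} Here we drop $u$ entirely and use gain$(u)\ge 0$. We want $u^\ast$ to be passively matched, so that gain$(u^\ast)=g(x)$. For $y$ near $\theta_0$, $u$ is made worse off, so by \cref{fact:matching_guarantee_franking} property 2, $u^\ast$ cannot be unmatched or active at $y$. It is therefore passive at $y$. Property 3 of the same fact then propagates passivity to every $x\le y$. Letting $y\uparrow\theta_0$ gives passivity on all of $[0,\theta_0)$ and hence the term $\int_0^{\theta_0}g(x)\,dx$.

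If $u^\ast$ were paying compensation, this bound would weaken. However, only active vertices pay, and $u^\ast$ is passive here. So the contribution is exactly $g(x)$.

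\textbf{Main obstacle.} The delicate step is the victim argument in the range $(\theta_0,1]$. It has two parts: checking that $u_1$ is active in $\M(\vecx_{\shortminus u^\ast}(x))$, so that it actually pays, and checking that it pays based on the rank $\theta_0$ of $u_2$. For this I would use that $u^\ast$ is unmatched there. The alternating path is then trivial and $\M(\vecx_{\shortminus u^\ast}(x))=\M(\vecx_{\shortminus u^\ast})$, so $u_1$ actively matches $u_2$ as in \cref{fact:theta_0_franking}.

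Finally, the case $\theta_0=0$ is degenerate and needs no victim argument, since $h(0)=0$.
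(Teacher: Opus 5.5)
Your proposal is correct and follows the paper's proof essentially step for step. It uses the same split at $\theta_0$, counts gain$(u)\ge g(x_u)$ only on $(\theta_0,1]$, propagates passivity of $u^\ast$ to all of $[0,\theta_0)$ via property 3 of the \Franking{} matching guarantees, and takes one copy of compensation $h(\theta_0)$ from $u_1$ via Lemma~\ref{lem:victim_querycommit_case_u_star} and Fact~\ref{fact:theta_0_franking}. The differences are cosmetic: you get passivity at ranks in $S$ from the contrapositive of property 2 rather than the decision-time argument. You also make explicit, via $\M(\vecx_{\shortminus u^\ast}(x))=\M(\vecx_{\shortminus u^\ast})$ when $u^\ast$ is unmatched, why $u_1$ is still active and pays $h(\theta_0)$, a step the paper leaves implicit.
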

\begin{proof}
    The expected gain of $u^*$ can be lower bounded by:
$$\int_{0}^{\theta_0} g(x) \;dx\;+\;(1-\theta_0)\cdot h(\theta_0).$$
When $x=\theta_0^-$, $u^\ast$ triggers an alternating path that makes $u$ worse off. Since $u$ has an earlier decision time, the alternating path lemma (\cref{lem:alt-path}) implies that $u^\ast$ must be passively matched to some vertex $w$, otherwise $u$ cannot be affected by $u^\ast$. This fact and \cref{fact:matching_guarantee_franking} property 3 imply that $u^*$ is always passively matched for $x\in [0,\theta_0)$. Therefore, for $x\in[0,\theta_0)$, $u^*$ receives at least $g(x)$ amount of gain. 

For $x\in(\theta_0,1]$, $u^\ast$ is unmatched in the worst case. Fix $x_{\theta_0}\in (\theta_0^-,\theta_0)$ as given by \cref{fact:theta_0_franking}. By \cref{lem:victim_querycommit_case_u_star}, if $u^*$ is unmatched, then it is the victim of $w$, its match in $\M(\vecx_{\shortminus u^\ast}(x_{\theta_0}))$, which is $u_1$ in the alternating path $\M(\vecx_{\shortminus u^\ast}(x_{\theta_0}))\oplus \M(\vecx_{\shortminus u^*})$. Since the match of $w$ in $\M(\vecx_{\shortminus u^*}(x))$, assuming $u^*$ is unmatched, is $u_2$, which by \cref{fact:theta_0_franking}, has rank $\theta_0$. $w$ will pay $h(\theta_0)$ amount of compensation to $u^*$.

When $\theta_0=0$, such an argument fails as $x_{\theta_0}$ does not exist. However, since we set $h(0)=0$ in the function constraints (\cref{def:franking_func_constraints}), we are not over-approximating the expected gain. For those $x\in(\theta_0,1]$ such that $u^*$ is matched in $\M(\vecx_{\shortminus u^\ast}(x))$, we can still use $h(\theta_0)$ as a lower bound on the gain, since we assumed that a matched vertex always receives more gain than when it is unmatched by the function constraints.

The expected gain of $u$ is at least $(1-\theta_0)\cdot g(x_u)$. Since $u$ is not made worse off for $x\in(\theta_0,1]$, it remains passively matched and receives gain $g(x_u)$.
\end{proof}
\subsection{Lower Bound for $(x_u,x_v^P,x_b^A)$}
\begin{claim}
\phantomsection
\label{franking_lower_bound_PA}
Define
\begin{align*}
    GF(x_u,x_v^P,x_b^A)=&\min_{\theta_0}\left[\int_0^{\theta_0} g(x) dx +\max\{x_b-\theta_0,0\}\cdot h(\theta_0)\right. \\
    &\quad\quad\quad\left.+\theta_0\cdot (1-g(x_b)-h(x_b))+(1-\theta_0)\cdot g(x_u)\vphantom{\int}\right],
\end{align*}
    and we have
    $$GF(x_u,x_v^P,x_b^A)\leq \ev[\text{ gain($u$) $+$ gain($u^\ast$)}\;|\; \text{profile of $u=(x_u,x_v^P,x_b^A)$}].$$
\end{claim}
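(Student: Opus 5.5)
The plan is to mirror the proof of \cref{franking_lower_bound_Pbot}, splitting the range of the rank $x$ of $u^\ast$ at $\theta_0$. We bound the expected gains of $u^\ast$ and $u$ separately and add them.

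For $u^\ast$, the argument for $x\in[0,\theta_0)$ is unchanged. At $x=\theta_0^-$, $u^\ast$ makes $u$ worse off. Since $u$ decides before $u^\ast$, \cref{lem:alt-path} forces $u^\ast$ to be passively matched at $\theta_0^-$. \cref{fact:matching_guarantee_franking} property 3 then extends passivity to all of $[0,\theta_0)$, so $u^\ast$ gains at least $g(x)$ there. This yields the term $\int_0^{\theta_0}g(x)\,dx$.

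For $x\in(\theta_0,1]$ we assume $u^\ast$ is unmatched, which is the worst case by constraints 4 and 5 of \cref{def:franking_func_constraints}. The difference from the no-backup case is that \cref{lem:victim_querycommit_case_u_star} now applies only when $u$ prefers $u^\ast$ to its backup $b$, i.e.\ $(u,u^\ast)<(u,b)$ in the list. We must check what this means in \Franking{}. The pair is queried at $u$'s decision time, because $u$ decides before $u^\ast$, and $u$ is actively matched to $b$ in $\M(\vecx_{\shortminus vu^\ast})$. So the preference reduces to comparing ranks: $x<x_b$.

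Fix $x_{\theta_0}\in(\theta_0^-,\theta_0)$ as in \cref{fact:theta_0_franking}, with $L$ the list at rank $x_{\theta_0}$ and $L'$ the list at $x\in(\theta_0,x_b)$. For such $x$, $u^\ast$ (if unmatched) is the victim of $w=u_1$. The vertex $w$ actively matches $u_2$, whose rank is $\theta_0$, so $w$ pays $h(\theta_0)$. This gives the term $\max\{x_b-\theta_0,0\}\cdot h(\theta_0)$. When $\theta_0=0$, $h(0)=0$ covers the degenerate case, and for $x\ge x_b$ we simply use gain $\ge 0$.

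For $u$, consider first $x\in(\theta_0,1]$. Here $u$ is not made worse off, and the passive-stays-passive reasoning of \cref{franking_lower_bound_Pbot} gives gain $g(x_u)$. Next take $x\in[0,\theta_0)$, where $u$ may be worse off. By \cref{fact:backup_uniquely_describe_worse_off}, $u$ is then matched to $b$. By \cref{fact:franking_backup}, it is actively matched with the passive partner of rank $x_b$, so its gain is at least $1-g(x_b)-h(x_b)$, compensation included.

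If instead $u$ is not worse off on $[0,\theta_0)$, it stays passively matched with gain $g(x_u)$. We must compare this with $1-g(x_b)-h(x_b)$. I expect this comparison to be the main obstacle. The cleanest fix is to note that the worst case picks the minimum, so we should take $\min\{g(x_u),1-g(x_b)-h(x_b)\}$.

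I would try to argue that within $[0,\theta_0)$ being worse off is the adversarial choice. Alternatively, one can check that the subset of $[0,\theta_0)$ where $u$ is not worse off only raises the bound, because the $g_B$-type term is charged. Failing that, I would simply read the claimed bound as the case where $u$ is worse off throughout $[0,\theta_0)$. The remaining ingredient is that the claimed expression is minimized over $\theta_0$, which absorbs the unknown value of $\theta_0$.

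Summing the two gains and taking the minimum over $\theta_0$ gives $GF(x_u,x_v^P,x_b^A)$.
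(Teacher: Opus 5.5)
Your treatment of $u^\ast$ and of $u$ on $(\theta_0,1]$ matches the paper's argument. Your reduction of $(u,u^\ast)<(u,b)$ to $x<x_b$, via $\pi(u)<\pi(u^\ast)$ and $u$ being active with $b$, is a detail the paper skips.

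The gap is where you flagged it. On $[0,\theta_0)$ the claimed bound charges $u$ with $1-g(x_b)-h(x_b)$ everywhere. At ranks where $u$ is not made worse off, however, it only gets $g(x_u)$, which can be strictly smaller. None of your proposed fixes closes this:
\begin{itemize}
\item The assertions that being worse off is the adversarial choice, or that the non-worse-off part of $[0,\theta_0)$ ``only raises the bound,'' are false exactly when $g(x_u)<1-g(x_b)-h(x_b)$.
\item Reading the claim as the all-worse-off case assumes the conclusion.
\item The minimization over $\theta_0$, as you use it, only absorbs the unknown value of $\theta_0$, not this comparison.
\item Taking $\min\{g(x_u),1-g(x_b)-h(x_b)\}$ gives a correct but weaker bound than the one stated.
\end{itemize}

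The missing idea is a case split on which term is smaller.
\begin{itemize}
\item If $1-g(x_b)-h(x_b)\le g(x_u)$, your pointwise bound on $[0,\theta_0)$ is already $1-g(x_b)-h(x_b)$. Then the bracket evaluated at the true $\theta_0$ is a valid lower bound, and so is its minimum.
\item If $g(x_u)<1-g(x_b)-h(x_b)$, the gain of $u$ is at least $g(x_u)$ for every $x\in[0,1]$, so the total is at least $g(x_u)$.
\end{itemize}
The paper handles the second case by noting that $g(x_u)=GF(x_u,x_v^P,x_b^P)$. That value is already one of the competitors in the minimum defining $GF^P_u$ in \cref{claim:Franking_lower_bound_uniform_profiles_P}, so nothing is lost at the aggregation step.

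You can also get the claim exactly as stated, and this is where the minimization over $\theta_0$ genuinely does work. Evaluating the bracket at $\theta_0=0$ gives exactly $g(x_u)$, since $h(0)=0$ kills the compensation term. Hence $g(x_u)\ge GF(x_u,x_v^P,x_b^A)$.
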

\begin{proof}
    The proof is similar to the previous case. For the expected gain of $u^\ast$: When $x\in[0,\theta_0)$, $u^*$ has gain of at least $g(x)$, as $u^\ast$ is passively matched in this range. For $x\in(\theta_0,1]$, \cref{lem:victim_querycommit_case_u_star} guarantees that $u^*$ is a victim only if $u$ prefers $u^*$ over its backup $b$, i.e., when $x<x_b$. Thus, $u^*$ is the victim of $w$ in the range $(\theta_0,x_b)$, assuming it is unmatched, and its expected gain in this range is lower bounded by $\max\{x_b-\theta_0,0\}\cdot h(\theta_0)$. Similar to before, we can always use $h(\theta_0)$ as a lower bound for the gain, as being matched always receives more gain than being unmatched by the function constraints (\cref{def:franking_func_constraints}).

    For the expected gain of $u$: In the range $x\in(\theta_0,1]$, $u$ is not made worse off and hence has gain $g(x_u)$. When $x\in [0,\theta_0)$, either $u^*$ is not made worse off, in which case it receives at least $g(x_u)$ amount of gain, or it is made worse off, and by \cref{fact:franking_backup}, it is matched actively to $b$. So $u^*$ has gain at least $\min\{g(x_u),1-g(x_b)-h(x_b)\}$. If $g(x_u)$ is the smaller one, then the expected gain of $u$ alone is at least $1\cdot g(x_u)$, which is the same as $GF(x_u,x_v^P,x_b^P)$. In the final approximation (\cref{claim:Franking_lower_bound_uniform_profiles_P}), we will consider the worst-case scenario among $\{GF(x_u,x_v^P,x_b^A), GF(x_u,x_v^P,x_b^P),...\}$,
    hence, we can safely assume $1-g(x_b)-h(x_b)$ is the worse-case choice and use $1-g(x_b)-h(x_b)$ as a lower bound for the expected gain for $x \in [0,\theta_0)$.
\end{proof}
\subsection{Lower Bound for $(x_u,x_v^A,\bot)$}
If $u$ is actively matched to $v$ before $u^*$ is introduced, it becomes possible that introducing $u^*$ makes $u$ better off (i.e., become passively matched) while obtaining a worse gain overall, in the case where $g(x_u)\leq 1-g(x_v)-h(x_v)$. Therefore, we need to take this situation into account. By \cref{fact:matching_guarantee_franking}, property 2, this situation occurs only when $u^*$ is matched passively in $\M(\vecx_{\shortminus u^\ast}(x))$. So we define the notion of marginal rank $\theta_1$, the same as in \cite{0.521Franking}, Definition 2.3. By defining $\theta_1$, we only need to consider the above situation when $x\in[0,\theta_1)$.
\begin{definition}[Marginal Rank $\theta_1$]
\phantomsection
\label{theta_1_franking}
    The marginal rank $\theta_1$ with respect to $\vecx_{\shortminus u^*}$ is defined as:
    $$\theta_1=\sup_{x\in[0,1]}\{x\;|\; \text{$u^*$ is matched passively in $\M(\vecx_{\shortminus u^*}(x))$}\}.$$
\end{definition}
\begin{fact}[Properties of $\theta_1$]
\phantomsection
\label{fact:theta_1}
We have the following properties for $\theta_1$:
\begin{enumerate}
    \item $\theta_0\leq \theta_1$.
    \item If $u$ actively matches $v$ in $\M(\vecx_{\shortminus u^*})$, then $\theta_1\geq x_v$.
\end{enumerate}
\end{fact}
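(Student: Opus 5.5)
Both properties follow from the definitions of $\theta_0$ and $\theta_1$ together with \cref{fact:matching_guarantee_franking}. The plan is to exhibit, for each claimed inequality, a set of ranks of $u^*$ arbitrarily close to the relevant threshold on which $u^*$ is passively matched. Taking suprema then gives the bound.

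\emph{Property 1.} If $\theta_0=0$ there is nothing to prove, so assume $\theta_0>0$ and let $S$ be the set defining $\theta_0$. Fix any $x\in S$, so $u$ is made worse off in $\M(\vecx_{\shortminus u^*}(x))$. By property 2 of \cref{fact:matching_guarantee_franking}, if $u^*$ were unmatched or actively matched at rank $x$, then $u$ would still be matched to $v$ and could not be worse off. Hence $u^*$ is passively matched at rank $x$, and $x$ belongs to the set whose supremum defines $\theta_1$. Therefore
\[
\theta_0=\sup S\le\theta_1 .
\]
This is the same reasoning already used in the proof of \cref{franking_lower_bound_Pbot}, where $u$ has an earlier decision time than $u^*$. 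No use of \cref{fact:theta_0_franking} is needed, since the containment $S\subseteq\{x : u^*\text{ passive}\}$ holds pointwise.

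\emph{Property 2.} Suppose $u$ actively matches $v$ in $\M(\vecx_{\shortminus u^*})$. By property 1 of \cref{fact:matching_guarantee_franking}, for every $x<x_v$ the vertex $u^*$ is passively matched in $\M(\vecx_{\shortminus u^*}(x))$. So the set defining $\theta_1$ contains $[0,x_v)$, and hence $\theta_1\ge x_v$. If $x_v=0$ the inequality is trivial; this degenerate case has measure zero in any event.

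\emph{Main obstacle.} The argument is routine. The only care point is in property 1: one must ensure the case ``$u$ worse off'' genuinely forces $u^*$ to be matched passively rather than merely matched. This is exactly the content of property 2 of \cref{fact:matching_guarantee_franking}, read contrapositively: $u$ being affected implies $u^*$ is neither unmatched nor active, so $u^*$ is passive.
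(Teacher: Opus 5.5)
Your proof is correct and takes the same approach as the paper: each property follows from a set containment between the defining sets, using the matching guarantees for \Franking{}, and then taking suprema. Your citations are in fact the accurate ones. The paper's proof attributes $\theta_0\le\theta_1$ to property 1 and $\theta_1\ge x_v$ to property 2 of \cref{fact:matching_guarantee_franking}, which is swapped relative to how those properties are stated; you correctly use property 2 (read contrapositively) for the first and property 1 for the second.
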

\begin{proof}
     $(1)$ holds by property 1 of the matching guarantees (\cref{fact:matching_guarantee_franking}), which says that $u$ will not become worse off when $u^*$ is matched actively. $(2)$ holds by property $2$ of the matching guarantees. When $u$ is matched actively to $v$, placing $u^*$ before $v$ always guarantees that $u^*$ is matched passively.
\end{proof}
With the new threshold $\theta_1$ defined, we can now construct function $GF$ for profile $(x_u,x_v^A,\bot)$.
\begin{claim}
\phantomsection
\label{franking_lower_bound_Abot}
For each fixed $x_u, x_v$, define
\begin{align*}
    GF(x_u,x_v^A,\bot)=&\min_{\theta_0\leq\theta_1\;,\; x_v\leq \theta_1}\left[\int_0^{\theta_1} g(x) dx + (1-\theta_1)\cdot h(\theta_0)\right.\\
    &\quad\quad\quad\quad\quad\quad +\;\theta_0\cdot h(x_v)\\
    &\quad\quad\quad\quad\quad\quad +\;(\theta_1-\theta_0)\cdot \min\{g(x_u),1-g(x_v)-h(x_v)\}\\
    &\quad\quad\quad\quad\quad\quad + (1-\theta_1)\cdot (1-g(x_v)-h(x_v)) \left.\vphantom{\int}\right].
\end{align*}
    We have
    $$GF(x_u,x_v^A,\bot)\leq \ev[\text{ gain($u$) $+$ gain($u^\ast$)}\;|\; \text{profile of $u=(x_u,x_v^A,\bot)$}].$$
\end{claim}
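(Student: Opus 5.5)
We will fix $\vecx_{\shortminus u^*}$ with profile $(x_u,x_v^A,\bot)$ and let $\theta_0,\theta_1$ be its impacting and marginal ranks. By \cref{fact:theta_1}, these satisfy $\theta_0\le\theta_1$ and $x_v\le\theta_1$, so the pair is feasible for the minimization in $GF$. It then suffices to show that, for this fixed pair, $\ev_x[\text{gain}(u)+\text{gain}(u^*)]$ dominates the bracketed expression. We split $x\in[0,1]$ into the intervals $[0,\theta_0)$, $(\theta_0,\theta_1)$ and $(\theta_1,1]$, and bound the contributions of $u^*$ and $u$ separately.

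\textbf{Gain of $u^*$.} By the definition of $\theta_1$ and \cref{fact:matching_guarantee_franking} property 3, $u^*$ is passively matched for every $x<\theta_1$, so it collects at least $g(x)$ there; integrating gives the term $\int_0^{\theta_1}g(x)\,dx$. For $x>\theta_1$, $u^*$ is either active or unmatched. By \cref{def:franking_func_constraints}, being matched yields at least $h(1)\ge h(\theta_0)$, so we may assume it is unmatched. If $\theta_0>0$, fix $x_{\theta_0}\in(\theta_0^-,\theta_0)$ as in \cref{fact:theta_0_franking}. Then $u_1=w$ actively matches $u_2$ in $\M(\vecx_{\shortminus u^*})$, and $x_{u_2}=\theta_0$. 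Since $u$ has no backup, \cref{lem:victim_querycommit_case_u_star} (with $L'$ the list at rank $x$) makes $u^*$ a victim of $w$. When $u^*$ is unmatched, $w$ is matched to $u_2$ actively, so it pays $h(\theta_0)$. This gives $(1-\theta_0)\cdot h(\theta_0)\ge(1-\theta_1)\cdot h(\theta_0)$ over $(\theta_1,1]$. If $\theta_0=0$, then $h(0)=0$ and the term is vacuous.

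\textbf{Gain of $u$.} For $x>\theta_1$, $u^*$ is unmatched or active, so by \cref{fact:matching_guarantee_franking} property 2, $u$ still actively matches $v$ and gains $1-g(x_v)-h(x_v)$ in the worst case (when it pays compensation). For $x\in(\theta_0,\theta_1)$, $u$ is not worse off. Either it still actively matches $v$, or it is better off; since $u$ was active, becoming better off means being matched at an earlier time. That is either passive, with gain $g(x_u)$, or active to a smaller-ranked vertex, with gain at least $1-g(x_v)-h(x_v)$ by monotonicity of $g,h$. This yields the $\min$ term. For $x\in[0,\theta_0)$, the worst case is that $u$ is unmatched, since it has no backup and \cref{fact:backup_uniquely_describe_worse_off} applies; matched outcomes only help by the function constraints. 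In this case we use \cref{lem:victim_querycommit_case_u}: $u$ is the victim of $u_{k-2}$, which is matched to $v=u_{k-1}$ in $\M(\vecx)$, and $u_{k-2}\neq u^*$ so that we do not charge $u^*$.

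\textbf{Main obstacle.} The key step is to show that $u_{k-2}$ actively matches $v$ and is distinct from $u^*$, so that it pays exactly $h(x_v)$. Activeness follows the argument of \cref{fact:theta_0_franking}: the edge $(u_{k-2},v)$ is queried before $(u,v)$, which $u$ took actively at time $(\pi(u),x_v)$. Since the alternating-path query times increase, $v$ was available to $u_{k-2}$ at its decision time. I would argue that the passive alternative forces $v$ to choose $u_{k-2}$ before $\pi(u)$, contradicting that $v$ was still free when $u$ chose it in $\M(\vecx_{\shortminus u^*})$. The hard part is ruling out $u_{k-2}=u^*$. I would try the fact that $u^*$ is passively matched whenever it makes $u$ worse off, whereas $u_{k-2}$ must be active. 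This makes the path length at least $4$. Summing the three intervals then gives the bound.
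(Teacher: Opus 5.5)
Your proposal is correct and follows essentially the paper's argument. It uses the same split of $x$ into $[0,\theta_0)$, $(\theta_0,\theta_1)$, $(\theta_1,1]$ and the same victim lemmas: $u^*$ collects $h(\theta_0)$ from $w=u_1$ on $(\theta_1,1]$, and $u$ collects $h(x_v)$ from $u_{k-2}$ on $[0,\theta_0)$, with $u_{k-2}$ shown active because $v$ is passive to $u$ in $\M(\vecx_{\shortminus u^*})$. The distinctness $u_{k-2}\neq u^*$ that you flag as the hard part is immediate once activeness is shown, since $u^*$ is passive on $[0,\theta_1)\supseteq[0,\theta_0)$; also, your stray term $(1-\theta_0)\cdot h(\theta_0)$ should simply read $(1-\theta_1)\cdot h(\theta_0)$ on $(\theta_1,1]$.
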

\begin{proof}
    By \cref{fact:theta_1} we know that ranks $\theta_0,\theta_1$ satisfy the property $\theta_0\leq \theta_1$ and $x_v\leq \theta_1$, so we impose these restrictions for the worst-case analysis. 
    
     Fix any $\theta_0,\theta_1$. We first calculate the expected gain of $u^*$, which is
     $$\int_0^{\theta_1} g(x) dx + (1-\theta_1)\cdot h(\theta_0)$$
     We can apply the same proof as in \cref{franking_lower_bound_Pbot} and conclude this is the worst-case expected gain of $u^\ast$.

     The expected gain of $u$ consists of three parts. When $x\in[0,\theta_0)$, $u$ can be unmatched due to $u^*$. Let $w$ be the match of $v$ in $\M(\vecx_{\shortminus u^*}(x))$, assuming $u$ is unmatched. We show that, in this case, $w$ sends $h(x_v)$ amount of compensation to $u$. By \cref{lem:victim_querycommit_case_u}, $u$ is the victim of $w$ in this case. So we only need to show that $w$ is active in $\M(\vecx_{\shortminus u^\ast}(x))$. By \cref{lem:backup_in_alt_path}, $u$ is the backup of $v$ in $\vecx_{\shortminus u^*}(x)$, hence $u$ is a worse off choice for $v$ compared to $w$, since $v$ is matched to $u$ passively, a better off choice for $v$ will also be an active vertex. Hence, $w$ is also active.
     
     When $x\in(\theta_0,\theta_1)$, $u$ is not made worse off, but it is possible that $u$ is made better off, this means that $u$ is either actively matched to a vertex of rank $\leq x_v$ or becomes passively matched. Therefore, $u$ receives at least $\min\{g(x_u),1-g(x_v)-h(x_v)\}$ amount of gain in this range. 
     
     For $x\in(\theta_1,1]$, by \cref{fact:matching_guarantee_franking}, $u^*$ does not affect the match of $u$, so $u$ receives at least $1-g(x_v)-h(x_v)$ amount of gain.
\end{proof}
\subsection{Lower Bound for $(x_u,x_v^A,x_b^A)$}
\begin{claim}
\phantomsection
\label{franking_lower_bound_AA}
Define
\begin{align*}
    GF(x_u,x_v^A,x_b^A)=&\min_{\theta_0\leq\theta_1\;,\; x_v\leq \theta_1}\left[\int_0^{\theta_1} g(x) dx + \max\{x_b-\theta_1,0\}\cdot h(\theta_0)\right.\\
    &\quad\quad\quad\quad\quad\quad +\;\theta_0\cdot \min\{g(x_u),1-g(x_b)-h(x_b)\}\\
    &\quad\quad\quad\quad\quad\quad +\;(\theta_1-\theta_0)\cdot \min\{g(x_u),1-g(x_v)-h(x_v)\}\\
    &\quad\quad\quad\quad\quad\quad + (1-\theta_1)\cdot (1-g(x_v)-h(x_v)) \left.\vphantom{\int}\right].
\end{align*}
    and we have
    $$GF(x_u,x_v^A,x_b^A)\leq \ev[\text{ gain($u$) $+$ gain($u^\ast$)}\;|\; \text{profile of $u=(x_u,x_v^A,x_b^A)$}].$$
\end{claim}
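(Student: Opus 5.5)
The plan mirrors the proofs of \cref{franking_lower_bound_Abot} and \cref{franking_lower_bound_PA}. We split the rank $x$ of $u^\ast$ into the three ranges $[0,\theta_0)$, $(\theta_0,\theta_1)$ and $(\theta_1,1]$, and bound the gains of $u^\ast$ and $u$ separately. The constraints $\theta_0\le\theta_1$ and $x_v\le\theta_1$ follow from \cref{fact:theta_1}, because $u$ actively matches $v$ in $\M(\vecx_{\shortminus u^\ast})$. We may therefore fix any admissible pair $(\theta_0,\theta_1)$ and then take the minimum.

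\textbf{Gain of $u^\ast$.} For $x<\theta_1$, the definition of $\theta_1$ together with \cref{fact:matching_guarantee_franking} property 3 shows that $u^\ast$ is passively matched, so it receives $g(x)$. This contributes $\int_0^{\theta_1}g(x)\,dx$.

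For $x>\theta_1$ we assume $u^\ast$ is unmatched; this is the worst case by \cref{def:franking_func_constraints}, constraints 4 and 5. Fix $x_{\theta_0}\in(\theta_0^-,\theta_0)$ as in \cref{fact:theta_0_franking}, and let $w=u_1$ be the match of $u^\ast$ there. Apply \cref{lem:victim_querycommit_case_u_star} with $L'$ the list at rank $x$. Since $u$ now has a backup $b$, the third hypothesis $(u,u^\ast)<(u,b)$ is needed. Because $u$ is active, its preferences are by rank, so this hypothesis holds exactly when $x<x_b$. Hence for $x\in(\theta_1,x_b)$, $u^\ast$ is a victim of $w$. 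By \cref{fact:theta_0_franking}, $w$ actively matches $u_2$, which has rank $\theta_0$, so $w$ pays $h(\theta_0)$. This gives $\max\{x_b-\theta_1,0\}\cdot h(\theta_0)$. The case $\theta_0=0$ is handled by $h(0)=0$, exactly as in \cref{franking_lower_bound_Pbot}.

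\textbf{Gain of $u$.} For $x\in(\theta_1,1]$, $u^\ast$ is active or unmatched, so \cref{fact:matching_guarantee_franking} property 2 keeps $u$ actively matched to $v$. Its gain is at least $1-g(x_v)-h(x_v)$, allowing for compensation paid.

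For $x\in(\theta_0,\theta_1)$, $u$ is not worse off. So either $u$ still actively matches a vertex of rank at most $x_v$, or it becomes passively matched. Using monotonicity of $g$ and $h$, this gives $\min\{g(x_u),1-g(x_v)-h(x_v)\}$.

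For $x\in[0,\theta_0)$, there are two possibilities. If $u$ is made worse off, \cref{fact:backup_uniquely_describe_worse_off} and \cref{fact:franking_backup} say it actively matches $b$ with gain at least $1-g(x_b)-h(x_b)$. If it is not worse off, then as before it is passive (gain $g(x_u)$) or active with a match of rank at most $x_v<x_b$, giving at least $1-g(x_v)-h(x_v)\ge 1-g(x_b)-h(x_b)$. Either way the bound is $\min\{g(x_u),1-g(x_b)-h(x_b)\}$. Summing the three ranges with the gain of $u^\ast$ gives $GF(x_u,x_v^A,x_b^A)$.

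The main obstacle is the victim step for $u^\ast$ on $(\theta_1,x_b)$. We must verify that $u^\ast$ is unmatched only when it is not passively matched, so the victim range starts at $\theta_1$ and not $\theta_0$. We must also check that the preference condition of \cref{lem:victim_querycommit_case_u_star} matches the rank comparison $x<x_b$; this relies on $u$ choosing actively by rank at its decision time, which precedes $u^\ast$'s. Finally, the blocker $w$ must remain active with its match at rank $\theta_0$ at the new position of $u^\ast$. This follows because, with $u^\ast$ unmatched, the matching coincides with $\M(\vecx_{\shortminus u^\ast})$, in which $u_1$ actively matches $u_2$.
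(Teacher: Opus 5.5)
Your proposal is correct and follows essentially the same route as the paper: the same three-range split of the rank $x$ of $u^\ast$, the victim argument via \cref{lem:victim_querycommit_case_u_star} restricted to $(\theta_1,x_b)$ because of the backup $b$, and the bound $\min\{g(x_u),1-g(x_b)-h(x_b)\}$ for the gain of $u$ on $[0,\theta_0)$. You also make explicit details that the paper delegates to the proof of \cref{franking_lower_bound_Abot}, namely that the preference condition reduces to $x<x_b$ and that the not-worse-off subcase on $[0,\theta_0)$ is covered because $x_v<x_b$.
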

\begin{proof}
We only prove the part that differs from the case $(x_u,x_v^A,\bot)$; the rest follows from the same proof. For $x\in(\theta_1,1]$, if $u^\ast$ is unmatched, similar to \cref{franking_lower_bound_PA}, the condition for $u^\ast$ to be a victim requires $u^*$ to be a better choice for $u$ compared to $b$ by \cref{lem:victim_querycommit_case_u_star}. Thus, $u^*$ receives $h(\theta_0)$ amount of compensation in the range $x\in(\theta_1,x_b)$.

For $x\in[0,\theta_0)$, since $u$ has a backup, the worst-case passive match of $u$ is $b$. The minimum amount of gain $u$ receives is the smaller one between $g(x_u)$ ($u$ being passively matched) and $1-g(x_b)-h(x_b)$ ($u$ being actively matched to $b$). Therefore, the lower bound for the gain of $u$ is $\min\{g(x_u),1-g(x_b)-h(x_b)\}$.
\end{proof}
\subsection{Expectation Over Worst Case Distribution of Profiles}
We have the same monotonicity properties for the distribution of $x_v$ in \Franking{} as in \ranking{}, except that we further need $u$ to be active. 
\begin{lemma}[Monotonicity Properties of Profiles in \Franking{}]
\phantomsection
\label{lemma:monotonicity_Franking}
Fix $x_u,x_b\in[0,1]$. Suppose $u$ is actively matched to $v$ with rank $x_v$ in $\M(\vecx_{\shortminus u^*})$.
\begin{itemize}
    \item If $u$ does not have a backup, then for any $x \in [x_v,1]$, demoting the rank of $v$ to $x$ does not change the resulting matching.
    
    Consequently, $x_v$ has monotonically increasing probability density function in $[0,1]$ with respect to profiles $(x_u, x_v^A,\bot).$
    \item If $u$ has a backup $b$ at rank $x_b$, then for any $x \in [x_v,x_b)$, demoting the rank of $v$ to $x$ does not change the resulting matching.
    
    Consequently, $x_v$ has monotonically increasing probability density function in $[0,x_b)$ with respect to profiles $(x_u, x_v^A,x_b^A).$
\end{itemize}
\end{lemma}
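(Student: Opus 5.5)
We would mirror the simplified proof given for \cref{monoto_before_backup_ranking}. Since $u^\ast$ plays no role, drop it and write $\vecx_{\shortminus v}$ for $\vecx_{\shortminus vu^\ast}$; then $\vecx_{\shortminus u^\ast}=\vecx_{\shortminus v}(x_v)$. Suppose $u$ actively matches $v$ in $\M(\vecx_{\shortminus v}(x_v))$. Take $x\ge x_v$, with $x<x_b$ if $u$ has a backup. We want $\M(\vecx_{\shortminus v}(x))=\M(\vecx_{\shortminus v}(x_v))$.

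\textbf{Reduction.} As in the \ranking{} proof, whenever $(u,v)$ is matched, the rest of the matching equals $\M(\vecx_{\shortminus uv})$. That matching does not depend on $x_v$. So it suffices to show that $(u,v)$ is still matched in $\M(\vecx_{\shortminus v}(x))$, and moreover that $u$ is still the active endpoint. Then the matching is unchanged and the profile type $(x_u,x_v^A,\cdot)$ is preserved.

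\textbf{Main step.} We work in the query-commit view, where $L=\pi\times\sigma$. Since $u$ actively picks $v$, the pair $(u,v)$ is queried at time $(\pi(u),x_v)$, and $u$ is matched at decision time $\pi(u)$.
\begin{enumerate}
\item When $v$ is absent, $u$ is either unmatched (no backup) or actively matched to $b$ at time $(\pi(u),x_b)$, with $x_b>x_v$ by \cref{fact:franking_backup}.
\item Consider inserting $v$ at rank $x$. Vertex $v$ is available until its own decision time, so we apply the alternating path lemma \cref{lem:alt-path} to the lists $\vecx_{\shortminus v}(x)$ and $\vecx_{\shortminus v}$. We compare the prefixes up to time $t=(\pi(u),x)$.
\item I expect this to be the main obstacle. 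We must show that $v$ is not taken by anyone before $t$. In $\vecx_{\shortminus v}(x_v)$, $v$ survived until $u$'s decision. In this situation any vertex $w$ deciding before $\pi(u)$ did not prefer $v$.
\item Demoting $v$ from $x_v$ to $x$ only makes $v$ less preferred, as in the argument for \cref{fact:matching_guarantee_franking}, property 3, applied to the relative rank of $v$. So those vertices still do not pick $v$.
\item The remaining risk is $v$'s own decision time, if $\pi(v)<\pi(u)$. Vertex $v$ did not act at $\pi(v)$ in the original list, since $v$ was later matched passively by $u$, so it had no available neighbor then. That event involves only other vertices' availability and is unchanged by $v$'s rank, given that the earlier matching before $\pi(v)$ is unchanged.
\item The cleanest formalization: by \cref{lem:alt-path}, the set $\A^t$ differs between the two lists only at the path endpoint. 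If $v$'s insertion were to change the matching before $t$, the path would start with $v$ being passively matched at a time $(\pi(w),x)$ with $\pi(w)<\pi(u)$. The same $w$ would then have matched $v$ at $(\pi(w),x_v)$ in the original list, because the earlier prefix is identical and $x_v\le x$. That contradicts $v$ being free at $\pi(u)$.
\item So $\M^t$ agrees with $\M^t(\vecx_{\shortminus v})$ plus an available $v$.
\item At time $\pi(u)$, vertex $u$ is available, because in $\vecx_{\shortminus v}$ it is matched only at $(\pi(u),x_b)$ or never. Its available neighbors are the same as in the original list, except that $v$ now sits at rank $x$.
\item Every neighbor ranked below $x_v$ was unavailable or nonadjacent, since $u$ chose $v$. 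Every available neighbor ranked between $x_v$ and $x$ would be at least $b$-like; more precisely, the next choice of $u$ without $v$ is $b$, or nothing. Because $x<x_b$, or there is no backup, $u$ again picks $v$ actively.
\end{enumerate}

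\textbf{Density consequence.} Fix all ranks other than $x_v$. The event ``profile $(x_u,x_v^A,\cdot)$ with $v$ as the match'' is then an up-closed set of values of $x_v$ within $[0,x_b)$ (or $[0,1]$). Summing over the choice of $v$ and integrating over the other ranks gives a monotonically increasing density, exactly as in \cref{monoto_before_backup_ranking}. The backup $b$ and its rank also remain the same, since $\M(\vecx_{\shortminus vu^\ast})$ is untouched.
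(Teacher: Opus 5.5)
Your proposal is correct and follows essentially the same route as the paper's proof. The shared argument is that demoting $v$ cannot make it more attractive to any vertex deciding before $\pi(u)$, so the matching up to $u$'s decision is unchanged, and at $\pi(u)$ the next option for $u$ after $v$ is $b$ (with $x_b>x$) or nothing, so $u$ again actively picks $v$; your use of \cref{lem:alt-path} just makes rigorous the paper's informal ``matching until $u$ is not changed'' step, and your step 5 (the case $\pi(v)<\pi(u)$) is in fact vacuous, since $u$ itself would be an available neighbor of $v$ at time $\pi(v)$ and $v$ would then match actively.
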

See \nameref{sec:appendix-a} for a proof. 

We split the profile of $u$ into two large subclasses $P$ and $A$, depending on the active/passiveness of $u$ when it is matched to $v$ in $\M(\vecx_{\shortminus u^*})$. Class $P$ consists of profiles of type $(x_u,x_v^P,\bot)$, $(x_u,x_v^P,x_b^P)$, and $(x_u,x_v^P,x_b^A)$; class $A$ consists of profiles of type $(x_u,x_v^A,\bot)$, $(x_u,x_v^A,x_b^A)$, and $(x_u,\bot,\bot)$. We have the following property with respect to class $P$ and $A$:
\begin{fact}[Marginal Rank for $u$]
\phantomsection
\label{fact:theta_1_u}
    Fix an arbitrary $\vecx_{\shortminus uu^*}$, there exists a marginal rank $\theta_1^u$ for $u$ such that if $x_u<\theta_1^u$, then the profile of $u$ with respect to $\vecx_{\shortminus u^*}$ is of class $P$; if $x_u>\theta_1^u$, then the profile of $u$ with respect to $\vecx_{\shortminus u^*}$ belongs to class $A$.
\end{fact}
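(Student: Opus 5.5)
We fix $\vecx_{\shortminus uu^*}$, vary only the rank $y$ of $u$, and set
$$\theta_1^u=\sup\{y\in[0,1] \;|\; \text{$u$ is passively matched in }\M(\vecx_{\shortminus u^*}(u=y))\},$$
with $\sup\emptyset=0$. This is exactly the marginal rank of \cref{theta_1_franking}, but applied to $u$ in the graph with $u^*$ removed rather than to $u^*$ in the full graph.

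The key monotonicity input is \cref{fact:matching_guarantee_franking}, property 3. Its proof is a statement about an arbitrary vertex being demoted in an arbitrary fixed list, so it applies verbatim with $u^*$ replaced by $u$ and the ambient rank vector taken to be $\vecx_{\shortminus uu^*}$. Concretely: if $u$ is passively matched at rank $y_0$, then it is passively matched at every $y\le y_0$. The intuition is that promoting $u$ only makes it more attractive to the earlier deciders who picked it. In alternating-path language, take the path between the two rank settings. The first deviation occurs at the earliest decision time at which some earlier vertex prefers $u$, and by \cref{lem:alt-path} part 3 the later deviations cannot undo $u$ being picked by someone before its own decision time.

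It follows that the set of ranks at which $u$ is passive is an initial segment $[0,\theta_1^u)$ or $[0,\theta_1^u]$. Hence for $y<\theta_1^u$, $u$ is passively matched in $\M(\vecx_{\shortminus u^*})$. In that case its match $v$ is passive-type, so the profile is $(x_u,x_v^P,\cdot)$. By \cref{fact:franking_backup}, the backup (if any) carries superscript $P$ or $A$, or is $\bot$, which is exactly class $P$.

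For $y>\theta_1^u$, $u$ is not passively matched, so it is either actively matched or unmatched. If it is actively matched, \cref{fact:franking_backup} forces any backup to be active, giving $(x_u,x_v^A,\bot)$ or $(x_u,x_v^A,x_b^A)$. If it is unmatched, the profile is $(x_u,\bot,\bot)$. All three lie in class $A$. The boundary point $y=\theta_1^u$ has measure zero and is irrelevant.

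The main obstacle is justifying the transfer of property 3 from $u^*$ to $u$. The paper defers that proof to the appendix, so I would re-derive it directly. Suppose $u$ is passively picked by $w$ at rank $y_0$, and lower $u$'s rank to $y<y_0$. Run both processes in parallel up to $\pi(w)$, and argue by induction over decision times before $\pi(w)$ that the two available sets either coincide or differ only in that $u$ has already been taken. Each decider before $\pi(w)$ either chooses the same vertex in both processes or switches to $u$. The first such switch already makes $u$ passively matched. If no switch happens, then at $\pi(w)$ the vertex $u$ is still available and ranked even better, so $w$ picks $u$. Either way $u$ is passive, and this closes the argument.
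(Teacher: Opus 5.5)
Your proposal is correct and follows the paper's argument: the paper also defines $\theta_1^u$ as the marginal rank of $u$ with respect to $\vecx_{\shortminus uu^*}$, applies property 3 of \cref{fact:matching_guarantee_franking} to $u$ instead of $u^*$ to conclude that passivity is downward closed in $x_u$, and reads off the profile class from \cref{fact:franking_backup}. Your direct re-derivation of the promotion property is essentially the paper's appendix proof of property 3, so nothing is missing.
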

This comes easily from \cref{theta_1_franking} together with \cref{fact:matching_guarantee_franking}, property 3, applied to rank vector $\vecx_{\shortminus uu^*}$. We simply define $\theta_1^u$ as the marginal rank of $u$ with respect to $\vecx_{\shortminus uu^*}$. Before $\theta_1^u$, $u$ is matched passively and after $\theta_1^u$, $u$ is matched actively or unmatched. We use the case-by-case bounding functions $GF$ for $x_u<\theta_1^u$ and $x_u>\theta_1^u$ respectively. Fix arbitrary $\theta_1^u\in[0,1]$.
\begin{claim}
\phantomsection
\label{claim:Franking_lower_bound_uniform_profiles_P}
    If $x_u<\theta_1^u$, then the expected value of gain$(u)$ $+$ gain$(u^\ast)$ is lower bounded by \footnote{$GF$ is independent of $x_v^P$ and $x_b^P$, so we might aswell assume $x_v^P=x_b^P=0$ when not mentioned.}
    \begin{align*}
  GF_u^P(x_u) = \min\;\left\{
    \begin{array}{c}
     \ GF(x_u,x_v^P,\bot), \\[4pt]
      GF(x_u,x_v^P,x_b^P), \\[4pt]
      \displaystyle \inf_{0<x_b^A\leq1}\biggl[GF(x_u,x_v^P,x_b^A)\biggr].
    \end{array}
  \right\}.
\end{align*}
\end{claim}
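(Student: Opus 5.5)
The claim is a conditioning-and-minimum argument. Fix $\vecx_{\shortminus uu^*}$ and $x_u<\theta_1^u$. By \cref{fact:theta_1_u}, the profile of $u$ with respect to $\vecx_{\shortminus u^*}$ is then in class $P$. So it is one of $(x_u,x_v^P,\bot)$, $(x_u,x_v^P,x_b^P)$, $(x_u,x_v^P,x_b^A)$ for some $x_v$, and in the last case some $x_b\in(0,1]$. Note that $u$ cannot be unmatched here, since the profile $(x_u,\bot,\bot)$ was put in class $A$.

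First I will condition on the full vector $\vecx_{\shortminus u^*}$. The expected gain over $x\sim U[0,1]$ is then at least the $GF$ value of the realized profile, by \cref{franking_lower_bound_Pbot}, \cref{franking_lower_bound_PP} and \cref{franking_lower_bound_PA} respectively. Strictly, those claims bound the expectation conditioned on the profile. Their proofs, however, fix $\vecx_{\shortminus u^*}$ and take the worst $\theta_0$ (a minimum over $\theta_0$ is built into each $GF$). So they hold pointwise for each fixed $\vecx_{\shortminus u^*}$ with that profile, and the bound passes to any mixture of such vectors.

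Next I use that none of the three class-$P$ bounds depends on $x_v$. In addition, $GF(x_u,x_v^P,x_b^P)=g(x_u)$ does not depend on $x_b$, and $GF(x_u,x_v^P,x_b^A)$ depends on $x_b$ only through the stated formula. Hence for every fixed $\vecx_{\shortminus u^*}$ in this regime, the conditional expected gain is at least
\[
\min\Bigl\{GF(x_u,x_v^P,\bot),\;GF(x_u,x_v^P,x_b^P),\;\inf_{0<x_b^A\le 1}GF(x_u,x_v^P,x_b^A)\Bigr\}=GF_u^P(x_u).
\]
Averaging over the conditional distribution of $\vecx_{\shortminus u^*}$ given $x_u<\theta_1^u$ (equivalently, over the remaining randomness with $x_u$ fixed) preserves this lower bound. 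No monotonicity of the profile distribution (\cref{lemma:monotonicity_Franking}) is needed here, because the bounds are independent of $x_v$.

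The only delicate step is the case $(x_u,x_v^P,x_b^A)$. There, the proof of \cref{franking_lower_bound_PA} replaced $\min\{g(x_u),1-g(x_b)-h(x_b)\}$ by $1-g(x_b)-h(x_b)$, justified only by the fact that the worst case over the class is taken. I will make this rigorous as follows. If $g(x_u)\le 1-g(x_b)-h(x_b)$, then $u$ alone always earns at least $g(x_u)$ in that profile: it stays passively matched, or if made worse off it is actively matched with gain at least $1-g(x_b)-h(x_b)\ge g(x_u)$. Since $u^*$'s gain is nonnegative, the total is at least $g(x_u)=GF(x_u,x_v^P,x_b^P)$, which is one of the terms in the minimum. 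Otherwise the bound $GF(x_u,x_v^P,x_b^A)$ applies directly. In either case the realized expected gain is at least the three-term minimum, which completes the argument.
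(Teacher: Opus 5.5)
Your proposal is correct and follows essentially the same route as the paper. By \cref{fact:theta_1_u} the profile lies in class $P$, and you take the minimum of the three class-$P$ bounds (\cref{franking_lower_bound_Pbot,franking_lower_bound_PP,franking_lower_bound_PA}); since these do not depend on $x_v$, no profile-distribution monotonicity is needed. Your case split on whether $g(x_u)\le 1-g(x_b)-h(x_b)$ is the same justification the paper gives, inside the proof of \cref{franking_lower_bound_PA}, for replacing $\min\{g(x_u),1-g(x_b)-h(x_b)\}$ by $1-g(x_b)-h(x_b)$; you simply state it more explicitly.
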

\begin{proof}
    When $x_u<\theta_1^u$, the profile of $u$ with respect to $\vecx_{\shortminus u^*}$ falls in class $P$. By taking the worst case bound for all 3 types of profiles in class $P$ (\cref{franking_lower_bound_Pbot,franking_lower_bound_PP,franking_lower_bound_PA}), we get a the above bound.
\end{proof}

\begin{claim}
\phantomsection
\label{claim:Franking_lower_bound_uniform_profiles_A}
    If $x_u>\theta_1^u$, then the expected value of gain$(u)$ $+$ gain$(u^\ast)$ is lower bounded by
\begin{align*}
  GF_u^A(x_u) = \min\;\left\{
    \begin{array}{c}
      GF(x_u,\bot,\bot), \\[4pt]
      \displaystyle \inf_{0\le v_0<1}\biggl[\frac{1}{1-v_0}\int_{v_0}^{1} GF(x_u,x_v^A,\bot)\,dx_v\biggr], \\[4pt]
      \displaystyle \inf_{0\le v_0<b_0\leq 1}\biggl[\frac{1}{b_0 - v_0}\int_{v_0}^{b_0} GF(x_u,x_v^A,x_b^A=b_0)\,dx_v\biggr]
    \end{array}
  \right\}.
\end{align*}
\end{claim}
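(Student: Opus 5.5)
The plan mirrors the proof of \cref{claim:ranking_lower_bound_uniform_profiles}. Fix $x_u>\theta_1^u$ together with the randomness of $\vecx_{\shortminus uu^*}$ that determines $\theta_1^u$. By \cref{fact:theta_1_u}, the profile of $u$ with respect to $\vecx_{\shortminus u^*}$ is then of class $A$, so it is one of $(x_u,\bot,\bot)$, $(x_u,x_v^A,\bot)$, or $(x_u,x_v^A,x_b^A)$. We decompose the conditional expectation of gain$(u)$ $+$ gain$(u^\ast)$ by profile type. For the type $(x_u,\bot,\bot)$ we use $GF(x_u,\bot,\bot)$ from \cref{franking_lower_bound_botbot}. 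For the other two types, the contribution is an integral of $GF(x_u,x_v^A,\cdot)$ against the conditional density of $x_v$, and that density must still be controlled. Since a convex combination of quantities is at least their minimum, it suffices to show that each of these two integrals, normalized, is at least the corresponding infimum in the statement.

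For the type $(x_u,x_v^A,x_b^A)$, condition further on $x_b=b_0$ and on everything other than $x_v$. By \cref{lemma:monotonicity_Franking}, the conditional density $p(x_v)$ on $[0,b_0)$ is monotonically non-decreasing. We then write $p$ as a mixture of uniform densities via the layer-cake decomposition. Concretely, set $p(y)=\int_0^{y} dp(s)+p(0)$, or equivalently express $p=\int_0^{b_0} \frac{\mathds{1}[v_0\le y<b_0]}{b_0-v_0}\,d\mu(v_0)$ for a nonnegative measure $\mu$ with $d\mu(v_0)=(b_0-v_0)\,dp(v_0)$ plus an atom $b_0\,p(0)$ at $v_0=0$. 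Normalization gives $\mu([0,b_0))=1$.

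Applying Fubini, we get
\[\int_0^{b_0}p(x_v)\,GF(x_u,x_v^A,b_0)\,dx_v=\int \Bigl[\tfrac{1}{b_0-v_0}\int_{v_0}^{b_0}GF(x_u,x_v^A,b_0)\,dx_v\Bigr]d\mu(v_0).\]
The right-hand side is at least the infimum over $0\le v_0<b_0$. Integrating over $b_0$ then preserves this lower bound. The no-backup type $(x_u,x_v^A,\bot)$ is handled in exactly the same way with $b_0=1$, again using \cref{lemma:monotonicity_Franking}.

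One subtlety needs checking: $GF(x_u,x_v^A,\cdot)$ must be a valid pointwise lower bound for every profile value. This follows from \cref{franking_lower_bound_Abot,franking_lower_bound_AA}. It holds because each such $GF$ already minimizes over all admissible $(\theta_0,\theta_1)$, in particular over those with $\theta_1\ge x_v$, so it does not depend on the conditional distribution of $\theta_0,\theta_1$.

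The main obstacle is justifying the monotone-density step rigorously. One has to check that the demotion map of \cref{lemma:monotonicity_Franking} (moving $v$ from $x_v$ to any $x\in[x_v,b_0)$) is measure-preserving and injective from profiles with rank $x_1$ into profiles with rank $x_2$, while keeping $u$ active, $x_b$ fixed, and the class-$A$ condition intact. The lemma guarantees that the whole matching is unchanged, and $\vecx_{\shortminus uu^*}$ is unaffected except at $v$, so $\theta_1^u$ might in principle shift. To handle this, I would condition on $x_u$ directly and observe that class membership is determined by the profile itself, which the demotion preserves. Hence the inequality of densities holds pointwise, and the mixture argument goes through, giving $GF_u^A(x_u)$.
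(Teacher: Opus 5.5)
Your proposal is correct and follows the paper's route. You split class $A$ into its three profile types, use \cref{lemma:monotonicity_Franking} to get a non-decreasing conditional density of $x_v$, and rewrite that density as a mixture of uniforms on $[v_0,b_0)$ via Fubini, which is exactly the content of \cref{lem:inf-bound}. Parametrizing the mixture by $dp$ rather than by horizontal level sets is only a bookkeeping difference. Your final resolution is also the right reading of the claim: condition on $x_u$ and on class-$A$ membership as a property of the profile, which demotion preserves. Fixing all of $\vecx_{\shortminus uu^*}$, as your first sentence and the paper's own write-up suggest, would freeze $x_v$ and leave nothing to average.
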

\begin{proof}
    When $x_u>\theta_1^u$, the profile of $u$ with respect to $\vecx_{\shortminus u^*}$ falls in class $A$. By the monotonicity property (\cref{lemma:monotonicity_Franking}) and \nameref{uniform-profiles}, we can take the worst case uniform distribution to bound the 3 types of profiles in class $A$ (\cref{franking_lower_bound_botbot,franking_lower_bound_Abot,franking_lower_bound_AA}), which results in the above bound.
\end{proof}
Now we can lower bound the expected gain over $x_u$ assuming the worst case $\theta^u_1$. i.e.
\begin{claim}
    The expected value of gain$(u)$ $+$ gain$(u^\ast)$ for \Franking{} is lower bounded by
    $$\min_{\theta_1^u}\left[\int_0^{\theta_1^u} GF_u^P(x_u)\;dx_u + \int_{\theta_1^u}^{1} GF_u^A(x_u) \;dx_u\right]$$
\end{claim}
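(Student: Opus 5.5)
The plan is to condition on $\vecx_{\shortminus uu^\ast}$, integrate over the rank of $u$, and then pass to the worst case over the threshold $\theta_1^u$.

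\textbf{Step 1: freeze everything except $u$.} Fix an arbitrary $\vecx_{\shortminus uu^\ast}$. By \cref{fact:theta_1_u} there is a marginal rank $\theta_1^u$, determined solely by $\vecx_{\shortminus uu^\ast}$, that separates class $P$ profiles ($x_u<\theta_1^u$) from class $A$ profiles ($x_u>\theta_1^u$). The ranks are independent, so $x_u$ is still uniform on $[0,1]$ conditional on $\vecx_{\shortminus uu^\ast}$. Hence
\[
\ev[\text{gain}(u)+\text{gain}(u^\ast)]=\ev_{\vecx_{\shortminus uu^\ast}}\left[\int_0^1 \ev\left[\text{gain}(u)+\text{gain}(u^\ast)\,\middle|\,\vecx_{\shortminus uu^\ast},x_u\right]dx_u\right].
\]
Split the inner integral at $\theta_1^u$.

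\textbf{Step 2: invoke the class bounds.} We want to use \cref{claim:Franking_lower_bound_uniform_profiles_P} to bound the conditional expectation by $GF_u^P(x_u)$ on $[0,\theta_1^u)$, and \cref{claim:Franking_lower_bound_uniform_profiles_A} to bound it by $GF_u^A(x_u)$ on $(\theta_1^u,1]$.

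The subtlety is that those claims were stated conditional on the rank of $u$ and the class, averaging over the remaining randomness. To apply them here, I would reorganize the conditioning as follows.
\begin{itemize}
\item \emph{Class $P$.} The bounds $GF$ for these profiles are pointwise: they hold for each fixed $\vecx_{\shortminus u^\ast}$ after taking the expectation over $x$ alone. So they apply for each fixed $(\vecx_{\shortminus uu^\ast},x_u)$, and the worst case among the three profile types gives $GF_u^P(x_u)$.
\item \emph{Class $A$.} The uniform-profile reduction relies on the monotone density of $x_v$ from \cref{lemma:monotonicity_Franking}. This is a statement about the distribution over the vertex $v$'s rank, not over $\vecx_{\shortminus uu^\ast}$ as a whole. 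I would therefore restate Step 1 more carefully: the outer average is over $\vecx_{\shortminus uu^\ast v}$ together with the identity of $v$, and the inner average is over $x_v$. Demoting $v$ within $[x_v,x_b)$ preserves the matching (\cref{lemma:monotonicity_Franking}), so it preserves class membership for fixed $x_u$. One must also check that it does not move $\theta_1^u$ across $x_u$.
\end{itemize}

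\textbf{Step 3: take the worst case over $\theta_1^u$.} This gives
\[
\ev[\cdots]\ \ge\ \ev_{\vecx_{\shortminus uu^\ast}}\left[\int_0^{\theta_1^u}GF_u^P\,dx_u+\int_{\theta_1^u}^1 GF_u^A\,dx_u\right]\ \ge\ \min_{\theta_1^u}\left[\int_0^{\theta_1^u}GF_u^P\,dx_u+\int_{\theta_1^u}^1 GF_u^A\,dx_u\right],
\]
since an average of quantities is at least their minimum.

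\textbf{Main obstacle.} The hard step is the measure-theoretic compatibility in Step 2. We must justify that conditioning on $x_u>\theta_1^u$, a threshold that depends on $\vecx_{\shortminus uu^\ast}$, does not destroy the monotone-density property used in \cref{claim:Franking_lower_bound_uniform_profiles_A}. My first approach is to apply \cref{lemma:monotonicity_Franking} pointwise:
\begin{enumerate}
\item Fix $x_u$ and all ranks except $x_v$, and look at the fibre of $x_v$ values giving an active profile with $u$ matched to $v$.
\item On that fibre the profile is of type $A$, so we are on the class $A$ side.
\item Since demoting $v$ leaves the matching $\M(\vecx_{\shortminus u^\ast})$ unchanged, the fibre is an upward-closed interval up to $x_b$.
\end{enumerate}
The monotone-density argument of \nameref{uniform-profiles} then applies fibre by fibre, with no global argument needed.
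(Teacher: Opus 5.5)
Your proof is correct and has the same skeleton as the paper's: fix $\vecx_{\shortminus uu^\ast}$, split the $x_u$-integral at the threshold $\theta_1^u$ of \cref{fact:theta_1_u}, apply the class bounds, and pass to the worst $\theta_1^u$. The difference is that you treat the class-$A$ side more carefully than the paper does.

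The paper's one-line proof asserts $\int_0^{\theta_1^u}GF_u^P+\int_{\theta_1^u}^1GF_u^A\le\ev[\text{gain}(u)+\text{gain}(u^\ast)\mid\vecx_{\shortminus uu^\ast}]$ for each fixed $\vecx_{\shortminus uu^\ast}$. But fixing $\vecx_{\shortminus uu^\ast}$ also fixes $x_v$. The per-profile bound $GF(x_u,x_v^A,\cdot)$ need not dominate $GF_u^A(x_u)$, because $GF_u^A$ only controls uniform averages over $x_v$. So that pointwise inequality does not follow from the preceding claims. Your fibration is what actually justifies the class-$A$ bound: fix a candidate $v$, $x_u$ and all other ranks, and let $x_v$ range over an upward-closed interval ending at $x_b$ (or at $1$).

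A few points would tighten the write-up:
\begin{itemize}
\item $x_b$ is constant along a fibre, because the backup is defined through $\vecx_{\shortminus vu^\ast}$, which does not involve $x_v$.
\item Your extra check that $\theta_1^u$ does not move across $x_u$ is automatic. By \cref{fact:theta_1_u}, which side of $\theta_1^u$ the rank $x_u$ lies on is read off from whether $u$ is passive in $\M(\vecx_{\shortminus u^\ast})$, and \cref{lemma:monotonicity_Franking} leaves that matching unchanged.
\item On a single fibre $x_v$ is exactly uniform. So the bound is immediate by taking $v_0$ to be the fibre's lower end and $b_0=x_b$; no monotone-density lemma is needed.
\item Make the return to Step 3 explicit. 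Since $GF_u^A$ depends on $x_u$ alone, its integral over the class-$A$ region of $(x_u,\vecx_{\shortminus uu^\ast})$-space can be computed fibre by fibre, or by Fubini as $\ev_{\vecx_{\shortminus uu^\ast}}\bigl[\int_{\theta_1^u}^1GF_u^A(x_u)\,dx_u\bigr]$. This is exactly the term your final chain of inequalities needs.
\end{itemize}
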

\begin{proof}
    For each fixed $\vecx_{\shortminus uu^*}$, LHS is a lower bound for the expected gains, as by \cref{fact:theta_1_u}, there exists a $\theta_1^u$ for each $\vecx_{\shortminus uu^*}$ such that
    $$\int_0^{\theta_1^u} GF_u^P(x_u)\;dx_u + \int_{\theta_1^u}^{1} GF_u^A(x_u) \;dx_u\leq \ev_{\vecx}[\text{ gain($u$) $+$ gain($u^\ast$)}\;|\; \text{$\vecx$ agrees $\vecx_{\shortminus uu^*}$ }]$$
\end{proof}

\section{Discretization}
In both the \ranking{} and \Franking{} LPs, there are two systems of variables that could be optimized. They are: 1. the gain function $g$ and the compensation function $h$, and 2. the distribution of profiles and various cutting points $(\theta_0,\theta_{u^\ast},\dots)$. We need to fix one of them to make the inequality system an LP so that we can solve it using LP solvers. In this work, we follow the approach of fixing 2 and taking 1 as a variable.

Fix a size parameter $n$. The unit interval $(0,1]$\footnote{We assume all vertices $v$ have $x_v>0$, as the event $x_v=0$ has 0 probability measure for any vertex $v$.} can be uniformly discretized into $n$ pieces; we will use $i\in {1,\dots,n}$ to indicate the interval $(\frac{i-1}{n},\frac{i}{n}]$. For a fixed size $n$, we will only consider the functions $g,h$ as piecewise constant functions on each grid block $(\frac{i-1}{n},\frac{i}{n}]\times(\frac{j-1}{n},\frac{j}{n}]$. Since $g$ and $h$ are piecewise constant within each grid block, our final LP will also consist of all the variables $x_u,x_v,x_b,\theta_0...$ fixed at each grid point $\frac{i}{n}$ or $(\frac{i}{n})^+$ for $i\in\{0,...,n\}$). 

Intuitively, this is because $g$ and $h$ are constant on each block, their values depend only on which sub-interval each parameter lies in, and not on the exact value within the sub-interval. Therefore, once the sub-intervals containing the parameters $x_u,x_v,\dots$ are fixed, all occurrences of $g$ and $h$ become constants. As a result, the objective and constraints become linear functions of the parameters. Hence, the worst-case value over each such region is attained at the boundary, i.e., when each parameter is set to either $(\frac{i}{n})^+$ or $\frac{i}{n}$.
\subsection{Discretized \ranking{} LP}
\label{subsec:discre-ranking_LP}
Fix an arbitrary integer $n>0$. The piecewise gain function $g$ and the compensation function $h$ for the \ranking{} LP can be described by variables $g(i,j)$ and $h(k,l)$, where $1\leq i,j\leq n$ and $0\leq k,l\leq  n$. For $x,y\in (0,1]$, $g(x,y)=g(i,j)$ if $x\in(\frac{i-1}{n},\frac{i}{n}]$ and $y\in (\frac{j-1}{n},\frac{j}{n}]$. Similarly, $h(x,y)=h(k,l)$ if $x\in(\frac{k-1}{n},\frac{k}{n}]$ and $y\in(\frac{l-1}{n},\frac{l}{n}]$. Again, we reserve $h(x,0)$ for the degenerate cases $\theta_0=0$. The function constraints (\cref{def:function_constraint_ranking}) translate to
\begin{align*}
    &1.\quad\forall i\;,\forall j<n && g(i,j) \le g(i,j+1),\\
    &\phantom{1.}\quad\forall k\;,\forall l<n  && h(k,l) \le h(k,l+1),\\
    &2.\quad\forall i<n,\forall j && g(i,j) \ge g(i+1,j),\\
    &\phantom{2.}\quad\forall k<n,\forall l  && h(k,l) \ge h(k+1,l),\\
    &3.\quad\forall k &&h(k,0)=0,\\
    &4.\quad\forall i,j && 1 - g(i,j) - h(i,j) \ge 4h(1,n),\\
    &5.\quad\forall i,j && g(i,j) - h(j,i) \ge 4h(1,n).
\end{align*}
The buyer, product gain functions assuming compensation, $g_B,g_P$ (\cref{def:gain_with_comp_function}) are translated to:
\begin{align*}
    &\forall i,j  &&g_B(i,j)=1-g(i,j)-h(i,j),\\
    &\forall i,j  &&g_P(i,j)=g(i,j)-h(j,i).
\end{align*}

Now we continue to discretize the $G$ functions. \textbf{We assume all $\bm{i}$ indices take values in $\bm{\{1,...,n\}}$, except $\bm{i_{\theta_0}}$, which may take $\bm{0}$}.

For $G(x_u,\bot,\bot)$ (\cref{G_ranking_nomatch}), where $x_u\in(\frac{i_u-1}{n},\frac{i_u}{n}]$ the bound directly translates to
$$\forall i_u,\quad\quad G(i_u,\bot,\bot)\leq \frac{1}{n}\left(\sum_{j=1}^n g_P(i_u,j)\right),$$
by directly integrating over the step function $g_P.$ 

For $G(x_u,x_v,\bot)$ (\cref{G_ranking_nobackup}), where $x_u\in(\frac{i_u-1}{n},\frac{i_u}{n}]$ and $x_v\in (\frac{i_v-1}{n},\frac{i_v}{n}]$, the bound translates to: 
\begin{align*}
\shortintertext{$\forall i_{\theta_0}\leq i_u\leq i_v$:}
\qquad G(i_u,i_v,\bot)\leq{}&
\frac{1}{n}\left[\left(\sum_{j=1}^{i_v-1} g_P(i_u,j)\right)+\frac{1}{2}g_P(i_u,i_v)\right. \\
&\quad +(n-i_v)\cdot(h(i_u,i_{\theta_0})+h(i_v,i_u))\\
&\quad +\left(\sum_{j=1}^{i_{\theta_0}} h(j,i_u)\right)+i_{\theta_0}\cdot h(i_u,i_v)+ (n-i_{\theta_0})\cdot g_B(i_u,i_v)\left. \vphantom{\int} \right],& (1) \\[0.6em]
\shortintertext{$\forall  i_{\theta_0},i_v\leq i_u$:}
\qquad G(i_u,i_v,\bot)\leq{}&
\frac{1}{n}\left[\left(\sum_{j=1}^{i_{\theta_0}} g(i_v,j)\right)+\left(\sum_{j=i_{\theta_0}+1}^{i_{v}-1} g_P(i_u,j)\right) \right.\\
&\quad +(n-\max\{i_{\theta_0},i_v-1\})\cdot(h(i_v,i_{\theta_0})+h(i_v,i_u)) \\
&\quad +i_{\theta_0}\cdot h(i_u,i_v)+ (n-i_{\theta_0})\cdot g_B(i_u,i_v)\left. \vphantom{\int} \right],& (2)\\[0.6em]
\shortintertext{$\forall i_v\leq i_u \wedge i_u-1\leq  i_{\theta_0}\leq i_u$:}
\qquad G(i_u,i_v,\bot)\leq{}&
\frac{1}{n}\left[\left(\sum_{j=1}^{i_{\theta_0}} g(i_v,j)\right)+ (n-i_{\theta_0})\cdot h(i_v,i_u) \right.\\
&\quad + (n-i_{\theta_0})\cdot g_B(i_u,i_v)\left. \vphantom{\int}\right],& (3)
\end{align*}
Where $(1),(2),(3)$ correspond to equations $(1),(2),(3)$ in \cref{G_ranking_nobackup}.

\noindent\textbf{Expression (1).} Instead of considering adversarial choice of $x_u,x_v$ within the sub-interval, $G(i_u,i_v,\bot)$ is a lower bound for the expected total gain conditioning on the event $x_u\leq x_v$, $x_u\in(\frac{i_u-1}{n},\frac{i_u}{n}]$, and $x_v\in (\frac{i_v-1}{n},\frac{i_v}{n}]$. Conditioning on this, we know that $x_v$ has a monotonically increasing pdf through the sub-interval $(\frac{i_v-1}{n},\frac{i_v}{n}]$ by \cref{monoto_before_backup_ranking}. Hence, we have that the expected value of the first term satisfies
$$\ev_{x_v}[\int_0^{x_v}g_P(x_u,x)dx]\geq \int_{\frac{i_v-1}{n}}^\frac{i_v}{n}\Bigl(\int_0^{x_v}g_P(x_u,x)\;dx\Bigr)\;dx_v = \frac{1}{n}\left[\left(\sum_{j=1}^{i_v-1} g_P(i_u,j)\right)+\frac{1}{2}g_P(i_u,i_v)\right]$$
As in the worst-case, $x_v$ is a uniform distribution from $\frac{i_v-1}{n}$ to $\frac{i_v}{n}$, and further, this term is independent of $\theta_0$. 

We bound the remaining terms in $(1)$ by considering worst-case choices for $\theta_0,x_v$. Fixing the sub-intervals  $\theta_0,x_v$ lie in, the remaining terms are monotonically decreasing with $x_v$ (the only term that is affected is $(1-x_v)$, so we take $x_v=\frac{i_v}{n}$ (the right-side boundary value). Further, for each fixed $i_{\theta_0}\in\{0,..,i_u\}$, grid-point choices are $\theta_0=\frac{i_{\theta_0}}{n}$ (right boundary value) and $\theta_0=(\frac{i_{\theta_0}-1}{n})^+$(left boundary value). Notice that for each grid point $i_{\theta_0}$, when comparing the equation value for $\theta_0=\frac{i_{\theta_0}}{n}$ and $(\frac{i_{\theta_0}}{n})^+$, we always  have $\theta_0=\frac{i_{\theta_0}}{n}$ resulting in a worse-case value. As the equation satisfies
{\small
\begin{align*}
    &(1-x_v)\cdot\Bigl(h(x_u,(\frac{i_{\theta_0}}{n})^+)+h(x_v,x_u)\Bigr)
    +\int_0^{(\frac{i_{\theta_0}}{n})^+} h(x,x_u)\,dx
    +(\frac{i_{\theta_0}}{n})^+\cdot\,h(x_u,x_v)+(1-(\frac{i_{\theta_0}}{n})^+)\cdot\,g_B(x_u,x_v)\\
    =&(1-x_v)\cdot\Bigl(h(x_u,(\frac{i_{\theta_0}+1}{n}))+h(x_v,x_u)\Bigr)
    +\int_0^{\frac{i_{\theta_0}}{n}} h(x,x_u)\,dx
    +\frac{i_{\theta_0}}{n}\cdot\,h(x_u,x_v)
    +(1-\frac{i_{\theta_0}}{n})\cdot\,g_B(x_u,x_v)\\
    \geq &(1-x_v)\cdot\Bigl(h(x_u,\frac{i_{\theta_0}}{n})+h(x_v,x_u)\Bigr)
    +\int_0^{\frac{i_{\theta_0}}{n}} h(x,x_u)\,dx
    +\frac{i_{\theta_0}}{n}\cdot\,h(x_u,x_v)
    +(1-\frac{i_{\theta_0}}{n})\cdot\,g_B(x_u,x_v).
\end{align*}
}
Intuitively, the choice of $\frac{i_{\theta_0}}{n}$ or $(\frac{i_{\theta_0}}{n})^+$ only affects the value of the $h$ function, where a smaller value is achieved when $\theta_0=\frac{i_{\theta_0}}{n}$. Hence we choose $\theta_0=\frac{i_{\theta_0}}{n}$ and the resulting system of bounds is $(1)$.

\noindent For expressions $(2),(3)$, we combined 
$$\int_0^{\theta_0} g_P(x_v,x)dx+\int_0^{\theta_0} h(x,x_v)dx=\int_0^{\theta_0} g(x_v,x)-h(x,x_v)\;dx+\int_0^{\theta_0} h(x,x_v)dx$$
into one term $$\int_0^{\theta_0} g(x_v,x)dx.$$ This is where terms of the form $g(i_v,j)$ arise.

\noindent\textbf{Expression (2).} For this case, we can no longer use the monotonicity of $x_v$, as the first term in the expression is no longer independent of $\theta_0$. So we consider the worst-case choice of $\theta_0$ and $x_v$ within each sub-interval for all the terms altogether. By the same line of reasoning as case $(1)$, we choose $\theta_0=\frac{i_{\theta_0}}{n}$ for $\theta_0\in\{0,...,i_{u}\}$ as it results in worse-case bounds. For $x_v$, we will choose $x_v=(\frac{i_{v}-1}{n})^+$. This is because when the exact value of $\theta_0,x_u$, and the sub-interval of $x_v$ are fixed, decreasing $x_v$ results in decreasing function values. 

To elaborate on this, when $x_v>\theta_0$, decrease $x_v$ by $\delta$ for $ \delta\in[0, x_v-\max\{\theta_0,\frac{i_v-1}{n}\})$\footnote{Decrease $x_v$ while making sure that it is still within the sub-interval $(\frac{i_v-1}{n},\frac{i_v}{n}]$ and also $\geq\theta_0$}, the function changes by a non-positive value 
$$\delta\cdot \Big[ (h(i_v,i_{\theta_0})+h(i_v,i_u))-g_P(i_u,i_v)\Big]\leq 0,$$
as the summation of two copies of compensation is smaller than or equal to a copy of matched gain (\cref{def:function_constraint_ranking}). On the other hand, if $x_v\leq\theta_0$, then decreasing $x_v$ within the sub-interval does not change the equation evaluation.

\noindent\textbf{Expression (3).} In this case, we consider two grid points $\theta_0=\frac{i_u}{n}$ and $\theta_0=(\frac{i_u-1}{n})^+$. The inequality range $i_u-1\leq i_{\theta_0}\leq i_u$ corresponds to the two cases considered.

For $G(x_u,x_v,x_b)$ (\cref{G_ranking_backup}), where $x_u\in(\frac{i_u-1}{n},\frac{i_u}{n}]$,  $x_v\in (\frac{i_v-1}{n},\frac{i_v}{n}]$ and $x_b\in (\frac{i_b-1}{n},\frac{i_b}{n}]$ ,the bounds are translated to : 
\begin{align*}
\shortintertext{$\forall i_{\theta_0}\leq i_u\leq i_v<i_b$:}
\qquad G(i_u,i_v,i_b)\leq{}&
\frac{1}{n}\left[\left(\sum_{j=1}^{i_v-1} g_P(i_u,j)\right)+ \frac{1}{2} g_P(i_u,i_v) \right.\\
&\quad +\max\{i_b-i_v-1,0\}\cdot(h(i_u,i_{\theta_0})+h(i_v,i_u)) \\
&\quad +i_{\theta_0}\cdot g_B(i_u,i_b)+ (n-i_{\theta_0})g_B(i_u,i_v)\left. \vphantom{\int} \right],& (1.1)\\[0.6em]
\shortintertext{$\forall i_{\theta_0}\leq i_u\leq i_v=i_b$:}
\qquad G(i_u,i_v,i_b)\leq{}&
\frac{1}{n}\left[\left(\sum_{j=1}^{i_v-1} g_P(i_u,j)\right)\right.\\
&\quad +i_{\theta_0}\cdot g_B(i_u,i_b)+ (n-i_{\theta_0})g_B(i_u,i_v)\left. \vphantom{\int} \right],& (1.2)\\[0.6em]
\shortintertext{$\forall i_{\theta_0}, i_v\leq i_u$:}
\qquad G(i_u,i_v,i_b)\leq{}&
\frac{1}{n}\left[\left(\sum_{j=1}^{i_{\theta_0}} g_P(i_v,j)\right)+\left(\sum_{j=i_{\theta_0}+1}^{i_v-1} g_P(i_u,j)\right) \right.\\
&\quad + \max\{i_b-1-\max\{i_{\theta_0},i_v-1\},0\}\cdot(h(i_v,i_{\theta_0})+h(i_v,i_u)) \\
&\quad +i_{\theta_0}\cdot g_B(i_u,i_b)+ (n-i_{\theta_0})g_B(i_u,i_v)\left. \vphantom{\int} \right],& (2)\\[0.6em]
\shortintertext{$\forall i_v\leq i_u\; \wedge\; i_u-1\leq i_{\theta_0}\leq i_u$:}
\qquad G(i_u,i_v,i_b)\leq{}&
\frac{1}{n}\left[\left(\sum_{j=1}^{i_{\theta_0}} g_P(i_v,j)\right)+ \max\{i_b-i_{\theta_0}-1,0\}\cdot h(i_v,i_u) \right.\\
&\quad +i_{\theta_0}\cdot g_B(i_u,i_b)+ (n-i_{\theta_0})g_B(i_u,i_v)\left. \vphantom{\int}\right].& (3)
\end{align*}
Constraints $(1.1),(1.2)$ together correspond to constraint family $(1)$ in \cref{G_ranking_backup}. $(2),(3)$ correspond to constraints $(2),(3)$ in the original bound respectively.

For all of the cases, fixing all parameters except $x_b$, the gain monotonically decreases as we decrease $x_b$ in $(\frac{i_b-1}{n},\frac{i_b}{n}]$, as the only term that is affected is 
$$\max\{x_b-\max\{\theta_0,x_v\},0\}\cdot C,$$
for some positive constant $C$. Hence we assume $x_b$ always takes value $(\frac{i_b-1}{n})^+$. 

\noindent\textbf{(1.1), (1.2).} Similar to equation $(1)$ in the $(x_u,x_v,\bot)$ case, we could effectively use the monotonicity of $x_v$ in $(\frac{i_v-1}{n},\frac{i_v}{n}]$ if we know that $i_v<i_b$. So in $(1.1)$ we have the term 
$$\left(\sum_{j=1}^{i_v-1} g_P(i_u,j)\right)+ \frac{1}{2} g_P(i_u,i_v)$$
In $(1.2)$, when $i_v=i_b$, we no longer have the monotonicity property as it is possible that $x_b=(\frac{i_v-1}{n})^+$. So we again perform the boundary worst-case analysis. The worst-case happens when $x_v,x_b$ both take the value $(\frac{i_v-1}{n})^+$, which results in equation $(1.2)$.

\noindent\textbf{(2), (3).} We again choose $\theta_0$ as $\frac{i_{\theta_0}}{n}$ and $x_v=(\frac{i_v-1}{n})^+$ based on the same arguments as the previous discretization of $G(x_u,x_v,\bot)$.

Combining the above three discretized constraints, we have the following:
\begin{claim}
    $G(i_u,\bot,\bot), G(i_u,i_v,\bot),G(i_u,i_v,i_b)$ lower bound $G(x_u,\bot,\bot ), G(x_u,x_v,\bot),G(x_u,x_v,x_b)$ respectively, conditioning on the event that $x_u\in (\frac{i_u-1}{n},\frac{i_u}{n}]$, $x_v\in (\frac{i_v-1}{n},\frac{i_v}{n}]$ and $x_b\in (\frac{i_b-1}{n},\frac{i_b}{n}]$.
\end{claim}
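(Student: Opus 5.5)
The plan is to prove the claim one profile type at a time. In each case, fix the grid cells that contain $x_u$, $x_v$ and $x_b$ and write the conditional expectation of gain$(u)+$gain$(u^\ast)$ as an average of the continuous bounds $G(x_u,x_v,x_b)$ from \cref{G_ranking_nomatch}, \cref{G_ranking_nobackup} and \cref{G_ranking_backup}. That average must then be shown to be at least the corresponding discrete expression.

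The enabling observation is that $g$ and $h$ are piecewise constant on grid blocks. Once the cells of all parameters are fixed, together with the cell of the inner minimization variable $\theta_0$, every $g,h$ term is a constant. The bracket is then affine in each remaining parameter separately: in $x_v$, $x_b$, and in $\theta_0$ via the upper limits of integrals of step functions and the weights $\theta_0$, $1-\theta_0$, $1-\max\{x_v,\theta_0\}$. So the infimum over a cell is attained at an endpoint $\frac{i}{n}$ or $(\frac{i-1}{n})^+$, and I only need to decide which endpoint is worse in each term.

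The profile $(x_u,\bot,\bot)$ is immediate, since $\int_0^1 g_P(x_u,x)\,dx$ equals $\frac1n\sum_j g_P(i_u,j)$ exactly.

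For $\theta_0$, I would compare $\theta_0=\frac{i}{n}$ with $\theta_0=(\frac{i}{n})^+$. The second choice only raises the second argument of $h(\cdot,\theta_0)$ to the next cell while leaving the other terms unchanged. Monotonicity of $h$ in its second coordinate (\cref{def:function_constraint_ranking}) then shows that $\theta_0=\frac{i_{\theta_0}}{n}$ is the worse choice, which gives the ranges $i_{\theta_0}\in\{0,\dots,i_u\}$. For branch (3), where $\theta_0=x_u$ is forced, both $\frac{i_u}{n}$ and $(\frac{i_u-1}{n})^+$ are possible, which explains $i_u-1\le i_{\theta_0}\le i_u$.

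For $x_b$, the only dependence is through $\max\{x_b-\max\{x_v,\theta_0\},0\}\cdot C$ with $C\ge0$, so I take $x_b=(\frac{i_b-1}{n})^+$. This produces the $i_b-1-\cdots$ counts.

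For $x_v$ in the branches $x_v<x_u$ (expressions (2) and (3)), I argue that decreasing $x_v$ within its cell changes the bracket by $\delta[(h+h)-g_P]$ or by nothing. This is nonpositive by constraint 5 of \cref{def:function_constraint_ranking}, because two copies of compensation are at most one matched gain. Hence $x_v=(\frac{i_v-1}{n})^+$ is the worst case.

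I also need the merging identity for (2) and (3): $g_P(x_v,x)+h(x,x_v)=g(x_v,x)$ on $[0,\theta_0)$.

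For branches (1) and (1.1), the term $\int_0^{x_v}g_P(x_u,x)\,dx$ does not depend on $\theta_0$. Here I would use \cref{monoto_before_backup_ranking}: within the cell, and for (1.1) below $x_b$ since $i_v<i_b$, the density of $x_v$ is nondecreasing. The integrand is increasing in $x_v$, so its expectation is at least its value under the uniform density on the cell, which is $\frac1n\bigl[\sum_{j<i_v}g_P(i_u,j)+\tfrac12 g_P(i_u,i_v)\bigr]$. The remaining terms decrease in $x_v$ through $(1-x_v)$, so I take $x_v=\frac{i_v}{n}$ there.

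The main obstacle is justifying this split: averaging one part and taking the worst case of another part is legitimate only because the first part carries no $\theta_0$ dependence. For (1.2), where $i_v=i_b$, the monotone-density argument fails because $x_b$ may sit at the bottom of the same cell, so I fall back to the endpoint analysis $x_v=x_b=(\frac{i_v-1}{n})^+$. That choice kills the compensation term and truncates the integral at $j\le i_v-1$.

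Collecting these endpoint evaluations case by case yields exactly the listed discrete inequalities.
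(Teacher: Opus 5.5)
Your proposal is correct and follows the paper's own discretization argument essentially step for step:
\begin{itemize}
\item piecewise-constant $g,h$ reduce everything to endpoint checks;
\item $\theta_0=\frac{i_{\theta_0}}{n}$ dominates $(\frac{i_{\theta_0}}{n})^+$ by monotonicity of $h$ in its second argument;
\item $x_b$ goes to its left endpoint, and so does $x_v$ in branches (2) and (3), via constraint 5;
\item in (1) and (1.1), the $\theta_0$-independent term $\int_0^{x_v}g_P(x_u,x)\,dx$ is averaged using the nondecreasing density of $x_v$ from the profile monotonicity lemma;
\item (1.2) falls back to $x_v=x_b=(\frac{i_v-1}{n})^+$.
\end{itemize}
You note that splitting the average from the worst case is legitimate only because that term has no $\theta_0$ dependence; this is exactly the point the paper relies on when it observes that the term is independent of $\theta_0$.
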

\begin{claim}
    \phantomsection
    \label{G_u_Ranking_discretized}
For $x_u\in(\frac{i_u-1}{n},\frac{i_u}{n}]$, the bounds for uniformly distributed profiles (\cref{claim:ranking_lower_bound_uniform_profiles}) are translated as: 
\begin{align*}
    &\forall i_u &G_u(i_u)\leq\; & G(i_u,\bot,\bot)\\
    &\forall i_u, i_{\text{start of $v$}} &G_u(i_u)\leq\; &\frac{1}{n+1-i_{\text{start of $v$}}} \sum_{j=i_{\text{start of $v$}}}^n G(i_u,j,\bot)\\
    &\forall i_u, \forall i_{\text{start of $v$}}\leq i_b &G_u(i_u)\leq\; &\frac{1}{i_b+1-i_{\text{start of $v$}}} \sum_{j=i_{\text{start of $v$}}}^{i_b}G(i_u,j,\min\{i_b+1,n\})\\
    &\forall i_u, \forall i_{\text{start of $v$}}\leq i_b &G_u(i_u)\leq\; &\frac{1}{i_b+1-i_{\text{start of $v$}}} \sum_{j=i_{\text{start of $v$}}}^{i_b}G(i_u,j,i_b)
\end{align*}
\end{claim}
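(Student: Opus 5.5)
The plan is to fix $i_u$ and $x_u\in(\frac{i_u-1}{n},\frac{i_u}{n}]$, and then show that each of the three terms in the minimum defining $G_u(x_u)$ in \cref{claim:ranking_lower_bound_uniform_profiles} is bounded below by the minimum of the corresponding discrete right-hand sides. This means that any $G_u(i_u)$ satisfying all four families of constraints is at most $G_u(x_u)$. The first family is immediate: by the preceding claim, $G(i_u,\bot,\bot)$ lower bounds $G(x_u,\bot,\bot)$.

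The basic tool is a slightly strengthened form of the preceding claim. For every block $j$, the discrete value $G(i_u,j,\cdot)$ should lower bound the average of the continuous $G$ over $x_v$ drawn uniformly from any \emph{top sub-interval} $(v_0,\frac{j}{n}]$ of block $j$, and not only over the whole block. I would obtain this by rerunning the discretization argument case by case.
\begin{itemize}
\item In expressions $(2)$, $(3)$ and $(1.2)$, $x_v$ was already set to its worst value, the left endpoint $(\frac{j-1}{n})^+$. Every point of a sub-interval therefore already satisfies the bound.
\item In expressions $(1)$ and $(1.1)$, the only term that used averaging is $\int_0^{x_v} g_P(x_u,x)\,dx$. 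This term is increasing in $x_v$. A uniform distribution on a top sub-interval stochastically dominates the uniform distribution on the whole block, so the averaged value only increases.
\item All remaining terms were taken at their worst boundary values, so they need no further argument.
\end{itemize}

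With this tool, I would treat the second family. Take $v_0\in[\frac{k-1}{n},\frac{k}{n})$ and write the uniform average of $G(x_u,x_v,\bot)$ over $(v_0,1]$ as a weighted average of two kinds of block averages:
\begin{itemize}
\item the partial block $(v_0,\frac{k}{n}]$, with weight $w=\frac{k}{n}-v_0\in(0,\frac1n]$;
\item the full blocks $k+1,\dots,n$, each with weight $\frac1n$.
\end{itemize}
By the tool, this average is at least
\[
\frac{w\,G(i_u,k,\bot)+\frac1n\sum_{j>k}G(i_u,j,\bot)}{w+\frac{n-k}{n}}.
\]
This quantity is a convex combination of the discrete averages $A_k=\frac{1}{n+1-k}\sum_{j\ge k}G(i_u,j,\bot)$ and $A_{k+1}$: it equals $A_k$ at $w=\frac1n$ and $A_{k+1}$ at $w=0$. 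It is therefore at least $\min\{A_k,A_{k+1}\}$, and the second family of constraints yields $G_u(i_u)\le$ that minimum. The boundary case $k=n$ consists of the partial block alone, which directly gives $A_n$.

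The third term, with backup rank $b_0$, is where I expect the main obstacle, because both ends of the interval $(v_0,b_0)$ may cut through blocks, and one block may contain both $x_v$ and $x_b$. The plan is as follows.
\begin{itemize}
\item If $b_0$ lies in the interior of a block $i_b$, use $G(i_u,j,i_b)$ for every $j\le i_b$. This is valid because, in the discretization, $x_b$ was pushed to $(\frac{i_b-1}{n})^+$, and the same-block case $j=i_b$ is handled by $(1.2)$, $(2)$ and $(3)$.
\item If $b_0$ equals a grid point $\frac{i_b}{n}$, then $x_b=(\frac{i_b}{n})^+$ lies in block $i_b+1$. This is exactly why the third family uses $G(i_u,j,\min\{i_b+1,n\})$ with $x_v$ ranging only up to block $i_b$.
\end{itemize}
In both cases the average over $x_v$ is a weighted average with free weights $w_1\in(0,\frac1n]$ on the bottom partial block and $w_2\in(0,\frac1n]$ on the top partial block, and full weight $\frac1n$ on the blocks in between. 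I would use the one-variable convex-combination argument of the previous step twice, once in $w_1$ and once in $w_2$, to reduce to the extreme values $w_1,w_2\in\{0,\frac1n\}$. Each such extreme is one of the discrete averages appearing in the last two constraint families, with appropriate $i_{\text{start of }v}$ and $i_b$. The delicate part is checking that the top partial block, whose conditional range is $(\frac{i_b-1}{n},b_0)$ rather than a top sub-interval, is still covered. I would handle it by verifying directly that in expressions $(1.2)$, $(2)$ and $(3)$ the worst case over $x_v$ is already taken at the left endpoint, so the tool does not need the top-sub-interval form there.
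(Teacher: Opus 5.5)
Your proposal is correct and follows essentially the same route as the paper. You split the uniform average over $x_v$ into block contributions with free end weights in $[0,\frac1n]$, use that the resulting ratio is monotone in each end weight (so it is minimized at the extreme weights), and read off the grid-aligned families, with the $\min\{i_b+1,n\}$ and $i_b$ families corresponding to $x_b$ just above versus at a grid point. You are in fact more careful than the paper, which tacitly assumes each partial end block contributes $w_kG_k$, something your top-sub-interval and pointwise arguments justify. The only slip is that $b_0=\frac{i_b}{n}$ lies in block $i_b$, not $i_b+1$; this is harmless, because your first case applied to all $b_0\in(\frac{m-1}{n},\frac mn]$ already yields both families via $w_2\in\{0,\frac1n\}$.
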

\begin{proof}
    We are only considering intervals that are multiples of $\frac{1}{n}$ because for $x_v$ uniformly distributed in $(a,b)$ for some $a\in (\frac{i-1}{n},\frac{i}{n}]$ and $b\in (\frac{j-1}{n},\frac{j}{n}]$, it sums up a sequence of $G$ function values $\{G_k\}$ (where $G_k=G(i_u,i_k,\bot)$ or $G(i_u,i_k,i_b)$) with weights $w_k$ for $k\in \{i,...,j\}$. All terms $w_k$ in the middle $k\in \{i+1,...,j-1\}$ attains the same weight $\frac{1}{n}$, the bound in turn, has the form
$$\frac{w_iG_i+(\frac{1}{n}\sum_{k=i+1}^{j-1} G_k )+w_jG_j}{w_i+\frac{j-1-i}{n}+w_j}.$$
Fixing all $G_k$, the bound is minimized when $w_i$ and $w_j$ take their largest or smallest possible values in $[0,\frac{1}{n}]$, yielding four possible choices. The four choices will result in a uniform distribution in one of the four possible intervals: $[\frac{i-1}{n},\frac{j}{n}]$, $[\frac{i-1}{n},\frac{j-1}{n}]$, $[\frac{i}{n},\frac{j-1}{n}]$ and $[\frac{i}{n},\frac{j}{n}]$.
 
For each fixed $i_{\text{start of $v$}}$, we assume that the profiles are uniformly distributed in 
$(\frac{i_{\text{start of $v$}}-1}{n}, 1]$ for profiles $(x_u,x_v,\bot)$ and $(\frac{i_{\text{start of $v$}}-1}{n}, \frac{i_b}{n}]$ for profiles $(x_u,x_v,x_b)$. We take $i_b$ as $\min\{i_b+1,n\}$ and $i_b$ corresponds to $x_b$ lying on the left and right ends $(\frac{i_b}{n})^+$ and $\frac{i_b}{n}$, respectively.
\end{proof}

\begin{claim}
    For $x_u\in(\frac{i_{u}-1}{n},\frac{i_u}{n}]$, we have 
    $$G_u(i_u)\leq \ev\Big[\text{ gain$(u)$ + gain$(u^*)$}\;|\;x_u\in(\frac{i_{u}-1}{n},\frac{i_u}{n}]\;\Big].$$
\end{claim}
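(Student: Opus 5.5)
The plan is to pass from the continuous bound of \cref{claim:ranking_lower_bound_uniform_profiles} to the discrete one in two stages: reduce the profile distribution to a mixture of uniform slices, then evaluate each slice using the discretized $G$ values of the preceding claim.

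Fix $i_u$ and condition on $x_u\in(\frac{i_u-1}{n},\frac{i_u}{n}]$. By \cref{claim:ranking_lower_bound_uniform_profiles}, for each fixed $x_u$ the conditional expected gain is at least the infimum over three families.
\begin{itemize}
\item The profile $(x_u,\bot,\bot)$.
\item Uniform $x_v$ on $[v_0,1]$ with profile $(x_u,x_v,\bot)$.
\item Uniform $x_v$ on $[v_0,b_0)$ with profile $(x_u,x_v,b_0)$.
\end{itemize}
Averaging over $x_u$ in the grid cell, it therefore suffices to show that each of these three quantities is bounded below by the right-hand side of some constraint in \cref{G_u_Ranking_discretized}, evaluated at $i_u$. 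Since $G_u(i_u)$ is the LP variable constrained to lie below all those right-hand sides, the claim follows.

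The first family is immediate: by the preceding claim, $G(i_u,\bot,\bot)\le G(x_u,\bot,\bot)$ on the cell.

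For the second family, fix $v_0$ and write the uniform average $\frac{1}{1-v_0}\int_{v_0}^1 G(x_u,x_v,\bot)\,dx_v$. I split the integral over grid cells: cell $k$ receives weight $w_k$, full interior cells have $w_k=\frac1n$, and the cell containing $v_0$ has a partial weight in $[0,\frac1n]$. Within cell $k$, the preceding claim gives $G(x_u,x_v,\bot)\ge G(i_u,k,\bot)$, so the average is at least a weighted mean $\frac{\sum_k w_kG_k}{\sum_k w_k}$ with $G_k=G(i_u,k,\bot)$. Viewed as a function of the single free weight $w_i$, this ratio is a linear-fractional function on $[0,\frac1n]$, hence monotone. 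Its minimum is therefore attained at $w_i\in\{0,\frac1n\}$, which corresponds to a uniform average starting at a grid cell $i_{\text{start of }v}$. This is exactly the second constraint of \cref{G_u_Ranking_discretized}.

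The third family is handled the same way with two partial endpoint weights, one in the cell of $v_0$ and one in the cell of $b_0$. Each endpoint weight is separately pushed to $0$ or $\frac1n$, giving the four grid intervals described in the proof of \cref{G_u_Ranking_discretized}.

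The main obstacle is the third argument of $G$: when $b_0$ lies strictly inside cell $i_b$, I must check that $G(i_u,j,i_b)$ is still a valid lower bound for $G(x_u,x_v,b_0)$. I would use the earlier observation that $G$ is monotone in $x_b$, since the only $x_b$-dependent term is $\max\{x_b-\max\{\theta_0,x_v\},0\}\cdot C$ with $C\ge0$. Taking $x_b$ at the left end of its cell therefore gives the worst case, which is how the discretized $G(\cdot,\cdot,i_b)$ was defined.

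The remaining case is $x_v$ in the same cell as $b_0$, which is the situation of case (1.2). There the endpoint choice $b_0\in\{\frac{i_b}{n},(\frac{i_b}{n})^+\}$ determines whether the last summand uses $G(i_u,j,i_b)$ or $G(i_u,j,\min\{i_b+1,n\})$. Listing both versions as separate constraints covers both endpoint choices, and this gives the inequality $G_u(i_u)\le \ev[\text{gain}(u)+\text{gain}(u^*)\mid x_u\in(\frac{i_u-1}{n},\frac{i_u}{n}]]$.
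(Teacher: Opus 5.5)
Your proposal is correct and is essentially the paper's argument. It combines \cref{claim:ranking_lower_bound_uniform_profiles} with the cell-wise discretized bounds on $G$, then applies the endpoint argument for the partial weights $w_i,w_j\in[0,\frac1n]$ from the proof of \cref{G_u_Ranking_discretized}. One precision: (1) and (1.1) contain the half-cell term $\frac12 g_P(i_u,i_v)$, so ``$G(x_u,x_v,\cdot)\ge G(i_u,k,\cdot)$ within cell $k$'' holds only on average under an increasing density on that cell, not pointwise. That is fine here, because each cell's share of a uniform slice is a right subinterval of the cell. The one exception is the cell containing $b_0$, and there the applicable bounds (1.2), (2), (3) are pointwise.
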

With $G_u$ defined, the final expected value is taken over uniformly distributed $x_u$. 
\begin{claim}
    \phantomsection
    \label{clm:ranking_discretization}
    The approximation ratio for \ranking{} on general graphs is lower bounded by
    $$\frac{1}{n}\sum_{i_u=1}^n G_u(i_u).$$
\end{claim}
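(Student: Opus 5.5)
The plan is to chain the earlier bounds, moving from the continuous guarantee to the discrete one. First, by \cref{fact:randomized_bipartition} and the primal--dual averaging at the start of the randomized primal--dual subsection, the approximation ratio of \ranking{} is at least $\min_{(u,u^*)\in M^*}\ev[\text{gain}(u)+\text{gain}(u^*)]$, where the expectation is taken under the buyer/product conditioning. That conditioning leaves the distribution of $\vecx$ uniform. Fix any such pair. Since $x_u\sim U[0,1]$, we have
\[
\ev[\text{gain}(u)+\text{gain}(u^*)]=\sum_{i_u=1}^n \frac{1}{n}\,\ev\Bigl[\text{gain}(u)+\text{gain}(u^*)\;\Big|\;x_u\in\bigl(\tfrac{i_u-1}{n},\tfrac{i_u}{n}\bigr]\Bigr],
\]
because each sub-interval has probability exactly $1/n$.

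Second, the claim stated just before gives, for each $i_u$, that $G_u(i_u)$ lower bounds the conditional expectation on that block. Summing the bounds yields $\frac1n\sum_{i_u}G_u(i_u)\le \ev[\text{gain}(u)+\text{gain}(u^*)]$. Taking the minimum over pairs in $M^*$ then gives the statement.

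The real content sits in the per-block claim, so I would verify it by the following chain. Conditioned on $x_u$ in block $i_u$, decompose further by the profile of $u$ in $\vecx_{\shortminus u^*}$. For profiles with $x_v,x_b$ in fixed blocks, the preceding claim shows that the discretized $G(i_u,\cdot,\cdot)$ lower bounds the continuous $G$ of \cref{G_ranking_nomatch,G_ranking_nobackup,G_ranking_backup}. This works because $g,h$ are piecewise constant, so the worst case over each cell is attained at grid boundaries (with the $\theta_0=\frac{i}{n}$ versus $(\frac in)^+$ comparison and the direction of monotonicity in $x_v,x_b$ as argued). The continuous $G$ in turn lower bounds the conditional gain for each fixed $x_u$. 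Next, by \cref{monoto_before_backup_ranking}, the conditional density of $x_v$ is nondecreasing on $[0,1]$ (no backup) or on $[0,x_b)$ (backup). The layer-cake decomposition in \cref{claim:ranking_lower_bound_uniform_profiles} writes this as a mixture of uniform distributions on $[v_0,1]$ or $[v_0,b_0)$. The proof of \cref{G_u_Ranking_discretized} reduces such uniform averages to averages over whole grid intervals, because a weighted average with endpoint weights in $[0,\frac1n]$ is extremized at extreme weights.

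Mixtures of quantities each at least $G_u(i_u)$ are again at least $G_u(i_u)$. Integrating over the profile distribution within the block therefore yields the block bound.

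The main obstacle is uniformity in $x_u$ within a block. The continuous $G_u(x_u)$ depends on $x_u$, while the discrete constraints use only the index $i_u$. One must check that every discretized expression is a valid lower bound simultaneously for all $x_u$ in the block, and that the case splits ($x_u<x_v$ versus $x_v<x_u$, $\theta_0\le x_u$) remain correct when $i_u=i_v$ or $i_{\theta_0}\in\{i_u-1,i_u\}$. I would handle this by noting that $g,h$ depend on $x_u$ only through $i_u$. The remaining explicit $x_u$-dependence enters only through the ranges of $\theta_0\le x_u$ and the ordering with $x_v$, and enlarging these ranges to the block boundaries only weakens (lowers) the minima. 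This is exactly how the index ranges $i_{\theta_0}\le i_u$ and $i_v\le i_u$ are chosen, and those ranges include the boundary configurations.
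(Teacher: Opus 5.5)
Your proposal is correct and is essentially the paper's argument. The paper proves this claim in one line: $G_u(i_u)$ lower bounds the expected gain conditioned on $x_u$ lying in the $i_u$-th block, each block has probability $1/n$, and combining this with the primal--dual reduction of \cref{fact:randomized_bipartition} gives exactly your chain. Your extra verification of the per-block bound is not needed here, since the paper takes it from the preceding claims; one small imprecision in it is that in cases (1) and (1.1) the discretized $G(i_u,i_v,\cdot)$ does not hold pointwise at grid boundaries, but only on average over $x_v$ within its block, using the nondecreasing density from \cref{monoto_before_backup_ranking}.
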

This directly follows from the fact that $G_u(i_u)$ is a lower bounding function for the expected gains conditioned on the event $x_u\in(\frac{i_u-1}{n},\frac{i_u}{n}]$.

\subsection{Discretized \Franking{} LP}
\label{subsec:discre-Franking_LP}
Fix an arbitrary integer $n>0$. The piecewise gain function $g$ and compensation function $h$ for the \Franking{} LP can be described by variables $g(i)$ and $h(k)$ where $1\leq i\leq n$ and $0\leq k\leq n$. For $x_u\in (\frac{i-1}{n},\frac{i}{n}]$, $g(x_u)=g(i)$ and $h(x_u)=h(i)$. Again, we save $h(0)$ for the degenerate case $\theta_0=0$. The function constraints translate to
\begin{align*}
    &1.\quad\forall i<n\; && g(i) \le g(i+1),\\
    &2.\quad\forall i<n && h(i) \le h(i+1),\\
    &3.&& h(0)=0\\
    &4.\quad\forall i && 1 - g(i) - h(i) \ge h(n),\\
    &5.\quad\forall i && g(i)  \ge h(n).
\end{align*}
Now we translate the $GF$ functions for all $6$ cases. \textbf{Again, unless otherwise specified, we assume all $\bm{i}$ take values in $\bm{\{1,...,n\}}$ except $\bm{i_{\theta_0}}$, which may take value $\bm{0}$}. We will also use non-superscript $i_v,i_b$ as numbers from $1$ to $n$; while at the same time, we use $i_v^{A/P},i_b^{A/P}$ as inputs to $GF$ to let $GF$ know which profile to consider.

\paragraph{Case $(x_u,\bot,\bot)$ (\cref{franking_lower_bound_botbot}):} For $x_u\in (\frac{i_u-1}{n},\frac{i_u}{n}]$
$$\forall i_u,\quad\quad GF(i_u,\bot,\bot)\leq \frac{1}{n}\sum_{k=1}^n g(k),$$
which is simply integrating the step function $g$ from $0$ to $1$.

\paragraph{Case $(x_u,x_v^P,x_b^P)$ (\cref{franking_lower_bound_PP}):} $GF$ in this case is independent of $x_v^P,x_b^P$, so for $x_u\in (\frac{i_u-1}{n},\frac{i_u}{n}]$, we have bounds:
$$\forall i_u,\quad\quad GF(i_u,i_v^P,i_b^P)\leq g(i_u).$$

\paragraph{Case $(x_u,x_v^P,\bot)$ (\cref{franking_lower_bound_Pbot}):} The bound translates to:
$$\forall i_u,i_{\theta_0}, \quad\quad GF(i_u,i_v^P,\bot)\leq \frac{1}{n}\left[\left(\sum_{k=1}^{i_{\theta_0}}g(k)\right)+(n-i_{\theta_0})\cdot (h(i_{\theta_0})+g(i_u))\right]$$
For $\theta_0 \in (\frac{i_{\theta_0}-1}{n}, \frac{i_{\theta_0}}{n}]$, increasing it by $\epsilon$ results in a change of $\epsilon \cdot \big(g(i_{\theta_0}) - h(i_{\theta_0}) - g(i_u)\big)$ amount of gain. Depending on whether this term is positive or not, the minimum bound is obtained at one of the ends of the interval. Hence, the system of bounds would consist of $\theta_0 = \frac{i_{\theta_0}}{n}$ and $\theta_0 = (\frac{i_{\theta_0}}{n})^+$. Notice that for each choice of $\frac{i_{\theta_0}}{n}$ or $(\frac{i_{\theta_0}}{n})^+$, the only difference in the RHS of the equation is in the value of $h(\theta_0)$: one takes $h(i_{\theta_0})$, the other takes $h(i_{\theta_0}+1)$. Since $h$ is non-decreasing, the case $\theta_0 = \frac{i_{\theta_0}}{n}$, where $h(\theta_0)$ takes the value $h(i_{\theta_0})$, gives a stricter bound. That is, assuming $\theta_0$ always lies at the right-hand side of the interval is sufficient for the whole system of bounds.

\paragraph{Case $(x_u,x_v^P,x_b^A)$ (\cref{franking_lower_bound_PA}):} The bounds translates to:
\begin{align*}
    \forall i_u, i_b, i_{\theta_0}, &&GF(i_u,i_v^P,i_b^A)\leq\; &\frac{1}{n}\left[ \left(\sum_{k=1}^{i_{\theta_0}} g(k)\right) +\max\{i_b-i_{\theta_0}-1,0\}\cdot h(i_{\theta_0})\right.\\
    &&&\left.\vphantom{\int}\quad\quad+ i_{\theta_0}\cdot (1-g(i_b)-h(i_b))+(n-i_{\theta_0})\cdot g(i_u)\right]
\end{align*}
We choose $x_b=(\frac{i_b-1}{n})^+$ as decrease in $x_b$ value results in an overall decrease in the RHS (with decreasing $x_b$, the term $\max\{x_b-\theta_0,0\}$ decreases while the other terms are kept the same). We choose $\theta_0=\frac{i_{\theta_0}}{n}$ for the worst-case for the same reason as the previous case.

\paragraph{Case $(x_u,x_v^A,\bot)$ (\cref{franking_lower_bound_Abot}):} $\forall i_u, i_v, i_{\theta_0}, i_{\theta_1}\; s.t.\; i_{\theta_0}< i_{\theta_1}\wedge i_{v} \leq i_{\theta_1}$, the bound translates to:
\begin{align*}
        GF(i_u,i_v^A,\bot)\leq\; &\frac{1}{n}\left[ \left(\sum_{k=1}^{i_{\theta_1}} g(k)\right) +(n-i_{\theta_1})\cdot h(i_{\theta_0})\right.\\
    &\left.\vphantom{\int}\quad\quad+ i_{\theta_0}\cdot h(i_v)+(i_{\theta_1}-i_{\theta_0})\cdot g(i_u)+(n-i_{\theta_1})\cdot (1-g(i_v)-h(i_v))\right]& (1)\\
    GF(i_u,i_v^A,\bot)\leq\; &\frac{1}{n}\left[ \left(\sum_{k=1}^{i_{\theta_1}} g(k)\right) +(n-i_{\theta_1})\cdot h(i_{\theta_0})\right.\\
    &\left.\vphantom{\int}\quad\quad+ i_{\theta_0}\cdot h(i_v)+(n-i_{\theta_0})\cdot (1-g(i_v)-h(i_v))\right]& (2)\\
      GF(i_u,i_v^A,\bot)\leq\; &\frac{1}{n}\left[ \left(\sum_{k=1}^{i_{\theta_1}-1} g(k)\right) +(n-i_{\theta_1}+1)\cdot h(i_{\theta_0})\right.\\
    &\left.\vphantom{\int}\quad\quad+ i_{\theta_0}\cdot h(i_v)+(i_{\theta_1}-i_{\theta_0}-1)\cdot g(i_u)+(n-i_{\theta_1}+1)\cdot (1-g(i_v)-h(i_v))\right]& (3)\\    
    GF(i_u,i_v^A,\bot)\leq\; &\frac{1}{n}\left[ \left(\sum_{k=1}^{i_{\theta_1}-1} g(k)\right) +(n-i_{\theta_1}+1)\cdot h(i_{\theta_0})\right.\\
    &\left.\vphantom{\int}\quad\quad+ i_{\theta_0}\cdot h(i_v)+(n-i_{\theta_0})\cdot (1-g(i_v)-h(i_v))\right]& (4)
\end{align*}
 $\forall i_u, i_v, i_{\theta_0}, i_{\theta_1}\; s.t.\; i_{\theta_0}= i_{\theta_1}\wedge i_{v} \leq i_{\theta_1}$, the bound translates to:
 \begin{align*}
    GF(i_u,i_v^A,\bot)\leq\; &\frac{1}{n}\left[ \left(\sum_{k=1}^{i_{\theta_1}} g(k)\right) +(n-i_{\theta_1})\cdot h(i_{\theta_0})\right. & (1)\\
    &\left.\vphantom{\int}\quad\quad+ i_{\theta_0}\cdot h(i_v)+(i_{\theta_1}-i_{\theta_0})\cdot g(i_u)+(n-i_{\theta_1})\cdot (1-g(i_v)-h(i_v))\right]\\
    GF(i_u,i_v^A,\bot)\leq\; &\frac{1}{n}\left[ \left(\sum_{k=1}^{i_{\theta_1}} g(k)\right) +(n-i_{\theta_1})\cdot h(i_{\theta_0})\right.\\
    &\left.\vphantom{\int}\quad\quad+ i_{\theta_0}\cdot h(i_v)+(n-i_{\theta_0})\cdot (1-g(i_v)-h(i_v))\right]& (2)
\end{align*}
Notice inequalities $(3),(4)$ in the $i_{\theta_0}<i_{\theta_1}$ case do not extend to the $i_{\theta_0}=i_{\theta_1}$ case. This difference arises because when $i_{\theta_0}=i_{\theta_1}$, assuming $\theta_0=\frac{i_{\theta_0}}{n}$ (i.e. $\theta_0$ lying in the right-hand side of the interval), forces $\theta_1$ to also lie in the right-hand side of the interval, eliminating the case $\theta_0=\frac{i_{\theta_0}}{n}$ while $\theta_1=(\frac{i_{\theta_1}-1}{n})^+$ which is a possible scenario when $i_{\theta_0}<i_{\theta_1}$.

In case $(x_u,x_v^A,\bot)$, fixing every other parameter except $\theta_0$, increasing $\theta_0$ within the interval $(\frac{i_{\theta_0}-1}{n},\min\{\frac{i_{\theta_0}}{n},\theta_1\}]$ will result in an overall decrease in the equation value. This is because every other term is unchanged but we increased the range of $u$ being unmatched, that is $(0,\theta_0]$. Because we assumed being unmatched is the worst-case, increasing this range will result in a worse lower bound.

Now assuming $\theta_0$ always attains the right-hand side of the interval, i.e. $\theta_0=\min\{\frac{i_{\theta_0}}{n},\theta_1\}$, by the same argument as before, we can obtain the worst-case scenario by setting $\theta_1$ on either the left-hand side of the interval ($\theta_1=(\frac{i_{\theta_1}-1}{n})^+$) or on the right-hand side of the interval ($\theta_1=\frac{i_{\theta_1}}{n}$)\footnote{Notice that the case $\theta_0$ and $\theta_1$ both attaining $(\frac{i_{\theta_1}-1}{n})^+$ is not included in the inequality system, it is easy to see that this case is bounded below by $\theta_0=\frac{i_{\theta_1}-1}{n}$ and $\theta_1=(\frac{i_{\theta_1}-1}{n})^+$, as the intervals are not changed but we lowered the compensation $h(\theta_0)$ from $h(i_{\theta_1})$ to $h(i_{\theta_1}-1)$.}. Inequalities $(1),(2)$ correspond to taking both $\theta_0$ and $\theta_1$ on the right-hand side of their respective intervals; whereas inequalities $(3),(4)$ are the cases where $\theta_0$ lies on the right-hand side while $\theta_1$ lies on the left-hand side. For a fixed choice of $\theta_0,\theta_1$, we have two inequalities. One is taking $g(i_u)$ as the worse-case and one is taking $1-g(i_v)-h(i_v)$ as the worse-case value for the term $\min\{g(x_u),1-g(x_v)-h(x_v)\}$.

\paragraph{Case $(x_u,x_v^A,x_b^A)$ (\cref{franking_lower_bound_AA}):} $\forall i_u, i_v,i_b, i_{\theta_0}, i_{\theta_1}\; s.t.\; i_{\theta_0}< i_{\theta_1}\wedge i_{v} \leq i_{\theta_1}$, the bound translates to:
\begin{align*}
    GF(i_u,i_v^A,i_b^A)\leq\; &\frac{1}{n}\left[ \left(\sum_{k=1}^{i_{\theta_1}} g(k)\right) +\max\{i_b-i_{\theta_1}-1,0\}\cdot h(i_{\theta_0})\right.\\
    &\quad\quad\;+ i_{\theta_0}\cdot (1-g(i_b)-h(i_b)) + (i_{\theta_1}-i_{\theta_0})\cdot (1-g(i_v)-h(i_v))& (1)\\
    &\left.\vphantom{\int}\quad\quad+(n-i_{\theta_1})\cdot (1-g(i_v)-h(i_v))\right]\\
        GF(i_u,i_v^A,i_b^A)\leq\; &\frac{1}{n}\left[ \left(\sum_{k=1}^{i_{\theta_1}} g(k)\right) +\max\{i_b-i_{\theta_1}-1,0\}\cdot h(i_{\theta_0})\right.\\
    &\quad\quad\;+ i_{\theta_0}\cdot (1-g(i_b)-h(i_b)) + (i_{\theta_1}-i_{\theta_0})\cdot g(i_u)& (2)\\
    &\left.\vphantom{\int}\quad\quad+(n-i_{\theta_1})\cdot (1-g(i_v)-h(i_v))\right]\\
        GF(i_u,i_v^A,i_b^A)\leq\; &\frac{1}{n}\left[ \left(\sum_{k=1}^{i_{\theta_1}} g(k)\right) +\max\{i_b-i_{\theta_1}-1,0\}\cdot h(i_{\theta_0})\right.\\
    &\quad\quad\;+ i_{\theta_0}\cdot g(i_u) + (i_{\theta_1}-i_{\theta_0})\cdot g(i_u)& (3)\\
    &\left.\vphantom{\int}\quad\quad+(n-i_{\theta_1})\cdot (1-g(i_v)-h(i_v))\right]\\
        GF(i_u,i_v^A,i_b^A)\leq\; &\frac{1}{n}\left[ \left(\sum_{k=1}^{i_{\theta_1}-1} g(k)\right) +\max\{i_b-i_{\theta_1},0\}\cdot h(i_{\theta_0})\right.\\
    &\quad\quad\;+ i_{\theta_0}\cdot (1-g(i_b)-h(i_b)) + (i_{\theta_1}-i_{\theta_0}-1)\cdot (1-g(i_v)-h(i_v))& (4)\\
    &\left.\vphantom{\int}\quad\quad+(n-i_{\theta_1}+1)\cdot (1-g(i_v)-h(i_v))\right]\\
        GF(i_u,i_v^A,i_b^A)\leq\; &\frac{1}{n}\left[ \left(\sum_{k=1}^{i_{\theta_1}-1} g(k)\right) +\max\{i_b-i_{\theta_1},0\}\cdot h(i_{\theta_0})\right.\\
    &\quad\quad\;+ i_{\theta_0}\cdot (1-g(i_b)-h(i_b)) + (i_{\theta_1}-i_{\theta_0}-1)\cdot g(i_u)& (5)\\
    &\left.\vphantom{\int}\quad\quad+(n-i_{\theta_1}+1)\cdot (1-g(i_v)-h(i_v))\right]\\
        GF(i_u,i_v^A,i_b^A)\leq\; &\frac{1}{n}\left[ \left(\sum_{k=1}^{i_{\theta_1}-1} g(k)\right) +\max\{i_b-i_{\theta_1},0\}\cdot h(i_{\theta_0})\right.\\
    &\quad\quad\;+ i_{\theta_0}\cdot g(i_u) + (i_{\theta_1}-i_{\theta_0}-1)\cdot g(i_u)& (6)\\
    &\left.\vphantom{\int}\quad\quad+(n-i_{\theta_1}+1)\cdot (1-g(i_v)-h(i_v))\right]\\
\end{align*}
Whereas $\forall i_u, i_v,i_b, i_{\theta_0}, i_{\theta_1}\; s.t.\; i_{\theta_0}= i_{\theta_1}\wedge i_{v} \leq i_{\theta_1}$, we have 
\begin{align*}
    GF(i_u,i_v^A,i_b^A)\leq\; &\frac{1}{n}\left[ \left(\sum_{k=1}^{i_{\theta_1}} g(k)\right) +\max\{i_b-i_{\theta_1}-1,0\}\cdot h(i_{\theta_0})\right.\\
    &\quad\quad\;+ i_{\theta_0}\cdot (1-g(i_b)-h(i_b)) + (i_{\theta_1}-i_{\theta_0})\cdot (1-g(i_v)-h(i_v))& (1)\\
    &\left.\vphantom{\int}\quad\quad+(n-i_{\theta_1})\cdot (1-g(i_v)-h(i_v))\right]\\
        GF(i_u,i_v^A,i_b^A)\leq\; &\frac{1}{n}\left[ \left(\sum_{k=1}^{i_{\theta_1}} g(k)\right) +\max\{i_b-i_{\theta_1}-1,0\}\cdot h(i_{\theta_0})\right.\\
    &\quad\quad\;+ i_{\theta_0}\cdot (1-g(i_b)-h(i_b)) + (i_{\theta_1}-i_{\theta_0})\cdot g(i_u)& (2)\\
    &\left.\vphantom{\int}\quad\quad+(n-i_{\theta_1})\cdot (1-g(i_v)-h(i_v))\right]\\
        GF(i_u,i_v^A,i_b^A)\leq\; &\frac{1}{n}\left[ \left(\sum_{k=1}^{i_{\theta_1}} g(k)\right) +\max\{i_b-i_{\theta_1}-1,0\}\cdot h(i_{\theta_0})\right.\\
    &\quad\quad\;+ i_{\theta_0}\cdot g(i_u) + (i_{\theta_1}-i_{\theta_0})\cdot g(i_u)& (3)\\
    &\left.\vphantom{\int}\quad\quad+(n-i_{\theta_1})\cdot (1-g(i_v)-h(i_v))\right]
\end{align*}
The same reasoning for the choices of $\theta_0$ and $\theta_1$ applies here as well. The worst-case gain for $u$ is in the range $(0,\theta_0)$ where it gets $\min\{g(i_u),1-g(i_b)-h(i_b)\}$ amount of gain, so the minimum is always achieved at $\theta_0$ taking a value as large as possible. Hence we assume $\theta_0=\min\{\frac{i_{\theta_0}}{n},\theta_1\}$ always. We are not sure about $\theta_1$ so we take both ends as possible choices. When $i_{\theta_0}=i_{\theta_1}$, maximizing $\theta_0$ forces $\theta_1$ to take the right-hand side value as well. Inequalities $(1),(2),(3)$ correspond to the cases where $\theta_1$ lies on the right side of the interval $(\frac{i_{\theta_1}-1}{n},\frac{i_{\theta_1}}{n}]$. Inequalities $(4),(5),(6)$ correspond to the cases where $\theta_1$ lies on the left side. 

We also have to consider the worst choices among $\min\{g(i_u),1-g(i_b)-h(i_b)\}$ and $\min\{g(i_u),1-g(i_v)-h(i_v)\}$ for the intervals $(0,\theta_0)$ and $(\theta_0,\theta_1)$, respectively. 
\begin{itemize}
    \item Case $(1),(4)$ corresponds to choosing the pairs $1-g(i_b)-h(i_b)$ and $1-g(i_v)-h(i_v)$.
    \item  Case $(2),(5)$ corresponds to choosing the pairs  $1-g(i_b)-h(i_b)$ and $ g(i_u)$.
    \item  Case $(3),(6)$ corresponds to choosing the pairs $g(i_u)$ and $g(i_u)$.
\end{itemize}
 We do not need to consider the combination $g(i_u)$ and $1-g(i_v)-h(i_v)$ because $g(i_u)<1-g(i_b)-h(i_b)$ implies $g(i_u)<1-g(i_v)-h(i_v)$ so such a combination is not possible.

With constraints for all $6$ cases defined, we classify them into two groups as in (\cref{claim:Franking_lower_bound_uniform_profiles_A,claim:Franking_lower_bound_uniform_profiles_P}):
\begin{align*}
    &\forall i_u, &GF^P(i_u)\leq&\; GF(i_u,i_v^P,\bot)\\
    &\forall i_u, &GF^P(i_u)\leq&\; GF(i_u,i_v^P,i_b^P)\\
    &\forall i_u,i_b^A &GF^P(i_u)\leq&\; GF(i_u,i_v^P,i_b^A)\\
    &\forall i_u, &GF^A(i_u)\leq&\; GF(i_u,\bot,\bot)\\
     &\forall i_u,i_{\text{start of $v$}} &GF^A(i_u)\leq&\;\frac{1}{n+1-i_{\text{start of $v$}}} \sum_{i_v=i_{\text{start of $v$}}}^n GF(i_u,i_v^A,\bot)\\
    &\forall i_u,i_b,i_{\text{start of $v$}} &GF^A(i_u)\leq&\;\frac{1}{i_b+1-i_{\text{start of $v$}}} \sum_{i_v=i_{\text{start of $v$}}}^{i_b} GF(i_u,i_v^A,i_b^A)\\
    &\forall i_u,i_b,i_{\text{start of $v$}} &GF^A(i_u)\leq&\;\frac{1}{i_b+1-i_{\text{start of $v$}}} \sum_{i_v=i_{\text{start of $v$}}}^{i_b} GF(i_u,i_v^A,\min\{i_b+1,n\}^A)\\
\end{align*}
Using the same reasoning as in the \ranking{} case, when considering $x_v^A$ ranging over intervals, we can consider intervals only of the type $(\frac{i}{n},\frac{j}{n}]$. The choices of $i_b^A$ and $\min\{i_b+1,n\}^A$ correspond to the cases $x_b=\frac{i_b}{n}$ and $x_b=(\frac{i_b}{n})^+$.

Finally, we sum over all possible cases of $i_{\theta_1^u}$ and obtain the approximation ratio of \Franking{} by
\begin{claim}
    \phantomsection
    \label{clm:Franking_discretization}
    The approximation ratio of \Franking{} on a general graph is lower bounded by
    $$\min_{0\leq i_{\theta_1^u} \leq n} \frac{1}{n}\left[\sum_{i_u=1}^{i_{\theta_1^u}} GF^P(i_u)+\sum_{i_u=i_{\theta_1^u}+1}^n GF^A(i_u)\right]$$
\end{claim}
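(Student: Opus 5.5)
The plan is to chain the continuous lower bounds already established for \Franking{} with their discretized counterparts, working one level at a time. First, I will check that the discretized function constraints are exactly \cref{def:franking_func_constraints} for piecewise-constant $g,h$ on the grid $(\frac{i-1}{n},\frac{i}{n}]$, with $h(0)$ reserved for $\theta_0=0$. Second, for each of the six profile types I will show that the discretized right-hand sides lower bound the continuous $GF$. The argument is the same each time. Fix the grid cells containing $x_u,x_v,x_b,\theta_0,\theta_1$; then every occurrence of $g$ or $h$ is a constant, and the bracket in \cref{franking_lower_bound_botbot,franking_lower_bound_PP,franking_lower_bound_Pbot,franking_lower_bound_PA,franking_lower_bound_Abot,franking_lower_bound_AA} is piecewise linear in each free parameter. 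Its minimum over a cell is therefore attained at an endpoint, $\frac{i}{n}$ or $(\frac{i-1}{n})^+$.

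Next I select the endpoints. For $x_b$, the only dependence is through a term $\max\{x_b-\cdot,0\}\cdot h(\cdot)\ge0$, so the left end is worst. For $\theta_0$, moving from $\frac{i}{n}$ to $(\frac{i}{n})^+$ only raises $h(\theta_0)$ from $h(i)$ to $h(i+1)$, so the right end dominates. In the active-profile cases, increasing $\theta_0$ lengthens the range where $u$ is unmatched or matched to $b$, which is worse by constraints 4–5, so I take $\theta_0=\min\{\frac{i_{\theta_0}}{n},\theta_1\}$. For $\theta_1$, I keep both endpoints. The two values of each $\min\{\cdot,\cdot\}$ term are enumerated as separate inequalities, which yields exactly the listed families (1)–(6). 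The combination ``$g(i_u)$ then $1-g(i_v)-h(i_v)$'' is excluded because $x_v<x_b$ and monotonicity of $1-g-h$ rule it out.

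Third, I handle the averaging over $x_v$ for class $A$. \Cref{claim:Franking_lower_bound_uniform_profiles_A} reduces this to uniform $x_v$ on $(v_0,b_0)$. A uniform average whose endpoints fall inside cells is a weighted average of the cell values $G_k$ in which all interior weights equal $\frac1n$ and the two end weights lie in $[0,\frac1n]$. This ratio of linear functions of the end weights is minimized at extreme weights, so intervals with grid endpoints suffice. The choice between $i_b$ and $\min\{i_b+1,n\}$ accounts for $x_b$ sitting at $\frac{i_b}{n}$ or at $(\frac{i_b}{n})^+$.

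Finally, I apply \cref{fact:theta_1_u} to each fixed $\vecx_{\shortminus uu^*}$ and locate $\theta_1^u$ in its cell. The cell of $x_u$ that straddles $\theta_1^u$ mixes classes $P$ and $A$, and its contribution is at least $\frac1n\min\{GF^P,GF^A\}$. Taking the minimum over $i_{\theta_1^u}\in\{i-1,i\}$ covers this, so minimizing over all $i_{\theta_1^u}$ gives the claimed bound. I then take the expectation and invoke the primal-dual inequality. The main obstacle I expect is the $\theta_0,\theta_1$ coupling in the $(x_u,x_v^A,\cdot)$ cases when $i_{\theta_0}=i_{\theta_1}$. There I must check that the configuration with both parameters at the left end is dominated by lowering $h(\theta_0)$ by one cell. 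I also need to confirm that $x_v\le\theta_1$ is preserved by these endpoint choices.
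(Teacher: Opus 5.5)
Your plan follows the paper's own argument. Piecewise constancy of $g,h$ reduces each of the six $GF$ bounds to grid endpoints: the left end for $x_b$; the right end for $\theta_0$, or $\min\{\frac{i_{\theta_0}}{n},\theta_1\}$ in the active cases; both ends for $\theta_1$; and both branches of each $\min$. The extreme-weight argument reduces the uniform $x_v$-averages to grid-aligned intervals, and the cell of $x_u$ straddling $\theta_1^u$ is handled by minimizing over $i_{\theta_1^u}$. (A small slip: in the $(x_u,x_v^A,x_b^A)$ case, enlarging $\theta_0$ hurts because $x_v<x_b$ and $g+h$ is increasing, not because of constraints 4--5.)

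\textbf{The last step does not work as you phrase it.} The paper's proof of the continuous $\theta_1^u$ claim is worded the same way. You cannot apply \cref{fact:theta_1_u} ``for each fixed $\vecx_{\shortminus uu^*}$'' and then bound the class-$A$ cells by $GF^A$.
\begin{itemize}
\item Once $\vecx_{\shortminus uu^*}$ and $x_u$ are fixed, the profile is determined, and in particular so is $x_v$.
\item The only pointwise bound is then $GF(x_u,x_v^A,\cdot)$ at that one $x_v$, and it need not dominate $GF_u^A(x_u)$.
\item $GF_u^A(x_u)$ is an infimum of \emph{averages} over $x_v$, justified only through the increasing density of $x_v$.
\end{itemize}
Class $P$ is fine, since those bounds do not depend on $x_v$.

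The fix is to condition on $x_u$ instead of on $\vecx_{\shortminus uu^*}$.
\begin{itemize}
\item \cref{lemma:monotonicity_Franking} is stated for fixed $x_u$ (and $x_b$), so the class-$A$ averaging stays valid conditionally on $x_u$.
\item $\theta_1^u$ is a function of $\vecx_{\shortminus uu^*}$ alone, hence independent of $x_u$. So \cref{fact:theta_1_u} gives $\Pr[\text{class } P\mid x_u=t]=\Pr[\theta_1^u>t]$.
\end{itemize}
Therefore
\[
\ev[\text{gain}(u)+\text{gain}(u^\ast)]\ \ge\ \int_0^1\Bigl(\Pr[\theta_1^u>t]\,GF_u^P(t)+\bigl(1-\Pr[\theta_1^u>t]\bigr)\,GF_u^A(t)\Bigr)\,dt
=\ev_{\theta_1^u}\Bigl[\int_0^{\theta_1^u}GF_u^P(t)\,dt+\int_{\theta_1^u}^1GF_u^A(t)\,dt\Bigr].
\]
Your straddling-cell argument then applies to each realized value of $\theta_1^u$, and this yields the stated minimum over $i_{\theta_1^u}$.
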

\section{Numerical Results}
\phantomsection
\label{sec:numerical_results}
We present the results of \nameref{Discretized_Tightened_Ranking_LP} for $n\leq 80$ and \nameref{subsec:discre-Franking_LP} for $n\leq 45$ in \Cref{tab:fn_rectangular1,tab:fn_rectangular2}. We do not individually present the simple Ranking LP results, but note that running the simple Ranking LP as in \cref{subsec:discre-ranking_LP} gives $0.559$ at $n=80$. 

All code was written in Python (version 3.12.3) and are available through GitHub\footnote{\href{https://github.com/yutaoyt7/STOC_2026_LP_Codes}{https://github.com/yutaoyt7/STOC\_2026\_LP\_Codes}}.  To solve the LP instances, we used Gurobi Optimization package (version 12.0.3). The experiments were conducted on a computer cluster with 64 physical cores, with CPU model Intel(R) Xeon(R) Platinum 8375C CPU @ 2.90GHz, instruction set [SSE2|AVX|AVX2|AVX512], and 1TB of RAM. The operating system was Ubuntu 24.04.3 LTS. Running the tightened \ranking{} LP with $n=80$ took $1.5$ hours and running \Franking{} with $n=45$ took about $6$ hours.

\begin{table}[htbp]
\centering
\caption{Tightened \ranking{} LP Results}
\begin{tabular}{|c|c|c|c|c|c|c|c|}
\hline
$n$ & $\text{ LP Results}$ & $n$ & $\text{ LP Results}$ & $n$ & $\text{ LP Results}$ & $n$ & $\text{ LP Results}$ \\
\hline
1  & 0.39999        & 8  & 0.54429 & 15 & 0.55297 & 30 & 0.55741 \\
\hline
2  & 0.48263        & 9  & 0.54639 & 16 & 0.55356 & 35 & 0.55801 \\
\hline
3  & 0.51391    & 10  & 0.54804 & 17 &  0.55406  & 40 & 0.55846 \\
\hline
4  & 0.52480    & 11 & 0.54947 & 18 &  0.55450 & 50 & 0.55909 \\
\hline
5  & 0.53247    & 12 & 0.55060 & 19 & 0.55490 & 60 & 0.55950 \\
\hline
6  & 0.53783    & 13 & 0.55152 & 20 & 0.55526 & 70 & 0.55979 \\
\hline
7  & 0.54140    & 14 & 0.55229 & 25 & 0.55657 & 80 & 0.56001 \\
\hline
\end{tabular}
\label{tab:fn_rectangular1}
\end{table}

\begin{table}[htbp]
\centering
\caption{\Franking{} LP Results}
\begin{tabular}{|c|c|c|c|c|c|c|c|}
\hline
$n$ & $\text{ LP Results}$ & $n$ & $\text{ LP Results}$ & $n$ & $\text{ LP Results}$ & $n$ & $\text{ LP Results}$ \\
\hline
1  & 0.5        & 6  & 0.52125 & 12 & 0.53102 & 25 & 0.53654 \\
\hline
2  & 0.5        & 7  & 0.52338 & 14 & 0.53248 & 30 & 0.53745 \\
\hline
3  & 0.50555    & 8  & 0.52600 & 16 &  0.53372  & 35 & 0.53813 \\
\hline
4  & 0.51153    & 9 & 0.52767 & 18 &  0.53448 & 40 & 0.53861 \\
\hline
5  & 0.51793    & 10 & 0.52880 & 20 & 0.53524 & 45 & 0.53900 \\
\hline
\end{tabular}
\label{tab:fn_rectangular2}
\end{table}

\section{Main Theorems}
\RankingGeneral*
\begin{proof}
    By \cref{G_u_Ranking_discretized_tight}, running \nameref{Discretized_Tightened_Ranking_LP} with $n$ set to any number gives a lower bound for the approximation ratio of \ranking{}. We solved the $n=80$ case and obtained $0.56001$ as a lower bound.
\end{proof}

\FRankingGeneral*
\begin{proof}
    By \cref{clm:Franking_discretization}, running \nameref{subsec:discre-Franking_LP} with $n$ set to any number gives a lower bound for the approximation ratio of \Franking{}. We solved the $n=45$ case and obtained $0.53900$ as a lower bound.
\end{proof}

\section{Ranking on Large Odd Girth Graphs}
\phantomsection
\label{sec:odd_girth_graphs}
In \cref{lem:victim_querycommit_case_u} and \cref{lem:victim_querycommit_case_u_star}, we proved that the number of blockers $u$ and $u^*$ have when they are victimized is positively correlated with the length of an even-length alternating path $P$ from $u^*$ to $u$. Since we assumed that $(u,u^*)\in E$, $P\cup\{u,u^*\}$ constitutes an odd-length cycle. The length of this cycle is bounded below by the \emph{odd girth} of the graph. For graphs with larger odd girths, we can hence collect more compensations for victim vertices, and prove better lower bounds for the approximation ratio of \ranking{}.

\begin{definition}
    The odd girth of a simple general graph $G$ is the length of the shortest odd-length cycle of the graph. Denote $\mathcal{G}_{2k+1}$ as the class of simple general graphs of odd girth $\geq 2k+1$. 
\end{definition}
Notice that the general graphs we are considering without restriction are of class $\mathcal{G}_3$. For graphs in $\mathcal{G}_{2k+1}$, we are able to collect at least $k$ copies of compensations for $u$ and $u^*$ when they become victims, as each of them will have at least $k$ blockers. We will modify the \ranking{} LP for each $\mathcal{G}_{2k+1}$ by taking the extra compensations into account.
\subsection{Continuous \ranking{} LP for $\mathcal{G}_{2k+1}$}
We first modify the function constraints for $h$ in \cref{def:function_constraint_ranking} to account for increasing copies of compensation collected. 
\begin{claim}[Assume Unmatched when Possible for $\mathcal{G}_{2k+1}$]
\phantomsection
\label{function_constraints_odd_girth_graph}
Let $g,h:[0,1]\times [0,1]\to [0,1]$ be gain and compensation functions that satisfy \cref{def:function_constraint_ranking} constraints 1,2, and 3. For graph $G$ of class $\mathcal{G}_{2k+1}$, where $k>1$, we instead enforce the following as constraints $4$ and $5$:
\begin{enumerate}[start=4]
    \item $\forall x,y,\; 1-g(x,y)-h(x,y)\geq k\cdot h(0^+,1).$
    \item $\forall x,y,\; g(x,y)-h(y,x)\geq k\cdot h(0^+,1).$
\end{enumerate}
As a result, when a vertex $u$ could be either matched or unmatched, we can always calculate a lower bound for $gain(u)$ assuming $u$ is unmatched if the collected compensation is $\leq k$ copies.
\end{claim}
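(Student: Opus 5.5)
The plan is to mirror the proof of \cref{fact:assume_unmatched_worst_case}, replacing the constant $4$ by $k$. Two ingredients are needed: an upper bound on the value of a single copy of compensation, and a lower bound on the gain of a matched vertex. Both come from the function constraints.

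First, since the event $x_v=0$ has measure zero for every $v$, I will assume all ranks lie in $(0,1]$. Constraint 1 of \cref{def:function_constraint_ranking} says $h(x,\cdot)$ is increasing in its second argument, and constraint 2 says $h(\cdot,y)$ is decreasing in its first argument. Together these give
\[
h(x_w,x_z)\le h(0^+,x_z)\le h(0^+,1)
\]
for any blocker $w$ matched to $z$. So one copy of compensation is worth at most $h(0^+,1)$. Consequently, any collection of at most $k$ copies $\sum_{i\le k} h(x_i,y_i)$ that an unmatched vertex receives is bounded above by $k\cdot h(0^+,1)$.

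Second, consider a matched vertex. By \cref{gain_sharing_scheme_ranking}, if it is matched as a buyer its gain is at least $1-g(x,y)-h(x,y)$, and if it is matched as a product its gain is at least $g(y,x)-h(x,y)$. These lower bounds already account for any compensation it pays, and it may additionally receive compensation, which only helps. The modified constraints 4 and 5 state that both expressions are at least $k\cdot h(0^+,1)$.

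Combining the two steps: whenever the analysis cannot decide whether $u$ is matched, the lower bound computed under the "unmatched" hypothesis is at most $k\cdot h(0^+,1)$. This is because that bound only credits at most $k$ copies of compensation, which are exactly the blockers guaranteed by \cref{lem:victim_querycommit_case_u} and \cref{lem:victim_querycommit_case_u_star} for odd girth $\ge 2k+1$. The true gain in the matched case is at least $k\cdot h(0^+,1)$. Hence taking the unmatched value is a valid lower bound in every branch.

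The only subtle point is to make sure the gain-sharing bounds used for a matched vertex are genuinely unconditional lower bounds, i.e., that they include the maximum amount it might pay out. A vertex pays at most one copy of compensation, since it has at most one victim, and $g_B,g_P$ already subtract that copy. This makes the comparison clean. The hypothesis $k>1$ only indicates that this replaces the $k$ relevant to the base analysis, and it plays no role in the argument.
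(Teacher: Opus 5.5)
Your proposal is correct and is essentially the paper's argument, which just reruns the proof of \cref{fact:assume_unmatched_worst_case} with $4$ replaced by $k$: you bound a single copy of compensation by $h(0^+,1)$ using the monotonicity constraints (with ranks taken in $(0,1]$), and you compare at most $k$ such copies against the minimum matched gains $1-g-h$ and $g-h$ guaranteed by the new constraints 4 and 5. One harmless slip: a matched vertex is never a victim, so it receives no compensation at all (not ``possibly additional'' compensation), and this does not affect your inequality.
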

The claim holds by the same reasoning as \cref{fact:assume_unmatched_worst_case}. Now we are ready to modify the \ranking{} LP. We will add the extra compensation to the system of simple \ranking{} LP deduced in \cref{sec:ranking} instead of the tightened \ranking{} LP from \cref{sec:tigher_bounds}, as integration with the former is more intuitive and straightforward. 

\begin{lemma}[Extra Compensation for Class $(x_u,x_v,\bot)$]
\phantomsection
\label{G_no_backup_ranking_odd_girth}
 Fix $G\in \mathcal{G}_{2k+1}$ where $k>1$. Assume $g,h$ satisfy the above function constraints \cref{function_constraints_odd_girth_graph}. Let $\vecx_{\shortminus u^*}$ be a rank vector with profile $(x_u,x_v,\bot)$. Let $G(x_u,x_v,\bot,\theta_0)$ be the corresponding lower bound for this rank vector with fixed $\theta_0<x_u$ as in \cref{G_ranking_nobackup}. Then the expected total gain of $u$ and $u^*$ is lower bounded by
   \begin{align*}
       G_k(x_u,x_v,\bot,\theta_0)=\;&G(x_u,x_v,\bot,\theta_0)\\
       &+ \theta_0\cdot \max\{k-2,0\}\cdot h(x_u,0^+)\\
       &+ (1-\max\{\theta_0,x_v\})\cdot \max\{k-2,0\}\cdot h(x_v,\theta_0).
   \end{align*}
   And we have that
   $$\ev[\;\text{gain}(u)+\text{gain}(u^*)\;|\; \text{profile of $\vecx_{\shortminus u^*}=(x_u,x_v,\bot)$}]$$
   is lower bounded by:
    $$\inf_{\theta_0<x_u}\{G_k(x_u,x_v,\bot,\theta_0)\}.$$
\end{lemma}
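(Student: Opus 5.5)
The plan is to reuse the proofs of \cref{fact:lowerbound_u_nobackup} and \cref{fact:lowerbound_u_star_nobackup} term by term. The only change is that, on the two ranges where a vertex is assumed unmatched, we now count all blockers guaranteed by \cref{lem:victim_querycommit_case_u} and \cref{lem:victim_querycommit_case_u_star} rather than just two of them. The key structural input is odd girth. Fix a rank $y$ with $u$ made worse off in $\M(\vecx_{\shortminus u^*}(y))$, and let $P=(u^*=u_0,u_1,\dots,u_m=u)$ be the alternating path $\M(\vecx_{\shortminus u^*}(y))\oplus\M(\vecx_{\shortminus u^*})$. Since $u$ has no backup, by \cref{fact:backup_uniquely_describe_worse_off} the path ends at $u$ with $m$ even. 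Then $P\cup\{(u,u^*)\}$ is an odd cycle of length $m+1\ge 2k+1$, so $m\ge 2k$. The path is simple, being a component of a symmetric difference of matchings, so the even-indexed vertices $u_0,u_2,\dots,u_{m-2}$ are at least $k$ distinct vertices. Likewise the odd-indexed vertices $u_1,\dots,u_{m-1}$ are at least $k$ distinct vertices.

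\textbf{Range $x\in[0,\theta_0)$, gain of $u$.} Here $u$ may be unmatched, and we treat it as unmatched via \cref{function_constraints_odd_girth_graph}; this is valid because at most $k$ copies are counted. By \cref{lem:victim_querycommit_case_u}, $u$ is the victim of every $u_{2i}$ with $2i<m$. The original bound already accounts for $u^*$ and $u_{m-2}$. For each remaining blocker $u_{2i}$ with $0<2i<m-2$, I would lower bound its payment $h(x_{u_{2i}},x_{u_{2i+1}})$ by $h(x_u,0^+)$. The second argument is fine, since $h$ is increasing in its second coordinate and all ranks are $>0$. For the first argument I need $x_{u_{2i}}\le x_u$, and \cref{fact:monotonicity_for_alt_path_ranking} gives exactly this: the even-indexed ranks increase along the path up to $u_m=u$. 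This supplies at least $k-2$ extra copies and hence the term $\theta_0\max\{k-2,0\}h(x_u,0^+)$.

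\textbf{Range $x\in(\max\{\theta_0,x_v\},1]$, gain of $u^*$.} Here $u^*$ is treated as unmatched. Choose $y\in(\theta_0^-,\theta_0)$ as in \cref{fact:theta_0_II}. By \cref{lem:victim_querycommit_case_u_star}, applied with $L'=\vecx_{\shortminus u^*}(x)$, the vertex $u^*$ is the victim of every odd $u_i$ with $i<m$. The hypotheses hold as before: $u$ has no backup, and $L'$ agrees with $L$ off $u^*$. The original bound already counts $u_1=w$ and $u_{m-1}=v$. For the remaining odd $u_{2i+1}$, their match in $\M(L')$ must be identified. Since $u^*$ is unmatched in $L'$, the relevant matching agrees with $\M(\vecx_{\shortminus u^*})$, where $u_{2i+1}$ is matched to $u_{2i+2}$. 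Its payment is therefore $h(x_{u_{2i+1}},x_{u_{2i+2}})$.

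By \cref{fact:monotonicity_for_alt_path_ranking}, $x_{u_{2i+1}}\le x_{u_{m-1}}=x_v$ and $x_{u_{2i+2}}\ge x_{u_2}=\theta_0$. Combined with the monotonicity of $h$, each payment is at least $h(x_v,\theta_0)$, which yields $(1-\max\{\theta_0,x_v\})\max\{k-2,0\}h(x_v,\theta_0)$. In the case $x_u<x_v$, where the original bound uses $h(x_u,\theta_0)$ for $w$, the same bound $h(x_v,\theta_0)$ still lower bounds the extra blockers. When $\theta_0=0$ the new terms vanish because $h(\cdot,0)=0$ and $\theta_0=0$, so no argument is needed there. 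The remaining terms of $G(x_u,x_v,\bot,\theta_0)$ are unchanged, and taking the infimum over $\theta_0<x_u$ gives the claim.

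\textbf{Main obstacle.} The hardest step is making sure the extra blockers are genuinely distinct from those already counted and that the compensation rule pays each of them separately. Distinctness requires $m\ge 2k$ together with the observation that $\theta_0<x_u$ rules out $w=v$ and $u^*=u_{m-2}$, as in \cref{fact:theta_0}. Separate payment holds because each blocker pays for its unique victim. I would also check carefully that the match of each odd $u_{2i+1}$ in $\M(L')$ really is $u_{2i+2}$; if that is delicate, I would instead invoke the proof of \cref{lem:victim_querycommit_case_u_star}.
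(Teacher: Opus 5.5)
Your blocker counting is the same as the paper's proof. This includes $m\ge 2k$ from the odd cycle $P\cup\{(u,u^*)\}$ and the $k-2$ extra even-indexed blockers of $u$, each paying at least $h(x_u,0^+)$. It also includes the $k-2$ extra odd-indexed blockers $u_3,\dots,u_{m-3}$ of $u^*$, which are matched in $\M(\vecx_{\shortminus u^*}(x))=\M(\vecx_{\shortminus u^*})$ to $u_4,\dots,u_{m-2}$ and each pay at least $h(x_v,\theta_0)$.

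The gap is in your last step, ``taking the infimum over $\theta_0<x_u$ gives the claim.'' For a given $\vecx_{\shortminus u^*}$ you only know $\theta_0\le x_u$ (\cref{fact:theta_0}). In \cref{G_ranking_nobackup}, the case $\theta_0=x_u$ (with $x_v<x_u$) is handled by the separate branch (3), which credits each of $u,u^*$ with only one copy of compensation. You drop this branch without justification, and your distinctness argument even invokes $\theta_0<x_u$ as though it were a hypothesis. This is not a technicality: for $k=2$ you add no extra compensation at all. So for triangle-free graphs, the entire content of the lemma is that branch (3) may be discarded, and that is exactly what your proof does not argue.

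The missing argument is short. If $\theta_0>0$, take $y\in(\theta_0^-,\theta_0)$ from \cref{fact:theta_0_II}. Its path has $m\ge 2k\ge 4$, so the third vertex $u_2$ is an even-indexed vertex strictly before $u_m=u$. Then \cref{fact:monotonicity_for_alt_path_ranking} gives $\theta_0=x_{u_2}<x_u$. If $\theta_0=0$, then $\theta_0<x_u$ almost surely. Hence the realized impacting rank is always below $x_u$. Moreover, $m\ge 4$ alone, rather than $\theta_0<x_u$, already yields $w\ne v$ and $u_{m-2}\ne u^*$.

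The paper argues slightly differently. It observes that branch (3) arises only from length-$2$ alternating paths, i.e.\ triangles, so it cannot occur here. It then covers any residual $\theta_0=x_u$ by $\lim_{\theta_0\to x_u^-}G_k$. This is legitimate because the $\theta_0<x_u$ derivation stays valid at $\theta_0=x_u$ once $m\ge 4$ supplies distinctness. Also, $h$ is nondecreasing in its second argument, so the left limit is at most that value.
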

\begin{proof}
    Fix $G\in \mathcal{G}_{2k+1},\;x_u,\;x_v,\;\theta_0$ as stated. Let $x\in[0,1]$ be the random rank of $u^*$. 
    
    \noindent\textbf{Extra gain for $u$:} When $x\in[0,\theta_0)$, assume $u$ is unmatched by $u^*$ and has alternating paths $\M(\vecx_{\shortminus u^*}(x))\oplus \M(\vecx_{\shortminus u^*})$ of the form $P=(u^*,w,u_2,...,u_{l-2},v,u)$. For $G\in \mathcal{G}_{2k+1}$, the alternating path will have length $\geq 2k$. Otherwise, we will have an odd length cycle $P\cup\{(u,u^*)\}$ of length $<2k+1$, which is a contradiction. By \cref{lem:victim_querycommit_case_u}, $u^*,u_2,...,u_{l-2}$ are all blockers of $u$. In $G(x_u,x_v,\bot,\theta_0)$, we have taken into account the compensation from $u^*$ and $u_{l-2}$, and we are free to add the compensations from $u_2,...,u_{l-4}$, which is at least a set of $k-2$ vertices. Each of them, being an even-indexed vertex in the path, will have rank $\leq x_u$ by the monotonicity property along the alternating path \cref{fact:monotonicity_for_alt_path_ranking}. W.O.L.G., we could assume their match has rank $>0$, again as the event of any vertex having rank $=0$ has $0$ probability measure. Therefore, each of the $k-2$ vertices pays at least $h(x_u,0^+)$ amount of compensation to $u$. This constitutes the extra gain part: 
    $$ \theta_0\cdot \max\{k-2,0\}\cdot h(x_u,0^+).$$
    Now, we are assuming $u$ receives $ k-2+2=k$ copies of compensations, by \cref{function_constraints_odd_girth_graph}, this is a valid lower bound even for $gain(u)$ when it is matched in the range $x\in[0,\theta_0)$.

     \noindent\textbf{Extra gain for $u^*$:} When $x\in(\max\{x_v,\theta_0\},1]$, we assume $u^*$ is unmatched in $\M(\vecx_{\shortminus u^*}(x))$. Assume $\theta_0>0$, by \cref{fact:theta_0_II}, there exists a left neighborhood $(\theta_0^-,\theta_0)$ of $\theta_0$ such that for any $x_{\theta_0}\in (\theta_0^-,\theta_0)$, $u$ is unmatched in $\M(\vecx_{\shortminus u^*}(x_{\theta_0}))$ and the third vertex $u_2$ in the alternating path $\M(\vecx_{\shortminus u^*}(x_{\theta_0}))\oplus \M(\vecx_{\shortminus u^*})$ has rank $\theta_0$.
    Fix such $x_{\theta_0}$ and let $P=(u^*,w,u_2,...,u_{l-2},v,u)$ be the alternating path $\M(\vecx_{\shortminus u^*}(x_{\theta_0}))\oplus \M(\vecx_{\shortminus u^*})$. By \cref{lem:victim_querycommit_case_u_star}, $u^*$ is the victim of $w,u_3,...u_{l-3},v$ when it is unmatched in $\M(\vecx_{\shortminus u^*}(x))$. In $G(x_u,x_v,\bot,\theta_0)$, we have already taken the compensations from $w$ and $v$ into account. By similar reasoning as before, there are at least $k-2$ extra vertices $u_3,...,u_{l-3}$ we could collect compensations from. Each of them, being an odd-indexed vertex in the alternating path, has rank $\leq x_v$ by the monotonicity property along the alternating path \cref{fact:monotonicity_for_alt_path_ranking}. Their matches in $\M(\vecx_{\shortminus u^*}(x))$, which are $u_4,...,u_{l-2}$, has rank $\geq x_{u_2}=\theta_0$. So each of these vertices will pay at least $h(x_v,\theta_0)$ amount of compensation to $u^*$. This constitutes the extra gain part 
    $$ (1-\max\{\theta_0,x_v\})\cdot \max\{k-2,0\}\cdot h(x_v,\theta_0).$$
    When $\theta_0=0$, this part is $0$ as $h(x_v,0)=0$ by the function constraints, so we are not over-approximating for the degenerate case $\theta_0=0$.
    
    Note that here we do not consider $\theta_0=x_u$. In the system of bounds from \cref{G_ranking_nobackup}, $\theta_0=x_u$ is reserved for the case where the alternating path $\M(\vecx_{\shortminus u^*}(\theta_0^-))\oplus \M(\vecx_{\shortminus u^*})$ has length exactly $2$. Since we assumed $G\in \mathcal{G}_{2k+1}$ with $k> 1$, the graph will be triangle-free and hence the bound from \cref{G_ranking_nobackup} for case $\theta_0=x_u$ does not apply here. It is still possible that $\theta_0=x_u$ with the alternating path $\M(\vecx_{\shortminus u^*}(\theta_0^-))\oplus \M(\vecx_{\shortminus u^*})$ having length $\geq 2k$. Such a case can be approximated by
    $$\lim_{\theta_0\to x_u^-} G_k(x_u,x_v,\bot,\theta_0)$$
    and hence we are not underestimating the bounds.

\end{proof}
\begin{lemma}[Extra Compensation for Class $(x_u,x_v,x_b)$]
\phantomsection
\label{G_backup_ranking_odd_girth}
 Fix $G\in \mathcal{G}_{2k+1}$ where $k>1$. Assume $g,h$ satisfy the above constraints \cref{function_constraints_odd_girth_graph}. Let $\vecx_{\shortminus u^*}$ be a rank vector with profile $(x_u,x_v,x_b)$. Let $G(x_u,x_v,x_b,\theta_0)$ be the corresponding lower bound for this rank vector with fixed $\theta_0<x_u$ as in \cref{G_ranking_nobackup}. Then the expected total gain of $u$ and $u^*$ is lower bounded by
   \begin{align*}
       G_k(x_u,x_v,x_b,\theta_0)=\;&G(x_u,x_v,x_b,\theta_0)+ \max\{x_b-\max\{\theta_0,x_v\},0\}\cdot \max\{k-2,0\}\cdot h(x_v,\theta_0).
   \end{align*}
   Hence, 
   $$\ev[\;\text{gain}(u)+\text{gain}(u^*)\;|\; \text{profile of $\vecx_{\shortminus u^*}=(x_u,x_v,x_b)$}]$$
   is lower bounded by:
    $$\inf_{\theta_0<x_u}\{G_k(x_u,x_v,x_b,\theta_0)\}.$$
\end{lemma}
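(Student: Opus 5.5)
The proof follows the argument of \cref{G_no_backup_ranking_odd_girth}, restricted to the part that survives when $u$ has a backup. Fix $G\in\mathcal{G}_{2k+1}$ with $k>1$, a profile $(x_u,x_v,x_b)$ and $\theta_0<x_u$. Let $x$ be the random rank of $u^*$.

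We start from $G(x_u,x_v,x_b,\theta_0)$ of \cref{G_ranking_backup}, which already lower bounds the expected total gain. Nothing is added to the gain of $u$. By \cref{facf:backup_ranking}, for $x<\theta_0$ the vertex $u$ is never unmatched; it is at worst matched to $b$, so it is never a victim. Hence the only extra gain comes from $u^*$ on the range where it was credited compensation in \cref{lowerbound_u_star_backup_ranking}, namely $x\in(\max\{\theta_0,x_v\},x_b)$, assuming $u^*$ is unmatched there. This range has length $\max\{x_b-\max\{\theta_0,x_v\},0\}$.

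\textbf{Case $\theta_0>0$.} Use \cref{fact:theta_0_II} to pick $x_{\theta_0}\in(\theta_0^-,\theta_0)$ such that $u$ is made worse off in $\M(\vecx_{\shortminus u^*}(x_{\theta_0}))$ and the third vertex $u_2$ of the alternating path has rank $\theta_0$. Since $u$ has backup $b$, the path is
$$P=(u^*,w,u_2,\dots,u_{j-1}=v,u_j=u,b,\dots).$$

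\textbf{Odd-girth bound on $j$.} The prefix from $u^*$ to $u$ together with the edge $(u,u^*)\in M^*$ forms a closed walk of odd length $j+1$. Because $P$ is a simple path, this is an odd cycle, so $j\ge 2k$.

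\textbf{Blockers.} Apply \cref{lem:victim_querycommit_case_u_star} with $L'=\vecx_{\shortminus u^*}(x)$ for $x<x_b$. This is valid because $x<x_b$ gives $(u,u^*)<(u,b)$ in $L'$, and $u^*$ is unmatched in $L'$. It shows that every odd-indexed $u_i$ with $i<j$ is a blocker of $u^*$. There are $j/2\ge k$ such vertices. The bound $G$ already counts $w=u_1$ and $v=u_{j-1}$, so $u_3,\dots,u_{j-3}$ supply at least $k-2$ additional blockers.

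\textbf{Size of each extra payment.} By \cref{fact:monotonicity_for_alt_path_ranking}, each odd vertex $u_i$ has rank at most $x_{u_{j-1}}=x_v$. Its match in $\M(\vecx_{\shortminus u^*}(x))$ is determined as follows. Since $u^*$ is unmatched at $x$, the list $\vecx_{\shortminus u^*}(x)$ gives the same matching as $\vecx_{\shortminus u^*}$ (the two agree after marking $u^*$ unavailable). So $u_i$ is matched to $u_{i+1}$, which has rank at least $x_{u_2}=\theta_0$ by the same monotonicity. By \cref{def:function_constraint_ranking}, $h$ is decreasing in its first argument and increasing in its second, so each payment is at least $h(x_v,\theta_0)$.

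\textbf{Validity of the accounting.} $u^*$ now collects $k$ copies of compensation. \cref{function_constraints_odd_girth_graph} guarantees that this total is still at most the gain of a matched vertex. Therefore treating $u^*$ as unmatched on this range remains a valid worst case even when it is actually matched.

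\textbf{Case $\theta_0=0$.} The added term vanishes because $h(x_v,0)=0$.

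\textbf{Case $\theta_0=x_u$.} This case is excluded, as in the previous lemma. The length-$2$ path scenario of \cref{G_ranking_backup}(3) cannot occur in a triangle-free graph. Longer paths with $\theta_0=x_u$ are covered by the limit $\theta_0\to x_u^-$, which the infimum over $\theta_0<x_u$ already includes.

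The main obstacle is the step identifying the match of the extra blockers when $u^*$ is demoted to $x$. The matchings differ between $x_{\theta_0}$ and $x$, and the second rank argument $\ge\theta_0$ needs exactly the identification $\M(\vecx_{\shortminus u^*}(x))=\M(\vecx_{\shortminus u^*})$ for unmatched $u^*$. A second point to check is that the extra blockers $u_3,\dots,u_{j-3}$ are distinct from $w$ and $v$, so no payment is double counted; this follows from simplicity of $P$.
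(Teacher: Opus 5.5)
Your proposal is correct and follows the paper's own argument. The paper likewise credits no extra gain to $u$, credits $u^*$ on $(\max\{\theta_0,x_v\},x_b)$ with the $k-2$ further odd-indexed blockers $u_3,\dots,u_{j-3}$ supplied by \cref{lem:victim_querycommit_case_u_star} and the odd-girth bound, and prices each at $h(x_v,\theta_0)$ via rank monotonicity along the path. Your write-up is more explicit than the paper's (which just defers to the no-backup lemma), for instance in checking that $x<x_b$ gives $(u,u^*)<(u,b)$ and that $\M(\vecx_{\shortminus u^*}(x))=\M(\vecx_{\shortminus u^*})$ when $u^*$ is unmatched.
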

\begin{proof}
    We are only collecting extra gain for $u^*$. This is because when $u$ has a backup, $u^*$ can never make $u$ unmatched. The only difference in the extra gain we collected for $u^*$ between this case and \cref{G_no_backup_ranking_odd_girth} is changing $1-\max\{\theta_0,x_v\}$ to $\max\{x_b-\max\{\theta_0,x_v\},0\}$, which corresponds to the range of $x\in (\max\{\theta_0,x_v\},1]$ where we can prove $u^*$ is a victim. When $u$ does not have a backup, $u^*$ will be a victim throughout this interval. When $u$ has a backup at $x_b$, we only know that $u^*$ is a victim for $x\in (\max\{\theta_0,x_v\},x_b)$ (\cref{lem:victim_querycommit_case_u_star}). The rest follows from the same line of reasoning as the no-backup case, \cref{G_no_backup_ranking_odd_girth}.
\end{proof}
\subsection{Discretized \ranking{} LP for $\mathcal{G}_{2k+1}$}
\phantomsection
\label{}
For odd girth graphs $\mathcal{G}_{2k+1}$ with $k>1$, we can formulate a system of discretized \ranking{} LP by incorporating the described modifications in the previous subsection.

Fix $n\geq 1$ (piece of discretization) and $2k+1>3$ (odd girth).  Define $g,h,g_P,g_B$ as piecewise constant functions similar to \nameref{subsec:discre-ranking_LP}. We incorporate the function constraints as usual except that now we set 
\begin{align*}
    &4.\quad\forall i,j\in[n] && 1 - g(i,j) - h(i,j) \ge k\cdot h(1,n).\\
    &5.\quad\forall i,j\in[n] && g(i,j)-h(j,i)  \ge k\cdot h(1,n).
\end{align*}
We keep the bound for profiles $(x_u,\bot,\bot)$ as it is.

For profiles $(x_u,x_v,\bot)$, with $x_u\in(\frac{i_u-1}{n},\frac{i_u}{n}]$ and $x_v\in (\frac{i_v-1}{n},\frac{i_v}{n}]$, we introduce the following constraints for $G_k(i_u,i_v,\bot)$ where $RHS\text{ of } (1),(2)$ expressions are the corresponding $RHS$ expressions in \nameref{subsec:discre-ranking_LP} for profiles $(x_u,x_v,\bot)$:
\begin{align*}
\shortintertext{$\forall i_{\theta_0}\leq i_u\leq i_v$:}
\qquad G_k(i_u,i_v,\bot)\leq  RHS\text{ of } (1)+\frac{1}{n}\Big[ &(n-i_v)\cdot \max\{k-2,0\}\cdot h(i_v,i_{\theta_0})\\
&+ i_{\theta_0}\cdot \max\{k-2,0\}\cdot h(i_u,1) \Big], & (1) \\[0.6em]
\shortintertext{$\forall  i_{\theta_0},i_v\leq i_u$:}
\qquad G_k(i_u,i_v,\bot)\leq  RHS\text{ of } (2)+\frac{1}{n}\Big[ &(n-\max\{i_{\theta_0},i_v-1\})\cdot \max\{k-2,0\}\cdot h(i_v,i_{\theta_0})\\
&+ i_{\theta_0}\cdot \max\{k-2,0\}\cdot h(i_u,1) \Big]. & (2) 
\end{align*}
Similarly, for profiles $(x_u,x_v,x_b)$, with $x_u\in(\frac{i_u-1}{n},\frac{i_u}{n}]$ and $x_v\in (\frac{i_v-1}{n},\frac{i_v}{n}]$,$x_b\in (\frac{i_b-1}{n},\frac{i_b}{n}]$, we introduce the following constraints for $G_k(i_u,i_v,i_b)$ where $RHS\text{ of } (1.1),(1.2),(2)$ expressions are the corresponding $RHS$ expressions in \nameref{subsec:discre-ranking_LP} for profiles $(x_u,x_v,x_b)$:
{\small
\begin{align*}
\shortintertext{$\forall i_{\theta_0}\leq i_u\leq i_v<i_b$:}
\qquad G_k(i_u,i_v,i_b)\leq{}& RHS\text{ of } (1.1)+\frac{1}{n}\Big[ (i_b-i_v-1)\cdot \max\{k-2,0\}\cdot h(i_v,i_{\theta_0})\Big],& (1.1)\\[0.6em]
\shortintertext{$\forall i_{\theta_0}\leq i_u\leq i_v=i_b$:}
\qquad G_k(i_u,i_v,i_b)\leq{}& RHS\text{ of } (1.2), & (1.2)\\[0.6em]
\shortintertext{$\forall i_{\theta_0}, i_v\leq i_u$:}
\qquad G_k(i_u,i_v,i_b)\leq{}& RHS\text{ of } (2)+\frac{1}{n}\Big[ \max\{i_b-1-\max\{i_{\theta_0},i_v-1\},0\}\cdot \max\{k-2,0\}\cdot h(i_v,i_{\theta_0})\Big].& (2)
\end{align*}
}
The remaining aggregation parts are the same as \cref{subsec:discre-ranking_LP}, with $G_k$ replacing $G$.
\begin{claim}[Discretized Odd Girth \ranking{} LP]
\phantomsection
\label{G_u_Ranking_Odd_Girth}
   Define $G^k_u$ for the graph class $\mathcal{G}_{2k+1}$ for $k>1$ as follows:
\begin{align*}
    &\forall i_u ,&G^k_u(i_u)\leq\; & G(i_u,\bot,\bot),\\
    &\forall i_u, i_{\text{start of $v$}} ,&G^k_u(i_u)\leq\; &\frac{1}{n+1-i_{\text{start of $v$}}} \sum_{j=i_{\text{start of $v$}}}^n G_k(i_u,j,\bot),\\
    &\forall i_u, \forall i_{\text{start of $v$}}\leq i_b ,&G^k_u(i_u)\leq\; &\frac{1}{i_b+1-i_{\text{start of $v$}}} \sum_{j=i_{\text{start of $v$}}}^{i_b}G_k(i_u,j,\min\{i_b+1,n\}),\\
    &\forall i_u, \forall i_{\text{start of $v$}}\leq i_b ,&G^k_u(i_u)\leq\; &\frac{1}{i_b+1-i_{\text{start of $v$}}} \sum_{j=i_{\text{start of $v$}}}^{i_b}G_k(i_u,j,i_b).
\end{align*}
Then the approximation ratio for \ranking{} on $\mathcal{G}_{2k+1}$ is lower bounded by
    $$\frac{1}{n}\sum_{i_u=1}^n G^k_u(i_u).$$
We denote the system of discretized \ranking{} LP for $\mathcal{G}_{2k+1}$ as \ranking{} LP$(\mathcal{G}_{2k+1})$
\end{claim}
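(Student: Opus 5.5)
The plan is to mirror the argument for \cref{clm:ranking_discretization}, replacing every continuous bound by its odd-girth analogue. The first step is to fix $G\in\mathcal{G}_{2k+1}$ with $k>1$ and piecewise constant $g,h$ satisfying constraints 1--3 of \cref{def:function_constraint_ranking} together with the strengthened constraints 4 and 5 of \cref{function_constraints_odd_girth_graph}. By \cref{fact:randomized_bipartition}, it then suffices to lower bound $\ev[\text{gain}(u)+\text{gain}(u^*)]$ for an arbitrary $(u,u^*)\in M^*$, with $u$ a buyer and $u^*$ a product. I would again condition first on $\vecx_{\shortminus u^*}$, through its profile, and then on $x_u$.

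\textbf{Per-profile bounds.} For profiles $(x_u,\bot,\bot)$, \cref{G_ranking_nomatch} is unchanged; it uses no compensation, so the new constraints 4 and 5 are irrelevant. For $(x_u,x_v,\bot)$ and $(x_u,x_v,x_b)$ I would invoke \cref{G_no_backup_ranking_odd_girth} and \cref{G_backup_ranking_odd_girth}, which give $\inf_{\theta_0<x_u}G_k$. Note that the $\theta_0=x_u$ branch (3) is dropped, since triangle-freeness excludes length-2 alternating paths; that case is covered by the limit $\theta_0\to x_u^-$.

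\textbf{Discretization.} Next, for each block $x_u\in(\frac{i_u-1}{n},\frac{i_u}{n}]$, $x_v\in(\frac{i_v-1}{n},\frac{i_v}{n}]$ and $x_b\in(\frac{i_b-1}{n},\frac{i_b}{n}]$, I would check that the discretized $G_k(i_u,i_v,\cdot)$ lower bounds the continuous one. The base terms follow exactly as in \cref{subsec:discre-ranking_LP}. The extra terms need their own boundary arguments:
\begin{itemize}
\item Take $\theta_0=\frac{i_{\theta_0}}{n}$, since $h(x_v,\theta_0)$ is nondecreasing in its second argument.
\item Bound $h(x_u,0^+)$ below by $h(i_u,1)$.
\item Replace $h(x_v,\theta_0)$ by $h(i_v,i_{\theta_0})$ using the monotonicity of $h$.
\item Bound the multiplicities $(1-\max\{\theta_0,x_v\})$ and $\max\{x_b-\max\{\theta_0,x_v\},0\}$ below by the same integer counts already used for the base compensation terms in (1), (2), (1.1) and (2).
\end{itemize}

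\textbf{Main obstacle.} I expect the hardest step to be the worst-case choice of $x_v$ within its block in case (2). The extra term grows as $x_v$ decreases, so the argument for taking $x_v=(\frac{i_v-1}{n})^+$ has to be redone. Decreasing $x_v$ by $\delta$ now changes the bound by $\delta\cdot[k\cdot h(\cdot)-g_P(i_u,i_v)]$, counting the two base copies plus $k-2$ extras. This quantity is nonpositive precisely because of the strengthened constraint $g(x,y)-h(y,x)\ge k\cdot h(0^+,1)$, so this is where constraint 5 is needed. In case (1) the $x_v$-dependence appears only through the coefficient $(1-x_v)$ of nonnegative terms, so $x_v=\frac{i_v}{n}$ remains the worst case. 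In case (1.2), with $i_v=i_b$, the extra term vanishes in the worst case.

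\textbf{Aggregation.} Finally, the worst-case profile distribution is handled as in \cref{claim:ranking_lower_bound_uniform_profiles} and \cref{G_u_Ranking_discretized}. This uses the monotone densities from \cref{monoto_before_backup_ranking}, which do not depend on odd girth, reduced to uniform intervals with grid endpoints, and the two choices of $x_b$ at $\frac{i_b}{n}$ and $(\frac{i_b}{n})^+$. This gives $G_u^k(i_u)$ as a lower bound conditioned on $x_u$'s block. Averaging over $i_u$ then yields the claimed bound $\frac1n\sum_{i_u}G_u^k(i_u)$.
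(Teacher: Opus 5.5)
Your proposal is correct and follows essentially the paper's route: invoke \cref{G_no_backup_ranking_odd_girth} and \cref{G_backup_ranking_odd_girth}, discretize exactly as in \cref{subsec:discre-ranking_LP}, and aggregate over uniform profile slices. Your explicit check that $x_v=(\frac{i_v-1}{n})^+$ stays the worst case in case (2), because $g_P(i_u,i_v)\ge k\cdot h(1,n)$ dominates the $k$ collected copies of $h$, is precisely the step the paper leaves implicit and the reason for the modified constraint 5. One wording fix: in case (2), the counts $n-\max\{i_{\theta_0},i_v-1\}$ in your last bullet are upper rather than lower bounds on $n(1-\max\{\theta_0,x_v\})$, so that step rests on your joint argument over $x_v$, not on a termwise bound.
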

\subsection{Numerical Results for \ranking{} LP$(\mathcal{G}_{2k+1})$}
We run \ranking{} LP$(\mathcal{G}_{2k+1})$\footnote{Code is available at \href{https://github.com/yutaoyt7/STOC_2026_LP_Codes.}{https://github.com/yutaoyt7/STOC\_2026\_LP\_Codes}} as given in \cref{G_u_Ranking_Odd_Girth} for various $k$ values with setting $n=80$ uniformly and obtained the following results\footnote{$\mathcal{G}_3$ is the class of all general graphs, the result for $\mathcal{G}_3$ is obtained from running the tightened discretized \ranking{} LP in \cref{G_u_Ranking_discretized_tight}.}:
\begin{table}[htbp]
\centering
\caption{Numerical Results of \ranking{} LP $(\mathcal{G}_{2k+1})$}
\begin{tabular}{|c|c|c|c|}
\hline
\text{Graph Class} & $\text{ LP Results}$ & \text{Graph Class} & $\text{ LP Results}$\\
\hline
$\mathcal{G}_{3}$  & 0.56001  &$\mathcal{G}_{13}$  & 0.59071    \\
\hline
$\mathcal{G}_{5}$  & 0.56288  &$\mathcal{G}_{17}$  & 0.59697  \\
\hline
$\mathcal{G}_{7}$ & 0.57023  &$\mathcal{G}_{33}$  & 0.60693  \\
\hline
$\mathcal{G}_{9}$  & 0.57911   &$\mathcal{G}_{65}$  & 0.61231 \\
\hline
$\mathcal{G}_{11}$  & 0.58587   &$\mathcal{G}_{129}$  & 0.61514  \\
\hline
\end{tabular}
\label{tab:fn_rectangular4}
\end{table}

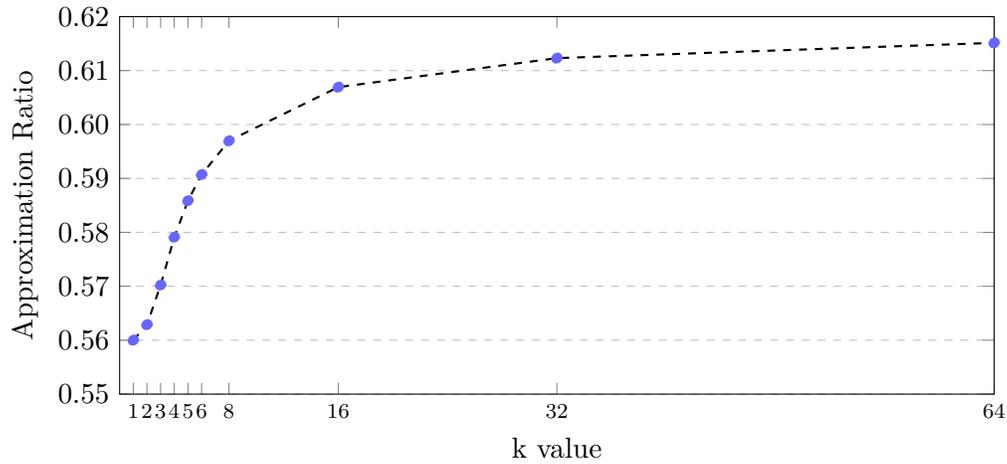
\begin{figure}[htbp]
\centering
\begin{tikzpicture}
\begin{axis}[
    width=0.8\textwidth,
    height=0.4\textwidth,
    xlabel={k value},
    ylabel={Approximation Ratio},
    xmin=0, xmax=64,
    ymin=0.55, ymax=0.62,
    xtick={1,2,3,4,5,6,8,16,32,64},
    ytick distance=0.01,
    ymajorgrids=true,
    grid style=dashed,
    x tick label style={font=\scriptsize},
    yticklabel style={
    /pgf/number format/fixed,
    /pgf/number format/precision=2,
    /pgf/number format/zerofill
},
]
\addplot[
    thick,
    dashed,
    mark=*,
    mark options={fill=blue!60, draw=blue!60,line width=0pt},
] coordinates {
    (1,0.56001)
    (2,0.56288)
    (3,0.57023)
    (4,0.57911)
    (5,0.58587)
    (6,0.59071)
    (8,0.59697)
    (16,0.60693)
    (32,0.61231)
    (64,0.61514)
};
\end{axis}
\end{tikzpicture}
\caption{LP results of \ranking{} on $\mathcal{G}_{2k+1}$.}
\end{figure}
\begin{theorem}
    For each graph class $\mathcal{G}_{2k+1}$ listed in \Cref{tab:fn_rectangular4}, the approximation ratio of \ranking{} on $\mathcal{G}_{2k+1}$ is at least the corresponding value reported in the table.
\end{theorem}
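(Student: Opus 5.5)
The proof is a reduction to \cref{G_u_Ranking_Odd_Girth} plus a numerical certificate, in the same way the main \ranking{} theorem was obtained from \cref{G_u_Ranking_discretized_tight}. For each listed class $\mathcal{G}_{2k+1}$ with $k>1$, \cref{G_u_Ranking_Odd_Girth} states that for every $n$, every feasible assignment of the piecewise-constant variables $g(i,j),h(k,l)$ satisfying the modified function constraints (with the factor $k$ in constraints 4 and 5) yields the lower bound $\frac{1}{n}\sum_{i_u}G_u^k(i_u)$ on the approximation ratio. So it suffices to exhibit, for $n=80$ and each listed $k$, a feasible solution whose objective value is at least the tabulated number. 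The row $\mathcal{G}_3$ is not covered by \cref{G_u_Ranking_Odd_Girth}. It is exactly the general-graph statement, already proved via the tightened LP as \RankingGeneral.

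The chain of validity I would make explicit is as follows. First, \cref{function_constraints_odd_girth_graph} makes ``assume unmatched'' a valid worst case when at most $k$ copies of compensation are collected. Second, \cref{G_no_backup_ranking_odd_girth} and \cref{G_backup_ranking_odd_girth} give the continuous per-profile bounds $G_k$. These rest on the fact that any alternating path from $u^*$ to $u$ closes an odd cycle with $(u,u^*)$, so it has length at least $2k$. Third, the discretization arguments of \cref{subsec:discre-ranking_LP} transfer verbatim to the discrete $G_k$ constraints. The extra terms $h(i_v,i_{\theta_0})$ and $h(i_u,1)$ are monotone in the same directions as the original compensation terms, so the boundary choices $\theta_0=\frac{i_{\theta_0}}{n}$, $x_v=(\frac{i_v-1}{n})^+$ and $x_b=(\frac{i_b-1}{n})^+$ remain worst-case. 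The one point to check is case (2). Decreasing $x_v$ there changes the value by $\delta$ times (at most $k$ compensation copies minus a matched gain), and this is $\le 0$ precisely by the strengthened constraint 4/5. Fourth, the uniform-profile aggregation of \cref{claim:ranking_lower_bound_uniform_profiles} is unchanged.

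The step I expect to be the real obstacle is making the numerical result a rigorous certificate rather than just a solver's output. Floating-point LP values can be slightly infeasible. I would first take the solver's $(g,h)$ and perturb it, for example by scaling $h$ down slightly to restore strict slack in constraints 4 and 5. I would then recompute $G_k$, $G_u^k$ and the objective exactly in rational arithmetic by evaluating every constraint family as a minimum. Since the objective is a minimum of explicit linear expressions in the fixed $(g,h)$, this exact evaluation is itself a valid lower bound, with no duality argument needed. The reported values from \Cref{tab:fn_rectangular4} (for example $0.56288$ for $\mathcal{G}_5$ and $0.61514$ for $\mathcal{G}_{129}$) should survive up to a loss in the fifth decimal place, which is why the headline claims are rounded to $0.562$, $0.570$ and $0.615$.
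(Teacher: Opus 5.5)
Your proposal is correct and takes the same route as the paper: the paper's proof is exactly the reduction to \cref{G_u_Ranking_Odd_Girth} solved at $n=80$, and the $\mathcal{G}_3$ row comes from the tightened general-graph LP of \cref{G_u_Ranking_discretized_tight}. Your exact-arithmetic re-evaluation of a perturbed solver solution is extra rigor that the paper does not attempt (it just reports the Gurobi values), and your check that the $x_v$-decrease step in case (2) needs the factor-$k$ versions of constraints 4 and 5 is correct and left implicit in the paper.
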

\begin{proof}
    By \cref{G_u_Ranking_Odd_Girth}, running the \ranking{} LP$(\mathcal{G}_{2k+1})$ with arbitrary $n$ gives a lower bound for the approximation ratio of \ranking{} on $\mathcal{G}_{2k+1}$.
\end{proof}

The LP values in \cref{tab:fn_rectangular4} exhibit a clear monotone trend: the lower bound improves steadily as the odd girth increases. This supports the view that short odd cycles are a major source of difficulty in analyzing and bounding the performance of \ranking{} on general graphs. Interestingly, the improvement from $\mathcal{G}_3$ to $\mathcal{G}_5$ is less pronounced than the subsequent increases from $\mathcal{G}_5$ to $\mathcal{G}_7$ and from $\mathcal{G}_7$ to $\mathcal{G}_9$. This is because in the formulation of the \ranking{} LP, we already proved that most alternating paths have length at least $4$. It would therefore be interesting to prove analogous statements for the occurrence of longer alternating paths. If one could show that alternating paths of length up to $2k$ are unlikely, then the LP value for $\mathcal{G}_{2k+3}$ would suggest a similar, though optimistic, approximation ratio of \ranking{} on general graphs.

This behavior is also consistent with the classical fact that \ranking{} admits a $1-\frac{1}{e}$ guarantee in the one-sided online bipartite matching model. Since bipartite graphs contain no odd cycle at all, graphs of large odd girth may be viewed as a relaxation of the bipartite setting. Our results therefore suggest that, as odd cycles become less likely to appear at a local scale, the behavior of \ranking{} on general graphs becomes increasingly bipartite-like.
\section{Tighter Lower Bounds for $(x_u,x_v,\bot)$ in \ranking{}}
\phantomsection
\label{sec:tigher_bounds}
In this section, we revisit the case $(x_u,x_v,\bot)$ and explain why the lower bound from \cref{fact:lowerbound_u_nobackup} can be strengthened. The key point is that when inserting $u^*$ creates a longer alternating path, the path structure may force $u$ to receive compensation from more vertices than those explicitly accounted for in the previous bound. In particular, the cases of alternating path length $4$ and length greater than $4$ lead to different compensation patterns, and this distinction yields a better lower bound. This observation motivates the refined analysis in this section and the introduction of a new impacting rank $\theta_3$.

Consider the following case: $u$ does not have a backup with respect to $\vecx_{\shortminus u^*}$, and inserting $u^*$ at rank $x\in [0,\theta_0)$ causes an alternating path of length $\geq 4$, resulting in $u$ becoming unmatched. Let the alternating path be the following: (where $u_2$ might be equal to $u_{k-2}$ when the path has length $4$)
$$\M(\vecx_{\shortminus u^*}(x))\oplus \M(\vecx_{\shortminus u^*})=(u^*,w, u_2,...,u_{k-2},v,u)$$
 By \cref{lem:victim_querycommit_case_u}, $u$ is the victim of $u^*$ and $u_{k-2}$, hence receives compensation from them. In \cref{fact:lowerbound_u_nobackup}, the term 
$$\int_0^{\theta_0}h(x,x_{\text{match of $u^*$}})\; dx\;+\;\theta_0\cdot  h(x_u, x_v)$$
  corresponds to the expected gain from these two vertices, where $h(x,x_{\text{match of $u^*$}})$ is the compensation from $u^*$, and $h(x_u, x_v)$ is the compensation from $u_{k-2}$. 

  Notice that another vertex that could pay compensation to $u$ is the vertex $u_2$. In particular, this vertex, being the backup of $w$, will have rank $\leq \theta_0$ by \cref{fact:theta_0_II}. Hence we know that there exists a copy of compensation for $u$ from $u_2$ of the form $h(y,z)$ where $y\leq \theta_0$. We also know that there exists a copy of compensation for $u$ from $u_{k-2}$ of the form $h(y,z)$ where $z=x_v$. This observation leads to the following two cases:

  \begin{itemize}
    \item If the alternating path has length $>4$, then $u_2$ and $u_{k-2}$ are distinct. In this case, $u$ may receive one copy of compensation from $u_2$ that is at least $h(\theta_0,0^+)$, and another copy from $u_{k-2}$ that is at least $h(x_u,x_v)$.
    \item If the alternating path has length $4$, then $u_2=u_{k-2}$. In this case, $u$ receives a single copy of compensation from this vertex that is at least $h(\theta_0,x_v)$.
  \end{itemize}
Both of these cases are better than receiving a single copy of compensation of amount $h(x_u,x_v)$, since $\theta_0$ is guaranteed to satisfy $\theta_0 \le x_u$ by \cref{fact:theta_0}.
 
\newcommand{\Llabel}[1]{%
  \ifcase#1
    $u^*$
  \or
    $w$
  \or
    $u_2$
  \or
    $u_3$
  \or
    $u_{k-2}$
  \or
    $v$
  \or
    $u$
  \fi
}

\newcommand{\Rlabel}[1]{%
  \ifcase#1
    $u^*$
  \or
    $w$
  \or
    $u_2$
  \or
    $v$
  \or
    $u$
  \fi
}

\def\FigShift{0cm} 

\makebox[\linewidth][c]{%
\hspace*{\FigShift}%
\begin{tikzpicture}[
  even/.style={circle, fill=blue!60!white, inner sep=1.5pt},
  odd/.style={circle, fill=red!60!white, inner sep=1.5pt},
  legendtext/.style={font=\footnotesize, align=left},
]

\def\TotalW{14}      
\def\PicW{4/5 * 14}  
\def\LegW{0.5/5 * 14} 

\pgfmathsetmacro{\SysL}{-\TotalW/2}
\pgfmathsetmacro{\SysR}{ \TotalW/2}
\pgfmathsetmacro{\PicL}{\SysL+1}
\pgfmathsetmacro{\PicR}{\SysL+1+\PicW}
\pgfmathsetmacro{\LegL}{\PicR}
\pgfmathsetmacro{\LegR}{\SysR}

\pgfmathsetmacro{\xA}{\PicL + \PicW/4}
\pgfmathsetmacro{\xB}{\PicL + 3*\PicW/4}
\pgfmathsetmacro{\xL}{(\LegL+\LegR)/2}

\def\xLR{0.95}
\def\yStep{0.72}
\def\labGap{0.28}

\foreach \i in {0,...,6} {
    \pgfmathsetmacro{\y}{(3-\i)*\yStep}
    \ifodd\i
        \node[odd] (L\i) at ({\xA-\xLR}, {\y}) {};
        \node[anchor=east] at ({\xA-\xLR-\labGap}, {\y}) {\Llabel{\i}};
    \else
        \node[even] (L\i) at ({\xA+\xLR}, {\y}) {};
        \node[anchor=west] at ({\xA+\xLR+\labGap}, {\y}) {\Llabel{\i}};
    \fi
}

\draw[dash pattern=on 6pt off 3pt, thick] (L0) -- (L1);
\draw[thick] (L1) -- (L2);
\draw[dash pattern=on 6pt off 3pt, thick] (L2) -- (L3);
\draw[dotted, thick] ({\xA}, {0}) -- ({\xA}, {-\yStep});
\draw[dash pattern=on 6pt off 3pt, thick] (L4) -- (L5);
\draw[thick] (L5) -- (L6);

\foreach \i in {0,...,4} {
    \pgfmathsetmacro{\y}{(2-\i)*\yStep}
    \ifodd\i
        \node[odd] (R\i) at ({\xB-\xLR}, {\y}) {};
        \node[anchor=east] at ({\xB-\xLR-\labGap}, {\y}) {\Rlabel{\i}};
    \else
        \node[even] (R\i) at ({\xB+\xLR}, {\y}) {};
        \node[anchor=west] at ({\xB+\xLR+\labGap}, {\y}) {\Rlabel{\i}};
    \fi
}

\draw[dash pattern=on 6pt off 3pt, thick] (R0) -- (R1);
\draw[thick] (R1) -- (R2);
\draw[dash pattern=on 6pt off 3pt, thick] (R2) -- (R3);
\draw[thick] (R3) -- (R4);

\pgfmathsetmacro{\yTitle}{-3*\yStep - 0.45}
\node[font=\footnotesize] at (\xA,\yTitle) {Alternating path of length $> 4$};
\node[font=\footnotesize] at (\xB,\yTitle) {Alternating path of length $4$};

\pgfmathsetmacro{\yTop}{3*\yStep}

\begin{scope}[shift={({\xL-1.5},{\yTop-0.05})}]
    \draw[dash pattern=on 6pt off 3pt, thick] (0,0) -- (0.45,0);
    \node[legendtext, anchor=west] at (0.60,0) {$\M(\vecx)$};

    \draw[thick] (0,-0.45) -- (0.45,-0.45);
    \node[legendtext, anchor=west] at (0.60,-0.45) {$\M(\vecx_{\shortminus u^*})$};

    \node[even] at (0.225,-0.95) {};
    \node[legendtext, anchor=west] at (0.60,-0.95) {even node};

    \node[odd] at (0.225,-1.35) {};
    \node[legendtext, anchor=west] at (0.60,-1.35) {odd node};
\end{scope}

\pgfmathsetmacro{\yCaption}{\yTop + 0.3}
\node[anchor=south, font=\footnotesize] at ({(\SysL+\SysR)/2}, \yCaption)
{Picture illustration of the tighter compensation bounds.};

\end{tikzpicture}%
}

  To formally establish this intuition, we define the following new impacting rank $\theta_3$, which is similar to $\theta_0$ but captures the collection of ranks where the alternating path causing $u$ to be unmatched has length $\geq 6$ and hence $u$ is not guaranteed to receive $h(\theta_0,x_v)$ amount of compensation from $u_{k-2}$.
  \begin{definition}[Length-$6$ Impacting Rank $\theta_3$]
\phantomsection
\label{def:theta_3_ranking}
    Assume $u$ is matched in $\M(\vecx_{\shortminus u^*})$. The length-$6$ impacting rank $\theta_3$ with respect to $\vecx_{\shortminus u^\ast}$ is
$$
\theta_3=\sup_{x\in[0,1]}\left\{
x \;\middle|\;
\begin{gathered}
u \text{ is made worse off in } \M(\vecx_{\shortminus u^\ast}(x)) \text{ and the third to last vertex $u_{k-2}$}\\
\text{ in the alternating path }
\M(\vecx_{\shortminus u^\ast}(x))\oplus \M(\vecx_{\shortminus u^\ast})
\text{ has rank $>\theta_0$}
\end{gathered}
\right\}.
$$
\end{definition}
It is easy to see that $0\leq \theta_3\leq \theta_0$, as the set of ranks that defines $\theta_3$ is a subset of the set of ranks that define $\theta_0$ in \cref{def:theta_0_ranking}. Similar to $\theta_0$, we also have two bounds regarding the rank of the third vertex $u_2$. Further, even without defining it explicitly, the alternating paths caused by inserting $u^*$ at these ranks have length $\geq 6$.
\begin{fact}[Properties of $\theta_3$]
\phantomsection
\label{fact:theta_3_II}
Fixing $\vecx_{\shortminus u^*}$, let $\theta_3$ be the length-$6$ impacting rank as defined in \cref{def:theta_3_ranking} with respect to $\vecx_{\shortminus u^*}$. Assuming $\theta_3>0$ is non-trivial, let $S_{\theta_3}$ be the collection of ranks $x$ that define $\theta_3$. I.e.,
    $$
S_{\theta_3}=\left\{
x \;\middle|\;
\begin{gathered}
u \text{ is made worse off in } \M(\vecx_{\shortminus u^\ast}(x)) \text{ and the third to last vertex $u_{k-2}$}\\
\text{ in the alternating path }
\M(\vecx_{\shortminus u^\ast}(x))\oplus \M(\vecx_{\shortminus u^\ast})
\text{ has rank $>\theta_0$}
\end{gathered}
\right\}.
$$
    Then
\begin{enumerate}
    \item $\theta_3\leq \theta_0$.
    \item For any $x\in S_{\theta_3}$, the third vertex $u_2$ in the alternating path $\M(\vecx_{\shortminus u^*}(x))\oplus \M(\vecx_{\shortminus u^*})$ has rank $x_{u_2}\leq \theta_3$.
    \item There exists a small enough punctured left neighborhood of $\theta_3$, denoted as $(\theta_3^-,\theta_3)$, such that $(\theta_3^-,\theta_3)\subseteq S_{\theta_3}$. Further, for any $x\in(\theta_3^-,\theta_3)$, the third vertex $u_2$ in the alternating path $\M(\vecx_{\shortminus u^*}(x))\oplus \M(\vecx_{\shortminus u^*})$ has rank $x_{u_2}= \theta_3$.
    \item  For any $x\in S_{\theta_3}$, the alternating path $\M(\vecx_{\shortminus u^*}(x))\oplus \M(\vecx_{\shortminus u^*})$ has length $\geq 6$.
\end{enumerate}
\end{fact}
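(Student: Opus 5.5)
Property 1 is immediate from the definitions: every $x\in S_{\theta_3}$ makes $u$ worse off, so $S_{\theta_3}\subseteq S$, and taking suprema gives $\theta_3\le\theta_0$. Properties 2 and 3 follow the proof of \cref{fact:theta_0_II} almost verbatim. The one new ingredient is that the extra condition ``$x_{u_{k-2}}>\theta_0$'' must be preserved when $u^*$ is demoted.

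\textbf{Property 2.} Fix $x\in S_{\theta_3}$ and write the alternating path as $(u^*,w,u_2,\dots,u_{k-2},v,u)$. By \cref{lem:backup_in_alt_path}, $u_2$ is the backup of $w$ with respect to $\vecx_{\shortminus u^*}(x)$. By \cref{monoto_before_backup_ranking}, applied to $w$ with $u^*$ playing the role of its match, we may demote $u^*$ to any $y\in[x,x_{u_2})$ without changing $\M(\vecx_{\shortminus u^*}(y))$. Since $\M(\vecx_{\shortminus u^*})$ is also unchanged, the alternating path is the same vertex sequence. In particular $u$ is still worse off and $u_{k-2}$ is still the same vertex, with rank $>\theta_0$. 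So $y\in S_{\theta_3}$, and hence $x_{u_2}\le\theta_3$.

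\textbf{Property 3.} Argue by contradiction as in \cref{fact:theta_0_II}. Suppose every $x\in S_{\theta_3}$ has third vertex of rank $x_{u_2}<\theta_3$. By \cref{facf:backup_ranking}, applied to $w$ whose match is $u^*$ at rank $x$, we have $x<x_{u_2}$. Then $\theta_3=\sup S_{\theta_3}\le \sup x_{u_2}<\theta_3$, because the $x_{u_2}$ range over a finite set of vertex ranks. This is a contradiction, so some $x\in S_{\theta_3}$ has $x_{u_2}=\theta_3$. The demotion argument from Property 2 then places the whole interval $(x,\theta_3)$ inside $S_{\theta_3}$ with the same matching, so the third vertex has rank exactly $\theta_3$ throughout.

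\textbf{Property 4.} This is the step that needs the most care. First, the path has length at least $4$. If it had length $2$, then $u_{k-2}=u^*$ would have rank $x\le\theta_0$, contradicting $x_{u_{k-2}}>\theta_0$. Next suppose the path has length exactly $4$, so $u_{k-2}=u_2$. By Property 2 and Property 1, $x_{u_2}\le\theta_3\le\theta_0$, which again contradicts $x_{u_{k-2}}>\theta_0$. An alternating path causing $u$ to be worse off starts at $u^*$ and has even length when it ends at the unmatched $u$. If instead $u$ is matched to its backup, we are in the no-backup profile relevant to this section, where being worse off means being unmatched, so the length is even here too. Hence the length is at least $6$.

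The main obstacle is Property 2's claim that demotion preserves membership in $S_{\theta_3}$. This relies on the matching $\M(\vecx_{\shortminus u^*}(y))$ being literally identical, so that the identity of $u_{k-2}$ is preserved, and not merely on $u$ still being worse off. \cref{monoto_before_backup_ranking} gives exactly this identical-matching conclusion, so I would invoke that equality explicitly.
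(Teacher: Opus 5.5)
Your proof is correct and matches the paper's: Property 1 from $S_{\theta_3}\subseteq S$, Properties 2 and 3 by rerunning the proof of \cref{fact:theta_0_II} with $S_{\theta_3}$ in place of $S$, and Property 4 by excluding lengths $2$ and $4$ using $x\le\theta_0$ and $x_{u_2}\le\theta_3\le\theta_0$. Your observation that demoting $u^*$ leaves $\M(\vecx_{\shortminus u^*}(y))$ literally unchanged, so that the vertex $u_{k-2}$ and hence membership in $S_{\theta_3}$ are preserved, is exactly the detail the paper leaves implicit; in Property 4 the backup case you mention simply does not arise, because the profile here is $(x_u,x_v,\bot)$.
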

\begin{proof}
    Property $1$ holds because $S_{\theta_3}$ is a subset of the set of ranks that defines $\theta_0$ in \cref{def:theta_0_ranking}.
    
    Properties $2$ and $3$ can be proved by the same proof as \cref{fact:theta_0_II} with $S$ replaced by $S_{\theta_3}$.

    Property 4: Fixing $x\in S_{\theta_3}$, the alternating path has even length since $u$ is made worse off. It suffices to show that the alternating path $\M(\vecx_{\shortminus u^\ast}(x))\oplus \M(\vecx_{\shortminus u^\ast})$ is not of length $2$ or $4$.

    If the length of $\M(\vecx_{\shortminus u^\ast}(x))\oplus \M(\vecx_{\shortminus u^\ast})$ is $2$, then the third to last vertex $u_{k-2}$ is $u^*$, hence of rank $x$. As $x\in S_{\theta_3}$, we have $x\leq \theta_3\leq \theta_0$. Thus the rank of $u_{k-2}$ cannot be $>\theta_0$.

    If the length of $\M(\vecx_{\shortminus u^\ast}(x))\oplus \M(\vecx_{\shortminus u^\ast})$ is $4$, then the third to last vertex $u_{k-2}$ is the third vertex $u_2$, which, by property $2$, is of rank $\leq \theta_3\leq \theta_0$. Thus cannot be $>\theta_0$.
\end{proof}
With the definition of $\theta_3$ and the above properties, we can tighten our bounds for profiles $(x_u,x_v,\bot)$. We first highlight which parts of the gains in \cref{G_ranking_nobackup} we will be reconsidering.
\begin{definition}[Compensation Gains and Match Gains]
    Fix $\vecx_{\shortminus u^*}$ with profile $(x_u,x_v,\bot)$. Let the compensation gains be the expected gains ($CG$) for $u$ and $u^*$ over the randomization of $u^*$ collected when they are unmatched (i.e. from compensation):
    \[
CG(x_u,x_v,\theta_0)=
\left\{
\begin{array}{l r}
(1-x_v)\cdot(h(x_u,\theta_0)+h(x_v,x_u))\\
+\int_0^{\theta_0} h(x,x_u)\;dx + \theta_0\cdot h(x_u,x_v),
& \text{if }x_u<x_v,\\[1em]
(1-\max\{x_v,\theta_0\})\cdot(h(x_v,\theta_0)+h(x_v,x_u))\\
+\int_0^{\theta_0} h(x,x_v)\;dx + \theta_0\cdot h(x_u,x_v),
& \text{if }x_v<x_u\wedge \theta_0<x_u,\\[1em]
(1-\theta_0)\cdot h(x_v,x_u) +\int_0^{\theta_0} h(x,x_v)\;dx,
& \text{if } x_v<x_u \wedge \theta_0=x_u.\\[1em]
\end{array}
\right.
\]
Let the match gains $(MG)$ be the remaining parts of the gains for $u$ and $u^*$ over the randomization of $u^*$ collected when they are matched:
 \[
MG(x_u,x_v,\theta_0)=\left\{
\begin{array}{l r}
\int_0^{x_v} g_P(x_u,x)\;dx+(1-\theta_0)\cdot g_B(x_u,x_v),
& \text{if } x_u<x_v ,\\[1em]
\int_0^{\theta_0} g_P(x_v,x)\;dx+\int_{\theta_0}^{\max\{x_v,\theta_0\}} g_P(x_u,x)\;dx\\
+(1-\theta_0)\cdot g_B(x_u,x_v),
& \text{else} .
\end{array}
\right.\]
\end{definition}
Then we can rewrite \cref{G_ranking_nobackup} as the summation of $CG$ and $MG$:
\begin{fact}
    Fix $\vecx_{\shortminus u^*}$ with profile $(x_u,x_v,\bot)$. We have:
    $$\ev_{x}[\text{ gain}(u) + \text{gain}(u^\ast)]\geq \inf_{\theta_0\leq x_u}\{CG(x_u,x_v,\theta_0)+MG(x_u,x_v,\theta_0)\}.$$
\end{fact}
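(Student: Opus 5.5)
This fact is a rearrangement of the bound already established for profiles $(x_u,x_v,\bot)$. The plan is to start from \cref{fact:lowerbound_u_nobackup} and \cref{fact:lowerbound_u_star_nobackup}. For every fixed $\vecx_{\shortminus u^*}$ with this profile, these give lower bounds on $\ev_x[\text{gain}(u)]$ and $\ev_x[\text{gain}(u^*)]$ for the actual impacting rank $\theta_0$ of that rank vector. Adding them and minimizing over $\theta_0\le x_u$ (legitimate by \cref{fact:theta_0}, property 1) gives exactly the expression $G(x_u,x_v,\bot)$ in \cref{G_ranking_nobackup}. So it suffices to verify, case by case, that the summand inside each minimum in \cref{G_ranking_nobackup} equals $CG(x_u,x_v,\theta_0)+MG(x_u,x_v,\theta_0)$.

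I will check the three cases separately.

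\textbf{Case $x_u<x_v$ (expression (1)).} Group the terms $(1-x_v)(h(x_u,\theta_0)+h(x_v,x_u))$, $\int_0^{\theta_0}h(x,x_u)\,dx$ and $\theta_0 h(x_u,x_v)$ into $CG$. The remaining terms $\int_0^{x_v}g_P(x_u,x)\,dx$ and $(1-\theta_0)g_B(x_u,x_v)$ form $MG$.

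\textbf{Case $x_v<x_u$, $\theta_0<x_u$ (expression (2)).} Here $CG$ collects $(1-\max\{x_v,\theta_0\})(h(x_v,\theta_0)+h(x_v,x_u))$, $\int_0^{\theta_0}h(x,x_v)\,dx$ and $\theta_0 h(x_u,x_v)$. The remaining terms give $MG$.

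\textbf{Case $\theta_0=x_u$ (expression (3)).} Here $CG$ is $(1-\theta_0)h(x_v,x_u)+\int_0^{\theta_0}h(x,x_v)\,dx$. The match part is $\int_0^{\theta_0}g_P(x_v,x)\,dx+(1-\theta_0)g_B(x_u,x_v)$. This coincides with the ``else'' branch of $MG$, because $\max\{x_v,\theta_0\}=\theta_0$ there and the middle integral vanishes.

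The only point needing care is the substitution made in the proof of \cref{G_ranking_nobackup}. There, the compensation $h(x,x_{\text{match of }u^*})$ that $u$ receives from $u^*$ was replaced by $h(x,x_u)$ or $h(x,x_v)$. This replacement was justified jointly with the matched gain of $u^*$, via the observation that $1-g(x_w,x)-h(x,x_w)+h(x,x_w)$ is monotone in $x_w$. I will cite that argument so that the split into $CG$ and $MG$ is only a regrouping of an already valid lower bound, not a new claim about each part separately. With that settled, the inequality follows immediately, and taking the infimum over $\theta_0\le x_u$ covers the unknown true $\theta_0$.
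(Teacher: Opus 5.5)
Your proposal is correct and follows the paper's own argument: the paper simply notes that this fact is a rewriting of \cref{G_ranking_nobackup}, and your case-by-case check confirms that $CG+MG$ reproduces each summand in expressions (1), (2) and (3). Your caveat is also accurate: because of the substitution $h(x,x_{\text{match of }u^*})\to h(x,x_u)$ (or $h(x,x_v)$), the split into $CG$ and $MG$ is valid only as a regrouping of the combined bound, not as separate lower bounds on each part.
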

\begin{proof}
    This is just rewriting \cref{G_ranking_nobackup}.
\end{proof}
We will tighten the compensation gain function $CG$ by considering $\theta_3\in[0,\theta_0]$ for the cases where the alternating path has length $\geq 4$. We divide each of the cases into two subcases depending on whether $\theta_3=\theta_0$ or not, giving rise to the following tightened compensation gain function $CG_T$. 
\begin{claim}
\phantomsection
\label{func:tight_comp_gain}
    Fix $\vecx_{\shortminus u^*}$ with profile $(x_u,x_v,\bot)$. Let the tightened compensation gain be the following function $CG_T$, for $\theta_3\leq \theta_0\leq x_u$:
    
\[
CG_T(x_u,x_v,\theta_0,\theta_3)=
\left\{
\begin{array}{l r}
(1-x_v)\cdot(h(x_u,\theta_0)+h(x_v,x_u))\\
+(1-x_v)\cdot(h(x_u,\theta_3)+h(x_v, \theta_3))\\
+\int_0^{\theta_0} h(x,x_u)\;dx + \theta_3\cdot (h(x_u,x_v)+h(\theta_3,0^+))\\
+(\theta_0-\theta_3)\cdot h(\theta_0,x_v),
& \text{if }x_u<x_v\wedge \theta_3<\theta_0,\\[1em]
(1-x_v)\cdot h(x_v,x_u)\\
+(1-x_v)\cdot(h(x_u,\theta_3)+h(x_v, \theta_3))\\
+\int_0^{\theta_0} h(x,x_u)\;dx + \theta_3\cdot (h(x_u,x_v)+h(\theta_3,0^+)),
& \text{if }x_u<x_v\wedge \theta_3=\theta_0,\\[1em]
(1-\max\{x_v,\theta_0\})\cdot(h(x_v,\theta_0)+h(x_v,x_u))\\
+(1-\max\{x_v,\theta_0\})\cdot(h(x_v,\theta_3)+h(x_v, \theta_3))\\
+\int_0^{\theta_0} h(x,x_v)\;dx + \theta_3\cdot (h(x_u,x_v)+h(\theta_3,0^+))\\
+(\theta_0-\theta_3)\cdot h(\theta_0,x_v),
& \text{if }x_v<x_u\wedge \theta_3<\theta_0,\\[1em]
(1-\max\{x_v,\theta_0\})\cdot h(x_v,x_u)\\
+(1-\max\{x_v,\theta_0\})\cdot(h(x_v,\theta_3)+h(x_v, \theta_3))\\
+\int_0^{\theta_0} h(x,x_v)\;dx + \theta_3\cdot (h(x_u,x_v)+h(\theta_3,0^+)),
& \text{if }x_v<x_u\wedge \theta_3=\theta_0,\\[1em]
(1-\theta_0)\cdot h(x_v,x_u) +\int_0^{\theta_0} h(x,x_v)\;dx,
& \text{if } x_v<x_u \wedge \theta_0=x_u.
\end{array}
\right.
\]
We have that the total expected gain of $u$ and $u^*$ over the randomization of $u^*$ collected when they are unmatched is lower bounded by $CG_T(x_u,x_v,\theta_0,\theta_3)$.
\end{claim}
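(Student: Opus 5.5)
We verify $CG_T$ term by term against the old bound $CG$, splitting the randomness of $u^\ast$'s rank $x$ into the three ranges already used: $x\in[0,\theta_3)$, $x\in[\theta_3,\theta_0)$, and $x>\max\{x_v,\theta_0\}$ (or $x>x_v$ when $x_u<x_v$). We fix $\vecx_{\shortminus u^*}$ with profile $(x_u,x_v,\bot)$. The case $\theta_0=x_u$ (with $x_v<x_u$) is copied verbatim from \cref{G_ranking_nobackup} and needs no new argument. In every other case, \cref{fact:theta_0} gives that all alternating paths making $u$ worse off have length $\geq 4$, so $u_2$ and $u_{k-2}$ are well defined and distinct from $u^\ast$. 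By \cref{fact:assume_unmatched_worst_case} we may assume $u$ (resp.\ $u^\ast$) is unmatched throughout its range, since at most four copies of compensation are collected.

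\textbf{Compensation to $u$ for $x\in[0,\theta_0)$.} The term $\int_0^{\theta_0}h(x,\cdot)$ paid by $u^\ast$ is unchanged. For $x<\theta_3$, we first want $x\in S_{\theta_3}$, or at least a path of length $\geq 6$. One route is to use \cref{monoto_before_backup_ranking} along $u^\ast$'s rank, as in \cref{fact:theta_0_II}; I expect this to need care. Alternatively, if $x\notin S_{\theta_3}$, then $u_{k-2}$ has rank $\le\theta_0$, which already gives the bound $h(\theta_0,x_v)\ge h(x_u,x_v)$. Since the path has length $\ge 6$, \cref{lem:victim_querycommit_case_u} makes both $u_2$ and $u_{k-2}$ blockers of $u$. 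Here $u_{k-2}$ is matched to $v$ with $x_{u_{k-2}}<x_u$, so it pays $\ge h(x_u,x_v)$. Also $u_2$ has rank $\le\theta_3$ by \cref{fact:theta_3_II}(2) (or $\le\theta_0$ via \cref{fact:theta_0_II}), and its match has rank $>0$, so it pays $\ge h(\theta_3,0^+)$. For $x\in[\theta_3,\theta_0)$, the definition of $\theta_3$ gives $x_{u_{k-2}}\le\theta_0$, so $u_{k-2}$, matched to $v$, pays $\ge h(\theta_0,x_v)$ by monotonicity of $h$. This range is empty when $\theta_3=\theta_0$, which explains the two subcases.

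\textbf{Compensation to $u^\ast$ for $x$ beyond $\max\{x_v,\theta_0\}$.} We reuse the alternating path at $y\in(\theta_0^-,\theta_0)$ from \cref{fact:lowerbound_u_star_nobackup}, which provides the blockers $w$ and $v$ with the old terms $h(\cdot,\theta_0)+h(x_v,x_u)$. These are kept only when $\theta_3<\theta_0$. When $\theta_3=\theta_0$ the $w$-term is dropped, because then the path at $y$ may be the long one and we account for it differently. We add a second path taken at $y'\in(\theta_3^-,\theta_3)\subseteq S_{\theta_3}$, given by \cref{fact:theta_3_II}(3). It has length $\ge 6$ and third vertex of rank exactly $\theta_3$. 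By \cref{lem:victim_querycommit_case_u_star}, every odd vertex $u_1,u_3,\dots,u_{k-1}$ of this path blocks $u^\ast$ when $u^\ast$ is unmatched. The plan is to take two of them, $u_1'$ and $u_3'$, which are distinct from $v=u_{k-1}'$ because the length is $\ge 6$. By \cref{fact:monotonicity_for_alt_path_ranking} they have rank $\le x_u$ (resp.\ $\le x_v$, via \cref{fact:matching_guarantee_ranking}), and their matches have rank $\ge\theta_3$. This yields $h(x_u,\theta_3)+h(x_v,\theta_3)$, or $2h(x_v,\theta_3)$ in the $x_v<x_u$ case.

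\textbf{Main obstacle: avoiding double counting.} When $\theta_3<\theta_0$ we claim four blockers $w,v,u_1',u_3'$, and we must show they are distinct. I would argue this by comparing ranks of their matches: the match of $w$ has rank $\theta_0$, while the matches of $u_1'$ and $u_3'$ have ranks $\theta_3<\theta_0$ and $x_{u_4'}$. Equality of the two paths' prefixes would force equal third vertices, which is impossible. If distinctness fails, the fallback is to note that $v$ is common to both paths, and to count a coincidence only once while still meeting the stated bound through monotonicity of $h$. I expect this step to take most of the effort. Finally, the total number of copies per vertex stays $\le 4$, so constraints 4 and 5 of \cref{def:function_constraint_ranking} justify treating unmatched as the worst case.
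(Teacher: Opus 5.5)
Your decomposition matches the paper's. You split the argument into compensation to $u$ on $[0,\theta_3)$ and on $(\theta_3,\theta_0)$, plus compensation to $u^\ast$ from the two alternating paths taken just below $\theta_0$ and just below $\theta_3$, all within the four-copy budget. However, two steps fail as written.

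\textbf{The two-copy term $\theta_3\cdot(h(x_u,x_v)+h(\theta_3,0^+))$ is not justified.} A rank $x<\theta_3$ at which $u$ is unmatched need not lie in $S_{\theta_3}$. Demoting $u^\ast$ only keeps the matching fixed on $[x,x_{u_2})$ (\cref{monoto_before_backup_ranking}). At small $x$, $u^\ast$ may grab a different $u_1$ whose path to $u$ has length $4$, or whose third-to-last vertex has rank $\le\theta_0$. So your ``one route'' cannot work in general.

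Your fallback for $x\notin S_{\theta_3}$ yields only the single copy $h(\theta_0,x_v)$. That dominates $h(x_u,x_v)$ but not $h(x_u,x_v)+h(\theta_3,0^+)$. For a length-$4$ path there is no second blocker besides $u^\ast$. For a longer path you only know $x_{u_2}\le\theta_0$, which gives $h(\theta_0,0^+)\le h(\theta_3,0^+)$.

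The missing idea is the paper's reduction:
\begin{itemize}
\item On $[0,\theta_3)$, the compensation to $u$ from vertices other than $u^\ast$ is at least $\min\{h(\theta_0,x_v),\,h(\theta_3,0^+)+h(x_u,x_v)\}$.
\item If the minimum is $h(\theta_0,x_v)$, merge $[0,\theta_3)$ with $(\theta_3,\theta_0)$ to collect $\theta_0\cdot h(\theta_0,x_v)$.
\item For $u^\ast$, keep only the copies from $w^{\theta_0}$ and $v$. This gives exactly $CG_T(x_u,x_v,\theta_0,0)$.
\end{itemize}
So what is really established is the bound $\min\{CG_T(x_u,x_v,\theta_0,\theta_3),\,CG_T(x_u,x_v,\theta_0,0)\}$, not a pointwise bound in $\theta_3$. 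This suffices because the claim is only used inside the infimum over $\theta_3\le\theta_0$ in \cref{G_tighter_no_backup}.

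\textbf{You do not settle why $u_3'$ is distinct from $w^{\theta_0}$.} You flag this step but leave it open. The match of $u_3'$ is $u_4'$, and you only know $x_{u_4'}>\theta_3$. Comparing match ranks against $\theta_0$ therefore does not separate the two once the path has length $\ge 8$.

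The paper uses $u_{k-3}'$ instead. Its match $u_{k-2}'$ has rank $>\theta_0$ by the definition of $S_{\theta_3}$, whereas $w^{\theta_0}$ is matched to a vertex of rank exactly $\theta_0$. Also, $u_{k-3}'$ still has rank $\le x_v$ and match rank $>\theta_3$, so it pays at least $h(x_v,\theta_3)$ as required.

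You could alternatively keep $u_3'$. Beyond $u_1$, the path is determined by $\vecx_{\shortminus u^*}$ alone, alternating matches and backups (\cref{lem:backup_in_alt_path}). Then $u_3'=w^{\theta_0}$ would make the $\theta_0$-path a suffix of the $\theta_3$-path, with the same third-to-last vertex of rank $>\theta_0$. That would place ranks arbitrarily close below $\theta_0$ into $S_{\theta_3}$, contradicting $\theta_3<\theta_0$.
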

\begin{proof}
    We first break the expected compensation gains into pieces for a better understanding. Assume for now that both $\theta_3$ and $\theta_0$ are non-zero. By \cref{fact:theta_3_II} and \cref{fact:theta_0_II}, there exist small enough neighborhoods $(\theta_3^-,\theta_3)$ and $(\theta_0^-,\theta_0)$ such that, when $u^*$ is inserted in these neighborhoods, it causes alternating paths of lengths $\geq 6$ and $\geq 4$, respectively, resulting in $u$ being unmatched. When $\theta_3\neq\theta_0$, we could assume that these two sets are disjoint, by choosing $\theta_0^-$ close enough to $\theta_0$. Let the respective alternating paths be the following (when $\theta_3=\theta_0$, we only consider $x_{\theta_3}$)
    \begin{align*}
                 \M(\vecx_{\shortminus u^*}(x_{\theta_3}))\oplus\M(\vecx_{\shortminus u^*})&=(u^*,w^{\theta_3},u_2^{\theta_3},...u_{k-3}^{\theta_3},u_{k-2}^{\theta_3},v,u),\\
         \M(\vecx_{\shortminus u^*}(x_{\theta_0}))\oplus\M(\vecx_{\shortminus u^*})&=(u^*,w^{\theta_0},...,u_{k-2}^{\theta_0},v,u).
    \end{align*}
    We will be collecting compensations for $u^*$ from a few of the above designated vertices.
    
\noindent \textbf{Case $\bm{x_u<x_v\wedge \theta_3<\theta_0}$.} For the first case, the function can be decomposed into the following parts:
    \[\begin{array}{l}
\underbrace{(1-x_v)\cdot(h(x_u,\theta_0)+h(x_v,x_u))}_{T_1}
+\underbrace{(1-x_v)\cdot(h(x_u,\theta_3)+h(x_v,\theta_3))}_{T_2}\\
+\underbrace{\int_0^{\theta_0} h(x,x_u)\;dx}_{T_3}
+\underbrace{\theta_3\cdot (h(x_u,x_v)+h(\theta_3,0^+))}_{T_4}+\underbrace{(\theta_0-\theta_3)\cdot h(\theta_0,x_v)}_{T_5}.
\end{array}\]
Let $x$ be the random rank of $u^*$ when inserted into $\vecx_{\shortminus u^*}$. Each $T_i$ term corresponds to the following compensations received:
\begin{description}
    \item[$\bm{T_1}$:] Compensation received by $u^*$ from $w^{\theta_0}$ and $v$ for $x\in (x_v,1]$.
    \item[$\bm{T_2}$:] Compensation received by $u^*$ from $w^{\theta_3}$ and $u_{k-3}^{\theta_3}$ for $x\in (x_v,1]$.
    \item[$\bm{T_3}$:] Compensation received by $u$ from $u^*$ for $x\in (0,\theta_0]$.
    \item[$\bm{T_4}$:] Compensation received by $u$ for $x\in [0,\theta_3)$ from vertices other than $u^*$.
    \item[$\bm{T_5}$:] Compensation received by $u$ for  $x\in (\theta_3,\theta_0)$ from vertices other than $u^*$.
\end{description}
We analyze the function term by term. Recall that by our function constraints \cref{def:function_constraint_ranking}, the amount of gain a matched vertex receives outweighs at least $4$ copies of compensation. In the tightened compensation function, for any vertex $u$ or $u^*$, at any $x\in [0,1]$, we will gather at most $4$ copies of compensation for it, hence we could still lower bound the gain for any vertex assuming it is unmatched when it could either be matched or unmatched. 

\noindent\textbf{$\bm{T_1}$} When $x_u<x_v$, for $x\in(x_v,1]$, $u^*$ could be matched or unmatched. Assume $u^*$ is unmatched. $u^*$ will receive at least $h(x_u,\theta_0)$ amount of compensation from $w^{\theta_0}$ and $h(x_v,x_u)$ amount of compensation from $v$ by the same line of reasoning as in \cref{fact:lowerbound_u_star_nobackup}. This constitutes the $T_1$ term. 

\noindent\textbf{$\bm{T_2}$}
For $x\in(x_v,1]$, assume $u^*$ is unmatched. By \cref{lem:victim_querycommit_case_u_star}, $u^*$ is the victim of $w^{\theta_3}$ and $u_{k-3}^{\theta_3}$ and hence will receive compensation from them. First we show that we could assume these two vertices are not $v$ or $w^{\theta_0}$ to avoid double counting with the $T_1$ term. 

These two vertices are not $v$ since $v$ is already the second to last vertex $u_{k-1}^{\theta_3}$ in the alternating path $\M(\vecx_{\shortminus u^*}(\theta_3^-))\oplus\M(\vecx_{\shortminus u^*})$, and the path has length $\geq 6$.

By \cref{fact:theta_3_II}, property 2, and our choice of $x_{\theta_3}$, $w^{\theta_3}$ is matched to a vertex of rank $\theta_3$. By \cref{fact:theta_0_II}, property 2, and our choice of $x_{\theta_0}$, $w^{\theta_0}$ is matched to a vertex of rank $\theta_0$. These two facts together show that $w^{\theta_3}$ and $w^{\theta_0}$ are matched to vertices of different ranks ($\theta_3<\theta_0$), hence cannot be the same vertex.

To show that $u_{k-3}^{\theta_3}\neq w^{\theta_0}$, notice that the match of $u_{k-3}^{\theta_3}$ in $\M(\vecx_{\shortminus u^*}(x))$, is $u_{k-2}^{\theta_3}$. This vertex is the third to last vertex in the alternating path $\M(\vecx_{\shortminus u^*}(x_{\theta_3}))\oplus\M(\vecx_{\shortminus u^*})$. By our choice of $x_{\theta_3}$, the rank of $u_{k-2}^{\theta_3}$ is $>\theta_0$ as $x_{\theta_3}\in S_{\theta_3}$. Hence, $u_{k-3}^{\theta_3}\neq w^{\theta_0}$ as the match of $w^{\theta_0}$ in $\M(\vecx_{\shortminus u^*}(x))$ has rank $\theta_0$.

Now we lower bound the value for compensations from $w^{\theta_3}$ and $u_{k-3}^{\theta_3}$. As $w^{\theta_3}$ is the match of $u^*$ when $u^*$ has rank $x_{\theta_3}<x_v$, by \cref{fact:matching_guarantee_ranking}, property 1, $w^{\theta_3}$ has rank at most $x_u$. The match of $w^{\theta_3}$, as argued before, has rank $\theta_3$. So compensation from $w^{\theta_3}$ is at least $h(x_u,\theta_3)$. The rank of $u_{k-3}^{\theta_3}$ is at most $x_v$, as it is an odd indexed vertex in the alternating path before $x_v$, and hence by \cref{fact:monotonicity_for_alt_path_ranking}, has rank $\leq x_v$. The match of $u_{k-3}^{\theta_3}$ in $\M(\vecx_{\shortminus u^*}(x))$, which is $u_{k-2}^{\theta_3}$, has rank $>\theta_0\geq \theta_3$. So compensation from $u_{k-3}^{\theta_3}$ is at least $h(x_v,\theta_3)$\footnote{We could as well choose the better rank $\theta_0$ instead of $\theta_3$. However, notice that when $\theta_3=0$, our choice of $x_{\theta_3}$ is an invalid rank. By bounding the compensation with $h(x_v,\theta_3)$, this degenerate case is also covered as $h(x_v,0)=0$, and we are effectively introducing $0$ compensation.}.

\noindent\textbf{{$\bm{T_3}$}.} For $x\in [0,\theta_0)$, $u$ could be unmatched due to $u^*$. We assume it is  unmatched. Then $u$ is the victim of $u^*$ hence will receive compensation from $u^*$ with amount $h(x,x_{\text{match of }u^*})$. By the same line of reasoning as in \cref{G_ranking_nobackup}, we could assume this compensation is at least $h(x,x_u)$.

\noindent\textbf{{$\bm{T_5}$}.} For $x\in (\theta_3,\theta_0)$, if $u$ is unmatched, by definition of $\theta_3$, we know that the third to last vertex $u_{k-2}$ in the alternating path $\M(\vecx_{\shortminus u^*}(x))\oplus\M(\vecx_{\shortminus u^*})=(u^*,u_1,...u_{k-2},v,u)$ has rank $\leq \theta_0$. In this case $u$ receives compensation from $u^*$ and the vertex $u_{k-2}$, which is matched to $v$. Since $u_{k-2}$ has rank $\leq \theta_0$, it pays at least $h(\theta_0,x_v)$ amount of compensation to $u$. Since $x_u<x_v$, by \cref{fact:theta_0}, we have $u_{k-2}\neq u^*$, so we are not double counting.

\noindent\textbf{{$\bm{T_4}$}.} Let $S_{\theta_3}$ be the set of ranks as defined in \cref{fact:theta_3_II}. Fix $x\in [0,\theta_3)$, assume $u$ is unmatched in $\M(\vecx_{\shortminus u^*}(x))$ as usual. We separately calculate the compensation for $u$ for cases $x\notin S_{\theta_3}$ and $x\in S_{\theta_3}$.

If $x\notin S_{\theta_3}$, then the third to last vertex $u_{k-2}$ in its respective alternating path has rank $\leq \theta_0$ by definition. Similar to the $T_5$ term case, this implies that $u$ will receive at least $h(\theta_0, x_v)$ amount of compensation from $u_{k-2}$.

If $x\in S_{\theta_3}$, then by \cref{fact:theta_3_II}, property 4, the alternating path has length $\geq 6$, let the path be $u^*,u_1,u_2,...,u_{k-2},v,u$. $u$ will receive compensation from at least two other vertices $u_2$ and $u_{k-2}$ besides $u^*$. By \cref{fact:theta_3_II}, property 2, the rank of $u_2$ is $\leq \theta_3$. Hence the compensation from $u_2$ is at least $h(\theta_3, 0^+)$\footnote{we could assume W.L.O.G. no vertex has rank exactly $0$ as such event has $0$ probability measure, hence the rank of the match of $u_2$ is at least $0^+$.}. The compensation from $u_{k-2}$ is at least $h(x_u,x_v)$, as $u_{k-2}$ is matched to $v$ and  $(u_{k-2},v)$ is queried before $(u,v)$, which implies $x_{u_{k-2}}\leq x_u$.

\noindent\textbf{Assuming $\bm{h(\theta_3,0^+)+h(x_u,x_v)}$ is the Smaller Compensation.} Combining the two cases, the compensation of $u$ received at $x\in [0,\theta_3)$ is at least the minimum of $h(\theta_0,x_v)$ and $h(\theta_3,0^+)+h(x_u,x_v)$. If $h(\theta_0,x_v)$ is the smaller one, then we could uniformly lower bound the compensation for $u$ for all $x\in[0,\theta_0)$ with $h(\theta_0,x_v)$ by combining terms $T_4$ and $T_5$, in which case the function is further lower bounded by setting $\theta_3=0$, as the term $T_2$ decreases with $\theta_3$.

To elaborate on this, if $\min\{h(\theta_0,x_v),h(\theta_3,0^+)+h(x_u,x_v)\}=h(\theta_0,x_v)$. Then the whole system of bounds reduces to
\begin{align*}
    (1-x_v)\cdot(h(x_u,\theta_0)+h(x_v,x_u))+(1-x_v)\cdot(h(x_u,\theta_3)+h(x_v, \theta_3))+\int_0^{\theta_0} h(x,x_u)\;dx  +\theta_0\cdot h(\theta_0,x_v),
\end{align*}
which is further lower bounded by $CG_T(x_u,x_v,\theta_0,0)$ having value
\begin{align*}
    &(1-x_v)\cdot(h(x_u,\theta_0)+h(x_v,x_u))+\int_0^{\theta_0} h(x,x_u)\;dx  +\theta_0\cdot h(\theta_0,x_v)
\end{align*}
Hence we only need to consider the case $h(\theta_3,0^+)+h(x_u,x_v)<h(\theta_0,x_v)$. This argument holds for all the subsequent cases as well.

\noindent \textbf{Case $\bm{x_u<x_v\wedge \theta_3=\theta_0}$.}  This case is almost identical to the previous case. The only change happens at $T_1$ and $T_5$. $T_5$ diminishes to $0$ because $\theta_0-\theta_3=0$. For $T_1,$ we removed the compensation $h(x_u,\theta_0)$ as we can no longer assume $w^{\theta_3}\neq w^{\theta_0}$.

\noindent \textbf{Case $\bm{x_v<x_u\wedge \theta_3<\theta_0}$.}
This case differs from the first case at $T_1$ and $T_2$ terms. All of the first coordinate of the $h$ function is set to $x_v$. We always have this because any compensation $h(y,z)$ paid to $u^*$ is from an odd-indexed vertex in alternating paths $\M(\vecx_{\shortminus u^*}(x_{\theta_3}))\oplus\M(\vecx_{\shortminus u^*})$ and $\M(\vecx_{\shortminus u^*}(x_{\theta_0}))\oplus\M(\vecx_{\shortminus u^*})$. These two alternating paths always end at $v,u$, with $v$ being the last odd-indexed vertex. Hence any odd-indexed vertex in these paths has rank $\leq x_v$. As a result, we can safely upper-bound the first coordinate for all of the $h$ terms in $T_1,T_2$ by $x_v$. Further, the range of $x$ where $u^*$ is assumed to be unmatched is $(\max(x_v,\theta_0),1]$. So the compensation is multiplied by the probability $(1-\max(x_v,\theta_0))$.

\noindent \textbf{Case $\bm{x_v<x_u\wedge \theta_3=\theta_0}$.}
The transition from $\theta_3<\theta_0$ to $\theta_3=\theta_0$ for the $x_v<x_u$ case is the same as the transition for the $x_u<x_v$ case.

\noindent \textbf{Case $\bm{x_v<x_u\wedge \theta_0=x_u}$.}
This is the same bound as \cref{G_ranking_nobackup} for case $x_v<x_u\wedge \theta_0=x_u$.

\noindent \textbf{Case $\bm{\theta_3=0}$ or $\bm{\theta_0=0}$.}
When either of the parameters $\theta_3$ and $\theta_0$ is $0$, the corresponding compensations $h(y,\theta_3)$ and $h(y,\theta_0)$ become $0$ by the function constraints \cref{def:function_constraint_ranking}. Hence we are not over-approximating for the degenerate cases.
\end{proof}
With the tightened compensation gain bounds, we could derive a tightened system of lower bounds for the expected total gain for profiles $(x_u,x_v,\bot)$.
\begin{fact}
\phantomsection
\label{G_tighter_no_backup}
    Fixing $\vecx_{\shortminus u^*}$ with profile $(x_u,x_v,\bot)$. We have:
    $$\ev_{x}[\text{ gain}(u) + \text{gain}(u^\ast)]\geq \inf_{\theta_3\leq\theta_0\leq x_u}\{CG_T(x_u,x_v,\theta_0,\theta_3)+MG(x_u,x_v,\theta_0)\}.$$
\end{fact}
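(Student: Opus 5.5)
The proof combines the two lower bounds already in hand for a fixed $\vecx_{\shortminus u^*}$ with profile $(x_u,x_v,\bot)$. The plan is to split the expected total gain $\ev_x[\text{gain}(u)+\text{gain}(u^*)]$, over the random rank $x$ of $u^*$, into two parts: the gain collected on the ranks $x$ where the relevant vertex ($u$ or $u^*$) is assumed unmatched, and the gain collected on the complementary ranks where it is matched. Given the realized $\vecx_{\shortminus u^*}$, the impacting ranks $\theta_0$ (\cref{def:theta_0_ranking}) and $\theta_3$ (\cref{def:theta_3_ranking}) are determined. \cref{fact:theta_0} gives $\theta_0\le x_u$, and \cref{fact:theta_3_II}, property 1, gives $\theta_3\le\theta_0$. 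So the realized pair $(\theta_0,\theta_3)$ is feasible for the infimum, and it suffices to show the total expected gain is at least $CG_T(x_u,x_v,\theta_0,\theta_3)+MG(x_u,x_v,\theta_0)$ for that realized pair.

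For the match part, I will reuse the proofs of \cref{fact:lowerbound_u_nobackup} and \cref{fact:lowerbound_u_star_nobackup}. On exactly the same $x$-ranges as before, these bound the matched-gain contributions of $u$ and $u^*$ by the $g_P$ integrals and $(1-\theta_0)g_B(x_u,x_v)$, which is the term $MG(x_u,x_v,\theta_0)$. The rewriting fact stated just before \cref{func:tight_comp_gain} already records that \cref{G_ranking_nobackup} splits as $CG+MG$ along these same ranges. Since $MG$ does not depend on $\theta_3$, this part transfers verbatim.

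For the unmatched part, I will invoke \cref{func:tight_comp_gain}. It shows that, on the ranges where $u$ (for $x<\theta_0$) or $u^*$ (for $x$ beyond $\max\{x_v,\theta_0\}$) is assumed unmatched, the compensation collected is at least $CG_T$. On those ranges at most $4$ copies of compensation are ever counted, so by \cref{def:function_constraint_ranking}, constraints 4 and 5, and \cref{fact:assume_unmatched_worst_case}, assuming unmatched there is a valid lower bound even when the vertex is actually matched. Adding the two parts gives the bound for the realized pair, and taking the infimum over all admissible $\theta_3\le\theta_0\le x_u$ yields the statement.

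The main obstacle is checking that the ranges and the vertices counted in $CG_T$ and $MG$ are compatible, so the sum double-counts neither an $x$-range nor a payer. In particular, the compensation $u$ receives from $u^*$ (the $\int_0^{\theta_0}h(x,\cdot)$ term) interacts with $u^*$'s matched gain. I would handle this as in \cref{G_ranking_nobackup}: for each $x$, the quantity $g_P-h$ plus the returned $h$ collapses to a monotone expression in the rank of $u^*$'s match, so the worst-case replacement of that match by $x_u$ or $x_v$ remains valid. Separately, in the case where $h(\theta_0,x_v)$ is the smaller of the two compensation options for $u$, I would reduce to $\theta_3=0$ exactly as argued inside \cref{func:tight_comp_gain}. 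That case is still covered because the infimum ranges over $\theta_3=0$.
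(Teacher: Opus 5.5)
Your proposal is correct and follows the paper's argument. The paper likewise adds the compensation lower bound $CG_T$ from \cref{func:tight_comp_gain} to the match-gain term $MG$ carried over from \cref{G_ranking_nobackup}, notes that $MG$ is independent of $\theta_3$, and takes the infimum over admissible $\theta_3\le\theta_0\le x_u$; your extra bookkeeping (feasibility of the realized pair via \cref{fact:theta_0} and \cref{fact:theta_3_II}, the $g_P+h$ collapse for $u^*$'s payment to $u$, and the fallback to $\theta_3=0$) only makes explicit what the paper leaves inside those earlier proofs.
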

\begin{proof}
    By \cref{func:tight_comp_gain}, $CG_T$ is a valid lower bound for the expected compensation gains for any $0\leq\theta_3\leq \theta_0\leq x_u$.  Further, we assumed that all the match gains collected in the $MG$ function are independent of $\theta_3$, as $\theta_3$ only gives a finer analysis of the unmatched $u,u^*$, and has no influence on the gains for matched $u,u^*$.
\end{proof}
\subsection{Discretized Tightened \ranking{} LP}
\phantomsection
\label{Discretized_Tightened_Ranking_LP}
The system of bounds in \cref{G_tighter_no_backup} can be written as the following function $G_T(x_u,x_v,\bot)$:
\begin{align*}
&\min_{\theta_0\leq x_u}\left\{
    \;\;\inf_{\theta_3<\theta_0}\;\;\Biggl[\;\int_0^{x_v} g_P(x_u,x)\,dx  +(1-\theta_0)\cdot\,g_B(x_u,x_v)\right.\\
&\qquad\qquad\qquad\qquad
        +(1-x_v)\cdot(h(x_u,\theta_0)+h(x_v,x_u))\\
&\qquad\qquad\qquad\qquad
+(1-x_v)\cdot(h(x_u,\theta_3)+h(x_v, \theta_3))\\
&\qquad\qquad\qquad\qquad
+\int_0^{\theta_0} h(x,x_u)\;dx + \theta_3\cdot (h(x_u,x_v)+h(\theta_3,0^+))\\
&\qquad\qquad\qquad\qquad
+(\theta_0-\theta_3)\cdot h(\theta_0,x_v)\bigr) \; \Biggr],
&(1)\phantom{\Biggr\}}& \\[0.75em]
&\qquad\quad
\;\;\min_{\theta_3=\theta_0}\;\;\Biggl[\;\int_0^{x_v} g_P(x_u,x)\,dx  +(1-\theta_0)\cdot\,g_B(x_u,x_v)\\
&\qquad\qquad\qquad\qquad
+(1-x_v)\cdot h(x_v,x_u)\\
&\qquad\qquad\qquad\qquad
+(1-x_v)\cdot(h(x_u,\theta_3)+h(x_v, \theta_3))\\
&\qquad\qquad\qquad\qquad
+\int_0^{\theta_0} h(x,x_u)\;dx + \theta_3\cdot (h(x_u,x_v)+h(\theta_3,0^+)) \; \Biggr],
&(2)\Biggr\}
&\text{, (if } x_u<x_v),\\
&\min_{\theta_0\leq x_u}\Biggl\{
    \inf_{\theta_3<\theta_0}\;\;\Biggl[\;
        \int_0^{\theta_0} g_P(x_v,x)\,dx
        +\int_{\theta_0}^{\max\{\theta_0,x_v\}} g_P(x_u,x)\,dx +(1-\theta_0)\cdot\,g_B(x_u,x_v)\\
&\qquad\qquad\qquad\qquad
+(1-\max\{\theta_0,x_v\})\cdot(h(x_v,\theta_0)+h(x_v,x_u))\\
&\qquad\qquad\qquad\qquad
+(1-\max\{\theta_0,x_v\})\cdot(h(x_v,\theta_3)+h(x_v, \theta_3))\\
&\qquad\qquad\qquad\qquad
+\int_0^{\theta_0} h(x,x_v)\;dx + \theta_3\cdot (h(x_u,x_v)+h(\theta_3,0^+))\\
&\qquad\qquad\qquad\qquad
+(\theta_0-\theta_3)\cdot h(\theta_0,x_v)\bigr) \; \Biggr],
&(3)\phantom{\Biggr\}}& \\[0.75em]
&   \qquad\quad\inf_{\theta_3=\theta_0}\;\;\Biggl[\;
        \int_0^{\theta_0} g_P(x_v,x)\,dx
        +\int_{\theta_0}^{\max\{\theta_0,x_v\}} g_P(x_u,x)\,dx +(1-\theta_0)\cdot\,g_B(x_u,x_v)\\
&\qquad\qquad\qquad\qquad
+(1-\max\{\theta_0,x_v\})\cdot h(x_v,x_u)\\
&\qquad\qquad\qquad\qquad
+(1-\max\{\theta_0,x_v\})\cdot(h(x_v,\theta_3)+h(x_v, \theta_3))\\
&\qquad\qquad\qquad\qquad
+\int_0^{\theta_0} h(x,x_v)\;dx + \theta_3\cdot (h(x_u,x_v)+h(\theta_3,0^+)) \; \Biggr],
&(4)\phantom{\Biggr\}}& \\[0.75em]
&\qquad\qquad
    \min_{\theta_0=x_u}\;\;\Biggl[\;
        \int_0^{\theta_0} g_P(x_v,x)\,dx
        +(1-\theta_0)\cdot\,h(x_v,x_u) \\
&\qquad\qquad\qquad\qquad
        +\int_0^{\theta_0} h(x,x_v)\,dx
        +(1-\theta_0)\cdot\,g_B(x_u,x_v)
    \;\Biggr]
&(5)\Biggr\}
&\text{, (if } x_v<x_u).
\end{align*}

Following the same discretization method as \cref{subsec:discre-ranking_LP}, for $x_u\in (\frac{i_u-1}{n},\frac{i_u}{n}]$ and $x_v\in (\frac{i_v-1}{n},\frac{i_v}{n}]$, the bound  of \cref{G_tighter_no_backup} can be translated into the following system of constraints $G_T(i_u,i_v,\bot)$, where $i_{\theta_3}$ and $i_{\theta_0}$ take integers in $\{0,...,n\}$:

\begin{align*}
\shortintertext{$\forall i_{\theta_3}< i_{\theta_0}\leq i_u\leq i_v$}
\qquad G_T(i_u,i_v,\bot)\leq{}&
\frac{1}{n}\left[\left(\sum_{j=1}^{i_v-1} g_P(i_u,j)\right)+\frac{1}{2}g_P(i_u,i_v)+(n-i_{\theta_0})\cdot g_B(i_u,i_v)\right. & \\
&\quad +(n-i_v)\cdot(h(i_u,i_{\theta_0})+h(i_v,i_u)+h(i_u,i_{\theta_3})+h(i_v,i_{\theta_3})) & \\
&\quad + \left(\sum_{j=1}^{i_{\theta_0}} h(j,i_u)\right)+i_{\theta_3}\cdot(h(\min\{i_{\theta_3}+1,i_u\},1)+h(i_u,i_v)) & \\
&\quad +(i_{\theta_0}-i_{\theta_3})\cdot h(\min\{i_{\theta_0}+1,i_u\},i_v)\left. \vphantom{\int} \right], & (1) \\[0.6em]
\shortintertext{$\forall i_{\theta_3}=i_{\theta_0}\leq i_u\leq i_v$}
\qquad G_T(i_u,i_v,\bot)\leq{}&
\frac{1}{n}\left[\left(\sum_{j=1}^{i_v-1} g_P(i_u,j)\right)+\frac{1}{2}g_P(i_u,i_v)+(n-i_{\theta_0})\cdot g_B(i_u,i_v)\right. & \\
&\quad +(n-i_v)\cdot(h(i_v,i_u)+h(i_u,i_{\theta_3})+h(i_v,i_{\theta_3})) & \\
&\quad + \left(\sum_{j=1}^{i_{\theta_0}} h(j,i_u)\right)+i_{\theta_3}\cdot(h(\min\{i_{\theta_3}+1,i_u\},1)+h(i_u,i_v))\left. \vphantom{\int} \right], & (2) \\[0.6em]
\shortintertext{$\forall i_{\theta_0}, i_v\leq i_u\wedge i_{\theta_3}<i_{\theta_0}$}
\qquad G_T(i_u,i_v,\bot)\leq{}&
\frac{1}{n}\left[\vphantom{\sum_{j=1}^{i_{\theta_0}}}\right.\left(\sum_{j=1}^{i_{\theta_0}} g(i_v,j)\right)+\left(\sum_{j=i_{\theta_0}+1}^{i_{v}-1} g_P(i_u,j)\right)+(n-i_{\theta_0})\cdot g_B(i_u,i_v) & \\
&\quad +(n-\max\{i_{\theta_0},i_v-1\})\cdot(h(i_v,i_{\theta_0})+h(i_v,i_u)) & \\
&\quad +(n-\max\{i_{\theta_0},i_v-1\})\cdot(h(i_v,i_{\theta_3})+h(i_v,i_{\theta_3})) & \\
&\quad + i_{\theta_3}\cdot(h(\min\{i_{\theta_3}+1,i_u\},1)+h(i_u,i_v)) & \\
&\quad +(i_{\theta_0}-i_{\theta_3})\cdot h(\min\{i_{\theta_0}+1,i_u\},i_v)\left. \vphantom{\int} \right], & (3) \\[0.6em]
\shortintertext{$\forall i_{\theta_0}, i_v\leq i_u\wedge i_{\theta_3}=i_{\theta_0}$}
\qquad G_T(i_u,i_v,\bot)\leq{}&
\frac{1}{n}\left[\vphantom{\sum_{j=1}^{i_{\theta_0}}}\right.\left(\sum_{j=1}^{i_{\theta_0}} g(i_v,j)\right)+\left(\sum_{j=i_{\theta_0}+1}^{i_{v}-1} g_P(i_u,j)\right)+(n-i_{\theta_0})\cdot g_B(i_u,i_v) & \\
&\quad +(n-\max\{i_{\theta_0},i_v-1\})\cdot h(i_v,i_u) & \\
&\quad +(n-\max\{i_{\theta_0},i_v-1\})\cdot(h(i_v,i_{\theta_3})+h(i_v,i_{\theta_3})) & \\
&\quad +i_{\theta_3}\cdot(h(\min\{i_{\theta_3}+1,i_u\},1)+h(i_u,i_v))\left. \vphantom{\int} \right], & (4) \\[0.6em]
\shortintertext{$\forall  i_v\leq i_u \wedge i_u-1\leq i_{\theta_0}\leq i_u$}
\qquad G_T(i_u,i_v,\bot)\leq{}&
\frac{1}{n}\left[\left(\sum_{j=1}^{i_{\theta_0}} g(i_v,j)\right)+ (n-i_{\theta_0})\cdot h(i_v,i_u) \right. & \\
&\quad + (n-i_{\theta_0})\cdot g_B(i_u,i_v)\left. \vphantom{\int}\right], & (5) 
\end{align*}

We point out a few key assumptions we are making:
\begin{itemize}
    \item We assume the system of continuous bounds are bounded by setting the variables $x_v,\theta_0,\theta_3$ as integral multiples of $\frac{1}{n}$. This is true as for each of the cases, fixing the sub-intervals for $x_v,\theta_0,\theta_3$ equivalently fixes the values of the $h$ and $g$ functions (as the two functions are piece-wise constant). And the expressions are equivalently linear combinations of the variables.

    \item In case $(1),...,(4)$, we consider grid point values $i_{\theta_0}\in\{0,...,i_u\}$. For $h$ function of form $h(y,\theta_0)$, we plug in the value $h(y, \frac{i_{\theta_0}}{n})$ and for $h$ function of form $h(\theta_0, y)$, we plug in the value $h(\min\{(\frac{i_{\theta_0}}{n})^+,\frac{i_u}{n}\}, y)$, as these two choices result in the smaller $h$ values.

    \item In case $(1),...,(4)$ we also consider the grid point values for $i_{\theta_3}\in\{0,...,i_{\theta_0}\}$. The choice of $h$ values follows the same line of reasoning as $\theta_0$. Formally, for case $(1)$ and $(3)$, we also need to consider the grid point $i_{\theta_3}=i_{\theta_0}$. But notice $(1),(3)$ are strictly dominated by $(2),(4)$ respectively when $i_{\theta_3}=i_{\theta_0}$, so we can omit this case for $(1)$ and $(3)$.
\end{itemize}
The remaining discretization assumptions follow the same discretization idea as for $G(x_u,x_v,\bot)$ in \cref{subsec:discre-ranking_LP}, including assuming $x_v$ is uniformly distributed in $(\frac{i_v-1}{n},\frac{i_v}{n}]$ for cases $(1)$ and $(2)$, assuming $x_v=(\frac{i_v-1}{n})^+$ for cases $(3)$ and $(4)$, and assuming $\theta_0=(\frac{i_u-1}{n})^+$ and $\frac{i_u}{n}$ for case $(5)$.

Replacing the discretized lower bound $G(i_u,i_v,\bot)$ in \cref{G_u_Ranking_discretized} by $G_T(i_u,i_v,\bot)$, we get a tighter system of lower bounds.
\begin{claim}
\phantomsection
\label{G_u_Ranking_discretized_tight}
    For $x_u\in(\frac{i_u-1}{n},\frac{i_u}{n}]$, the bound for uniformly distributed profiles (\cref{claim:ranking_lower_bound_uniform_profiles}) can also be translated to $G^T_u$, where G is the lower bound function defined in \cref{subsec:discre-ranking_LP} and $G_T$ is the above tightened lower bound system: 
\begin{align*}
    &\forall i_u &G^T_u(i_u)\leq\; & G(i_u,\bot,\bot)\\
    &\forall i_u, i_{\text{start of $v$}} &G^T_u(i_u)\leq\; &\frac{1}{n+1-i_{\text{start of $v$}}} \sum_{j=i_{\text{start of $v$}}}^n G_T(i_u,j,\bot)\\
    &\forall i_u, \forall i_{\text{start of $v$}}\leq i_b &G^T_u(i_u)\leq\; &\frac{1}{i_b+1-i_{\text{start of $v$}}} \sum_{j=i_{\text{start of $v$}}}^{i_b}G(i_u,j,\min\{i_b+1,n\})\\
    &\forall i_u, \forall i_{\text{start of $v$}}\leq i_b &G^T_u(i_u)\leq\; &\frac{1}{i_b+1-i_{\text{start of $v$}}} \sum_{j=i_{\text{start of $v$}}}^{i_b}G(i_u,j,i_b)
\end{align*}
And  the approximation ratio for \ranking{} on general graphs is lower bounded by
    $$\frac{1}{n}\sum_{i_u=1}^n G^T_u(i_u).$$
\end{claim}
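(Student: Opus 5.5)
The claim to prove is \cref{G_u_Ranking_discretized_tight}: the discretized system $G^T_u$ built from $G(i_u,\bot,\bot)$, $G_T(i_u,j,\bot)$ and $G(i_u,j,\cdot)$ lower bounds the conditional expected gain, and hence $\frac1n\sum_{i_u}G^T_u(i_u)$ lower bounds the approximation ratio of \ranking{}. The plan is to reuse the proof of \cref{G_u_Ranking_discretized} and \cref{clm:ranking_discretization} verbatim. The only new ingredient to justify is that $G_T(i_u,i_v,\bot)$ is a valid lower bound. Precisely, we must show it lower bounds $\ev[\text{gain}(u)+\text{gain}(u^*)]$ conditioned on the profile being $(x_u,x_v,\bot)$ with $x_u\in(\frac{i_u-1}{n},\frac{i_u}{n}]$ and $x_v\in(\frac{i_v-1}{n},\frac{i_v}{n}]$, with $x_v$ having a nondecreasing density within its cell in the cases where this is used.

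\textbf{Step 1 (continuous bound).} We start from \cref{G_tighter_no_backup}, which gives $\inf_{\theta_3\le\theta_0\le x_u}\{CG_T+MG\}$. Expanding $CG_T$ from \cref{func:tight_comp_gain} and $MG$ case by case recovers exactly the five branches (1)--(5) of the continuous $G_T(x_u,x_v,\bot)$. In branches (3)--(5) we merge $\int_0^{\theta_0}g_P(x_v,x)\,dx+\int_0^{\theta_0}h(x,x_v)\,dx$ into $\int_0^{\theta_0}g(x_v,x)\,dx$, as was done for $G(x_u,x_v,\bot)$.

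\textbf{Step 2 (reduction to grid points).} Fix the cells containing $x_v$, $\theta_0$ and $\theta_3$. Since $g,h$ are piecewise constant, each branch is affine in each variable within its cell, so the infimum is attained at a cell endpoint $\frac{i}{n}$ or $(\frac{i}{n})^+$. For each variable we pick the endpoint that makes the expression smallest.
\begin{itemize}
\item \emph{$h$ evaluated at $\theta_0$ or $\theta_3$.} When $\theta_0$ or $\theta_3$ is the second argument of $h$, we take $\theta=\frac{i}{n}$, since $h$ is increasing in $y$. When it is the first argument, the smallest admissible value is $h(\min\{i+1,i_u\},\cdot)$, since $h$ is decreasing in $x$ and $\theta_0\le x_u$.
\item \emph{The multiplicative factors $\theta_0$, $\theta_3$, $\theta_0-\theta_3$.} We compare the two endpoints exactly as in the discussion of expression (1) of \cref{subsec:discre-ranking_LP}. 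There, moving $\theta$ between $\frac{i}{n}$ and $(\frac{i}{n})^+$ changes only $h$ values, and the $\frac{i}{n}$ choice is worse.
\item \emph{Placement of $x_v$.} In branches (1)--(2), $x_v$ enters only through $(1-x_v)$ and $\int_0^{x_v}g_P$. The monotone density from \cref{monoto_before_backup_ranking} lets us average over the cell (the $\frac12 g_P(i_u,i_v)$ term) and take $x_v=\frac{i_v}{n}$ in the $(1-x_v)$ factors. In branches (3)--(4) we take $x_v=(\frac{i_v-1}{n})^+$. This is justified because decreasing $x_v$ changes the value by $\delta(\text{up to four copies of }h-g_P)\le 0$, using constraints 4 and 5 of \cref{def:function_constraint_ranking}.
\item \emph{Branch (5).} This is identical to case (3) of the untightened LP, with $\theta_0\in\{(\frac{i_u-1}{n})^+,\frac{i_u}{n}\}$.
\end{itemize}
In branches (1) and (3) we also drop the grid point $i_{\theta_3}=i_{\theta_0}$, since at that point they are dominated by branches (2) and (4) respectively.

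\textbf{Step 3 (aggregation).} The remaining constraints are unchanged from \cref{G_u_Ranking_discretized}: the uniform-profile averaging over $i_{\text{start of }v}$ and the backup cases, justified by \cref{claim:ranking_lower_bound_uniform_profiles} and its proof that interval endpoints can be taken on the grid. Averaging over $x_u$ then gives the stated ratio by \cref{fact:randomized_bipartition}.

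The main obstacle is Step 2, specifically the endpoint bookkeeping when $\theta_0$ and $\theta_3$ share a cell or sit in the top cell $i_u$. In those cases the joint constraint $\theta_3\le\theta_0\le x_u$ limits which endpoint combinations are feasible, which is what produces the $\min\{\cdot+1,i_u\}$ indices. I would handle this by enumerating the feasible endpoint pairs and checking, for each one, that its value dominates one of the listed inequalities.
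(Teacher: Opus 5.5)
Your proposal is correct and follows the paper's own route: the paper's proof is just that $G_T(x_u,x_v,\bot)$ is a valid continuous lower bound (via \cref{G_tighter_no_backup}) and that $G_T(i_u,i_v,\bot)$ is a valid discretization under the same endpoint choices for $\theta_0$, $\theta_3$, $x_v$ that you list, with the aggregation reused unchanged from \cref{G_u_Ranking_discretized}. You spell out the bookkeeping more explicitly than the paper does, for example constraint~5 covering the four copies of $h$ when $x_v$ is pushed to $(\frac{i_v-1}{n})^+$ in branches (3)--(4); one small correction is that in your second bullet the $\frac{i}{n}$ endpoint is the worse one only for terms $h(\cdot,\theta)$, and for $h(\theta,\cdot)$ it is the mixed choice $h(\min\{i+1,i_u\},\cdot)$ from your first bullet that makes the relaxation valid.
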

\begin{proof}
    This is because $G_T(x_u,x_v,\bot)$ is also a system of valid lower bounds for the total expected gain for profiles $(x_u,x_v,\bot)$ and that $G_T(i_u,i_v,\bot)$ is a valid discretization.
\end{proof}

\section{Lower Bound by Uniform Distributions on Profiles}
\phantomsection
\label{uniform-profiles}
To prove that we can bound an increasing distribution on profiles by the worst-case uniform distribution, we invoke the Fubini–Tonelli Theorem, which states:
\begin{theorem}[Folland, Real Analysis, Thm.2.37 \cite{Folland1999}]
\phantomsection
\label{thm:fubini}
    Let $(X, \mathcal{A}, \mu)$ and $(Y, \mathcal{B}, \nu)$ be $\sigma$-finite measure spaces. If $f\geq 0$ is measurable on the product measure $X\times Y$, then
    $$
\int_{X \times Y} f(x,y)\, d(\mu \times \nu)(x,y)
= \int_X \left( \int_Y f(x,y)\, d\nu(y) \right) d\mu(x)
= \int_Y \left( \int_X f(x,y)\, d\mu(x) \right) d\nu(y).
$$
\end{theorem}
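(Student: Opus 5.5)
The plan is the standard measure-theoretic route: first establish the identity for indicator functions of sets in the product $\sigma$-algebra $\mathcal{A}\otimes\mathcal{B}$, then extend by linearity to nonnegative simple functions, and finally pass to general nonnegative measurable $f$ by monotone convergence. Throughout I will use the $\sigma$-finiteness hypothesis to reduce to the case $\mu(X),\nu(Y)<\infty$. Write $X=\bigcup_j X_j$ and $Y=\bigcup_k Y_k$ as increasing unions of sets of finite measure. Then obtain the general statement by applying the finite case to $f\cdot\mathbf{1}_{X_j\times Y_k}$ and letting $j,k\to\infty$ via monotone convergence.

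\textbf{Measurability of sections.} For $E\in\mathcal{A}\otimes\mathcal{B}$, the section $E_x=\{y:(x,y)\in E\}$ lies in $\mathcal{B}$, since the class of such $E$ is a $\sigma$-algebra containing all rectangles $A\times B$. Consequently, for measurable $f$, each section $y\mapsto f(x,y)$ is $\mathcal{B}$-measurable, by approximating $f$ from below by simple functions.

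\textbf{Main step: sets.} Let $\mathcal{C}$ be the class of $E\in\mathcal{A}\otimes\mathcal{B}$ such that $x\mapsto\nu(E_x)$ and $y\mapsto\mu(E^y)$ are measurable and
\[
\int_X\nu(E_x)\,d\mu(x)=\int_Y\mu(E^y)\,d\nu(y).
\]
The steps are as follows.
\begin{enumerate}
\item Rectangles $A\times B$ lie in $\mathcal{C}$, with both integrals equal to $\mu(A)\nu(B)$.
\item Finite disjoint unions of rectangles, which form an algebra, lie in $\mathcal{C}$ by additivity.
\item $\mathcal{C}$ is a monotone class. Increasing unions are handled by monotone convergence. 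Decreasing intersections are handled by dominated convergence, which is where the finiteness of $\mu$ and $\nu$ is used.
\item By the monotone class lemma, $\mathcal{C}$ contains the $\sigma$-algebra generated by that algebra, namely $\mathcal{A}\otimes\mathcal{B}$.
\end{enumerate}
Define $(\mu\times\nu)(E)$ as this common value. It is a measure by monotone convergence and agrees with the product measure on rectangles. Uniqueness of the extension from the algebra, again using $\sigma$-finiteness, identifies it with the product measure $\mu\times\nu$ in the statement.

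\textbf{Extension to functions.} By linearity, the triple equality holds for nonnegative simple $f$, and the iterated inner integrals are measurable. For general $f\ge0$, take simple functions $\phi_m\uparrow f$. Monotone convergence then applies three times: to the inner integrals, to the outer integrals, and to the product integral. Since the limits of all three sides agree, the theorem follows.

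I expect the main obstacle to be step 3 of the main step, together with the reduction to finite measure. Decreasing limits genuinely require finiteness, so the monotone class argument must be run on each $X_j\times Y_k$ and then patched together. The rest is routine.
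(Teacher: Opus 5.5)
Your proposal is correct. It is the standard monotone-class proof: measurability of sections, then the set-level identity via the monotone class lemma on the algebra of finite disjoint unions of rectangles with a reduction to finite measures, then simple functions and monotone convergence. The paper does not prove this statement but cites Folland's Theorem 2.37, whose proof (via his Proposition 2.34 and Theorem 2.36) is essentially the argument you outline.
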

Imagine that $g:[a,b]\to[0,\infty)$ is a pdf that is monotonically increasing. The following lemma states that the expected value of $f(x)$ with respect to the distribution $g(x)$ over $[a, b]$ is at least as large as the expected value of $f(x)$ under the worst-case uniform distribution over $[c, b]$, where $a \le c < b$.
\begin{lemma}[Lower Bound by Uniform Distribution]
\phantomsection
\label{lem:inf-bound}
    Let $f:[a,b]\to [0,1]$ be a bounded measurable function (we assume measurable in this paper with respect to the Lebesgue $\sigma-$algebra on $\mathbb{R}^n$, which is complete). Let $g:[a,b]\to [0,\infty)$ be integrable and monotonically increasing. Denote $G(b)-G(a)=\int_a^b g(x)dx$, we have the following inequality:
$$\int_a^b f(x)g(x)dx\geq (G(b)-G(a))\cdot\inf\{\frac{1}{b-c}\int_c^bf(x)dx\;|\; a\leq c<b\}$$
\end{lemma}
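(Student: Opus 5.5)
The plan is to write the increasing density $g$ as a mixture of uniform densities on intervals $[c,b]$ (a "layer-cake" decomposition) and then exchange the order of integration with \cref{thm:fubini}. Since $g\ge 0$ is monotonically increasing on $[a,b]$, the level set $\{x\in[a,b] : g(x)>t\}$ is, for each $t\ge 0$, an interval of the form $(c(t),b]$ or $[c(t),b]$, where $c(t)=\inf\{x : g(x)>t\}$ is nondecreasing in $t$. The endpoint convention does not affect any integral. Hence
$$g(x)=\int_0^\infty \mathbf{1}[g(x)>t]\,dt=\int_0^\infty \mathbf{1}[x> c(t)]\,dt \quad\text{for a.e. } x.$$

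Next I would substitute this into $\int_a^b f(x)g(x)\,dx$. The integrand $(x,t)\mapsto f(x)\mathbf{1}[g(x)>t]$ is nonnegative and measurable on $[a,b]\times[0,\infty)$, because $\{(x,t): g(x)>t\}$ is measurable whenever $g$ is. Both factors are $\sigma$-finite. So \cref{thm:fubini} gives
$$\int_a^b f(x)g(x)\,dx=\int_0^\infty\left(\int_{c(t)}^b f(x)\,dx\right)dt .$$

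Let $I=\inf\{\frac{1}{b-c}\int_c^b f : a\le c<b\}$. For each $t$ with $c(t)<b$, the definition of $I$ gives $\int_{c(t)}^b f\ge (b-c(t))\,I$. When $c(t)\ge b$, i.e. $t\ge \sup g$, the level set is a single point or empty. That inner integral is then $0$, and so is the corresponding length $b-c(t)$, so the same bound holds trivially. Therefore
$$\int_a^b fg\ \ge\ I\int_0^\infty (b-c(t))\,dt = I\int_0^\infty \lambda(\{g>t\})\,dt = I\int_a^b g = I\,(G(b)-G(a)),$$
where the last equality is Tonelli applied once more with $f\equiv 1$.

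There is no real obstacle here. The only points needing care are measure-theoretic: measurability of the set $\{g>t\}$ in the product space, and the degenerate values of $t$ for which $c(t)=b$ or the level set is empty. Both are handled by noting that monotone functions are Borel and that null sets do not change the integrals. Boundedness of $f$ guarantees that all the integrals are finite, although Tonelli does not actually require this.
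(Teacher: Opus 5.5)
Your proof is correct and is essentially the paper's argument. The paper likewise writes $\int_a^b fg$ as an integral over the region under the graph of $g$ (using $\{y\le g(x)\}$ where you use the strict $\{g>t\}$, which makes no difference), swaps the order with Tonelli, bounds each horizontal slice by $\int_{c(y)}^b f\ge (b-c(y))\,m$, and reassembles $\int_0^\infty (b-c(y))\,dy=\int_a^b g$. The only thing to tidy is that $c(t)$ should be capped at $b$ when the level set is empty, so that $b-c(t)=0$ rather than $-\infty$ (the paper takes $c(y)=\inf(E_y\cup\{b\})$).
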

\begin{proof}
Define $h:[a,b]\times[0,\infty)\to [0,1]$ by
$$
h(x,y)=
\begin{cases}
f(x), & y\le g(x),\\
0, & \text{otherwise}.
\end{cases}
$$
We write $h=h_1h_2$, where
\begin{align*}
h_1(x,y)&=
\begin{cases}
1, & 0\le y\le g(x),\\
0, & \text{otherwise},
\end{cases}\\
h_2(x,y)&=f(x).
\end{align*}
We show that $h_1$ and $h_2$ are measurable.

\begin{itemize}
    \item $h_1$ is measurable since
    \[
    h_1^{-1}(\{1\})
    =
    \{(x,y)\in [a,b]\times[0,\infty): y\le g(x)\}
    =
    H^{-1}([0,\infty)),
    \]
    where $H(x,y)=g(x)-y$. Since $g$ is measurable, $(x,y)\mapsto g(x)$ is measurable, and $(x,y)\mapsto y$ is the coordinate projection, which is also measurable. Hence $H$ is measurable. Since $[0,\infty)$ is Borel, its preimage is measurable. Therefore $h_1=\mathbf 1_{H^{-1}([0,\infty))}$ is measurable.

    \item $h_2$ is measurable because for any measurable set $A\in\mathcal{L}(\mathbb{R})$,
    \[
    h_2^{-1}(A)=f^{-1}(A)\times [0,\infty),
    \]
    which is measurable in the product $\sigma$-algebra.
\end{itemize}

Therefore $h$ is measurable.  By Fubini-Tonelli's theorem \cref{thm:fubini}, we have 
\begin{align*}
\int_a^b f(x)g(x)\,dx
&=\int_a^b\int_0^{g(x)} f(x)\,dy\,dx\\
&=\int_a^b\int_0^\infty h(x,y)\,dy\,dx\\
&=\int_0^\infty\int_a^b h(x,y)\,dx\,dy.
\end{align*}

For each $y\ge 0$, define
\[
E_y:=\{x\in[a,b]: y\le g(x)\},
\qquad
c(y):=\inf(E_y\cup\{b\}).
\]
Since $g$ is increasing, $E_y$ is an upper interval in $[a,b]$, so it differs from $[c(y),b]$ by at most one point. Hence
\[
\int_a^b h(x,y)\,dx
=
\int_{E_y} f(x)\,dx
=
\int_{c(y)}^b f(x)\,dx.
\]

Now let
\[
m:=\inf\left\{\frac{1}{b-c}\int_c^b f(x)\,dx:\ a\le c<b\right\}.
\]
Then for every $y\ge 0$,
\[
\int_{c(y)}^b f(x)\,dx \ge (b-c(y))\,m.
\]
Therefore,
\begin{align*}
\int_a^b f(x)g(x)\,dx
&=\int_0^\infty \int_{c(y)}^b f(x)\,dx\,dy\\
&\ge \int_0^\infty (b-c(y))\,m\,dy\\
&=m\int_0^\infty (b-c(y))\,dy\\
&=m\int_0^\infty\int_a^b h_1(x,y)\,dx\,dy\\
&=m\int_a^b\int_0^\infty h_1(x,y)\,dy\,dx\\
&=m\int_a^b g(x)\,dx\\
&=(G(b)-G(a))\,m.
\end{align*}
This proves the claim.
\end{proof}

\paragraph{Bounding by uniform distribution in \ranking{} (\cref{claim:ranking_lower_bound_uniform_profiles}).}

In the \ranking{} model, for each $u$, its profile with respect to $\vecx_{\shortminus u^\ast}$ conditioning on a fixed $x_u$ value is either $(x_u,\bot,\bot),(x_u,x_v,\bot)$ or $(x_u,x_v,x_b)$.
In the case where $u$ is unmatched in $\M(\vecx_{\shortminus u^\ast})$. $G(x_u,\bot,\bot)$ is a direct lower bound. If $u$ is matched but does not have a backup, then the distribution of profiles $(x_u,x_v,\bot)$ has an increasing pdf with respect to increasing $x_v\in[0,1]$, then by \cref{lem:inf-bound}, we can lower bound the expected gain by
$$\inf_{v_0<1} \left[\frac{1}{1-v_0}\int_{v_0}^{1} G(x_u,x_v,\bot)dx_v\right].$$
Similarly, if $u$ has a backup $b$ at $b_0$, the distribution of profiles $(x_u,x_v,b_0)$ has an increasing pdf for $x_v\in[0,b_0]$, and we can lower bound the expected gain by
$$\inf_{v_0<b_0} \left[\frac{1}{b_0-v_0}\int_{v_0}^{b_0} G(x_u,x_v,b_0)dx_v\right].$$
Taking a union bound for the three cases yields \cref{claim:ranking_lower_bound_uniform_profiles}.

\paragraph{Bounding by uniform distribution in \Franking{} (\cref{claim:Franking_lower_bound_uniform_profiles_P}).}

In the \Franking{} model, for each fixed $x_u>\theta_1^u$, its profile is either $(x_u,\bot,\bot),\;(x_u,x_v^A,\bot)$, or $(x_u,x_v^A,x_b^A)$.
In the case where $u$ has a profile $(x_u,\bot,\bot)$, $GF(x_u,\bot,\bot)$ is a direct lower bound. If $u$ does not have a backup, the distribution of profiles $(x_u,x_v^A,\bot)$ has an increasing pdf for $x_v\in[0,1]$ by \cref{lemma:monotonicity_Franking}, then by \cref{lem:inf-bound}, we can lower bound the expected gain by
$$\inf_{v_0<1} \left[\frac{1}{1-v_0}\int_{v_0}^{1} GF(x_u,x_v^A,\bot)dx_v\right].$$
Similarly, if $u$ has a backup $b$ at $x_b=b_0$, the distribution of profiles $(x_u,x_v^A,x_b^A)$ has an increasing pdf for $x_v\in[0,b_0]$, and we can lower bound the expected gain by
$$\inf_{v_0<b_0} \left[\frac{1}{b_0-v_0}\int_{v_0}^{b_0} G(x_u,x_v^A,b_0^A)dx_v\right].$$
Taking a union bound for the three cases yields \cref{claim:Franking_lower_bound_uniform_profiles_A}.

\section*{Appendix A}
\phantomsection
\addcontentsline{toc}{section}{Appendix A}
\subsection*{Various Randomized Greedy Matching Algorithms}
\phantomsection
\addcontentsline{toc}{subsection}{Various Randomized Greedy Matching Algorithms}
\label{Sec:various_VI_algorithms}

We briefly summarize several randomized greedy matching algorithms and their known bounds. 

\vspace{1pt}
\noindent\textbf{Greedy.}
Processes edges in an arbitrary deterministic order.
Achieves a tight $0.5$ approximation ratio
\cite{Randomized_greedy_matching,karp1990}.

\vspace{1pt}
\noindent\textbf{\Franking{}.}
Vertices are processed in a fixed deterministic order,
and neighbors according to a common random permutation $\sigma$.
It achieves a $0.521$ lower bound and $0.5671$ upper bound
\cite{0.521Franking}.

\vspace{1pt}
\noindent\textbf{IRP (Independent Random Permutation).}
Vertices are processed in a fixed deterministic order,
while neighbors are considered independently and uniformly at random.
Achieves a tight $0.5$ approximation ratio
\cite{Randomized_greedy_matching}.

\vspace{1pt}
\noindent\textbf{RDO (Random Decision Order).}
Vertices are processed in a uniform random order,
and neighbors in an arbitrary deterministic order.
The approximation ratio lies between $0.531$ and $0.625$
\cite{0.531RDO}.

\vspace{1pt}
\noindent\textbf{\ranking{}.}
Vertices are processed in a uniform random permutation $\pi$,
and neighbors in the same order as $\pi$.
The lower bound is at least $0.5469$
\cite{0.523ranking,0.526ranking,0.505_simplified_Ranking,0.546ranking},
and the upper bound is at most $0.727$
\cite{0.727Ranking}.

\vspace{1pt}
\noindent\textbf{UUR (Uniform-Uniform-Ranking).}
Vertices are processed in a random permutation $\pi$,
and neighbors according to a common random permutation $\sigma$, independent from $\pi$.
Since the bounds for RDO and \Franking{} translate directly to UUR, the current best known lower bound for UUR is $0.531$, extending from RDO~\cite{0.531RDO}, and the upper bound is $0.75$, which is an upper bound that holds for arbitrary VI randomized greedy algorithm~\cite{flawedranking}.

\vspace{1pt}
\noindent\textbf{MRG (Modified Randomized Greedy).}
Vertices are processed uniformly at random,
and neighbors independently and uniformly at random.
The lower bound is at least $0.531$
\cite{randomized_greedy_matching_II,2/3MRG,0.531RDO},
and the upper bound is at most $\frac{2}{3}$
\cite{2/3MRG}.

\subsection*{Fully Online Model and Vertex Arrival Models}
\phantomsection
\addcontentsline{toc}{subsection}{Fully Online Model and Vertex Arrival Models}
\label{Fully-Online_Section}
In \cite{0.521Franking}, the following fully online matching problem model is proposed:
\begin{definition}[Fully Online Model]
\phantomsection
\label{def:fully-online}
Graph $G=(V,E)$ is hidden from the algorithm in the beginning. Each vertex $v\in V$ has an arrival time and a deadline. Upon arrival of vertex $v$, edges between $v$ and previously arrived vertices are revealed. The algorithm needs to match each vertex $v$ before its deadline; otherwise, it cannot be matched afterward. Further, the model assumes that all neighbors of a vertex $v$ will arrive before the deadline of $v$.
\end{definition}
The authors proposed the following algorithm \Fullyranking{}:

\begin{algorithm}[H]
  \caption{\Fullyranking{} algorithm for general graphs.}
  \phantomsection
  \makeatletter\def\@currentlabelname{\textsc{Fully-Ranking}}\makeatother
  \label{alg:Fullyranking}
  \KwIn{Graph $G=(V,E)$}
  \SetAlgoLined

 For each $v$ that arrives, sample an independent uniform random value $x_v \in [0,1]$\;

 At the deadline of vertex $v$, match $v$ to the neighbor $u\in N(v)$ with the smallest $x_u$ value that is available.
\end{algorithm}

The natural intuition for applying \nameref{alg:Franking} to the fully online model is to view the deadlines of the vertices as the adversarial order $\pi$ in which we visit them. However, under this view, we cannot directly adapt \Franking{} to the fully online model, since we do not have all vertices at start to generate the full random preference order.

Nonetheless, observe that \Franking{} does not require the entire permutation explicitly. It only requires the relative order of neighbors when a vertex $v$ makes its decision at its deadline. This relative order is generated by sampling $x_u \in [0,1]$ independently and uniformly at random, and is independent of the arrival order. Therefore, we can generate the ranks $x_u$ upon the arrival of each vertex, as long as all neighbor ranks $x_u$ for $u \in N(v)$ are sampled when \Franking{} makes its matching decision for $v$ at its deadline. This condition is guaranteed by the model assumption that all neighbors are revealed before the deadline.

Therefore, our lower bound for \Franking{} also applies to the fully online model. Indeed, the natural translation of \Franking{} in this setting is precisely \Fullyranking{}. Formally, we show the following two claims:

\begin{claim}
    For any instance of the fully online model, \nameref{alg:Franking} is directly applicable, assuming all vertices arrive at the beginning and viewing the deadlines as the adversarial decision order $\pi$. Further, in this setting, \nameref{alg:Franking} is the same algorithm as \nameref{alg:Fullyranking}.
\end{claim}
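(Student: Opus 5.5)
The plan is to compare the two algorithms run by run, fixing the same rank vector $\vecx$, and to check that they make exactly the same matching decisions. First I will fix the instance: vertices with arrival times and deadlines, and the graph $G$. Let $\pi$ be the order of the deadlines, and take every vertex as present from the start. \nameref{alg:Franking} only needs $G$, $\pi$ and a rank vector, so it is well defined on this instance. Its only possible incompatibility with the fully online model is informational, namely that it would sample all ranks up front. The first step is to remove this. Each $x_v$ is an independent $U[0,1]$ variable and never depends on the arrival order. So sampling $x_v$ when $v$ arrives gives the same joint distribution of $\vecx$ as sampling everything at the beginning. I will therefore couple the two algorithms through a common $\vecx$.

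The second step is an induction over the deadlines in $\pi$-order. The hypothesis is that just before the deadline of the $t$-th vertex $v$, both algorithms have built the same partial matching, and so have the same set of available vertices. At that moment \Franking{} matches $v$ (if $v$ is still available) to its available neighbour of smallest rank. \Fullyranking{} does the same thing at $v$'s deadline, choosing among $N(v)$.

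The step I expect to need the most care is checking that at $v$'s deadline \Fullyranking{} really sees all of $N(v)$ and all of their ranks, and that no earlier action relied on information that was not yet available. I will use the model assumption from \cref{def:fully-online}: every neighbour of $v$ arrives before $v$'s deadline. Hence every edge incident to $v$ has been revealed by then, and every $x_u$ with $u\in N(v)$ has been sampled. I also need that availability is the same in both runs: a vertex that has arrived but whose deadline has not passed may already have been matched by an earlier decider, exactly as in \Franking{}. This holds because the partial matchings agree. One subtle point is that a neighbour $u$ of $v$ whose own deadline has already passed without $u$ being matched is still available as a passive partner in both runs. That is consistent, since \Fullyranking{} only forbids a vertex from deciding after its deadline, not from being chosen.

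Given this, the two runs coincide decision by decision, so $\M(\vecx)$ is identical in both and the two algorithms are the same. It follows that the lower bound of the \Franking{} theorem applies to \Fullyranking{} in the fully online model.
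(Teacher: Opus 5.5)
Your overall plan (couple the two runs through a common rank vector, read the deadline order as $\pi$, and check deadline by deadline that the deciding vertex makes the same choice) is the paper's argument made explicit. The paper only notes that with all vertices present at the start \Fullyranking{} samples all ranks at once and is then visibly \Franking{}. Your coupling and your use of the neighbour-arrival assumption go further and also cover the paper's next claim comparing $I_1$ and $I_2$, which is harmless.

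The step that is wrong is how you resolve the ``subtle point.''
\begin{itemize}
\item \Cref{def:fully-online} states that a vertex not matched before its deadline cannot be matched afterward, and the paper's notion of availability for \Fullyranking{} excludes vertices whose deadline has passed. So \Fullyranking{} does \emph{not} treat an expired, unmatched neighbour $u$ of $v$ as a possible passive partner. \Franking{}, whose pseudo-code only asks that a vertex be unmatched, does.
\item For the same reason, your induction hypothesis ``same partial matching, hence same set of available vertices'' is not literally true.
\end{itemize}

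What saves the argument is that this discrepancy never matters. Suppose $u\in N(v)$ has an earlier deadline than $v$ and is still unmatched when $v$ decides, with $v$ itself unmatched. At $u$'s deadline, $v$ was present, unmatched (matches are permanent) and not expired. So $u$ had an available neighbour and would have been matched then in either algorithm, a contradiction. Hence, whenever $v$ is unmatched at its deadline, its set of available neighbours is the same under both availability notions. With this replacing your justification, the induction goes through.
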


\begin{proof}
    Since all vertices arrive at the beginning, \Fullyranking{} will sample all the $x_v$ values altogether. If we view the deadlines as the decision order $\pi$ in \Franking{}, it is easy to see that the two algorithms are the same.
\end{proof}

\begin{claim}
    Let $I_1$ be an instance of the fully online matching problem. Let $I_2$ be the same instance as $I_1$ except that we assume all vertices arrive at the beginning for $I_2$. Then \nameref{alg:Fullyranking} has the same expected performance on $I_1$ and $I_2$.
\end{claim}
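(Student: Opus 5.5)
The plan is a coupling argument: show that \Fullyranking{} on $I_1$ and on $I_2$ produces identical matchings for every realization of the ranks, so its expected performance is the same on both.

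First, I would observe that in \Fullyranking{} each rank $x_v$ is an independent $U[0,1]$ sample, drawn once at $v$'s arrival. Its distribution does not depend on when $v$ arrives. I would therefore couple the two executions by fixing a single vector $\vecx$ and using it for both $I_1$ and $I_2$. Under this coupling the product distribution over ranks is exactly the distribution each instance induces, so it suffices to show that for every fixed $\vecx$ the two runs output the same matching.

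Second, with $\vecx$ fixed, I would argue that the matching decisions are a deterministic function of three things: the deadline order, the graph $G$, and $\vecx$. Arrival times play no role. The only decision events in \Fullyranking{} happen at deadlines. At the deadline of $v$, if $v$ is still available, the algorithm matches $v$ to the available neighbor in $N(v)$ with smallest rank. The model assumption (\cref{def:fully-online}) guarantees that every neighbor of $v$ arrives before $v$'s deadline. Hence at that moment, in $I_1$, all of $N(v)$ and all the ranks $x_u$ for $u\in N(v)$ have been revealed, exactly as in $I_2$. Arrivals themselves never change the matching.

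Third, I would induct over deadlines in increasing order, with the hypothesis that the set of available vertices just before the $t$-th deadline is the same in both runs. At deadline $t$, the vertex $v$ sees the same neighborhood, the same ranks, and the same availability status in both runs, so it makes the same choice. This preserves the hypothesis for deadline $t+1$. The two runs therefore produce identical matchings for every $\vecx$, and taking expectation over $\vecx$ gives equal expected matching size. The maximum matching is the same in both, since $G$ is the same, so the competitive ratios agree.

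The main obstacle is stating precisely that the availability of a neighbor at $v$'s deadline depends only on earlier deadline events, and not on arrival interleavings. Part of this is checking that a neighbor $u$ whose own deadline has already passed unmatched is unavailable in both runs. This follows from the same induction, because a vertex's status changes only at deadline events, either its own or a neighbor's.
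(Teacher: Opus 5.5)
Your proposal is correct and follows essentially the same route as the paper: fix a rank vector $\vecx$ (equally likely under both instances because ranks are independent of arrival times), induct over deadlines, and use the model assumption that all of $N(v)$ arrives before $v$'s deadline, so that $v$ faces the same available neighbors in both runs. One precision fix: state the inductive invariant as equality of the partial matchings $\M^t_1=\M^t_2$ (together with the common set of expired vertices), as the paper does, rather than equality of the full available sets. In $I_1$, vertices that have not yet arrived are unavailable, so only the intersections of the available sets with $N(v)$ are guaranteed to coincide.
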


\begin{proof}
    Fix an arbitrary realization $\vecx$ of the random ranks of vertices. Since the value of each $x_v$ is independent of the arrival order, $I_1$ and $I_2$ have the same probability of realizing $\vecx$. We prove by induction that \Fullyranking{} has the same output matching for $I_1$ and $I_2$, conditioning on the event that both instances realize $\vecx$.

    Let $\pi$ be the increasing order of deadlines of the vertices. Let $\M^t_i$, $\A^t_i$, and $D^t_i$ for $i\in\{1,2\}$ be the partial matchings, the sets of available vertices\footnote{Vertices that have arrived, being unmatched, and have not reached their deadlines.}, and the sets of vertices that have already passed their deadlines, respectively, right before the deadline of $v$, where $\pi(v)=t$.
    
    \noindent\textbf{Base case.} When $t=1$, no decisions are made, hence $\M^1_1=\M^1_2$.

    \noindent\textbf{Inductive Step.} Assume $\M^t_1=\M^t_2$. Since $I_1$ and $I_2$ have the same deadlines, $D_1^t=D_2^t$. Because we assumed that all neighbors of $v$ arrive before its deadline, we have $N(v)\subseteq A^t_i\cup D^t_i\cup \M^t_i$\footnote{We abuse notation $\M^t_i$ to also denote the set of matched vertices at time $t$.} for both $i\in \{1,2\}$. Since $A^t_i$ is disjoint from $D^t_i\cup\M^t_i$ by definition, we have $N(v)\cap A^t_i=N(v)-(D^t_i\cup\M^t_i)$. Further, since $D^t_1=D^t_2$ and $\M^t_1=\M^t_2$, we have $N(v)\cap A^t_1=N(v)\cap A^t_2$, which means $v$ has the same set of available neighbors to choose from for both $I_1$ and $I_2$ at time $t$. Since we conditioned on both instances realizing $\vecx$, $v$ will pick the same neighbor to match to, which implies $\M^{t+1}_1=\M^{t+1}_2$.
\end{proof}

Hence, for any instance of the fully online matching problem, running \Fullyranking{} is equivalent to running \Franking{} on the same graph instance, assuming all vertices arrive at the beginning. Thus, this extends our bound for \Franking{} to the fully online matching problem.
\paragraph{Vertex Arrival Models}
One common assumption that is used for vertex arrival models is that upon arrival of a vertex, only edges incident with previously arrived neighbors are revealed. In this paper, we assume all incident edges are revealed, regardless of whether the neighbor has arrived or not. 

By the setting of the fully online (adversarial vertex arrival) model, it makes sense to assume all neighbors are revealed at the deadline (decision) time of each vertex. If an edge $(u,v)$ has $u$ arriving after the deadline of $v$, then the algorithm cannot match them; effectively, $(u,v)$ does not exist. For the uniform random vertex arrival model, \cite{0.546ranking} proved that \ranking{} has the same expected performance under both assumptions, so W.L.O.G., we can assume all neighbors are revealed.
\section*{Appendix B}
\phantomsection
\addcontentsline{toc}{section}{Appendix B}
\label{sec:appendix-a}

\begin{proof}[\textbf{Proof for \cref{assumeperfectmatching}}]
    For any list $L$ of query orders and for any vertex $v\in G$, by the alternating path lemma (\cref{lem:alt-path}), we always have $|\M(L)|\geq |\M(L_{\shortminus v})|$. Hence, for any randomized greedy matching algorithm $A$, let the randomization be by drawing a list $L\sim D_{A}$ for some distribution $D_A$ of query lists. We have, for any $v\in V$,
    $$\ev_{L\sim D_A }\left[|\M(L)|\right]\geq \ev_{L\sim D_A }[|\M(L_{\shortminus v})|].$$
    This means that removing a vertex will not result in an increase in the size of output matching by $A$. For any $G$ that does not admit a perfect matching, we can fix a maximum matching $M^\ast$ and remove any $v$ that does not belong to the maximum matching $M^\ast$ one by one. The removal does not decrease the size of the maximum matching and will not increase the expected size of the output matching by $A$. Thus sequentially removing $v$ does not increase the approximation ratio of $A$ when we prune $G$ until $G$ only contains vertices in $M^\ast$.
\end{proof}

\begin{proof}[\textbf{Proof for \cref{lem:alt-path}}]
    We prove by induction on time $t$. Initially, nothing is queried, and the path $\M^t(L)\oplus\M^t(L_{\shortminus v})$ can be viewed as a degenerate path of a single node $u_0=v$. We also have $\A^t(L)\oplus \A^t(L_{\shortminus v})\cup\{v\}$ by definition.

    Assume the alternating path lemma holds at time $t=(u,w)$. Denote $u_0,...,u_k$ as the alternating path  $\M^{t}(L)\oplus\M^{t}(L_{\shortminus v})$. Let $t^+$ be the time when the next pair of nodes is ready to be queried. If both $u,w$ are not $u_k$, then the availability of $\{u,w\}$ is the same in $\A^t(L)$ and $A^t(L_{\shortminus v})$ by the alternating path lemma. This implies either both $\M^{t^+}(L),\M^{t^+}(L_{\shortminus v})$ include $(u,w)$ or both do not include it. Hence, the alternating path does not change from $t$.
    
    If one of $u,w$ is $u_k$, assuming W.L.O.G. $u=u_k$, in the case where $k$ is even, by inductive hypothesis $\A^t(L)= \A^t(L_{\shortminus v})\cup\{u\}$. If $w\not\in \A^t(L)$ then $w\not\in \A^t(L_{\shortminus v})$ in which case both matchings do not include $(u,w)$. The only interesting case is when $w$ is available for both lists at time $t$ but $u$ is available only in $\A^t(L)$, in which case $(u,w)$ gets included in $\M^{t^+}(L)$ but not in $\M^{t^+}(L_{\shortminus v})$. And the alternating path extends by $(u,w)$ where $u=u_k$, $w=u_{k+1}$, and we have $(u_k,u_{k+1})\in \M(L)$. Further, the set of available vertices has 
    $$\A^{t^+}(L)=\A^t(L)-\{u,w\}=\A^t(L_{\shortminus v})\cup\{u\}-\{u,w\}=\A^{t^+}(L_{\shortminus v})-\{w\}$$ 
    Thus the inductive hypothesis holds at $t^+$ when $k$ is even. For the case where $k$ is odd, the proof is symmetric with $L$ and $L_{\shortminus v}$ swapped.
\end{proof}
\begin{proof}[\textbf{Proof for \cref{fact:backup_uniquely_describe_worse_off}}]
    We will prove that, roughly speaking, removing or introducing $v$ is effectively removing $u_1$ from the perspective of $u$. If this is true, then following the definition of backup, $u$'s matching condition is as described by its backup.

    Assume $u$ is matched to $u_1$ in $L$, and assume $u$ is made worse off by removing $v$. Let $t=(u,u_1)$ be the time edge $(u,u_1)$ gets queried. By assumption $u$ is made worse off by removing $v$ implies that $u$ is available at time $t$ but $u_1$ is not in $\A^t(L_{\shortminus v})$. By the alternating path lemma, we know that in this case $\A^t(L)=\A^t(L_{\shortminus v})\cup\{u_1\}$. Also notice that $(u,u_1)\in \M(L)$, which implies both $u,u_1$ are available in $\A^t(L)$. Hence we also have $\A^t(L)=\A^t(L_{\shortminus u_1})\cup\{u_1\}$. These two facts imply $\A^t(L_{\shortminus u_1})=\A^t(L_{\shortminus v})$. Since both lists have the same set of available vertices, $\M(L_{\shortminus v})$ and $\M(L_{\shortminus u_1})$ will include the same set of edges from $t$. This implies that $u$ has the same matching condition in $\M(L_{\shortminus v})$ as in $\M(L_{\shortminus u_1})$.

    Assume $u$ is matched to $u_1$ in $L_{\shortminus v}$, and assume $u$ is made worse off by adding $v$. Let $t=(u,u_1)$ be the time edge $(u,u_1)$ gets queried. By assumption $u$ is made worse off by introducing $v$ implies that $u$ is available at time $t$ but $u_1$ is not in $\A^t(L)$. By the alternating path lemma, we know that in this case $\A^t(L_{\shortminus v})=\A^t(L)\cup\{u_1\}$. Also notice that $(u,u_1)\in \M(L_{\shortminus v})$, which implies both $u,u_1$ are available in $\A^t(L_{\shortminus v})$. Hence we also have $\A^t(L_{\shortminus v})=\A^t(L_{\shortminus vu_1})\cup\{u_1\}$. These two facts imply $\A^t(L)=\A^t(L_{\shortminus vu_1})$. Since both lists have the same set of available vertices, $\M(L)$ and $\M(L_{\shortminus vu_1})$ will include the same set of edges from $t$. This implies that $u$ has the same matching condition in $\M(L)$ as in $\M(L_{\shortminus vu_1})$.
\end{proof}

\begin{proof}[\textbf{Proof for \cref{lem:victim_querycommit_case_u}}]
    Assume $u$ has a match $w$ in $\M(L_{\shortminus v})$ and $u$ is unmatched in $\M(L)$. Let $u_0,...u_k$ be the alternating path $\M(L_{\shortminus v})\oplus\M(L)$. Let $u_{i}\neq u$ be an even-indexed node in the path. We want to show that $u$ is matched in $\M(L_{\shortminus u_i})$. Let $t$ be the time when $(u_i,u_{i+1})$ gets queried, by the alternating path lemma, we know that $\A^t(L)=\A^t(L_{\shortminus v})\cup\{u_i\}$ (at this time, $(u_i,u_{i+1})$ is yet to be included in $\M(L)$, so the alternating path terminates at $u_i$ at time $t$). Note that we also have $\A^t(L)=\A^t(L_{\shortminus u_i})\cup\{u_i\}$. This implies $\A^t(L_{\shortminus v})=\A^t(L_{\shortminus u_i})$, and hence both matchings will include the same set of edges in the future. Since $u$ is matched in $\M(L_{\shortminus v})$, it will be matched in $\M(L_{\shortminus u_i})$.
\end{proof}
\begin{proof} [\textbf{Proof for \cref{lem:victim_querycommit_case_u_star}}]
     Let $v,L,L'$ be as stated in the lemma. Denote the match of $u$ in $\M(L_{\shortminus v})$ as $w$ and the backup as $b$ (assuming $u$ has a backup in $L_{\shortminus v}$). Let $u_0,...u_{j},...,u_k$ be the alternating path $\M(L)\oplus \M(L_{\shortminus v})$. Let $u_i$ be an odd-indexed vertex in the path where $i<j$. We want to show that $v$ is matched in $L'_{\shortminus u_i}$. Let $t$ be the time when $(u_i,u_{i+1})$ gets queried in $L$, by the alternating path lemma, $\A^t(L_{\shortminus v})=\A^t(L)\cup\{u_i\}$ (because $i$ is odd-indexed). As $u_i$ is not matched before $t$ in $\M^t(L_{\shortminus v})$, it is effectively non-existing until $t$ for $L_{\shortminus v}$. This implies $\A^t(L_{\shortminus u_iv})=\A^t(L_{\shortminus v})-\{u_i\}=\A^t(L)$. This implies that from time $t$, $\M(L)$ and $\M(L_{\shortminus u_iv})$ will include the same set of vertices, one of which is $(u,b)$ (which is edge $(u_j,u_{j+1})$ in the alternating path, included at time later than $t$).  Notice that $L'$ differs from $L$ only in edges incident to $v$ by definition. So we have $L_{\shortminus u_iv}$ equivalent to $L'_{\shortminus u_iv}$ for greedy matching, i.e., $\M(L_{\shortminus u_iv})=\M(L'_{\shortminus u_iv})$. Since $u$ is matched to $b$ in $\M(L_{\shortminus u_iv})$, it is also matched to $b$ in $\M(L'_{\shortminus u_iv})$. Since we assumed that $(u,v)$ has a higher priority than $(u,b)$ in $L'$, we know that if $v$ is still unmatched in $\M(L'_{\shortminus u_i})$ before $(u,v)$ gets queried, $u$ will be available (as it will continue to be available until $(u,b)$ gets queried, which is at a later time). Hence $(u,v)$ will be matched in $\M(L'_{\shortminus u_i})$ in this case.

    For the case $u$ not having a backup $b$, we can argue by the same logic and show that $u$ is unmatched in $\M(L'_{\shortminus u_iv})$ hence no matter when $(u,v)$ gets queried in $\M(L'_{\shortminus u_i})$, $u$ will be available provided that $v$ is unmatched. In such a case $(u,v)$ will be matched.
\end{proof}

\begin{proof}[\textbf{Proof for \cref{claim:victim_when_no_alt_path}}]
    We want to show that $v$ is matched in $\M(L_{\shortminus w}')$. Let $t$ be the time $(u,v)$ gets queried, if $v$ is already matched, then we are done. If not, then the alternating path $\A^t(L_{\shortminus w}')=\A^t(L'_{\shortminus vw})\cup\{v\}$ by the alternating path lemma. Notice that $u$ is still available at time $t$ in $\A^t(L'_{\shortminus w})$, as in the case of $u$ not having a backup, $u$ is available until the end; in the case of $u$ having a backup $b$, $u$ is available until $(u,b)$ gets queried, which is at a time later than $t$. Hence $(u,v)$ will be matched.
\end{proof}

\begin{proof}[\textbf{Proof for \cref{fact:matching_guarantee_ranking}}]
    \textbf{1.} Assume $x<x_v$. Let $t$ be the time when $(u,u^\ast)$ gets queried. If $u^\ast$ is still available, then the alternating path lemma implies $\A^t(\vecx_{\shortminus u^\ast})=\A^t(\vecx_{\shortminus u^\ast}(x))\cup\{u^\ast\}$. Since $t$ is earlier than $(u,v)$, $u$ is still available in both sets and hence $u,u^\ast$ will be matched in $\M^t(\vecx_{\shortminus u^\ast}(x))$.
    
    \textbf{2.} Assume $u$ becomes worse off by introducing $u^\ast$ at rank $x$, then $u$ must be an even-indexed vertex $u_{2i}$ in the alternating path $\M(\vecx_{\shortminus u^\ast})\oplus \M(\vecx_{\shortminus u^\ast}(x))$. By the special monotonicity property for \ranking{} (\cref{fact:monotonicity_for_alt_path_ranking}), we have $x=x_{u_0}<x_{u_{2i}}=x_{u^\ast}$. So when $x>x_u$, $u$ cannot be an even-indexed vertex in the alternating path and hence cannot become worse off, i.e., matching to a vertex of rank larger than its current match $x_v$.
    
    \textbf{3.} Similarly, if $u$ is made worse off by introducing $u^\ast$, then $u$ is an even-indexed vertex and $v$ is an odd-indexed vertex. The match of $u^\ast$ is $u_1$ in the alternating path, the smallest-ranked odd vertex, hence $x_{u_1}\leq x_v$ by the special monotonicity property (\cref{fact:monotonicity_for_alt_path_ranking}).
\end{proof}

\begin{proof}[\textbf{Proof for \cref{fact:randomized_bipartition}}]
For each fixed $\vecx$, since $\M(\vecx)$ and $M^\ast$ are both matchings, their union does not contain odd-length cycles, which implies $\M(\vecx)\cup M^\ast$ is a bipartite graph. Pick an arbitrary bi-partition $U_1(\vecx),\;U_2(\vecx)$ of $\M(\vecx)\cup M^\ast$, where $U_1(\vecx),\;U_2(\vecx)$ are the two sides. Let $\chi[\vecx]:V\to\{P,B\}$ be the bi-partition that assigns the role $P$ to all vertices in $U_1(\vecx)$ and assigns the role $B$ to all vertices in $U_2(\vecx)$. Let $\chi[\vecx]^r$ be the reversed assignment, i.e. assigning $B$ to $U_1(\vecx)$ vertices and assigning $P$ to $U_2(\vecx)$ vertices. We define $\chi:V\to \{P,B\}$ as the random bi-partition such that the joint distribution of $\chi,\vecx$ satisfies 
\begin{align*}
\text{pdf}(\chi, \vecx) =
\begin{cases}
\frac{1}{2} p(\vecx), & \text{if } \chi = \chi[\vecx], \\[4pt]
\frac{1}{2} p(\vecx), & \text{if } \chi = \chi[\vecx]^r, \\[4pt]
0, & \text{otherwise,}
\end{cases}
\end{align*}
where $p(\vecx)$ is the pdf of $\vecx$. It's not hard to see that, by the definition, gain-sharing schemes 1 and 2 are well-defined with probability $1$. Let $E$ be the event 
$$PD(u^\ast)\;\wedge\; BU(u) \;\wedge\; PD(\text{match of $u$})\; \wedge \; BU(\text{match of $u^\ast$}).$$ 
The conditional pdf of $\vecx$ on $E$ is the same as that of the uniform random rank vector $\vecx$, as conditioning on $E$ is equivalent to conditioning on the event $PD(u^\ast)$ alone, which is independent of the rank vector $\vecx$. Let $\alpha$ be a lower bound for the expected value of gain$(u)$ + gain$(u^\ast)$ under gain-sharing rules $1,\;2$ for any pairs of vertices $(u,u^*)\in M^*$, conditioning on event $E$. We have
\begin{align*}
    \alpha\leq&\min_{(u,u^*)\in M^*}\{\ev_{\vecx}[\text{ gain$(u)$ + gain$(u^\ast)$}\;|\; E\;]\}\\
    \leq &\frac{\sum_{(u,u^\ast)\in M^\ast}\Big[\frac{1}{2}\ev_{\vecx}[\text{ gain$(u)$ + gain$(u^\ast)$ }|PD(u^*)]+\frac{1}{2}\ev_{\vecx}[\text{ gain$(u)$ + gain$(u^\ast)$ }|PD(u)]\Big]}{|M^*|}\\
    =&\frac{\ev_{\vecx}[\sum_{v\in V} \text{ gain$(v)$} ]}{|M^\ast|}= \frac{\ev_{\vecx}[|\M(\vecx)|]}{|M^\ast|}.\qedhere
\end{align*}
\end{proof}

\begin{proof}[\textbf{Proof for \cref{fact:matching_guarantee_franking}}]
     Property 1: Let $t$ be the time $(u,u^\ast)$ gets queried. Assume $u^\ast$ is not matched at time $t$. Then the partial matching $\M^t(\vecx_{\shortminus u^\ast})$ equals the partial matching $\M^t(\vecx_{\shortminus u^\ast}(x))$.  Hence $\A^t(\vecx_{\shortminus u^\ast}(x))=\A^t(\vecx_{\shortminus u^\ast})\cup\{u^\ast\}$ by the alternating path lemma (\cref{lem:alt-path}). Since $x<x_v$ and $u$ actively matches $v$ in $\M(\vecx_{\shortminus u^\ast})$, we know that $(u,u^\ast)$ is queried before $(u,v)$ and hence $u$ is available at time $t$. In this case, $u,u^\ast$ will be matched. Since $u$ has an earlier deadline, $u^\ast$ is the passive vertex.
    
     Property 2: If $u^\ast$ is matched actively or unmatched, then by the time $(u,v)$ gets queried, $u^\ast$ is still not matched and has no influence on the partial matchings $\M^t(\vecx_{\shortminus u^\ast}),\M^t(\vecx_{\shortminus u^\ast}(x))$. This implies $(u,v)$ will be included in both final matchings.

     Property 3: Assume $u^\ast$ is matched passively to $w$ in $\M(\vecx_{\shortminus u^\ast}(x_0))$. Fix $x\in [0,x_0]$. If $u^\ast$ is not matched in the partial matching $\M^t(\vecx_{\shortminus u^\ast}(x))$ until $w$ arrives at time $t$, then $w$ will have the same list of available neighbors as in $\vecx_{\shortminus u^\ast}(x_0)$ at time $t$. Since $u^\ast$ with rank $x_0$ was the best available neighbor of $w$ in the list before, $u^\ast$ with rank $x\leq x_0$ will still be the best available neighbor, hence $u^\ast$ will be passively matched to $w$ in this case.
\end{proof}

\begin{proof}[\textbf{Proof for \cref{lemma:monotonicity_Franking}}]
    Similar to the proof of \cref{monoto_before_backup_ranking}, it suffices to show that $(u,v)$ are still matched together after demoting $v$ to rank $x$.

    Assume the profile of $u$ with respect to $\vecx_{\shortminus u^\ast}$ is $(x_u,x_v^A,\bot)$. Suppose we demote $v$ to rank $x>x_v$ and denote the permuted rank vector as $\vecx_{\shortminus u^\ast}^{\;\prime}$. Since the rank of $v$ is demoted, it will not become more favorable to any active vertex, which means none of the active vertices before $u$ will change their choices. Since we also know $u$ does not have a backup, which means at the time $u$ actively picks, it has no other available neighbor besides $v$. The two facts combined show that $u$ will still actively match $v$ in $\M(\vecx_{\shortminus u^\ast}^{\;\prime})$.

    A similar proof holds for the case $(x_u,x_v^A,x_b^A)$, now for any $x$ such that $x_v<x<x_b$, if we demote the rank of $v$ to $x$, $v$ will not become more favorable to any active vertex before $u$, hence the matching until $u$ is not changed. When it comes to $u$’s turn to pick, the second-best choice for $u$ is $b$, which has a larger rank $x_b$ than $x$. Hence $u$ will still be matched to $v$ in $\M(\vecx_{\shortminus u^\ast}^{\;\prime})$.
\end{proof}
\bibliographystyle{plain}
\bibliography{references}
\end{document}